\newif\ifsubmit     
\newif\ifllncs      
\newif\ifexabs      
\newif\ifblind  
\else \documentclass[letterpaper,11pt,pdfa]{article}
  \spnewtheorem{claim}{Claim}{\bfseries}{\rmfamily}
  \crefname{claim}{claim}{claims}
  \Crefname{claim}{Claim}{Claims}
  \newtheorem{theorem}{Theorem}[section]
  \newtheorem{definition}[theorem]{Definition}
  \newtheorem{remark}[theorem]{Remark}
  \newtheorem{lemma}[theorem]{Lemma}
  \newtheorem{corollary}[theorem]{Corollary}
  \newtheorem{claim}[theorem]{Claim}
  \newtheorem*{remark*}{Remark}
  \newtheorem{fact}[theorem]{Fact}
  \newtheorem{observation}[theorem]{Observation}
 \newtheorem{conjecture}[theorem]{Conjecture}
  \newtheorem{assumption}[theorem]{Assumption}
  \newtheorem*{theorem*}{Theorem}
  \newtheorem*{lemma*}{Lemma}
\setlist[description]{noitemsep}
\setlist[enumerate]{noitemsep}
\setlist[itemize]{noitemsep}
    \NewDocumentCommand{\whiten}{ m }
    {
      \int_step_function:nnnN {1}{1}{#1} \white_text:n
    }
  \NewDocumentCommand{ \varul }{ D<>{5} O{0.2ex} O{0.1ex} +m } {%
    \begingroup
    \setul{#2}{#3}%
    \def\SOUL@uleverysyllable{%
      \setbox0=\hbox{\the\SOUL@syllable}%
      \ifdim\dp0>\z@
      \SOUL@ulunderline{\phantom{\the\SOUL@syllable}}%
      \whiten{#1}%
      \llap{%
        \the\SOUL@syllable
        \SOUL@setkern\SOUL@charkern
      }%
      \else
      \SOUL@ulunderline{%
        \the\SOUL@syllable
        \SOUL@setkern\SOUL@charkern
      }%
      \fi}%
    \ul{#4}%
    \endgroup
  }
\newcommand{\revise}[1]{#1}
    \newcommand{\aayush}[1]{}
    \newcommand{\vipul}[1]{}
    \newcommand{\jiahui}[1]{}
    \newcommand{\orestis}[1]{}
    \newcommand{\aayush}[1]{{\color{magenta} Aayush: #1}}
    \newcommand{\vipul}[1]{{\color{red} Vipul: #1}}
    \newcommand{\jiahui}[1]{{\color{purple} Jiahui: #1}}
    \newcommand{\orestis}[1]{{\color{blue} Orestis: #1}}
\newcommand{\Enc}{\mathsf{Enc}}
\newcommand{\KeyGen}{\mathsf{KeyGen}}
\newcommand{\InKeyGen}{\mathsf{InKeyGen}}
\newcommand{\Dec}{\mathsf{Dec}}
\newcommand{\Del}{\mathsf{Del}}
\newcommand{\VerDel}{\mathsf{VerDel}}
\newcommand{\Setup}{\mathsf{Setup}}
\newcommand{\samp}{\xleftarrow{\$}}
\newcommand{\td}{\mathsf{td}}
\newcommand{\pk}{\mathsf{pk}}
\newcommand{\mpk}{\mathsf{mpk}}
\newcommand{\sk}{\mathsf{sk}}
\newcommand{\ct}{\mathsf{ct}}
\newcommand{\cert}{\mathsf{cert}}
\newcommand{\obligate}{{\sf Obligate}}
\newcommand{\solve}{{\sf Solve}}
\newcommand{\negl}{\mathsf{negl}}
\newcommand{\QPT}{\mathsf{QPT}}
\newcommand{\A}{\mathcal{A}}
\newcommand{\poly}{{\sf poly}}
\DeclareMathOperator{\Tr}{Tr}
\newcommand{\cD}{{\mathcal{D}}}
\newcommand{\cN}{{\mathcal{N}}}
\newcommand{\cM}{{\mathcal{M}}}
\newcommand{\cE}{{\mathcal{E}}}
\newcommand{\cR}{{\mathcal{R}}}
\newcommand{\cI}{{\mathcal{I}}}
\newcommand{\cQ}{{\mathcal{Q}}}
\newcommand{\cJ}{{\mathcal{J}}}
\newcommand{\cH}{{\mathcal{H}}}
\newcommand{\sX}{\mathcal{X}}
\newcommand{\sY}{\mathcal{Y}}
\newcommand{\sR}{\mathcal{R}}
\newcommand{\E}{\mathbb{E}}
\newcommand{\Z}{\mathbb{Z}}
\newcommand{\N}{\mathbb{N}}
\newcommand{\bA}{\mathbf{A}}
\newcommand{\bB}{\mathbf{B}}
\newcommand{\bb}{\mathbf{b}}
\newcommand{\bR}
{\mathbf{R}}
\newcommand{\bE}
{\mathbf{E}}
\newcommand{\bG}
{\mathbf{G}}
\newcommand{\bC}
{\mathbf{C}}
\newcommand{\bI}
{\mathbf{I}}
\newcommand{\by}
{\mathbf{y}}
\newcommand{\bx}
{\mathbf{x}}
\newcommand{\bs}{\mathbf{s}}
\newcommand{\be}{\mathbf{e}}
\newcommand{\bd}{\mathbf{d}}
\newcommand{\bc}{\mathbf{c}}
\newcommand{\bu}{\mathbf{u}}
\newcommand{\bsk}
{\mathbf{sk}}
\newcommand{\bw}
{\mathbf{w}}
\newcommand{\api}{{\sf API}}
\newcommand{\ati}{{\sf ATI}}
\newcommand{\ti}{{\sf TI}}
\newcommand{\projimp}{{\sf ProjImp}}
\newcommand{\cproj}{{\sf CProj}}
\newcommand{\cP}{{\mathcal{P}}}
\newcommand{\isuniform}{{\sf IsUniform}}
\newcommand{\unif}{{\sf unif}}
\newcommand{\supp}{\mathsf{SUPP}}
\newcommand{\qsk}{{\sf \rho_\sk}}
\newcommand{\Ext}{\mathop{\mathsf{Ext}}}
\newcommand{\lwe}{{\sf LWE}}
\newcommand{\valid}{\mathsf{Valid}}
\newcommand{\invalid}{\mathsf{Invalid}}
\newcommand{\delete}{\mathsf{Delete}}
\newcommand{\INV}{\mathsf{INV}}
\newcommand{\oneoftwo}{\texorpdfstring{$1$-of-$2$}{1-of-2} }
\title{Quantum Key Leasing for PKE and FHE with a Classical Lessor}
\author[1]{Orestis Chardouvelis}
\author[2]{Vipul Goyal}
\author[3]{Aayush Jain}
\author[4]{Jiahui Liu}
\affil[1,2,3]{Carnegie Mellon University}
\affil[2]{NTT Research}
\affil[4]{Massachusetts Institute of Technology}
\date{today}
\begin{document}

\maketitle

In this work, we consider the problem of secure key leasing, also known as revocable cryptography (Agarwal et. al. Eurocrypt' 23, Ananth et. al. TCC' 23), as a strengthened security notion to its predecessor put forward in Ananth et. al. (Eurocrypt' 21). This problem aims to leverage unclonable nature of quantum information to allow a lessor to lease a quantum key with reusability for evaluating a classical functionality. Later, the lessor can request the lessee to provably delete the key and then the lessee will be completely deprived of the capability to evaluate.
In this work, we construct a secure key leasing scheme to lease a decryption key of a (classical) public-key, homomorphic encryption scheme from standard lattice assumptions. Our encryption scheme is exactly identical to the (primal) version of Gentry-Sahai-Waters homomorphic encryption scheme with a carefully chosen public key matrix. We achieve strong form of security where:
\begin{itemize}
\item The entire protocol (including key generation and verification of deletion)  uses merely classical communication between a \emph{classical lessor (client)} and a quantum lessee (server). 
\item Assuming standard assumptions, our security definition ensures that every computationally bounded quantum adversary could only simultaneously provide a valid classical deletion certificate and yet distinguish ciphertexts with at most some negligible probability.
\end{itemize}
Our security relies on subexponential time hardness of learning with errors assumption. Our scheme is the first scheme to be based on a standard assumption and satisfying the two properties mentioned above.

The main technical novelty in our work is the design of an FHE scheme that enables us to apply elegant analyses done in the context of classical verification of quantumness from LWE (Brakerski et. al.(FOCS'18, JACM'21) and its parallel amplified version in Radian et. al.(AFT'21)) to the setting of secure leasing. This connection to classical verification of quantumness leads to a modular construction and arguably simpler proofs than previously known. An important technical component we prove along the way is an amplified quantum search-to-decision reduction: we design an extractor that uses a quantum distinguisher (who has an internal quantum state) for decisional LWE, to extract secrets with success probability amplified to almost one. This technique might be of independent interest.

\ifllncs
\else
\newpage
\tableofcontents
\newpage
\fi

\jiahui{TO DO: 1. check all column vectors  2.  }

\section{Introduction}


Quantum information, through principles such as the \emph{no-cloning theorem}, offers exciting possibilities for which there are no classical counterparts. An active area of research are primitives such as copy protection \cite{aaronson2009quantum} and quantum money \cite{wiesner1983conjugate}, which simply can't be realized classically. 

Copy protection, a task that captures a wide variety of unclonability related tasks,  is aimed at leveraging the power of unclonable quantum information to prevent piracy of software functionalities. However, constructions of quantum copy protection have turned out to be notoriously hard -- in fact, it is shown to be impossible in general \cite{ananth2020secure} and  only exist relative to structured oracles \cite{aaronsonnew} if we aim at a generic scheme secure for all copy-protectable functions. Only a handful of functions are known to be copy-protectable without using oracles in the construction, but either they are built from very strong, less-standard cryptographic assumptions (e.g. post-quantum iO) \cite{coladangelo2021hidden,liu2022collusion} or the functionalities are very limited (point functions) \cite{coladangelo2020quantum,ananth2022feasibility}. 


Meaningful relaxations to copy protection come in the form of primitives that allow revocation of the quantum software via proofs of "destruction” of the  quantum functionality after use. Thus, if the revocation or proof verifies, a computationally bounded user is deprived of any ability to evaluate the function in any meaningful ways.

In this work, we consider one such recent  notion of secure key leasing \cite{agrawal2023public}, also called key-revocable cryptography in the literature \cite{ananth2023revocable}. This notion is inspired by secure software leasing in \cite{ananth2020secure}, but possesses significantly stronger security guarantees \footnote{Our security notion is \emph{stronger} and has more applicability than the previous, similar primitive called secure software leasing (SSL) \cite{ananth2020secure}. In SSL the adversarial lessee is only semi-malicious, while in \cite{agrawal2023public,ananth2023revocable} and our work it is fully malicious. See \cref{sec:related_works} for detailed discussions.}. 


In secure key leasing, a lessor (also called the client) leases a quantum state to a lessee (also called the server). The quantum state might allow the server to obtain a capability to compute an advanced classical function, such as decryption (in case of public-key encryption with secure key leasing), or PRF evaluation, for polynomially many times.

At a later point in time, the lessor may ask the lessee to return or destroy the key for reasons such as lessee's subscription expiring or the lessor(client) deciding to use another server. The notion of secure key leasing allows the lessee to ``delete" the quantum key and create a deletion certificate which can be verified by the lessor. Once the deletion certificate is sent to the lessor and verified, the security guarantee says that the lessee provably loses the associated capability. This must hold even if the lessee is fully malicious and may make arbitrary attempts to copy the quantum state or learn the underlying classical functionality before the deletion. A number of recent papers have studied the notion of secure key leasing for primitives like PKE, FE, FHE and PRF \cite{ananth2023revocable,agrawal2023public}, in the context of a quantum lessor and quantum lessee with quantum communication. While the notion of secure software leasing is interesting on its own, remarkably it has also found an application in the seemingly unrelated context of \emph{leakage or intrusion detection} \cite{EPRINT:CGLR23} where an adversary breaks into a machine to steal a secret key and one would like to detect if such an attack happened.

Secure key leasing is a uniquely quantum phenomenon which is enabled by the no-cloning theorem, clearly impossible in the classical setting. Therefore, some quantum capabilities are necessary to enable it. However, a natural question is: \emph{to what extent are quantum capabilities necessary? Do both the client and the server need quantum capabilities? Do we require quantum communication in addition to quantum computation?} As discussed in \cite{shmueli2022public,chevalier2023semi}, even in the future when full scale quantum computer come into the household and every user can maintain quantum memory, setting up large-scale end-to-end quantum communication is still expensive and lossy. Moreover, having a client operating only on classical resources and delegating the quantum work to a server is desirable in almost all sorts of circumstances.

A recent beautiful work of  Ananth, Poremba, and Vaikuntanathan \cite{ananth2023revocable} used the dual Regev cryptosystem to build primitives such as public-key encryption (PKE), fully-homomorphic encryption (FHE), and pseudo random functions (PRFs) with key revocation. While their basic constructions require quantum communication as well as a quantum certificate of deletion, they can extend their constructions to achieve classical deletion certificates \cite{ananth2023revocable} as well. However, a key limitation is their work is that the security only holds either based on an unproven conjecture, or only against an adversarial server (leasee) which outputs a valid deletion certificate with probability negligibly close to 1. Proving security against general adversaries based on standard assumptions was left as an open problem. In addition, in both works \cite{agrawal2023public,ananth2023revocable}, the client and the server must have both quantum capabilities and quantum communication is also necessary in the key generation process. This leads to the following natural open problem: 

\begin{center}
\emph{Can we build secure leasing for unlearnable functions e.g. PKE decryption, FHE decryption, and other cryptographic functionalities  with a completely classical lessor/client? and more desirably, from standard assumptions?}
\end{center}
If so, this would lead to constructions which require only bare minimum quantum infrastructure needed in our setting.

\subsection{Our Results}

We answer all the above questions affirmatively thus settling the open problem from \cite{ananth2023revocable}. We obtain the following results:


\begin{theorem}(Informal)
    \label{tyhm:main_informal}
    Assuming the post-quantum sub-exponental hardness of Learning-with-Errors, there exists a secure key leasing scheme for PKE and FHE with a completely classical lessor/client.
    
\end{theorem}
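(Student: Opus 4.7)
The plan is to instantiate the scheme as the (primal) Gentry-Sahai-Waters FHE where the public matrix $\bA \in \Z_q^{n \times m}$ is sampled together with \emph{two} short LWE secrets $\bs_0,\bs_1$ satisfying $\bs_b \bA \approx \by \pmod q$ for $b \in \{0,1\}$, giving precisely the $2$-to-$1$ noisy trapdoor claw-free structure used in classical verification of quantumness. Both preimages decrypt GSW ciphertexts identically (decryption is a rounding operation controlled only by the sign of the noise), so the leased quantum key can be the superposition $\tfrac{1}{\sqrt{2}}(\ket{0,\bs_0}+\ket{1,\bs_1})$, applied coherently and uncomputed so as to support polynomially many reusable decryptions. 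Because the classical lessor knows $(\bs_0,\bs_1)$ explicitly, no quantum channel is needed: the server prepares the claw superposition itself from classical instructions via a remote-state-preparation subprotocol.

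First I would specify the classical protocol. The lessor runs a trapdoor sampler for $\bA$ that also yields $(\bs_0,\bs_1)$, publishes $\bA$ (plus standard GSW auxiliary data) as $\mpk$, and sends classical messages guiding the server to coherently create the preimage superposition; the crucial observation is that under LWE the server cannot collapse this state to a single preimage. Deletion is then implemented by a Hadamard-basis measurement of the preimage register, producing a classical string $\bd$ encoding the equation $\bd \cdot (\bs_0 \oplus \bs_1) = 0$ (up to LWE rounding slack). $\VerDel$ is the purely classical check of this linear equation, which the lessor can perform from its stored $(\bs_0,\bs_1)$. Correctness and polynomial reusability follow from standard coherent-decryption arguments, and CPA security before deletion reduces directly to decisional LWE since $\mpk = \bA$ is statistically indistinguishable from uniform.

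For security after deletion I would argue by hybrids that any QPT adversary simultaneously producing a valid $\bd$ and distinguishing ciphertexts with noticeable advantage breaks sub-exponential LWE. The certificate $\bd$ pins the adversary to one linear equation on $\bs_0 \oplus \bs_1$; by the structure of the GSW public key, a post-deletion ciphertext distinguisher is equivalent to a decisional LWE distinguisher for some $\bs_b$ with internal quantum auxiliary state. If I can extract that $\bs_b$, then combined with $\bd$ I recover both $\bs_0$ and $\bs_1$, violating the adaptive hardcore / search-LWE hardness underlying the claw sampler. The main obstacle, and the central technical novelty I would have to build, is exactly this extractor: because the distinguisher carries a quantum auxiliary state it cannot be naively rewound, and a classical Goldreich-Levin argument only yields extraction with inverse-polynomial probability, which is insufficient to simultaneously beat both coordinates of the claw. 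To lift this to probability $1 - \negl$, I would combine the parallel-repetition amplification of Radian et al.\ with Zhandry-style projective/threshold implementation machinery to repeatedly test and re-use the distinguisher coherently without destroying its state, then run a quantum inner-product extractor on the amplified predicate to pull out $\bs_b$. Plugging this amplified quantum search-to-decision reduction into the hybrid chain closes the proof; the sub-exponential LWE assumption accommodates the security loss from parallel repetition and threshold-implementation precision. The FHE case is immediate since the construction is already GSW, so homomorphism is inherited verbatim.
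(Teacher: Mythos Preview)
Your high-level architecture matches the paper, but three concrete steps would fail as stated. First, the NTCF roles are inverted: the lessor cannot sample $(\bs_0,\bs_1)$ together with $\bA$, because the preimages are fixed only after the \emph{lessee} runs the NTCF range-superposition procedure and measures the image $\by$. The lessor samples just the NTCF key $(\bA,\bs\bA+\be)$ with a trapdoor, receives $\by$ back as part of $\pk$, and inverts it later for verification---this is exactly why no quantum channel or remote-state-preparation subroutine is needed. Second, your contradiction does not close: a valid certificate is a single parity $(c,\bd)$ with $c=\bd\cdot(\cJ(\bx_0)\oplus\cJ(\bx_1))$ (not ``$=0$''), and together with one extracted preimage $\bx_b$ it does \emph{not} let you solve for $\bx_{1-b}$. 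The actual contradiction is directly with the adaptive-hardcore-bit (2-of-2) soundness of NTCF, which forbids simultaneously outputting a preimage and a valid equation for the same $\by$. Since a single instance has 2-of-2 soundness only $\approx 1/2$, the paper embeds $k=\lambda$ parallel NTCF instances \emph{in the construction itself} (the last row of the GSW public matrix is $\sum_i \by_i$) and reduces to the Radian et~al.\ parallel game with negligible soundness; putting the repetition only in the analysis, as you suggest, would not suffice.

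Third, and most technically, the extractor you sketch will not work here. The relevant LWE secrets $\bx_i$ have coordinates in a superpolynomial range $[B_X]$, so a Goldreich--Levin inner-product extractor does not apply. The paper instead runs a Regev-style coordinate-by-coordinate search-to-decision: for each coordinate and each candidate value $g$, it applies $\ati$ with respect to a distribution that is statistically LWE-like when $g$ is correct and statistically uniform when $g$ is wrong. The new observation that lifts extraction to probability $1-\negl$ (rather than inverse-polynomial) is that when $g$ is wrong the $\ati$ measurement acts, up to negligible trace distance, as the identity on any input state---so the auxiliary quantum state survives all the wrong guesses intact, and only the (polynomially many) correct-guess measurements accumulate error. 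This identity-like behavior of $\ati$ on indistinguishable-pair distributions is the technical heart of the argument and is not supplied by the threshold-implementation machinery alone.
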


More specifically, from LWE, we can achieve secure leasing for FHE decryption circuit with following properties:
\begin{enumerate}
    \item The key generation is a one-message protocol with one classical message from the lessor to lessee and one back. \footnote{As discussed in \Cref{sec:relation_security_defs}, we need to add a second message to deal with impostor security, which is orthogonal to the major anti-piracy security we prove. For the major anti-piracy security defined in \Cref{def:regular_skl_security_classical},\Cref{def:gamma_skl_game}, the protocol suffices to be a one-message protocol from lessor to lessee.}.
    
    \item Upon being asked to delete the key, the lessee will produce a classical certificate, and send it to the client. The client will perform a classical verification algorithm on the certificate.

    \item As natural in the definition for secure key leasing of classical PKE, the encryption is a public, classical procedure. The decryption can be done quantumly (on the lessee/server's side) or classically (on the lessor/client's side)
    
\end{enumerate}

A key contribution of our work is to connect our problem to classically verifiable proof of quantumness \cite{brakerski2021cryptographic}. The latter is a well-developed research direction with beautiful ideas. This connection allows us to design modular constructions and write proofs of security that are arguably simpler, thanks to a number of technical ideas that we could rely on the literature in classical verification of quantum computation.




  To show the security of our construction, we prove along the way a quantum search-to-decision reduction for LWE that extracts the secret \emph{with the success probability boosted to almost 1}. 

  \begin{theorem}[Informal]
  \label{thm:informal_lwe_searchtodecision}
   Given a quantum distinguisher with an auxiliary quantum state, that distinguishes LWE samples from random with inverse polynomial probability for some arbitrary secret,  there exists an extractor running in time polynomial (in the secret size, the distinguisher's time  and inverse its advantage) that produces the secret with probability negligibly close to 1.    
  \end{theorem}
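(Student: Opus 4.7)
The plan is to lift the classical coordinate-by-coordinate search-to-decision reduction for LWE to the quantum setting by using Zhandry-style (almost-)projective threshold implementations, so that a \emph{single} copy of the distinguisher's auxiliary quantum state $\rho$ suffices to extract every coordinate of $\mathbf{s}$ while amplifying success to $1-\mathsf{negl}$.

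\textbf{Classical template.} First I would instantiate the standard rerandomization trick. Given a fresh honest sample $(\mathbf{A},\mathbf{b}=\mathbf{A}\mathbf{s}+\mathbf{e})$ and a guess $g\in\mathbb{Z}_q$ for the $i$-th coordinate $s_i$, form $\mathbf{A}'=\mathbf{A}+\mathbf{u}\mathbf{e}_i^{\top}$ and $\mathbf{b}'=\mathbf{b}+g\mathbf{u}$ for a fresh uniform $\mathbf{u}$. Then $\mathbf{b}'-\mathbf{A}'\mathbf{s}=\mathbf{e}+(g-s_i)\mathbf{u}$, so $(\mathbf{A}',\mathbf{b}')$ is a valid LWE sample under the same secret $\mathbf{s}$ when $g=s_i$ and is statistically uniform when $g\neq s_i$. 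Letting $p_0$ denote the distinguisher's acceptance probability on uniform inputs, on this rerandomized distribution it accepts with probability $p_0+\varepsilon_g$, where $\varepsilon_{s_i}\approx\varepsilon=1/\mathrm{poly}$ and $\varepsilon_g\approx 0$ for $g\neq s_i$.

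\textbf{Quantum amplification and extraction.} Because $\rho$ is a single quantum state and every invocation is potentially destructive, I would next apply Zhandry's projective/threshold implementation $\mathsf{TI}_{p_0+\varepsilon/2}$ with respect to the honest rerandomized-LWE distribution. With probability $\Omega(\varepsilon)$ this succeeds and produces a post-measurement state $\rho'$ certified to accept with probability at least $p_0+\varepsilon/2$; by the almost-projective property of $\mathsf{TI}$, re-applying it yields the same outcome up to $\delta$-disturbance per call, for any inverse-polynomial $\delta$. The extractor then, for every coordinate $i$ and every candidate $g\in\mathbb{Z}_q$, estimates the acceptance probability on the $g$-rerandomized distribution to additive accuracy $\ll\varepsilon/4$ by running the distinguisher $T=\mathrm{poly}(n,1/\varepsilon,\lambda)$ times on independently rerandomized inputs, interleaved with $\mathsf{TI}$-reprojections of $\rho'$. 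Only $g=s_i$ yields an estimate close to $p_0+\varepsilon/2$; every other $g$ clusters near $p_0$. A Chernoff/threshold decision over the $T$ estimates drives the per-coordinate identification error down from $1/\mathrm{poly}$ to $2^{-\lambda}$, and a union bound over all $nq$ tests recovers $\mathbf{s}$ with probability $1-\mathsf{negl}(\lambda)$.

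\textbf{Main obstacle.} The principal technical difficulty is bookkeeping the cumulative disturbance of $\rho'$ across the roughly $nqT$ measurements performed: the almost-projective property provides only a per-application disturbance bound, so $\delta$ must be chosen small enough that $nqT\cdot\delta$ remains negligible, which in turn forces the inner $\mathsf{TI}$ to be run to high precision and accounts for the polynomial dependence of the extractor on the distinguisher's runtime and on $1/\varepsilon$. A secondary subtlety is that the ``uniform'' branch of the rerandomization is only statistically (not perfectly) uniform, so the $\varepsilon/2$ separation between the correct and incorrect branches must strictly dominate both the estimation error and this statistical slack; choosing a sub-exponential noise-to-modulus ratio in the LWE parameters resolves this and lets the $\varepsilon$-sized gap be reliably detected throughout the extraction.
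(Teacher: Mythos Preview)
Your high-level plan matches the paper's: use the classical coordinate-by-coordinate rerandomization, then make it work on a single copy of the auxiliary state via threshold implementations. However, the core extraction mechanism you describe --- ``running the distinguisher $T$ times on independently rerandomized inputs, interleaved with $\mathsf{TI}$-reprojections of $\rho'$'' followed by a Chernoff bound --- has a real gap, and the paper resolves it by a different route.

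The almost-projective guarantee for $\mathsf{TI}$ only controls what happens when you apply $\mathsf{TI}$ \emph{twice consecutively}; it says nothing about a sequence $\mathsf{TI}$--$M$--$\mathsf{TI}$ with an arbitrary intervening measurement $M$. A single-shot run of the distinguisher on one sample from the $g$-rerandomized distribution is exactly such an $M$. For a wrong guess $g$ the sample is uniform, the distinguisher's output bit carries no useful signal, and measuring it can collapse $\rho'$ out of the good eigenspace in an uncontrolled way. Reprojecting with $\mathsf{TI}$ afterward does not undo this; over $nqT$ such interventions you have no bound on the cumulative damage, so your $\delta$-bookkeeping does not go through.

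The paper's fix is to move the estimation \emph{inside} the approximate threshold implementation rather than performing it by external repetition. For each guess $g$ it applies $\mathsf{ATI}^{\epsilon,\delta}_{\mathcal{P},\mathcal{D}(g),1/2+\gamma}$ --- that is, it changes the distribution the $\mathsf{ATI}$ is testing to the $g$-rerandomized one --- and simply reads off the binary outcome. When $g$ is correct, $\mathcal{D}(g)$ equals the LWE distribution, so this $\mathsf{ATI}$ outputs $1$ and the state stays in the good eigenspace by almost-projectivity. When $g$ is wrong, $\mathcal{D}(g)$ is ``uniform versus uniform'', and the paper proves a new structural property (their identity-like $\mathsf{ATI}$ theorem): an $\mathsf{ATI}$ whose two underlying distributions are identical (or $\eta$-close) outputs $0$ and acts as the identity on \emph{any} input state up to trace distance $O(\eta\ln(1/\delta)/\epsilon)$. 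This is what lets the extractor cycle through all wrong guesses essentially for free; it replaces your Chernoff-over-$T$-runs step entirely, and it is the missing technical ingredient in your proposal.
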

 To the best of our knowledge, this is the first search-to-decision reduction with an almost-perfect search when using quantum algorithms with  auxiliary quantum states.
 To prove the above theorem, we rely on a measurement implementation on the distinguisher's quantum state \cite{marriott2005quantum,z20,aaronsonnew}. In particular, we make new observations about this measurement procedure, by aligning its properties with the features required by the classical LWE search-to-decision reduction by Regev \cite{Regev05}. 
 
We take a markedly distinct approach compared to the previous work \cite{bitansky2022constructive} on post-quantum reductions, even though both of our approaches set basis upon the \cite{marriott2005quantum,z20} measurement techniques. We believe this technique and the new observations made maybe of independent interest, such as  lifting other \cite{Regev05}-like reductions from classical to post-quantum.  



\subsection{Comparison with Related Works}
\label{sec:related_works}


\paragraph{Secure Key Leasing/Revocable PKE: Comparison with Existing Works }

Previously, secure circuit leasing for PKE was achieved in \cite{agrawal2023public}. \cite{agrawal2023public} achieved it by a clean compiler from any PKE scheme. But they need quantum communication and quantum computation on the lessor's side throughout the protocol, and there is no easy, generic way to turn their scheme into one with classical leaser with classical communication.

A similar result on revocable PKE, FHE and PRF to \cite{ananth2023revocable} was shown by the concurrent work from LWE with interesting ideas from quantum money combined with dual Regev PKE.  However, they need new conjectures for enabling the security proof  where adversaries only have inverse polynomial probability of revocation, otherwise they can only handle  a weaker security where the adversary has to pass the revocation with probability almost 1 \cite{ananth2023revocable}.
Whereas we achieve the stronger definition, i.e. adversaries only need to pass the revocation test with noticeable probability, from LWE alone. Moreover, they also need the lessor to prepare the quantum key and send the key via a quantum channel.

Our techniques also differ greatly from the above works. Our main insight is to manipulate the primitives used in the classical verification of quantum computation protocol and turning it into an FHE scheme from LWE that supports leasing.

\paragraph{Secure Software Leasing:}
SSL, short for Secure Software Leasing in \cite{ananth2020secure} is another relaxation of quantum copy protection. It is a weaker security notion than the notion studied in this work, even though bearing a similar name. 
In SSL, after revocation/deletion, one tests whether the leftover state of the adversary function correctly,  using the leaser’s circuit. In other words, in SSL, the pirate can be arbitrarily malicious in copying the function, but the free-loaders of the pirated programs are semi-honest by following the authentic software’s instructions. In our work, we allow the adversarial state to be run  by any adversarial circuit; as long as it gives the correct output, we count it as a successful attack., i.e. both the pirate and free-loaders can have arbitrary malicious behaviors. 

We defer more related works and discussions on SSL to \Cref{sec:other_related_works}.

\paragraph{Post-quantum LWE Search-to-Decision Reduction with Quantum Auxiliary Input}
\cite{bitansky2022constructive} showed a nice generic method to lift classical reductions to post-quantum reductions for a class of primitives satisfying certain constraints. Even though the classical search-to-decision reduction for LWE fit into the framework required in the above lifting theorem, our proof roadmap requires a stronger statement 
than the one shown in \cite{bitansky2022constructive}.
To the best of our knowledge, there is no generic or black-box method to modify the statement in \cite{bitansky2022constructive} for our use. Besides, our techniques differ significantly from \cite{bitansky2022constructive} 
 (and \cite{chiesa2022post}, by which \cite{bitansky2022constructive} was inspired) in that we do not perform the "nested alternating projections" procedure used in \cite{bitansky2022constructive,chiesa2022post}, 
even though both of our approaches are based on applying the measurement procedures proposed in \cite{marriott2005quantum,z20}. The way we use \cite{marriott2005quantum,z20} aligns intuitively closer with the classical reduction in \cite{Regev05} and has a simpler proof overall. 
Therefore,  we believe our techniques may be of independent interest and can shed light on new observations in (post-)quantum reductions.


Another work \cite{sudo2023quantum} presents a quantum LWE search to decision reduction (with the possibility of amplification) but their quantum distinguisher is simply a unitary that does not have an auxiliary state. Therefore, it is incomparable to our techniques. Our reduction is more general and more challenging to achieve due to the potentially destructible auxiliary quantum state.

\section{Technical Overview}
\newcommand{\pkeskl}{\mathsf{PKESKL}}
\newcommand{\mat}[1]{\mathbf{#1}}

We start by recalling the definition of Public-Key Encryption with Secure Key Leasing (also referred to as $\pkeskl$ from here on). (Levelled) homomorphic encryption with secure key leasing is defined analogously.
\jiahui{add setup and classical commu}
In a $\pkeskl$ scheme with classical leaser, we have algorithms  ($\Setup$, $\KeyGen$, $\Enc$, $\Dec$, $\Del$, $\VerDel$). The $\Setup$
 algorithm (to be run by the lessor) takes in  a security parameter and outputs a classical master public-key $\mpk$ and a classical trapdoor $\td$. The $\KeyGen$ algorithm (to be run by the lessee) takes input $\mpk$ and outputs a classical public-key $\pk$ and a quantum secret key $\rho_{\sk}$. $\Enc$ is defined in a usual way, it is a classical procedure that takes as input the key $\pk$ and a bit $\mu$ and produces a ciphertext $\ct$. The quantum procedure $\Dec$ takes as input a ciphertext $\ct$ and the state $\rho_{\sk}$ and should produce the message that was encrypted. Moreover, the decryption should keep the state $\rho_{\sk}$ statistically close to the initial state as long as the decrypted ciphertext were not malformed.
Then, the deletion algorithm $\delete$ should take in the state $\rho_{\sk}$ and produce a classical certificate $\cert$. Finally, the certificate can be verified by the classical algorithm $\VerDel$ using $\pk$, and the trapdoor $\td$ as other inputs. Conclusively, our scheme will allow the generation of key $\rho_{\sk}$ by a protocol involving merely classical effort on the part of a client (lessor) and quantum effort from the server (lessee).

Our security requirement is also fairly intuitive. In the security game, the challenger generates a classical $\mpk$ along with a trapdoor $\td$ and gives $\mpk$ to  a quantum polynomial time attacker $\A$. $\A$ generates a quantum decryption key $\rho_{\sk}$ and its associated public key $\pk$. $\A$ publishes its $\pk$.
The challenger then asks $\A$ to provide a certificate $\cert$. The game aborts if $\cert$ does not verify. However, if the certificate verifies, we require that $\A$ should not be able to distinguish an encryption of $0$ from an encryption of $1$ with a non-negligible probability. We formalize this by requiring that after the step of successful verification of the certificate, an adversary outputs a quantum state $\rho_{\mathsf{Delete}}$ as a adversarial quantum decyrptor. 

Checking a quantum state 's success probability can be difficult due to its destructible and unclonable nature.
To check if this state is a good decryptor, we define a special binary-outcome projective measurement, called Threshold Implementation, $\ti_{\frac{1}{2}+\gamma}$ (from \cite{z20,aaronsonnew}). This measurement projects the state $\rho_{\mathsf{Delete}}$ onto the subspace of all possible states that are good at distinguishing ciphertexts with probability at least $\frac{1}{2}+\gamma$. If $\ti_{\frac{1}{2}+\gamma}$ outputs $1$ on $\rho_\delete$ with some noticeable probability,
it implies that a noticeable "fraction" of the quantum states can distinguish ciphertexts with probability  at least $1/2+\gamma$.

Passing the $\ti_{1/2+\gamma}$ test combined with handing in a valid certificate, we consider such an adversary to be successful.

The definitions are inspired from the framework introduced by Zhandry \cite{z20}(see Section \ref{sec:strong_security},\Cref{sec:unclonable dec ati} for details) and will imply the existing PKE-SKL definition used in \cite{agrawal2023public,ananth2023revocable}. As we will discuss later,  defining the security via Threshold Implementation also assists us in our security proof. 


We then start by describing the ideas to build a PKE scheme first. These ideas can be lifted to build a (levelled) FHE scheme. Below, we show a series of ideas leading up to our final scheme.

\paragraph{Starting Point: Regev PKE.} Our starting point is a public-key encryption due to Regev \cite{Regev05}, which is also the basis of many FHE schemes such as the BV scheme \cite{FOCS:BraVai11}, and the GSW scheme \cite{C:GenSahWat13}. In Regev PKE, the public key consists of a matrix $\mat{A}\in \Z^{n\times m}_{q}$ where $m=\Omega(n\cdot \log q)$, along with $m$ LWE samples $\mat b=\mat s \mat{A}+\mat{e} \mod q$ where $\mat s \leftarrow \Z^{1\times n}_{q}$ is a randomly chosen and $\mat e$ is a discrete Gaussian distributed error with some parameter $\sigma \ll q$. The classical secret key is simply the secret vector $\mat{s}$ for the LWE sample.

To encrypt a message $\mu \in \{0,1\}$, one samples a binary random vector $\mat r \in \Z^{m\times 1}_{q}$. Encryptor then computes $\ct=(\ct_1,\ct_2)$ where $\ct_1=(\mat A \cdot \mat r$, $\ct_2=\mat{b}\cdot \mat{r}+\mu \cdot \lceil \frac{q}{2}\rceil)$. Observe that the correctness follows from the fact that $\mat{b}\cdot \mat{r}\approx\mat{s}\cdot \mat{A}\cdot \mat{r}$, where $\approx$ means that they are close to each other in the $\ell_2$ norm, since $\mat{r}$ and $\mat{e}$ are both low-norm. The security follows from the fact that due to LWE, $\mat{b}$ is pseudorandom. Therefore, one can apply leftover hash lemma argument to that $\mu$ is hidden. We want to turn this into a scheme that supports leasing. Namely, instead of having the secret key $\mat{s}$, we would like to have it encoded as a quantum state that somehow encodes $\mat{s}$ and (say) another hidden value (for the deletion certificate) so that it's computationally hard for an attacker to produce both? Unfortunately, it's not so clear how to make things work if the scheme has exactly one key $\mat{s}$.

\jiahui{give explanation on why having the binary decomposition function....Seem a bit arbitrary here to have the fucntion J}
\paragraph{Two Key Regev PKE.} Our next idea is to modify the Regev style PKE scheme to work with two secret keys $\mat{s}_0, \mat{s}_1$ instead. The public-key would now consist of a matrix $\mat{A}$ along with two LWE samples $\mat b_0=\mat s_0\mat{A}+\mat{e}_0$ and $\mat b_1=\mat s_1\mat{A}+\mat{e}_1$. To generate an encryption of $\mu \in \{0,1\}$ one outputs $\ct=(\ct',\ct_0,\ct_1)$ where $\ct'=\mat{A}\mat{r}$, $\ct_i= \mat{b}_i\mat{r}+\mu \cdot \lceil \frac{q}{2}\rceil$ for $i\in \{0,1\}$. Observe that now the ciphertext $\ct$ could be decrypted using either of the two secret vectors $\mat s_0$ and $\mat s_1$. This gives us the following idea: perhaps one could encode the secret key as a superposition of the two keys, $\rho_{\sk}=\ket{0,\mat{s}_0}+\ket{1,\mat{\mat{s}_1}}$ (ignoring normalization). Observe that using such a state one could decrypt any honestly generated ciphertext without disturbing the state. 

What is a good candidate for a deletion certificate? Naturally, in order to delete the secret information in the computational basis, a deletion certificate could be a measurement in the Hadamard basis (considering them encoded in binary), giving a string $\bd$ and a value $\alpha = \langle \bd, \bs_0\oplus \bs_1 \rangle$. 
 We could perhaps hope that if an algorithm supplies the measurement outcome in the Hadamard basis correctly with overwhelming probability, then it would destroy information on the two secret keys. 

How would one analyze such a scheme? The hope would be to arrive at a contradiction from such an adversary. Assuming that such an adversary outputs $(\alpha, \mat d)$ with overwhelming probability and conditioned on that distinguishes encryptions of $0$ from $1$ with inverse polynomial probability, we would hope to extract one of the secrets $\mat s_0$ or $\mat s_1$ with inverse polynomial probability, from the adversary's internal quantum state. Simultaneously, we also have obtained the string $(\alpha, \mat d)$ from the certificate handed in. We would like to argue that no efficient adversary is able to produce both a secret vector and a string $(\alpha, \mat d)$. While this is often a reasonable property, subtleties can come in when combined with other structures of our encryption scheme.
We instead turn to a very similar setting that has been extensively analyzed in prior works in context of classical proof of quantumness and certified randomness \cite{brakerski2021cryptographic}. This will save us the effort of reinventing the wheel.
\jiahui{too early to mention search to decision, just say extract}

\paragraph{Inspiration from Classical Verification of Quantum Computation.} Let us now recall the very similar setting in \cite{brakerski2021cryptographic} which constructed proofs of quantumness from LWE. They build \emph{Noisy Claw-free Trapdoor families (NTCF)} from LWE and show that a related task above is hard. Without going into the unnecessary semantics, we describe the setting relevant to us. 

Consider the following game: The challenger samples $\mat{A}$ and an LWE sample $\mat{k}=\mat{s}\mat{A}+\mat{e} \mod q$ where $\mat{s}$ is (say) binary random secret $\mat{s}$ and $\mat{e}$ is chosen from a discrete Gaussian distribution with parameter $\sigma$. For any such sample, there exists a BQP algorithm that could sample a vector $\mat y=\mat{x}_0\mat{A}+\mat{e}'$ where  $\mat{e}'$ chosen from discrete Gaussian with parameter $\sigma' \gg \sigma$ (by a superpolynomial factor), along with a special state $\qsk$. The state is of the form $\qsk =\ket{0,\mat{x}_0}+\ket{1,\mat{x}_1}$ where $\mat{x}_1=\mat{s}-\mat{x}_0$. Measuring this state yields either $\mat{x}_0$ so that $\mat{x}_0\mat{A}\approx \mat{y}$ or $\mat{x}_1$ so that $\mat{x}_1\mat{A}$ is close to $\mat{y}-\mat{k}$. On the other hand, if we measured in the Hadamard basis we will again obtain $(\mat{d},\alpha)$ for a random string $\mat{d}$ so that $\alpha=\langle \mat{d}, \mat{x}_0\oplus \mat{x}_1\rangle$ \footnote{In the actual protocol, $\alpha$ is not equal to $\langle \bd, \bs \rangle$ where $\bs = \bx_0 \oplus \bx_1$. In fact, we first apply a binary decomposition to $\bx_0, \bx_1$ and then measure in Hadamard basis. i.e. we have $\alpha=\langle \mat{d}, \mathcal{J}(\mat{x}_0)\oplus \mathcal{J}(\mat{x}_1)\rangle \mod 2$, where $\mathcal{J}$ is the invertible binary decomposition function. Such a binary decomposition is needed for the security proof of the NTCF to go through in \cite{brakerski2021cryptographic}}. 

 The test of quantumness asks an efficient quantum algorithm to produce a valid LWE sample $\mat{y}=\mat{x}_0\mat{A}+\mat{e'}$ of the kind above and based on a random challenge produce either one of $(\mat{x}_0,\mat{x}_1)$ for $b\in \{0,1\}$ or a ``valid"  tuple $(\mat{d},\alpha)$ as above.

\jiahui{I don't think much discussion on proof of quantumness. Just say these properties they show along the way also hold for quantum provers. Revised already}

This above protocol has a quantum correctness and a classical soundness, which we omit discussions due to irrelevance. 
However, to enable 
capabilities such as certified randomness/verification of quantum computation\cite{mahadev2018classical},
they show a more advanced soundness property against quantum provers.
That is, even a BQP adversary cannot \emph{simultaneously} produce these responses  (the "computational response" for $b = 0$,  and the "Hadamard response" for $b = 1$), assuming quantum security of LWE. 

In short, we conclude the properties we need from the  NTCF-based protocol: any QPT algorithm, given $\mat{A}$ and $\mat{k}=\mat{s}\mat{A}+\mat{e}$, can produce a valid LWE sample $\mat{x}_0\mat{A}+\mat{e}' \approx \mat{x}_0\mat{A}+\mat{e}' + \mat{k} =\mat{y}$
. If it is asked to output \emph{either}, (1) one of $\mat{x}_0,\mat{x}_1$, \emph{or}, (2) a non-trivial pair $(\mat{d},\alpha)$ such that $\alpha=\langle \mat{d}, \mathcal{J}(\mat{x}_0)\oplus \mathcal{J}(\mat{x}_1)\rangle$, it can do so with probability 1. However, if it is asked to output \emph{both} (1) and (2) at once for the same $\by$, it cannot achieve so with probability  noticeably higher than trivial guessing for one of them.

\paragraph{From Noisy Trapdoor Claw-free Families to Our Scheme.} Now we gradually approach a scheme similar to the two-key Regev PKE scheme which can benefit from the NTCF security properties described above. The idea is that we have a public key $\mat{A}$, an LWE sample $\mat{k}=\mat{s}\mat{A}+\mat{e}$ where $\mat{s}$ is a random binary secret and $\mat{e}$ is Gaussian distributed exactly like the distribution used for proof of quantumness. Along with $\mat{k}$, the public key also consists of an LWE sample $\mat{y}=\mat{x}_0\mat{A}+\mat{e}'$ chosen as per the distribution in the proof of quantumness again. The client maintains a trapdoor $\td$: $\td$ consists of a trapdoor matrix $\mat{T}$ of $\mat{A}$ that can be generated at the same time as sampling $\mat{A}$ (recall that $\td$ will be used to verify the deletion certificate later). The leased decryption state $\rho_{\sk}$ is the superposition $\ket{0,x_0}+\ket{1, x_1}$ where $\mat{x}_1=\mat{x}_0-\mat{s}$.

To encrypt a bit $\mu$, one samples a random binary string $\mat{r}$ and computes a ciphertext $\ct=(\ct_1,\ct_2,\ct_3)$ where $\ct_1=\mat{A}\mat{r}$, $\ct_2=\mat{k}\mat{r}$ and $\ct_3=\mat{y}\mat{r}+\mu\lceil\frac{q}{2} \rceil$. Observe that the ciphertext can be decrypted by both $\mat{x}_0$ or $\mat{x}_1$. This is because $\ct_3-\mat{x}_0\mat{A}$ is close to $\mu \lceil \frac{q}{2}\rceil$ and similarly, $-\ct_2+\ct_3+\mat{x}_1\mat{A}$ is also close to $\mu \lceil \frac{q}{2}\rceil$. Thus, there is a way to decrypt a ciphertext coherently without disturbing the state $\rho_{\sk}$, by the gentle measurement lemma \cite{aaronson2004limitations}. 
Accordingly, we set the deletion certificate to be the string $(\alpha,\mat{d})$ to use the property of NTCF.

\paragraph{Security: First Attempt}
 We first discuss a very weak definition of security which the above simple scheme (almost) satisfies already. In this definition, a successful BQP adversary provides a deletion certificate that's valid with probability $1-\negl$ for some negligible $\negl$ and then conditioned on that distinguishes ciphertexts with a noticeable probability. One observation is that given a decryptor that's capable of distinguishing encryptions of zero from encryptions of one, must also enable distinguishing ciphertexts of zero of the form
$(\ct_1,\ct_2,\ct_3)$ where $\ct_1=\mat{A}\cdot \mat{r}$, $\ct_2=\mat{k}\cdot \mat{r}$ and $\ct_3=\mat{y}\cdot \mat{r}$ from truly random strings with a noticeable probability. This means that this distinguisher distinguishes samples of the form $(\mat{A}\mat{r}, \mat{k}\mat{r}, \mat{y}\mat{r})$ from random. Since $\mat{k}$ is pseudorandom due to LWE it must also distinguish $(\mat{A}\mat{r}, \mat{k}\mat{r}, \mat{y}\mat{r})$ from random where $\mat{k}$ is chosen at random. Observe that $\mat{y}=\mat{x}_0\mat{A}+\mat{e}'$, $\mat{y}\mat{r}=\mat{x}_0\mat{A}\mat{r}+\mat{e}'\mat{r}$. Given that $\mat{e}'\mat{r}$ losses information on $\mat{r}$, and $\mat{A},\mat{k}$ are now chosen at random, one can appeal to LHL to show that such a distinguisher should distinguish between $(\ct_1,\ct_2,\ct_3)$ generated as $(\mat{a},u,\langle \mat{x}_0, \mat{a} \rangle+\mat{e}'\mat{r})$ from random, where $\mat{a}\in \Z^{n\times 1}_{q}$ and $u\in \Z_q$ are random. 

Together, $(\ct_1,\ct_3)$ are now almost distributed as an LWE sample in $\mat{x}_0$. If we have a quantum search to decision reduction for LWE using the quantum distinguisher (with an internal state), we should be able to extract $\mat{x}_0$ efficiently, giving rise to a contradiction. That is, this reduction would have produced a valid $(\alpha,\mat{d})$ with overwhelming probability and conditioned on that $\mat{x}_0$ with inverse polynomial probability (something ruled out by the NTCF in \cite{brakerski2021cryptographic}).

In this work, we construct a post-quantum (quantum) reduction that runs in time polynomial in $(B,n,\frac{1}{\epsilon}, \log q, T_{\A})$ where $\epsilon$ is the distinguishing advantage of an adversary that distinguishes samples LWE in a fixed secret from random samples and recovers that secret with probability polynomial in $\epsilon$\footnote{While we can also obtain such a reduction directly from plugging in Theorem 7.1 in \cite{bitansky2022constructive}, as we will discuss later, their reduction's success probability is not sufficient in our proof of the standard security. }. Above $B$ is the bound on the coordinates of $\mat{x}_0$ which we will set to be slightly superpolynomial and $T_{\A}$ is the running time of the adversary. 

There is a minor flaw in the above argument that can be fixed. The distribution $(\ct_1,\ct_2,\ct_3)$ formed as  $(\mat{A}\mat{r}, \mat{k}\mat{r}, \mat{y}\mat{r})$ does not behave statistically close to $(\mat{a},u,\langle \mat{a},\mat{x}_0\rangle+e)$ for truly random $\mat{a},u$ and an error $e$ sampled according to LWE distribution. The reason for that is that if we consider $\mat{y}\mat{r}=\mat{x}_0\mat{A}\mat{r}+\mat{e}'\mat{r}$,  the error $e=\mat{e}'\mat{r}$ might not behave as an statistically independent discrete Gaussian. To fix this issue, we modify our encryption algorithm to have a smudging noise $\mat{e}''$ with superpolynomially larger parameter $\sigma''\gg \sigma$ and construct $(\ct_1,\ct_2,\ct_3)$ as $(\mat{A}\mat{r},\mat{k}\mat{r},\mat{y}\mat{r}+\mat{e}''+\mu\lceil \frac{q}{2} \rceil)$. With this smudging $\mat{e}'\mat{r}$ can now be drowned by $\mat{e}''$ effectively now behaving as a fresh LWE sample in $\mat{x}_0$ with slightly larger noise. 
Accordingly, to make the final security proof go through, we design our search-to-decision reduction to also work for samples statistically-close to LWE versus samples close to random, as opposed to the simple case of real LWE versus truly random. 


\paragraph{Stronger Security from Parallel Repetition}
While the scheme above gives rise to a PKE scheme satisfying a weak but a non-trivial security guarantee, such a definition is not really enough. It might be possible that an adversary simply  guesses the correct certificate $(\alpha,\mat d)$ by picking $\mat{d}$ randomly and choosing $\alpha$ at random as well. With $\frac{1}{2}$ probability, this adversary might be right in producing a valid certificate. Since its state is still preserved, it could continue to successfully decrypt all ciphertexts (moreover, it could simply measure the state in the standard basis and keep a classical key). We'd like to achieve a stronger definition where even if the  certificate passes with a noticeable probability, upon successful passing, the advantage should not be non-negligible.

Our main idea there is that one could create $\lambda$ independent instances of the scheme where one encrypts secret sharing of the bit $\mu$ where $\lambda$ is the security parameter. The idea is that now the public key would consist of $\lambda$ independent matrices $\mat{A}_i$, $\lambda$ independent binary secret LWE samples $\mat{k}_i=\mat{s}_i\mat{A}_i+\mat{e}_i$ along with $\mat{y}_i = \mat{x}_{i,0}\mat{A}_i+\mat{e}'_i$ for $i\in [\lambda]$.  The deletion certificate would consist of $\lambda$ vectors $(\alpha_i, \mat d_i)$ and each of them are independently verified. The hope again is that if one is able to distinguish encryptions of $0$ from encryptions of $1$, then such an adversary should essentially be able to distinguish for each component $i\in [\lambda]$. We could use this to extract all $\{\mat{x}_{i,b_i}\}_{i\in [\lambda]}$($b_i = 0$ or 1). Then, we can appeal to the results from the parallel repetition variant of the game used for proofs of quantumness. This variant has been studied in \cite{radian2019semi}. The soundness property ensures that it is computationally hard to come up with noticable probability valid responses $\{(\alpha_i, \mat d_i)\}_{i\in [\lambda]}$ and $\{\mat{x}_{i,b_i}\}_{i\in [\lambda]}$($b_i = 0$ or 1) for all indices.


\paragraph{Lifting the Scheme to Support Homomorphism} The above idea could work however, we would have to examine a lot of care to extract the secrets $\{\mat{x}_{b_i,i}\}_{i\in [\lambda]}$ as a quantum state can evolve over time. At this point, we move to directly construct a (levelled) FHE scheme. The structural properties of our levelled FHE scheme would solve two problems in one shot. It will not only yield as an FHE scheme, but will also simplify our analysis. 

To lift to FHE, we take inspiration for the GSW scheme \cite{C:GenSahWat13}. In the GSW scheme, the public key consists of a pseudorandom LWE matrix $\mat{B}\in \Z^{N\times M}_{q}$ where $M=\Omega(N\log q)$ such that there exists one secret vector $\mat{s}\in\Z^{1\times N}_{q}$ so that $\mat{s}\mat{B}$ is small norm. Such matrices can be generated by sampling first $N-1$ rows at random and the last row generated as an LWE sample with the coefficient being the first $N-1$ rows. The encryption of a bit $\mu$ is of the form $\mat{B}\mat{R}+\mu \mat{G}$ where $\mat{R}$ is a small norm random binary matrix and $\mat{G}$ is a special gadget matrix \cite{EC:MicPei12}. The consequence of this is that $\mat{B}\mat{R}$ behaves essentially like a random LWE sample with the same secret vector as $\mat{B}$ and one could argue security by appealing to LHL. We omit a description of why the ciphertext could be homomorphically computed on (one could refer to either \cite{C:GenSahWat13} or our technical sections).

To lift to such an FHE scheme, we work with a specially chosen $\mat{B}$. We choose it as:

 \begin{align*}
        &\mat{B}= \begin{bmatrix} \mat{k}_1=  \bs_1\bA_1 +\be_1  \\ \bA_1 \\ \cdots
        \\ \cdots \\
       \mat{k}_{\lambda}=\bs_\lambda\bA_\lambda +\be_\lambda \\  \bA_\lambda 
        \\\sum_{i \in [\lambda]} \by_i = \sum_{i\in [\lambda]} \mat{x}_{i,0}\mat{A}_i+\mat{e}'_i\end{bmatrix} 
    \end{align*}

Observe that there are many vectors $\mat{v}$ so that $\mat{v}\mat{B}$ is small norm. This is crucial because we would like the ciphertexts to be decryptable using a secret key $\rho_{\sk}=\otimes_i^\lambda \rho_{\sk,i}$ where $\rho_{\sk,i}=\ket{(0,\mat{x}_{i,0})}+\ket{(1,\mat{x}_{1,i})}$. The idea is that $\mat{x}_{0,i}\mat{A}_i$ is close to $\mat{y}_i$ and similarly $\mat{k}_i-\mat{x}_{1,i}\mat{A}_i$ is also close to $\mat{y}_i$. Thus, $\mat{B}$ can be decrypted by viewing $\qsk$ as a vector we can perform a GSW-like decryption in superposition, with a gentle measurement on the final output.

For security, we consider the structure of the ciphertext $\ct=\mat{B}\mat{R}+\mu\mat{G}$, the term $\mat{B}\mat{R}$ is of the form:

 \begin{align*}
        &\mat{B}\mat{R}= \begin{bmatrix}  \bs_1 \bA_1\mat{R} +\be_1\mat{R} \\ \bA_1\mat{R} \\ \cdots
        \\ \cdots \\
       \bs_\lambda\bA_\lambda \mat{R} +\be_\lambda\mat{R} \\  \bA_\lambda \mat{R}
        \\\sum_{i \in [\lambda]} \by_i\mat{R} = \sum_{i\in [\lambda]} \mat{x}_{i, 0}\mat{A}_i \mat{R}+\mat{e}'_i\mat{R}\end{bmatrix} 
    \end{align*}

Our intuition to extract $\mat{x}_{i,0}$(w.l.o.g. or $\mat{x}_{i,1}$) for all $i\in [\lambda]$ is as follows. First we observe that it will suffice to devise an extractor that will succeed with  high probability in the case when $\mat{k}_i$ is chosen to be random as opposed to  be pseudorandom due to LWE. If we are able to do that,  we should be able to extract $\mat{x}_{i,b_i}$ with similar probability in the world where the $\mat{k}_i$'s are pseudorandom due to LWE security. 

To realize our relaxed goal,  we observe that the last row $\sum_{i\in [\lambda]} \mat{y}_i\mat{R}$ is close to a linear equation of the form: 
\begin{align*}
\sum_{i\in [\lambda]} \mat{y}_i\mat{R}  \approx \sum_{i\in [\lambda]} \mat{x}_{i, 0} \mat{V}_i 
\end{align*}
where $\mat{V}_i=\mat{A}_i\mat{R}$. 
Thus, we modify our encryption algorithm to have smudging noise. 

After we modify the encryption algorithm to compute $\mat{B}\mat{R}+\mat{E}+\mu \mat{G}$ where $\mat{E}$ is zero everywhere else except the last row containing discrete Gaussian with parameter $\sigma''$ superpolynomially more than the norm of $\mat {e}'_i\mat{R}$, we make a nice observation: the last row of $\mat{B}\mat{R}+\mat{E}$ would now be distributed as an LWE sample:  in terms of the secret $(\mat{x}_{1, b_1},\mat{x}_{2, b_2},\ldots,\mat{x}_{\lambda, b_\lambda})$($b_i = 0 \text{ or } 1$) with the coefficient vector $[\mat{V}^{\top}_1,\ldots,\mat{V}^{\top}_{\lambda}]^{\top}$. 
Now we could appeal the LHL to replace $\{(\mat{A}_i\mat{R},\mat{V}_i\mat{R},\mat{k}_i\mat{R})\}_{i\in [\lambda]}$ to completely random and the last row by a fresh sample with error according to discrete Gaussian in parameter $\sigma''$, with the long secret 
$(\mat{x}_{1, b_1},\mat{x}_{2, b_2},\ldots,\mat{x}_{\lambda, b_\lambda})$ and random coefficient vector $[\mat{V}^{\top}_1,\ldots,\mat{V}^{\top}_{\lambda}]^{\top}$. If an adversary now distinguish such ciphertexts from random, we should be able to extract the entire long secret vector in one shot $(\mat{x}_{1, b_1},\mat{x}_{2, b_2},\ldots,\mat{x}_{\lambda, b_\lambda})$ using our proposed quantum search-to-decision reduction. 

\paragraph{Completely Classical Communication and Classical Lessor}
Our protocol comes with classical lessor and classical communication for free.
We observe that from the property of the underlying NTCF-based protocol, the lessor only has to run $\Setup$ and sends the classical $\mpk = \{\bA, \bs\bA+\be\}$(naturally extended to parallel-repeated case) to the
lessee. The lessee can prepare its own quantum key given $\mpk$ 
by preparing a superposition of $\sum_{b,\bx, \be'}\ket{b, \bx_b } \ket{\by = \bx_b \bA + \be' + b \cdot \bs\bA }$ by sampling $\be'$ on its own. It then measures the $\by$-register to obtain a public key $\by$ and a quantum decryption key of the form $\ket{0,\bx_0} 
+\ket{1,\bx_1}$. 
Working with the properties of NTCF (shown in \cite{brakerski2021cryptographic}), the security of our scheme will be guaranteed even for maliciously generated $\by$.


\subsection{Detailed Overview on Security Proof: Reduction to NTCF and the Use of Search-to-Decision Reduction for LWE}
\label{sec:searchdecisiontecover}

\paragraph{Threshold Implementation}
Before going into more technical details, we briefly describe the properties of the 
measurement we perform on the adversarial decryptor state to test if it succeeds on distinguishing ciphertexts with high enough probability. We will leverage the properties of this measurement in our security proofs.

Threshold Implementation (and its related efficient measurement implementations in \Cref{sec:unclonable dec ati}) is a powerful technique by Zhandry (which is further inspired by Mariott-Watrous's work on QMA amplification \cite{marriott2005quantum}) used in a number of recent works \cite{coladangelo2021hidden,aaronsonnew,chiesa2022post,liu2022collusion,ananth2023revocable}. 

The Threshold Implementation $\ti_{\gamma+1/2}$ has the following properties and relations to our security:
\begin{enumerate}
\item   We will call $\rho$ a good decryptor if we $\ti_{\gamma+1/2}$ applied on $\rho$ outputs 1. 
  
    \item For a successful adversary in our game, it must produce a  decryptor $\rho$ that is good  with probability $p$ for some noticeable $p$ (apart from giving a valid certificate with noticeable probability).

    \item For the remaining state $\rho'$ after performing the above $\ti_{\gamma+1/2}$, given the outcome being 1, applying the same $\ti_{\gamma+1/2}$ on $\rho'$ will yield outcome 1 with probability 1.
\end{enumerate}

The above statement basically says that the measurement $\ti_{\gamma+1/2}$ is projective 
 and it "preserves" the advantage of the quantum distinguisher's state. 




\paragraph{Proof Outline}
We now go into more details about our high level proof strategy. Recall that in the security game, a successful attacker would first need to output a valid deletion certificate (which we denote by the event $\mathsf{CertPass}$) along with that it must produce a state $\rho_{\mathsf{delete}}$ for which our test of good decryptor $\mathsf{TI}_{\frac{1}{2}+\gamma}$ for some noticeable $\gamma$ passes with inverse-polynomial probability. Namely, $\mathsf{TI}_{\frac{1}{2}+\gamma}(\rho_{\mathsf{delete}})=1$ with inverse polynomial probability. We call this event $\mathsf{GoodDecryptor}$. It must hold that $\Pr[\mathsf{GoodDecryptor} \wedge \mathsf{CertPass}]$ is noticeable for such a successful attacker. Since we are guaranteed that there is a noticeable chance of overlap of the two events, our hope is that our extraction procedure would extract $\{\mathbf{x}_{0,i}\}$ just with enough probability to induce a noticeable probability overlap between $\mathsf{CertPass}$ and a successful extraction causing our reduction to win the parallel repetition game of Radian et al. We call the event when the extraction holds as $\mathsf{ExtractionOccurs}$.

At this point it is tempting to think of a search-to-decision classical reduction for LWE and compile it to a quantum-reduction via known methods such as the one by Bitansky et al. \cite{bitansky2022constructive}. Unfortunately, we don't have a citable theorem from this work to use  directly\footnote{As discussed in \Cref{sec:related_works}, in fact all existing works on post-quantum
LWE search-to-decision reductions are not directly applicable to our setting.}.
We need precise bounds on the probability of success, and the running time for the extraction. 

Let us briefly explain the story: for the extraction to occur, we would need to move to a world where $\mathbf{k}_i$'s are switched with random. While this won't change the probability of $\mathsf{GoodDecryptor}$ by a non-negligible amount due to LWE security, in this hybrid deletion certificates will no longer exist. We could infer from there using reductions compiled using  \cite{bitansky2022constructive}) that $\mathsf{ExtractionOccurs}$ with probability $\frac{1}{\poly(\lambda)}$ for some polynomial $\poly$. This probability of extraction will still be the same (up to a negligible loss) if we went back to the world where $\{\mat{k}_i\}$ are again from the LWE distribution. However, we can't infer from this that the event of $\mathsf{ExtractionOccurs}$ overlaps with $\mathsf{CertPass}$. A similar issue was encountered by Ananth et. al. \cite{ananth2023revocable}. To address this issue the \cite{ananth2023revocable} had to rely on new conjectures.

To address this issue, we devise a high probability search-to-decision reduction that extracts (in the world where $\mathbf{k}_i$'s are random) with probability $1-\negl(\lambda)$ whenever $\mathsf{GoodDecryptor}$ holds. Thus, when we switch $\mathbf{k}_i$'s with LWE, due to LWE security, $\mathsf{ExtractionOccurs}$ also succeeds with all but negligible security whenever $\mathsf{GoodDecryptor}$ holds. Since we are guaranteed that there is a noticeable overlap between $\mathsf{GoodDecryptor}$ and 
$\mathsf{CertPass}$, this implies a noticeable overlap between $\mathsf{CertPass}$ and $\mathsf{ExtractionOccurs}$.

\paragraph{Ideas from Classical Search-to-Decision Reduction}
Our quantum search-to-decision reduction is inspired by Regev's search to decision reduction \cite{STOC:Regev05}. We now describe the setup of our reduction. We consider a simpler setup than in our technical section (which is a bit specific to our setting).

The adversary gets as input an LWE sample of the form $(\mat{A},\mat{x}\mat{A}+\mat{e} \mod p)$ where $\mat{x}$ is arbitrary secret with bounded norm $B$, $\mat{A}\leftarrow \Z^{n\times m}_{p}$ for large enough $m=\Omega(n\log p)$, $\mat{e}$ is Gaussian distributed with parameter $\sigma$. The reduction has a quantum state $\rho$ that 
has an inverse polynomial weight on vectors that  enables distinguishing samples $(\mat{A}',\mat{x}\mat{A}'+\mat{e}' \mod p)$ for randomly chosen $\mat{A}'\in \Z^{n\times m}_{p}$ and error $\mat{e}'$ sampled according to discrete Gaussian with parameter $\sigma'$ superpolynomially larger than $\sigma$ from truly random with probability $\frac{1}{2}+\gamma$.

We first describe the classical intuition and then describe how to lift that intuition to the quantum setting. Classically, we can consider recovering $\bx$ coordinate by coordinate. Say the first coordinate $x_1 \in [-B,B]$, we could choose a total of $2B$ guesses. For each guess $g$, we could consider the process of generating samples as:
\begin{itemize}
\item Sample a matrix $\mat{C}\in \Z^{n\times m}_{p}$ so that it is random subject to the bottom $n-1$ rows are $0$.
\item Sample $\mat{R}$ to be a random binary matrix $\{0,1\}^{m\times m}$. 
\item Set $\mat{A}'=\mat{A}\mat{R}+\mat{C}$ and $\mat{b}'=(\mat{x}\mat{A}+\mat{e})\mat{R}+ \mat v+\mat{e}'$ where $\mat{v}$ is the guess $g$ times the first (and the only non-zero) row of $\mat{C}$. Here $\mat{e}'$ is generated from the discrete Gaussian vector with parameter $\sigma'$.
\end{itemize}

If our guess was correct, the sample that we end up producing $(\mat{A}',\mat{b}')$ is distributed statistically close to the distribution for the distinguishing problem (LWE with secret $\mat{x}$). This is because $\mat{e}'+\mat{e}\mat{R}$ due to noise flooding is within $\poly(m)\cdot\frac{\sigma}{\sigma'}$ statistical distance. Similarly, $\mat{A}'$ due to LHL is within $p^{-n}$ statistical distance from uniform if $m$ is sufficiently large. This means that the distribution is within $\eta=\poly(m)\frac{\sigma}{\sigma'}$ statistical distance from the relevant distinguishing problem for a corect guess, i.e. the "LWE" case.

On the other hand, if the guess is incorrect, then one can observe that the distribution produces samples that are within even smaller statistical distance $\eta' =\poly(m)\cdot\frac{1}{p^n}$ from truly random distribution (no noise flooding required).

Thus, if the guess is correct, a classical adversary can distinguish the above distribution from random with probability at least $\frac{1}{2}+\gamma-O(\eta)$, and likewise if the guess is incorrect the maximum distinguishing probability is $\frac{1}{2}+O(\eta')$. We could therefore test the adversary by making roughly $\poly(\frac{1}{\gamma},\log \delta)$ calls to the distinguisher to guarantee the guess is correct with probability $1-\delta$. Setting $\delta = p^{-n}$, the reduction will extract $\mat{x}$ in time that's polynomial in $B,n,m,\log p,\frac{1}{\gamma}$ (bound on $\mat{x}$'s norm).

\paragraph{Our Quantum Search to Decision Reduction}

Moving to the quantum setting, we have to address a number of challenges. If we use our state to distinguish and LWE sample from random in the way above, the state could get destructed, preventing us from doing repetitions. In particular, the classical reduction needs to run the distinguisher many times over randomized inputs and "measure" its outcome to obtain useful information, which seems implausible when using a quantum distinguisher.


To overcome these issue, we will leverage the properties of Threshold Implementation.
We make a key observation that the procedure, where we repeatedly create samples according to
our guess $g$ and check on the distinguisher's output distribution to get an estimate on whether $g$ is correct, can happen "inside" the measurement procedure $\ti$.

Suppose we have efficient projective measurements  $\ti_{g, 1/2+ \gamma}=(\ti_{g,1/2+ \gamma},\mathbf{I}-\ti_{g, 1/2+\gamma})$ for various guesses $g$.
$\ti_{g,1/2+\gamma}$ projects the state $\rho$ onto vectors that distinguish the above distribution made from the guess $g$ above from truly random with probability at least $1/2 + \gamma $. 
For simplicity, we call it $\ti_{g}$. from now on.

We consider two other projections $\ti_{\lwe}$ and $\ti_{\unif}$: 
\begin{itemize}
   
    \item $\ti_\lwe$ projects onto distinguishers good at distinguishing the ciphertexts in our scheme from uniform random samples with probability at least $1/2 +\gamma$.  (Intuitively, these are distinguishers for "noisy" LWE instances versus uniform random)

    \item $\ti_\unif$ projects onto distinguishers good at distinguishing close-to-uniform random samples from uniform random samples with probability at least $1/2 +\gamma$. 
\end{itemize}




Our goal is to show that $\Pr[\mathsf{ExtractionOccurs}] \geq \Pr[\ti_{\lwe}(\rho) = 1] -\negl(\lambda)$. Thus it suffices to consider a world where $\ti_{\lwe} (\rho) = 1$ already happens and show that $\Pr[\mathsf{ExtractionOccurs}] \geq 1-\negl(\lambda)$ in this world. 
$\rho'$ is the  post measurement state. 

We now make the following observations, by recalling the properties of $\ti$:
\begin{itemize}
\item  Let $\rho'$ be the state we get post measurement of $\ti_{g_i, 1/2+\gamma} (\rho) \to 1$. 

\item 
We start working with the state $\rho'$ at the beginning of our extraction algorithm. 

\item Recall that by the projective property of $\ti$, given that outcome is $1$, we have $\mathsf{Pr}[\ti_{\lwe},\ (\rho')]=1$.

\item We start with the first entry of $\bx$ and pick the smallest possible value as our guess $g$ for this entry.

\item As we have discussed in the classical setting, when the guess $g$ is correct, we get to create samples statistically close to LWE samples; when the guess $g$ is incorrect, we get to create samples close to uniform random. Combining these with properties of $\ti_g$ for statistically close measurements, we can show that:

\begin{itemize}
    \item \textbf{When the guess is correct:} If we apply projection $\ti_g$ on it for a correct guess $g$, we will have $\Pr[\ti_g(\rho') = 1]$ overwhelmingly close to $1$.
In this setting, we are statistically close to measuring the original $\ti_\lwe$. That is, if a distinguisher can distinguish between (noisy) LWE versus real random. We will therefore get output $1$ with overwhelming probability as a consequence of $\ti$ being a projection. 

We can then assign $g$ to the entry of $\bx$ we are guessing for, and move on to the next entry.

\item \textbf{When the guess is incorrect:}  If we apply projection $\ti_g$ on it for an incorrect guess $g$, we will have $\Pr[(\mathbf{I}-\ti_g)(\rho')=1] = \Pr[\ti_g(\rho')=0] $ equal to $1-\negl(\lambda)$.
In this case, we are asking the distinguisher to distinguish between uniform vs. negligibly-close-uniform.
Clearly, no distinguisher can  distinguish with noticeable advantage $\gamma$ if the statistical distance is negligible. Therefore, $\ti_\unif$ will output 0 for all possible states. 

We then move on to perform $\ti_{g}$ with the next possible value of $g$.

\end{itemize}
\end{itemize}
A key observation is that, in both cases above, because of the measurement's projective property 
, the leftover state $\rho'$ will remain unchanged in terms of its usability: in the case where $\ti_{g}$ outputs 1, we know that if we apply $\ti_{g}$ for a correct $g$ again, we get the same outcome with probability $1$ due to the projective property; in the case of incorrect $g$'s, similarly, after we have the first $\ti_{g}$ that outputs 0, the remaining $\ti_g$ will always output 0 on any $\rho$ with probability 1. Therefore, the state will remain undisturbed.
Thus, we could find out $\mathbf{x}$ entry by entry while keeping our quantum state's performance almost unchanged.

\paragraph{Actual Algorithm: New Observations on the Approximate Threshold Implementation}

However, the $\ti$ used above is not an efficient measurement
and we need to make the above procedure efficient.
Fortunately, the work of Zhandry \cite{z20}, showed how to replace $\ti$ by an efficient, approximate projection, namely approximate threshold implementation($\ati$). 
 These approximate projections induce an error upon each measurement, and since they are not "exact projections", we cannot make the same argument as in the above $\ti$ analysis. We account for these issues by using the properties of these approximate projections.

 In the case when the approximate threshold implementation $\ati_g$ corresponds to a correct guess $g$, we use properties similar to those shown in \cite{z20} to demonstrate that 
the next $\ati_g$ with a correct guess will give the same outcome with probability close to 1,  similar to the clean $\ti$ setting.

The remaining question is: what happens when we apply $\ati_g$ with an incorrect $g$? Intriguingly, we make the following new observation (see details in \Cref{sec:invariant_ati_uniform}): the operation $\ati_g$ with an incorrect $g$ acts almost like an identity operator on \emph{any state} and hardly changes the state itself, in terms of \emph{trace distance}. When we use the \cite{z20,marriott2005quantum} $\ati$ measurement to test if a quantum adversary can distinguish between two identical distributions, naturally the measurement will output $0$ with probability close to 1 (because  no adversary can possibly distinguish). More importantly, the measurement acts exactly like an identity operator.
When we test a quantum adversary on distinguishing two close statistically-close distributions, the $\ati$  measurement acts close to an identity operator.
In a nutshell, each such $\ati_g$ with incorrect $g$  incurs only exponentially small deviation additively on the trace distance of the state. Therefore, after we have applied many $\ati_g$ with incorrect $g$'s,
our next $\ati_g$ with correct $g$ will yield outcome 1 with overwhelming probability as the previous $\ati_g$ with correct $g$ does because the state is almost unchanged since that time.



 In the end, we make sure that the measurement errors will not accumulate on distinguisher's state to a degree that affects the algorithm's performance.
 Conclusively, the dominating error is incurred by every measurement with $\ati_g$ with a correct $g$, which is some inverse superpolynomial. But this error will only accumulate additively for polynomially many times (the number of measurements with correct $g$ is the same as the dimension of the LWE secret). On the other hand, the errors incurred by $\ati_g$ with an incorrect $g$ are exponentially small. Therefore, the overall loss is negligible.
 The exact reduction used in our security proof works for any subexponential number of guesses, when working with subexponential-hardness of LWE.
See details in \Cref{sec:ful_extraction_analysis}.

 More generically, if we consider a clean setting where our input distributions are real LWE versus truly uniform samples, we even have less to worry about because all losses in the above measurements will be exponentially small.
Therefore, our search-to-decision reduction works for any LWE parameters (\Cref{sec:lwe_search_to_decision}) and runs in time roughly $\poly(\lambda, n, B)$ where $n$ is the dimension of the LWE secret and $B$ is the domain size of each entry of the secret.

\ifllncs


\section{Preliminaries: Testing Quantum Adversaries} \label{sec:unclonable dec ati}


\ifllncs
\paragraph{Additonal Preliminaries}
We refer more prelims on quantum information and computation to \Cref{appendix:quantum_info}, prelims on lattices in \Cref{sec:latticeprelims} and on Noisy claw-free trapdoor families
in \Cref{sec:NTCF_prelim}.
\else\fi

In this section, we include several definitions about measurements, which are relevant to testing whether quantum adversaries are successful in the security games of \Cref{sec:defs}. Part of this section is taken verbatim from \cite{aaronsonnew}. As this section only pertains directly to our security definitions for secure key leasing schemes, the reader can skip ahead, and return to this section when reading \Cref{sec:defs}. 


\vspace{1mm}

In classical cryptographic security games, the challenger typically gets some information from the adversary and checks if this information satisfies certain properties.
However, in a setting where the adversary is required to return \emph{quantum} information to the challenger, classical definitions of ``testing'' whether a quantum state returned by the adversary satisfies certain properties may result in various failures as discussed in \cite{z20}, as this state may be in a superposition of ``successful'' and ``unsuccessful'' adversaries, most likely unclonable and destructible by the classical way of "testing" its success. 

In short, we need a way to test the success probability of an adversarial quantum state analogous to what happens classically, where the test does not completely destruct the adversary's state. Instead, the state after the test has its success probability preserved in some sense.
Such a procedure allows us to do multiple measurements on the state without rendering it useless, and thus facilitates quantum reductions.

\revise{

\subsection{Projective Implementation and Threshold Implementation}
\label{sec:project_imp}
Motivated by the discussion above, \cite{z20} (inspired by \cite{marriott2005quantum})
formalizes a new measurement procedure for testing a state received by an adversary. We will be adopting this procedure when defining security of secure key leasing schemes in  \Cref{sec:defs}.
Consider the following procedure as a binary POVM $\cP$ acting on an alleged quantum-decryptor program $\rho$: sample a ciphertext $x$, evaluates the quantum decryptor on $x$, and checks if the output is correct.   
In a nutshell, the new procedure consists of applying an appropriate projective measurement which \emph{measures} the success probability of the tested state $\rho$ under $\cP$, and to output ``accept'' if the success probability is high enough. Of course, such measurement will not be able extract the exact success probability of $\rho$, as this is impossible from we have argued in the discussion above. Rather, the measurement will output a success probability from a finite set, such that the expected value of the output matches the true success probability of $\rho$. We will now describe this procedure in more detail.

The starting point is that a POVM specifies exactly the probability distribution over outcomes $\{0,1\}$ (``success'' or ``failure'') on any decryptor program, but it does not uniquely determine the post-measurement state. Zhandry shows that, for any binary POVM $\cP = (P, I-P)$, there exists a particularly nice implementation of $\cP$ which is projective, and such that the post-measurement state is an eigenvector of $P$. In particular, Zhandry observes that there exists a projective measurement $\cE$ which \emph{measures} the success probability of a state with respect to $\cP$. More precisely,
\begin{itemize}
    \item $\cE$ outputs a \emph{distribution} $D$ of the form $(p, 1-p)$ from a finite set of distribution over outcomes $\{0,1\}$. (we stress that $\cE$ actually outputs a distribution).
    \item The post-measurement state upon obtaining outcome $(p,1-p)$ is an \emph{eigenvector} (or a mixture of eigenvectors) of $P$ with eigenvalue $p$.
\end{itemize}

A measurement $\cE$ which satisfies these properties is the measurement in the common eigenbasis of $P$ and $I-P$ (such common eigenbasis exists since $P$ and $I-P$ commute). 

Note that since $\cE$ is projective, we are guaranteed that applying the same measurement twice will yield the same outcome. Thus, what we obtain from applying $\cE$ is a state with a ``well-defined'' success probability with respect to $\cP$: we know exactly how good the leftover program is with respect to the initial testing procedure $\cP$.

Formally, to complete the implementation of $\cP$, after having applied $\cE$, one outputs the bit $1$ with probability $p$, and the bit $0$ with probability $1-p$. This is summarized in the following definition.

\jiahui{add more prelims about eigenvalues and eigenvectors}


\begin{definition}[Projective Implementation of a POVM]
\label{def:project_implement}
    Let $\cP = (P, \bI -P)$ be a binary outcome POVM. Let $\cD$ be a finite set of distributions $(p, 1-p)$ over outcomes $\{0, 1\}$. Let $\cE = \{E_p\}_{(p, 1-p) \in \cD}$ be a projective measurement with index set $\cD$. Consider the following measurement procedure: 
    \begin{itemize}
        \item[(i)] Apply the projective measurement $\cE$ and obtain as outcome a distribution $(p, 1-p)$ over $\{0, 1\}$;
        \item[(ii)] Output a bit according to this distribution, i.e. output $1$ w.p $p$ and output $0$ w.p $1-p$. 
    \end{itemize}
    We say the above measurement procedure is a projective implementation of $\cP$, which we denote by $\projimp(\cP)$, if it is equivalent to $\cP$ (i.e. it produces the same probability distribution over outcomes).
\end{definition}

We emphasize that a projective implementation includes two steps: (1) measure a state under $\cE$ to obtain a probability distribution $(p, 1-p)$, (2) output $1$ with probability $p$ and otherwise output $0$.
Zhandry shows that any binary POVM has a projective implementation, as in the previous definition.

\begin{lemma}[Adapted from Lemma 1 in \cite{z20}]
\label{lem:proj_implement}
    Any binary outcome POVM $\mathcal{P} = (P, Q)$ has a projective implementation $\projimp(\cP)$.
    
    Moreover, if the outcome is a distribution $(p, 1-p)$ when measuring under $\cE$, the collapsed state $\rho'$ is a mixture of eigenvectors of $P$ with eigenvalue $p$, and it is also a mixture of eigenvectors of $Q$ with eigenvalue $1 - p$. Thus, we have $\Tr[P \rho'] =p$.
\end{lemma}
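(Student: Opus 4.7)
The plan is to realize $\cE$ as the measurement in the common eigenbasis of $P$ and $Q$, which exists because $P$ and $Q = \bI - P$ commute (both are Hermitian, and their sum is $\bI$). The whole argument is essentially a direct application of the spectral theorem.

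First, I would apply the spectral theorem to the Hermitian POVM element $P$ to write
\[
P \;=\; \sum_{p \in S} p \cdot \Pi_p,
\]
where $S \subseteq [0,1]$ is the finite set of distinct eigenvalues of $P$ (finite since we are on a finite-dimensional Hilbert space, which is implicit in this cryptographic setting) and $\Pi_p$ is the orthogonal projector onto the eigenspace with eigenvalue $p$. The $\Pi_p$'s are mutually orthogonal and sum to $\bI$. Since $Q = \bI - P$, the same projectors diagonalize $Q$, with $Q = \sum_{p \in S}(1-p)\Pi_p$. I would then define $\cD = \{(p,1-p) : p \in S\}$ and set $\cE = \{\Pi_p\}_{(p,1-p)\in\cD}$; this is a well-defined projective measurement indexed by $\cD$.

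Next, I would verify that performing $\cE$ followed by the classical coin flip described in \Cref{def:project_implement} reproduces the original POVM statistics. On input $\rho$, the measurement $\cE$ returns outcome $(p,1-p)$ with probability $\Tr[\Pi_p\rho]$, and conditioned on that outcome one outputs $1$ with probability $p$. Hence
\[
\Pr[\text{output}=1] \;=\; \sum_{p\in S} p\cdot \Tr[\Pi_p\rho] \;=\; \Tr\!\left[\Big(\sum_{p\in S} p\,\Pi_p\Big)\rho\right] \;=\; \Tr[P\rho],
\]
matching the statistics of $\cP$. This establishes that the procedure is indeed a projective implementation of $\cP$.

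Finally, I would read off the structural claims about the post-measurement state. Conditioned on outcome $(p,1-p)$, the collapsed state is $\rho' = \Pi_p\rho\Pi_p / \Tr[\Pi_p\rho]$, which lives entirely in the image of $\Pi_p$; by construction this image is the eigenspace of $P$ with eigenvalue $p$, and simultaneously the eigenspace of $Q$ with eigenvalue $1-p$. In particular $P\Pi_p = p\Pi_p$, so
\[
\Tr[P\rho'] \;=\; \tfrac{1}{\Tr[\Pi_p\rho]}\Tr[P\,\Pi_p\rho\Pi_p] \;=\; \tfrac{p}{\Tr[\Pi_p\rho]}\Tr[\Pi_p\rho\Pi_p] \;=\; p,
\]
using cyclicity of trace and $\Pi_p^2 = \Pi_p$. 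There is no real obstacle here; the only subtle point is implicitly assuming a finite-dimensional Hilbert space so that the spectrum of $P$ is a finite set and $\cD$ is finite as required by the definition, which is the standard setting for this paper.
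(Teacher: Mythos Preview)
Your proposal is correct and matches the approach the paper sketches in the discussion preceding the lemma: the paper explicitly says that $\cE$ is the measurement in the common eigenbasis of $P$ and $I-P$, which exists because they commute, and then cites \cite{z20} rather than writing out the details. Your spectral-decomposition argument is exactly the standard way to fill in those details, and the verification that the two-step procedure reproduces $\Tr[P\rho]$ and that the collapsed state lies in the $p$-eigenspace is carried out correctly.
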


\paragraph{POVMs and Mixture of Projective Measurements}
In this work, we are interested in projective implementations of POVMs with a particular structure. These POVMs represent a challenger's test of a quantum state received from an adversary in a security game (like the POVM described earlier for testing whether a program evaluates correctly on a uniformly random input). These POVMs have the following structure:
\begin{itemize}
    \item Sample a projective measurement from a set of projective measurements $\mathcal{I}$, according to some distribution $D$ over $\mathcal{I}$.
    \item Apply this projective measurement.
\end{itemize}

We refer to POVMs of this form as \emph{mixtures of projective measurements}. The following is a formal definition.

\begin{definition}[Collection and Mixture of Projective Measurements] \label{def:mixture_of_projective}
Let $\mathcal{R}$, $\mathcal{I}$ be sets. Let $\{(P_i, Q_i)\}_{i \in I}$ be a collection of binary projective measurements $(P_i, Q_i))$ 
over the same Hilbert space $\cH$ where $\cP_i$ corresponds to
output 0, and $\cQ_i$ corresponds to output 1. We will assume we can efficiently measure the $\cP_i$ for
superpositions of $i$, meaning we can efficiently perform the following projective measurement over
$\cI \otimes \cH$:
\begin{align}
\label{eqn:collection of proj measure}
(\sum_i \ket{i}\bra{i}\otimes P_i, \sum_i \ket{i}\bra{i}\otimes Q_i)
\end{align}

 Let  $\cD: \mathcal{R} \rightarrow \mathcal{I}$ be some distribution. 
The \emph{mixture of projective measurements} associated to $\mathcal{R}$, $\mathcal{I}, \cD$ and $\{(P_i, Q_i)\}_{i \in I}$ is the binary POVM $\cP_\cD = (P_\cD, Q_\cD)$ defined as follows: 
\begin{align}
\label{equ: mixture of proj measure}
 P_D = \sum_{i \in \cal I} \Pr[i \gets \cD(R)] \, P_i ,\,\,\,\,\,\text{  }\,\,\,\,\,  Q_\cD = \sum_{i \in \cal I} \Pr[i \gets \cD(R)] \, Q_i, 
\end{align}
\end{definition}

\jiahui{Given an example of ProjImp like what D means in a decryption}


In other words, $\cP_D$ is implemented in the following way: sample randomness $r \gets \cR$, compute the index $i = D(r)$, and apply the projective measurement $(P_i, Q_i)$. Thus, for any quantum state $\rho$, $\Tr[P_D \rho]$ is the probability that a projective measurement $(P_i, Q_i)$, sampled according to the distribution induced by $D$, applied to $\rho$ outputs $1$.

\paragraph{Example of Our POVM} To further explain the above definition, we consider a concrete example: when the input quantum state $\rho$ is supposedly a decryptor for an encryptions with respect to public key $\pk$, and our goal is to measure if $\rho$ can distinguish between encryption of message 0 and message 1.

The distribution $\cD$ is 
the distribution over all randomness used to encrypt a message and the coin flip $b$ to decide which ciphertext to feed to the adversary. For a ciphertext $\ct_i \gets \cD$, $\cP_{\ct_i}$ is the measurement that runs the adversary $\rho$ on $\ct_i$ and tests if the outcome $b' = b$.

\vspace{1em}

\paragraph{Threshold Implementation}
The concept of threshold implementation of a POVM was a simple extension of Projective Implementation \Cref{def:project_implement}, proposed by Zhandry, and formalized by Aaronson, Liu, Liu, Zhandry and Zhang~\cite{aaronsonnew}.

As anticipated, the procedure that we will eventually use to test a state received from the adversary will be to:
\begin{itemize}
    \item[(i)] \emph{Measure} the success probability of the state,
    \item[(ii)] Accept if the outcome is large enough. 
\end{itemize}
As you may guess at this point, we will employ the projective measurement $\cE$ defined previously for step $(i)$. We call this variant of the projective implementation a \emph{threshold implementation}.
From the above lemma, it is straightforward to see that if the outcome of $\projimp(\cP)$ is $(d_1, d_0)$, let $\rho'$ be the collapsed state; then $\Tr[P \rho'] = d_1$. 

In the example of decryption, let $\cP_D = (P_D, Q_D)$ be the binary outcome POVM for checking whether a quantum decryptor works correctly on a uniformly sampled ciphertext. Applying $\cP_D$ is equivalent to doing the projective implementation $\projimp(\cP_D)$. 
If in the first stage of $\projimp(\cP_D)$, the measurement $\cE$ outputs $D = (p, 1-p)$ on a quantum state $\ket \psi$, we say the collapsed state is a `$p$-good decryptor' because it is a eigenvector of $P_D$ with eigenvalue equal to $p$.  if we apply the real test procedure $\cP_D$ (taken the randomness of how a ciphertext is sampled) on the collapsed state, it will decrypt correctly with probability $p$.

The following is a formal definition for Threshold Implementation.
\begin{definition}[Threshold Implementation]
\label{def:thres_implement}
Let $\cP = (P, Q)$ be a binary POVM. Let $\projimp(\cP)$ be a projective implementation of $\cP$, and let $\cE$ be the projective measurement in the first step of $\projimp(\cP)$ (using the same notation as in Definition \ref{def:project_implement}). Let $\gamma >0$. We refer to the following measurement procedure as a \emph{threshold implementation} of $\cP$ with parameter $\gamma$, and we denote is as $\ti_\gamma(\cP)$.
\begin{itemize}
        \item Apply the projective measurement $\cE$, and obtain as outcome a vector $(p, 1-p)$;
        \item Output a bit according to the distribution $(p, 1-p)$: output $1$ if $p \geq \gamma$, and $0$ otherwise. 

        
\end{itemize}
\end{definition}

For simplicity, for any quantum state $\rho$, we denote by $\Tr[\ti_{\gamma}(\cP) \, \rho]$ the probability that the threshold implementation applied to $\rho$ \textbf{outputs} $\mathbf{1}$. Thus, whenever $\ti_{\gamma}(\cP)$ appears inside a trace $\Tr$, we treat $\ti_{\gamma}(\cP)$ as a projection onto the $1$ outcome (i.e. the space spanned by eigenvectors of $P$ with eigenvalue at least $\gamma$).

Similarly to \Cref{lem:proj_implement}, \cite{aaronsonnew} has the following lemma.

\begin{lemma}
\label{lem:threshold_implementation}
    Any binary outcome POVM $\mathcal{P} = (P, Q)$ has a threshold   implementation $\ti_{\gamma}(\cP)$ for any $\gamma$. 
\end{lemma}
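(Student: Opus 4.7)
The proof plan is to reduce existence of the threshold implementation to the already-established existence of the projective implementation from Lemma on $\projimp(\cP)$. First, I would invoke that lemma to obtain a projective implementation $\projimp(\cP)$ of the binary POVM $\cP = (P, Q)$. By Definition \ref{def:project_implement}, this implementation is specified by a projective measurement $\cE = \{E_p\}_{(p, 1-p) \in \cD}$ indexed by a finite set $\cD$ of outcome distributions, followed by a classical sampling step that outputs $1$ with probability $p$ and $0$ with probability $1-p$.

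Second, I would construct $\ti_\gamma(\cP)$ by keeping the quantum measurement $\cE$ exactly as in $\projimp(\cP)$, and replacing only the classical post-processing: after observing an outcome $(p, 1-p)$ from $\cE$, output $1$ if $p \geq \gamma$, and $0$ otherwise. Because $\cD$ is a finite set, the predicate $p \geq \gamma$ can be evaluated classically in a single step once the index has been recorded. No new quantum resources are introduced beyond what is already guaranteed by the existence of $\projimp(\cP)$; we merely swap a stochastic classical decoder for a deterministic threshold decoder. This is precisely the procedure stipulated in Definition \ref{def:thres_implement}.

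Third, I would verify that this construction is a well-defined binary measurement in the sense required. The quantum step $\cE$ is projective (it is the joint eigenbasis measurement of $P$ and $I-P$, as noted in the discussion leading to Lemma \ref{lem:proj_implement}), and the classical thresholding is a measurable function of its outcome, so the composite procedure defines a valid POVM on the input Hilbert space with outcome set $\{0,1\}$. The convention $\Tr[\ti_\gamma(\cP)\,\rho]$, introduced right after Definition \ref{def:thres_implement} as the probability of outputting $1$, then coincides with $\sum_{p \geq \gamma} \Tr[E_p\, \rho]$, i.e.\ the weight of $\rho$ on the span of eigenvectors of $P$ with eigenvalue at least $\gamma$.

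Since the statement is purely existential, there is essentially no main obstacle: the nontrivial content lies in the prior Lemma \ref{lem:proj_implement}, which already produces the measurement $\cE$ for us. The only thing to be careful about is that the threshold $\gamma$ is allowed to be arbitrary; this is handled automatically because the classical decoder is defined for every real number $\gamma$, regardless of whether $\gamma$ itself lies in the support of the distributions in $\cD$.
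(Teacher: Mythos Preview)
Your proposal is correct and matches the paper's approach. The paper does not actually write out a proof for this lemma; it merely states it with the phrase ``Similarly to \Cref{lem:proj_implement}, \cite{aaronsonnew} has the following lemma,'' indicating that the existence of $\ti_\gamma(\cP)$ follows immediately from the existence of $\projimp(\cP)$ by replacing the stochastic classical output step with the deterministic threshold test $p \geq \gamma$ --- exactly the construction you describe.
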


\vspace{1em}

\paragraph{Purifying Mixtures of Projection}
Before we go into the next subsection \Cref{sec:ati} on how to compute these projections, we first define the purified version of the above operations, called Controlled Projections:

\begin{definition}[Controlled Projective Measurements] \label{def:controlled projection}
Let $\cP = \{\cP_i = (P_i
, Q_i)\}, i \in \cI$ be a collection of projective measurements
over $\cH$. Let $\cD$ a distribution with random coin set $\cR$. We will abuse notation and let $\cR$
also denote the $|\cR|$-dimensional Hilbert space. The controlled projection is the measurement:
$\cproj_{\cP,\cD} = (\cproj^0_{\cP,\cD}
, \cproj^1_{\cP,\cD}) $ where:
\begin{align*}
\cproj^0_{\cP,\cD} = \sum_{r \in \cR}  \ket{r}\bra{r} \otimes P_{\cD(r)}, \cproj^1_{\cP,\cD} = \sum_{r \in \cR}  \ket{r}\bra{r} \otimes Q_{\cD(r)}
\end{align*}
\end{definition}
$\cproj_{\cP,\cD}$ can be implemented using the measurement $(\sum_i \ket{i}\bra{i}\otimes P_i, \sum_i \ket{i}\bra{i}\otimes Q_i)$. First, initialize control
registers $\cI$ to $0$. Then perform the map $\ket{r}\ket{i} \to \ket{r}\ket{i \oplus \cD(r)}$ to the $\cR \otimes \cI$ registers. Next,
apply the mixture of projective measurements assumed in \Cref{eqn:collection of proj measure}. Finally, perform the map
$\ket{r}\ket{i} \to \ket{r}\ket{i \oplus \cD(r)}$ again to uncompute the control registers, and discard the control registers.

\vspace{1em}


\subsection{Approximating Threshold Implementation}
\label{sec:ati}
\emph{Projective} and \emph{threshold} implementations of POVMs are unfortunately not efficiently computable in general.

However, they can be approximated if the POVM is a mixture of projective measurements, as shown by Zhandry \cite{z20}, using a technique first introduced by Marriott and Watrous \cite{marriott2005quantum} in the context of error reduction for quantum Arthur-Merlin games.

\paragraph{The Uniform Test}
Before we describe the $\ati$ algorithm, we define the projection on register $\cR$ that tests if it is a superposition with each $r \in \cR$ weighted according to the distribution $\cD$ \footnote{The distribution $\cD$ used in $\cproj$ can in fact be an arbitrary distribution, instead of uniform. Without loss of generality, we can use a uniform test $\isuniform$ and map each $r$ to an arbitrary distribution $\cD(r)$ as specified in \Cref{def:controlled projection}.}: $\isuniform = (\ket{\mathbf{1}_\cR}\bra{\mathbf{1}_\cR}, \mathcal{I} - \ket{\mathbf{1}_\cR}\bra{\mathbf{1}_\cR} )$ where:
\begin{align*}
    \ket{\mathbf{1}_\cR} = \frac{1}{\sqrt{\vert \cR \vert}} \sum_{r \in \cR} \ket{r} 
\end{align*}

\paragraph{The Algorithm $\ati$}
We present the algorithm $\ati_{\cP,\cD, \gamma}^{\epsilon,\delta}$ using the syntax from \cite{z20}:

Our algorithm is parameterized by a distribution $\cD$, collection of projective
measurements $\cP$, and real values $0 < \epsilon, \delta, \gamma \leq 1$, and is denoted as $\ati_{\cP,\cD, \gamma}^{\epsilon, \delta}$. 
The running time of the following algorithm in \Cref{fig:ati_algortihm} is $\frac{2\ln(4/\delta)}{\epsilon^2}$.

\begin{figure}
    \centering
    \caption{ATI Algorithm}
    
\begin{mdframed}[frametitle = {Approximate Threshold Implementation}]

On input a quantum state
$\ket{\psi}$ over Hilbert space $\cH$:

\begin{enumerate}
 \item Initialize a state $\ket{\mathbf{1}_\cR} \ket{\psi}$. 

 \item Initialize a classical list $L := (1)$.
 
\item Repeat the following loop a total of $T = \frac{\ln(4/\delta)}{\epsilon^2}$ times:

\begin{enumerate}
    \item Apply $\cproj_{\cP,\cD}$ to register $\cR \otimes \cH$. Let $b_{2i-1}$ be the measurement outcome and set $L := (L, b_{2i-1})$.

    \item Apply $\isuniform$ to register $\cR$. Let $b_{2i}$ be the measurement outcome and set $L := (L, b_{2i})$.
\end{enumerate}

\item Let $t$ be the number of index $i$ such that $b_{i-1} = b_i
$ in the list $L = (0, b_1
, . . . , b_{2T})$, and $\tilde{p} := t/2T$.

\item If $b_{2T} = 0$, repeat the loop again until $b_{2i} = 1$.

\item Discard the $\cR$ register.

\item Output 1 if $\tilde{p} \geq \gamma$ and 0 otherwise. 
\end{enumerate}
\end{mdframed}
\label{fig:ati_algortihm}
\end{figure}

\begin{remark}
    Note that in the ATI algorithm, if we output value $\tilde{p}$ after Step 6 directly, the algorithm is called API(approximate projective implementation, \cite{z20}), as the approximate algorithm for projective implementation \Cref{def:project_implement}.
\end{remark}

\paragraph{ATI Properties}
We will make use of the following lemma from a subsequent work of Aaronson et al.~\cite{aaronsonnew}.



\begin{lemma}[Corollary 1 in \cite{aaronsonnew}]
\label{cor:ati_thresimp}
    For any $\epsilon, \delta, \gamma \in (0,1)$, any collection of projective measurements $\cP = \{(P_i, Q_i)\}_{i \in \mathcal{I}}$, where $\mathcal{I}$ is some index set, and any distribution $D$ over $\mathcal{I}$, there exists a measurement procedure $\ati^{\epsilon, \delta}_{\cP, D, \gamma}$ that satisfies the following:
    \begin{enumerate}
        
        \item $\ati^{\epsilon, \delta}_{\cP, D, \gamma}$ implements a binary outcome measurement. For simplicity, we denote the probability of the measurement \textbf{outputting} $\mathbf{1}$ on $\rho$ by $\Tr[\ati^{\epsilon, \delta}_{\cP, D, \gamma}\, \rho]$. 
    
        \item For all quantum states $\rho$, $\Tr[\ati^{\epsilon, \delta}_{\cP, D, \gamma-\epsilon}\, \rho]\geq \Tr[\ti_\gamma(\cP_D)\, \rho]-\delta$. 
        
        \item 
        For all quantum states $\rho$, 
        let $\rho'$ be the post-measurement state after applying  $\ati^{\epsilon, \delta}_{\cP, D, \gamma}$ on $\rho$, and obtaining outcome $1$. Then, $\Tr[\ti_{\gamma-2 \epsilon}(\cP_D)\, \rho'] \geq 1 - 2\delta$. 
        \item
        The expected running time is $O(T_{\cP, D} \cdot 1/\epsilon^2 \cdot 1/(\log \delta))$, where $T_{\cP, D}$ is the combined running time of sampling according to $D$, of mapping $i$ to $(P_{i}, Q_{i})$, and of implementing the projective measurement $(P_{i}, Q_{i})$ \footnote{In our setting, $T$ is simply the running time of the quantum algorithm $\rho$.}. 
    \end{enumerate}
\end{lemma}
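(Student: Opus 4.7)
The plan is to derive this corollary by combining the Marriott--Watrous style alternating-projection analysis of Zhandry's $\api$ (approximate projective implementation) with a thresholding step, and then importing two known analytic tools: a shift-distance bound between $\api$ and $\projimp(\cP_D)$, and the almost-projectiveness of $\api$. The algorithm $\ati^{\epsilon,\delta}_{\cP,D,\gamma}$ from \Cref{fig:ati_algortihm} is literally $\api$ (Steps 1--6) followed by the threshold comparison $\tilde p \geq \gamma$ (Step 7). So property $(1)$ is immediate from the description, since Step 7 produces a single classical bit. Property $(4)$ is also immediate from the iteration count $T = \ln(4/\delta)/\epsilon^2$ of the loop in Step 3, since each iteration applies $\cproj_{\cP,D}$ and $\isuniform$ once, at cost $O(T_{\cP,D})$.

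The two substantive claims are $(2)$ and $(3)$. For these I would first recall and establish two facts about $\api$. \textbf{(i) Shift-distance bound:} the measurement $\api^{\epsilon,\delta}_{\cP,D}$ outputs a real value $\tilde p \in [0,1]$, and its outcome distribution on any $\rho$ is within shift distance $(\epsilon,\delta)$ of the outcome distribution $(p,1-p)$ produced by the eigenvalue measurement $\cE$ underlying $\projimp(\cP_D)$. This is the Marriott--Watrous amplification lemma: the fraction of consecutive agreements $t/2T$ in the alternation between $\cproj_{\cP,D}$ and $\isuniform$ concentrates around the eigenvalue $p$ of $P_D$ on the post-measurement eigenspace, with tail probability $\delta$ after $T = \Theta(\log(1/\delta)/\epsilon^2)$ rounds, via a Hoeffding-type bound. \textbf{(ii) Almost-projectivity:} two successive applications of $\api$ yield outcomes $\tilde p_1, \tilde p_2$ with $|\tilde p_1 - \tilde p_2| \leq \epsilon$ except with probability $\delta$; this uses the fact that after the first $\api$ the state is (approximately) an eigenvector of $P_D$, so the second estimator concentrates around the same eigenvalue.

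Given (i) and (ii), property $(2)$ follows from a one-line shift-distance argument. The event ``$\ti_\gamma(\cP_D)$ outputs $1$ on $\rho$'' is exactly the event ``$\cE$ on $\rho$ outputs some $p \geq \gamma$''. By (i), the probability that $\api$ outputs $\tilde p \geq \gamma - \epsilon$ is at least this quantity minus $\delta$, so thresholding at $\gamma - \epsilon$ in Step 7 gives
\[
\Tr[\ati^{\epsilon,\delta}_{\cP,D,\gamma-\epsilon}\,\rho] \;\geq\; \Pr[\,p \geq \gamma\,] - \delta \;=\; \Tr[\ti_\gamma(\cP_D)\,\rho] - \delta.
\]
Property $(3)$ is where care is needed and will be the main technical obstacle: we need that the post-measurement state $\rho'$ (conditioned on $\ati$ outputting $1$, i.e.\ on $\tilde p_1 \geq \gamma$) still has high mass on eigenvalues $\geq \gamma - 2\epsilon$ of $P_D$. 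I would argue as follows. Run $\api$ a second time on $\rho'$; by almost-projectivity (ii), the second outcome $\tilde p_2$ satisfies $\tilde p_2 \geq \tilde p_1 - \epsilon \geq \gamma - \epsilon$ except with probability $\delta$. Then by the shift-distance bound (i) applied to $\rho'$ in reverse, the underlying eigenvalue $p$ sampled by $\cE$ on $\rho'$ satisfies $p \geq \tilde p_2 - \epsilon \geq \gamma - 2\epsilon$ except with additional probability $\delta$. A union bound yields $\Tr[\ti_{\gamma-2\epsilon}(\cP_D)\,\rho'] \geq 1 - 2\delta$, as required.

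The main delicate point will be maintaining consistency of conditioning: the post-measurement state $\rho'$ is defined by conditioning on the classical bit output of $\ati$, not on the intermediate real-valued estimate, and one must verify that (i) and (ii) are quantitatively stated for arbitrary inputs (so that they still apply to $\rho'$). This is exactly what the $\api$ analysis in \cite{z20} provides, and the thresholding in Step 7 does not destroy the post-measurement guarantee because the measurement recorded after Step 5 ensures termination in a state where $b_{2T}=1$, so the $\cR$ register is disentangled and can be safely discarded in Step 6 without affecting the $\cH$ register. Once this technicality is handled, properties $(1)$--$(4)$ follow as sketched.
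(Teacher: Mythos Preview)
The paper does not prove this lemma; it is stated as a citation of Corollary~1 in \cite{aaronsonnew} and is followed only by an informal one-paragraph intuition, not a proof. So there is no ``paper's own proof'' to compare against. Your sketch is essentially the standard derivation from \cite{z20,aaronsonnew}: reduce $\ati$ to $\api$ plus a threshold, invoke the $(\epsilon,\delta)$ shift-distance bound between $\api$ and $\projimp(\cP_D)$ for item~(2), and combine almost-projectivity with a second shift-distance application for item~(3). This is correct and is exactly the route taken in the cited works. One small omission: for item~(4) you should note that Step~5 of \Cref{fig:ati_algortihm} (repeat until $b_{2i}=1$) contributes only a constant factor in expectation, which is why the bound is on \emph{expected} running time rather than worst case.
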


Intuitively the corollary says that if a quantum state $\rho$ has weight $p$ on eigenvectors with eigenvalues at least $\gamma$, then the measurement $\ati^{\epsilon, \delta}_{\cP, D, \gamma}$ will produce with probability at least $p - \delta$ a post-measurement state which has weight $1 - 2 \delta$ on eigenvectors with eigenvalues at least $\gamma - 2 \epsilon$. Moreover, the running time for implementing $\ati^{\epsilon, \delta}_{\cP, D, \gamma}$ is proportional to $\poly(1/\epsilon, 1/(\log \delta))$, which is a polynomial in $\lambda$ as long as $\epsilon$ is any inverse polynomial and $\delta$ is any inverse sub-exponential function.

\paragraph{$\ti$ and $\ati$ For Computationally/Statistically Indistinguishable Distributions}
The following theorems will be used in the proof of security for our SKL encryption scheme in Section \ref{sec:main_security_proof}.

Informally, the lemma states the following. Let $\cP_{D_0}$ and $\cP_{D_1}$ be two mixtures of projective measurements, where $D_0$ and $D_1$ are two computationally indistinguishable distributions. Let $\gamma, \gamma'>0$ be inverse-polynomially close. Then for any (efficiently constructible) state $\rho$, the probabilities of obtaining outcome $1$ upon measuring $\ti_{\gamma}(\cP_{D_0})$ and $\ti_{\gamma'}(\cP_{D_1})$ respectively are negligibly close.


\begin{theorem}[Theorem 6.5 in \cite{z20}] \label{thm:ti_different_distribution}
Let $\gamma >0$. Let $\cP$ be a collection of projective measurements indexed by some set $\cal I$. Let $\rho$ be an efficiently constructible mixed state, and let $D_0, D_1$ be two efficiently sampleable and computationally indistinguishable distributions over $\cal I$. For any inverse polynomial $\epsilon$, there exists a negligible function $\delta$ such that
\begin{align*}
    \Tr[\ti_{\gamma - \epsilon}(\cP_{D_1}) \rho] \geq \Tr[\ti_{\gamma}(\cP_{D_0}) \rho] - \delta \,,
\end{align*}
where $\cP_{D_i}$ is the mixture of projective measurements associated to $\cP$ and $D_i$.
\end{theorem}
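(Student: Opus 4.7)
The plan is to reduce to the computational indistinguishability of $D_0$ and $D_1$ by going through the efficient approximation $\ati$, since the exact threshold implementation $\ti$ is not in general efficient. First I would pick auxiliary parameters $\epsilon' = \epsilon/4$ and $\delta' = 2^{-\lambda}$ (any inverse-subexponential negligible function suffices) so that Corollary~\ref{cor:ati_thresimp} can be invoked meaningfully and so that $\ati_{\cP,D_b,\gamma'}^{\epsilon',\delta'}$ is a \emph{polynomial-time} binary measurement for each $b\in\{0,1\}$ and each threshold $\gamma'$ of interest.

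The chain of inequalities I would establish is
\begin{align*}
\Tr[\ti_{\gamma}(\cP_{D_0})\,\rho]
&\le \Tr[\ati_{\cP,D_0,\gamma-\epsilon'}^{\epsilon',\delta'}\,\rho] + \delta'\\
&\le \Tr[\ati_{\cP,D_1,\gamma-\epsilon'}^{\epsilon',\delta'}\,\rho] + \delta' + \nu(\lambda)\\
&\le \Tr[\ti_{\gamma-\epsilon}(\cP_{D_1})\,\rho] + 2\delta' + \nu(\lambda),
\end{align*}
where $\nu$ is some negligible function. The first and third inequalities follow from the two directions of Corollary~\ref{cor:ati_thresimp} (or, equivalently, from the shift-distance bound between $\ati$ and $\ti$), possibly after choosing $\epsilon'$ slightly smaller so that $\gamma - \epsilon' - \epsilon' \geq \gamma - \epsilon$. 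So the substance of the argument lies in the middle inequality: $|{\Tr[\ati_{\cP,D_0,\gamma-\epsilon'}^{\epsilon',\delta'}\,\rho] - \Tr[\ati_{\cP,D_1,\gamma-\epsilon'}^{\epsilon',\delta'}\,\rho]}| \le \nu(\lambda)$.

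To prove this middle step, I would design a hybrid reduction that breaks the computational indistinguishability of $D_0$ and $D_1$ if the difference were non-negligible. The $\ati$ procedure in Figure~\ref{fig:ati_algortihm} samples $T = O(1/\epsilon'^2 \cdot \log(1/\delta')) = \poly(\lambda)$ independent indices $i \leftarrow D$ (implicitly via $\cD$ acting on $\cR$) and performs the controlled projection $\cproj_{\cP,\cD}$ interleaved with $\isuniform$ measurements. Define hybrids $H_0,\ldots,H_T$ where in $H_k$ the first $k$ invocations of $\cD$ sample from $D_1$ and the remaining $T-k$ sample from $D_0$; then $H_0$ is $\ati_{\cP,D_0,\gamma-\epsilon'}^{\epsilon',\delta'}$ applied to $\rho$ and $H_T$ is $\ati_{\cP,D_1,\gamma-\epsilon'}^{\epsilon',\delta'}$ applied to $\rho$. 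If the two endpoint output probabilities differ by some non-negligible $\mu$, then by averaging, two consecutive hybrids $H_{k}$ and $H_{k+1}$ differ by at least $\mu/T$, which is still non-negligible. A distinguisher for $D_0$ vs $D_1$ then proceeds by: efficiently constructing $\rho$ (which is allowed by hypothesis), running the hybrid $H_k$ / $H_{k+1}$ experiment using the challenge sample in place of the $(k{+}1)$-st draw, outputting the final thresholded bit, and distinguishing with advantage $\mu/T$. This contradicts $D_0 \approx_c D_1$.

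The main obstacle, and the step requiring the most care, is ensuring that the hybrid reduction is genuinely efficient and implementable with a single external sample: the ATI procedure uses the index $i$ only inside the projective measurement $(P_i,Q_i)$, which is assumed efficient given $i$, so a single challenge sample from $D_b$ suffices to simulate the $(k{+}1)$-st step of the inner loop while all other steps are simulated honestly using the known distributions. The bookkeeping around the controlled-projection register $\cR$ and $\isuniform$ needs to be done carefully—one draws a uniform $r$, writes $\ket{r}\ket{i}$ with $i$ coming from the challenger, and then uncomputes $r$ as in the standard implementation of $\cproj$—but otherwise no new ingredient beyond $D_0 \approx_c D_1$ and the efficiency of $\ati$ is needed. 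Symmetry yields the reverse inequality, and by applying the argument with roles reversed and taking $\epsilon' = \epsilon/4$ we recover the claimed bound with negligible slack $\delta := 2\delta' + \nu$.
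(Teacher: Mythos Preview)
Your outer framework---sandwiching the inefficient $\ti$ between two applications of Corollary~\ref{cor:ati_thresimp} and reducing to an indistinguishability statement about $\ati$---is exactly right. The gap is in the middle step.

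The hybrid you propose over the $T$ iterations of the $\ati$ loop does not work, because each iteration does \emph{not} consume a single classical sample from $D$. Look at Figure~\ref{fig:ati_algortihm}: the register $\cR$ is initialized once to the uniform superposition $\ket{\mathbf{1}_\cR}$, and every application of $\cproj_{\cP,\cD}$ is the controlled measurement $\sum_{r}\ket{r}\bra{r}\otimes P_{D(r)}$ acting on that \emph{same} register. After the first measurement the $\cR$ register is entangled with $\cH$ and is no longer uniform, so you cannot ``draw a fresh $r$'' for the $(k{+}1)$-st step; and even if you could, implementing $\cproj_{\cP,\cD}$ on a superposition requires knowing $D(r)$ for \emph{all} $r\in\cR$, which is exponentially many values of $D$, not one. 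A single challenge sample from $D_0/D_1$ therefore cannot simulate the difference between your $H_k$ and $H_{k+1}$.

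This is precisely the obstruction Zhandry's actual proof of Theorem~6.5 in \cite{z20} has to overcome, and the missing idea is the \emph{small-range distribution} technique of \cite{zhandry2012construct}. One first replaces $D$ by $D\circ\Sigma$ where $\Sigma=G\circ F$ with $F:\cR\to[s]$ random and $G:[s]\to\cR$ random; by the small-range theorem this costs only $O(Q^3/s)$ in the output probability (with $Q=\poly(\lambda)$ the number of queries the $\ati$ procedure makes). Now $D\circ\Sigma$ depends on only $s$ classical samples $D(G(1)),\ldots,D(G(s))$, so a standard hybrid over those $s$ positions, each switched from $D_0$ to $D_1$ using a \emph{single} challenge sample, goes through. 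Choosing $s$ appropriately balances the $O(Q^3/s)$ loss against the $s$-fold hybrid and yields the negligible slack. Your argument recovers once this replacement is inserted before the hybrid; without it the reduction is not efficient.
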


\begin{corollary}
[Corollary 6.9 in \cite{z20}]
\label{cor:ati_computational_indistinguishable}
Let $\rho$ be an efficiently constructible, potentially mixed state, and let $\cD_0, \cD_1$ be two
computationally indistinguishable distributions. Then for any inverse polynomial $\epsilon$ and any function
$\delta$, there exists a negligible $\negl(\cdot)$ such that:
$$ \Tr[\ati_{\cD_1, \cP, \gamma - 3\epsilon}^{\epsilon, \delta}(\rho)] \geq \Tr[\ati_{\cD_0, \cP, \gamma}^{\epsilon, \delta}(\rho)] - 2\delta - \negl(\lambda) $$

\end{corollary}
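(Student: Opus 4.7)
The plan is to sandwich the two $\ati$ quantities through their threshold-implementation counterparts, using \Cref{thm:ti_different_distribution} to switch distributions at the $\ti$ level where the computational-indistinguishability bound lives. Concretely, I would build a chain of three inequalities, each shifting the threshold parameter down by $\epsilon$ and incurring either a $\delta$ or $\negl(\lambda)$ additive loss, so the totals are a $3\epsilon$ shift and $2\delta + \negl(\lambda)$ loss, matching the statement.

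The three steps are as follows. First, bound the left $\ati$ from above by a $\ti$ under the same distribution $\cD_0$: by the shift-distance guarantee that underlies the construction of $\ati$ (the $\api$ and $\projimp$ measurements are within shift distance $(\epsilon, \delta)$, which is preserved when we truncate the real-valued outcome at a threshold), one has $\Tr[\ati^{\epsilon,\delta}_{\cP, \cD_0, \gamma}\, \rho] \leq \Tr[\ti_{\gamma - \epsilon}(\cP_{\cD_0})\, \rho] + \delta$. Second, move from $\cD_0$ to $\cD_1$ at the $\ti$ level using \Cref{thm:ti_different_distribution} with parameter $\gamma - \epsilon$: since $\cD_0, \cD_1$ are computationally indistinguishable and $\rho$ is efficiently constructible, there is a negligible function with $\Tr[\ti_{\gamma - \epsilon}(\cP_{\cD_0})\, \rho] \leq \Tr[\ti_{\gamma - 2\epsilon}(\cP_{\cD_1})\, \rho] + \negl(\lambda)$. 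Third, bound the resulting $\ti$ from above by the right-hand $\ati$: this is exactly property~2 of \Cref{cor:ati_thresimp}, instantiated with threshold $\gamma - 2\epsilon$, yielding $\Tr[\ti_{\gamma - 2\epsilon}(\cP_{\cD_1})\, \rho] \leq \Tr[\ati^{\epsilon,\delta}_{\cP, \cD_1, \gamma - 3\epsilon}\, \rho] + \delta$. Chaining the three inequalities and rearranging delivers the desired bound.

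The main obstacle is step one, because \Cref{cor:ati_thresimp} as stated only gives one direction, $\ati_{\gamma-\epsilon} \geq \ti_\gamma - \delta$, whereas step one requires the complementary direction $\ati_\gamma \leq \ti_{\gamma - \epsilon} + \delta$. I would resolve this by invoking the two-sided shift-distance bound of the underlying $\api$ measurement from Theorem~2 of~\cite{z20}: for every $x \in \mathbb{R}$, $\Pr[\api \geq x] \leq \Pr[\projimp \geq x - \epsilon] + \delta$, which when specialized to $x = \gamma$ says exactly that the probability the $\ati$ returns $1$ (i.e.\ that $\api \geq \gamma$) is at most the probability the $\ti$ at threshold $\gamma - \epsilon$ returns $1$, plus $\delta$. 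Once this symmetric form is on the table the rest of the proof is mechanical. No new quantum reasoning is needed beyond this, and in particular efficient constructibility of $\rho$ is only used (indirectly) to invoke \Cref{thm:ti_different_distribution}, while the $\ati$-vs-$\ti$ steps are purely information-theoretic consequences of the shift-distance construction.
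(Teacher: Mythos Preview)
Your proposal is correct and follows essentially the same approach as the original proof in \cite{z20}: sandwich the two $\ati$ quantities by $\ti$ quantities via the two-sided shift-distance bound, and invoke \Cref{thm:ti_different_distribution} in the middle to switch distributions. Your identification of the one subtlety---that property~2 of \Cref{cor:ati_thresimp} only states one direction and the other direction must be extracted from the symmetric shift-distance guarantee of Theorem~2 in \cite{z20}---is exactly right, and once that is in hand the chain of three inequalities goes through mechanically.
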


\subsubsection{Identity-like Approximate Threshold Implementation}
\label{sec:invariant_ati_uniform}

In this section, we introduce a property of ATI when using under certain "dummy" mixture of projections(POVM), will act  like an identity operator on the input quantum state.

We first recall the following POVM (also called mixture of projections \Cref{def:mixture_of_projective}) $ \cP_{\cD_0,\cD_1}$:
\begin{definition}[Distribution Distinguishing POVM]
\label{def:distribution_distinguish_povm}
The POVM(mixture of projections) measurement $\cP=\cP_{\cD_0,\cD_1} $ is defined as follows on any input quantum state:
\begin{itemize}
    \item The randomness $\cR$ is over two distributions $\cD_0, \cD_1$ and a uniform $b \in \{0,1\}$.

    \item Sample a bit $b \gets \{0,1\}$ and then sample $v \gets \cD_b$.

    \item Feed $v$ to the input quantum state which will output a guess $b'$.

    \item Output 1 if $b' = b$; 0 otherwise.
\end{itemize}
\end{definition}

\paragraph{Dummy Distribution Distinguisher ATI Acts like Identity Operator}
Now we give a theorem on the property of $\ati^{\epsilon,\delta}_{\cP, 1/2+\gamma}$, where $\cP = \cP_{\cD_0,\cD_1}, $ is defined in \Cref{def:distribution_distinguish_povm} and show that it almost acts as an identity operator on any input state when $\cD_0, \cD_1$ are statistically close.

\begin{theorem}
\label{thm:invariant_ati_uniform}
  Suppose the statistical distance of $\cD_0,\cD_1$ is $\eta$.
$\ati^{\epsilon,\delta}_{\cP, 1/2+\gamma}$ is the ATI for mixture of projections $\cP = \cP_{\cD_0,\cD_1}$ defined in \Cref{def:distribution_distinguish_povm}, with parameters $\gamma,\epsilon,\delta$ such that $\gamma$ is inverse polynomial and $\epsilon < \gamma$. The distance $\eta$ also satisfies $\eta < \epsilon/2$. We have the following properties, with probability $(1-\delta)$:

\begin{itemize}
    \item For any input state $\rho$,
let $\rho'$ be the state after applying $\ati^{\epsilon,\delta}_{\cP, 1/2+\gamma}$ on $\rho$, we have $\lVert \rho-\rho'\rVert_{\Tr} \leq O( \eta \cdot \frac{\ln(4/\delta)}{\epsilon})$ \footnote{In our applications,  $\eta \cdot \frac{\ln(4/\delta)}{\epsilon}$ is inverse exponential in security parameter $\lambda$.}.

\item $\ati^{\epsilon,\delta}_{\cP, 1/2+\gamma}$ will output outcome 0.
\end{itemize}

\end{theorem}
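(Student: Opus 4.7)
The plan is to first analyze the ideal symmetric case where $\cD_0 = \cD_1$ (so $\eta = 0$), and then transfer to the general $\eta$-close case via a coupling/hybrid argument between the real and dummy executions.

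In the symmetric case, I first claim the marginal POVM element $P_\cD = \mathbb{E}_{r}[P_{\cD(r)}]$ equals $\tfrac{1}{2}\mathbf{I}$: since the sample $v \sim \cD_b$ carries no information about $b$ when $\cD_0 = \cD_1$, the probability $\Pr[b'=b]$ is exactly $1/2$ for every input state, which by linearity in $\rho$ forces $P_\cD = \tfrac{1}{2}\mathbf{I}$. Consequently, by \Cref{lem:proj_implement}, every state is a $1/2$-eigenvector of $P_\cD$ and the projective implementation $\projimp(\cP_\cD)$ outputs $(1/2,1/2)$ deterministically. More operationally, a direct calculation using $P_\cD = \mathbf{I}/2$ shows that starting from $\ket{\mathbf{1}_\cR}\ket{\psi}$, applying $\cproj^{b_1}_{\cP,\cD}$ followed by $\isuniform^{1}$ returns the conditional post-measurement state to exactly $\ket{\mathbf{1}_\cR}\ket{\psi}$ for either outcome $b_1 \in \{0,1\}$ (both conditional probabilities equal $1/2$). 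Since the algorithm terminates only when the final $\isuniform$ outcome is $1$, the final state in the dummy execution equals $\rho$ exactly, and by standard Marriott--Watrous concentration the estimate $\tilde{p}$ lies within $\epsilon$ of $1/2$ with probability at least $1 - \delta$, so $\tilde{p} < 1/2 + \epsilon < 1/2 + \gamma$ (using $\epsilon < \gamma$) and $\ati$ outputs $0$.

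For the $\eta$-close case, I couple the real execution with a dummy execution in which $\cD_1$ is replaced by $\cD_0$ throughout. The two controlled projections $\cproj_{\mathrm{real}}$ and $\cproj_{\mathrm{dummy}}$ agree on the $b=0$ branch and differ on the $b=1$ branch by channels whose statistical distance is at most $\eta$, hence their difference as superoperators has diamond-norm $O(\eta)$. Running a step-by-step hybrid across the $T = \ln(4/\delta)/\epsilon^{2}$ iterations of the main loop (plus the expected $O(1)$ cleanup iterations until $b_{2i}=1$), and crucially exploiting that in the dummy execution the joint state is restored to $\ket{\mathbf{1}_\cR}\ket{\psi}$ at \emph{every} $\isuniform^{1}$ outcome (so per-step drift cannot propagate across a restoration event), yields $\|\rho-\rho'\|_{\Tr} = O\!\left(\eta \cdot \ln(4/\delta)/\epsilon\right)$. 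The same coupling transports the dummy execution's output-$0$ conclusion to the real case; combined with the bound $\Tr[P_\cD\rho]\leq 1/2 + \eta/2 < 1/2+\epsilon/4$ (a direct consequence of statistical closeness applied to the POVM outcome distribution) and Hoeffding-style concentration of $\tilde{p}$ around its mean within the Marriott--Watrous estimator, we conclude $\tilde{p} < 1/2 + \gamma$ and the $\ati$ outputs $0$ with probability at least $1-\delta$.

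The main obstacle will be establishing the sharper trace-distance bound $O(\eta \ln(4/\delta)/\epsilon)$ rather than the naive step-by-step hybrid bound $O(\eta T) = O(\eta\ln(4/\delta)/\epsilon^{2})$. The improvement requires carefully tracking how the restoration property of the dummy execution---the state reverts to $\ket{\mathbf{1}_\cR}\ket{\psi}$ after every $b_{2i}=1$ outcome, which occurs with probability $\tfrac{1}{2}$ per iteration---damps the $O(\eta)$ per-step error so that cumulative drift scales with the $O(1/\epsilon)$ effective mixing length of the alternating projection rather than the full iteration count $T$. A clean way to formalize this is via a Jordan decomposition of the two-projection pair $(\isuniform, \cproj)$: in the dummy case this pair's restriction to the support of $\ket{\mathbf{1}_\cR}\ket{\psi}$ is fully degenerate with eigenvalue $1/2$, and the real case is an $O(\eta)$-perturbation whose cumulative effect on trace distance is controlled by standard two-projection perturbation bounds, yielding the claimed parameter.
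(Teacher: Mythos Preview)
Your treatment of the symmetric case $\eta=0$ is correct and matches the paper's corollary. But the general $\eta$-close case has a real gap, and you have already put your finger on it: the naive step-by-step hybrid across $T=\ln(4/\delta)/\epsilon^2$ iterations gives $O(\eta T)=O(\eta\ln(4/\delta)/\epsilon^{2})$, not the required $O(\eta\ln(4/\delta)/\epsilon)$. Your proposed fix --- that the dummy execution's ``restoration'' at every $\isuniform^{1}$ outcome prevents per-step drift from propagating --- does not survive scrutiny as stated. In a standard hybrid, restoration happens only in the dummy branch; the real branch continues to drift, and you still have $T$ adjacent hybrids each $O(\eta)$ apart. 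The vague appeal to an ``$O(1/\epsilon)$ effective mixing length'' or to two-projection perturbation bounds is exactly the hard part, and you have not supplied it. If you actually carried out the Jordan-block perturbation you would find yourself computing how the eigenvalues $p_i$ of $P_\cD$ (all lying in $[1/2,1/2+\eta]$) govern the post-measurement weights --- which is precisely the paper's argument, not a shortcut around it.

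The paper takes a completely different and more direct route that sidesteps the hybrid entirely. It works in the eigenbasis of $P_\cD$ from the outset: write $\rho=\sum_i\lambda_i\ket{\phi_i}\bra{\phi_i}$ with $P_\cD\ket{\phi_i}=p_i\ket{\phi_i}$, and note that statistical closeness of $\cD_0,\cD_1$ forces every $p_i\in[1/2,1/2+\eta]$. Zhandry's analysis of the Marriott--Watrous loop (\cite{z20}, Claim 6.3) gives the post-$\ati$ state \emph{explicitly} as
\[
\rho'=\sum_i \lambda_i\,\frac{p_i^{\,t}(1-p_i)^{2T-t}}{\sum_j \lambda_j\,p_j^{\,t}(1-p_j)^{2T-t}}\,\ket{\phi_i}\bra{\phi_i},
\]
where $t$ is the observed agreement count. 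Hoeffding concentration pins $|t-T|\le O(\ln(4/\delta)/\epsilon)$ with probability $1-\delta$, and then the trace distance $\|\rho-\rho'\|_{\Tr}$ is just a direct calculation on the reweighting factors: since every $p_i$ is within $\eta$ of $1/2$, the ratio $p_i^{\,t}(1-p_i)^{2T-t}/(1/4)^T$ deviates from $1$ by $O(T\eta^2 + |t-T|\eta)=O(\eta\ln(4/\delta)/\epsilon)$. This is where the sharp $1/\epsilon$ (rather than $1/\epsilon^{2}$) comes from --- it is the bound on $|t-T|$, not on $T$ itself, that controls the distortion, and this is visible only once you have the closed-form state. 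Your coupling approach obscures exactly this structure.
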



\ifllncs
Due to limitation of space, we refer the proof to \Cref{sec:invariant_ati_proof_append}.

We can also conclude corollary \Cref{cor:invariant_ati_same_distr} for the setting where $\cD_0, \cD_1$ are the same distribution: directly by plugging in $\eta = 0$.
In this case $\ati$ acts exactly as an identity operator.

\else

\begin{proof}
    We obtain this theorem by making observations on the state after running the ATI algorithm, see \Cref{fig:ati_algortihm}.

    We will use Claim 6.3 in \cite{z20}, and some explanations under theorem 6.2 in \cite{z20}.(For the clarity of presentation, we will integrate \cite{z20}'s claims into our context instead of citing it directly.)

We can write the input state at the beginning as $\rho =  \sum_i \lambda_i \ket{\phi_i}\bra{\phi_i}$, where $p_i$ is an eigenvalue with eigenvector $\ket{\phi_i}$ for measurement $\cP_{\cD_0,\cD_1}$ in \Cref{def:distribution_distinguish_povm}(see discussions in\Cref{sec:project_imp} and 
\Cref{lem:proj_implement}). 
In other words, $\cP_{\cD_0,\cD_1}\ket{\phi_i} = p_i\ket{\phi_i}$. Meanwhile, 
$\{\ket{\phi_i}\}_i$ form a basis, and the density operator $\rho$ can always be written in terms of coefficients $\{\lambda_i\}_i$ with their corresponding state in $\{\ket{\phi_i}\}_i$  because such a spectral decomposition always exists for any $\rho$.

Since $\cD_0, \cD_1$ are $\eta$ close in statistical distance, we know that $p_i$ must take values in $[1/2, 1/2+\eta]$ for all $i$,  because no distinguisher can perform better than distinguishing them with probability larger than $1/2+\eta$ (and can go as bad as a uniform random guess). 

We consider the state after doing $T$ number of iterations of the main loop in $\ati^{\epsilon,\delta}_{\cP, 1/2+\gamma}$ algorithm (see \Cref{fig:ati_algortihm}).
Let $L = (L_1, L_2, \cdots, L_{2T})$ be the sequence of measurement outcomes we obtain in 
the ATI algorithm from the alternating tests $\cproj_{\cP}$ and $\isuniform$; let $t$ be the number of $i$ such that $L_{i-1} = L_i$ we have in $L$.

Since at the end of the algorithm $\ati$, we will stop at a $\isuniform$ test with measurement outcome 1,
we obtain the following state state, as claimed by  \cite{z20} Claim 6.3.
\begin{align}
    \label{eqn:state_after_ati}
    \rho' = \sum_i \lambda_i \cdot \frac{p_i^t \cdot (1-p_i)^{2T-t}}{\sum_i \lambda_i \cdot p_i^t \cdot (1-p_i)^{2T-t}} \ket{\phi_i}\bra{\phi_i}
\end{align}
\cite{z20} gave the form of an unnormalized pure state w.l.o.g.; we can easily derive that the form of a (normalized) mixed state evolved from its initilized form $\rho =  \sum_i \lambda_i \ket{\phi_i}\bra{\phi_i}$ will be the above. 

Also by \cite{z20} claim 6.3, according to Chernoff-Hoeffding bound, we have for $ T \geq \frac{\ln(4/\delta)}{\epsilon^2}$ that:
\begin{align}
\label{eqn:chernoff}
    \Pr[\lvert \tilde{p} - p \lvert \leq \epsilon/2] \geq  1- e^{-2(2T)/(\epsilon^2/2)}  \geq 1-\delta, \text{ where } \tilde{p} = \frac{t}{2T}
\end{align}
Here $\tilde{p} = \frac{t}{2T}$ is the value we obtain at the next-to-last step in ATI algorithm in \Cref{fig:ati_algortihm} (before we check $\tilde{p} \geq 1/2+\gamma$); $p$ is an outcome we obtain if we apply an ideal projective implementation of $\cP$-- in short we obtain some $p = p_{i^*}$ with probability $\sum_j \lambda_j$  where all $\ket{\phi_j}$ have eigenvalues $p_j = p_{i^*}$ under $\cP$ (see discussions in \Cref{sec:project_imp}).
But we don't need to worry about the detailed distribution of $p$ here: since we must have $p\in [1/2, 1/2+\eta]$, we can deduce that $\frac{(1-\epsilon)}{2} \cdot 2T \leq t \leq (\frac{(1+\epsilon)}{2} + \eta)\cdot 2T$. Since we have $\eta < \epsilon/2$  by our assumption and let $T = \ln(4/\delta)/\epsilon^2$ we have $T-\frac{\ln(4/\delta)}{\epsilon} \leq t \leq T+\frac{2\ln(4/\delta)}{\epsilon}$, with probability $1-\delta$. 
Now we analyze the trace distance between $\rho, \rho'$. 
Without loss of generality, first assume $t\geq T$ and $ \lambda_i - \lambda_i \cdot \frac{p_i^t \cdot (1-p_i)^{2T-t}}{\sum_i \lambda_i \cdot p_i^t \cdot (1-p_i)^{2T-t}} \geq 0, \forall i$.
\begin{align*}
    \lVert \rho - \rho' \rVert_{\Tr} &= \frac{1}{2}  \sum_i \lvert (\lambda_i - \lambda_i \cdot \frac{p_i^t \cdot (1-p_i)^{2T-t}}{\sum_i \lambda_i \cdot  p_i^t \cdot (1-p_i)^{2T-t}}) \rvert  \\
    & = \frac{1}{2} \sum_i \lambda_i (1-  \frac{p_i^t \cdot (1-p_i)^{2T-t}}{\sum_i \lambda_i \cdot p_i^t \cdot (1-p_i)^{2T-t}})  \\
    & = \frac{1}{2} \sum_i \lambda_i (1-  \frac{p_i^T \cdot (1-p_i)^{T} \cdot p_i^{t-T} (1-p_i)^{T-t}}{\sum_i \lambda_i \cdot  p_i^T \cdot (1-p_i)^{T} p_i^{t-T} (1-p_i)^{T-t}})  \\
    & \leq \frac{1}{2} \sum_i \lambda_i (1-  \frac{p_i^T \cdot (1-p_i)^{T} \cdot p_i^{t-T} (1-p_i)^{T-t}}{\sum_i \lambda_i \cdot (\frac{1}{2})^{2T} (\frac{1/2+\eta}{1/2-\eta})^{t-T}} )  \\
    & \leq  \frac{1}{2} \sum_i \lambda_i (1-  \frac{p_i^T \cdot (1-p_i)^{T} \cdot p_i^{t-T} (1-p_i)^{T-t}}{(\frac{1}{4})^T \cdot (\frac{1/2+\eta}{1/2-\eta})^{t-T}}) \allowdisplaybreaks
    \\
    & \leq \frac{1}{2} \sum_i \lambda_i (1-  \frac{(1/4 - \eta^2)^T \cdot (\frac{1/2-\eta}{1/2+\eta})^{t-T}}{(\frac{1}{4})^T \cdot (\frac{1/2+\eta}{1/2-\eta})^{t-T}}) \allowdisplaybreaks  \\
    & = \frac{1}{2} \sum_i \lambda_i (1-  (1-4\eta^2)^T \cdot (\frac{1/2-\eta}{1/2+\eta})^{2(t-T)}) \allowdisplaybreaks \\
    & \leq  \frac{1}{2} \sum_i \lambda_i (1-  (1-4T\eta^2) \cdot (1 -4\eta)^{2t-2T}) \allowdisplaybreaks \\
    & \leq \frac{1}{2} \sum_i \lambda_i (1-  (1-4T\eta^2)  \cdot  (1 -4\eta(2t-2T))) \allowdisplaybreaks \\ 
    & \leq \frac{1}{2} \sum_i \lambda_i ( 1-(1-4\frac{\ln(4/\delta)}{\epsilon^2} \cdot\eta^2 - 8\eta \cdot \frac{2\ln(4/\delta)}{\epsilon}  +  4\eta^2 \cdot  8\eta \cdot \frac{2(\ln(4/\delta))^2}{(\epsilon^3)})) \\
    & \leq  O(\eta \cdot \frac{\ln(4/\delta)}{\epsilon}) \text{, since  $\eta < \epsilon$}.
\end{align*}
The second to last inequality is because $T = \frac{\ln(4/\delta)}{\epsilon^2}$, and $ t-T \leq\frac{\ln(4/\delta)}{\epsilon}$.

The analysis for $t < T$ is symmetric, by swapping the assignment of $p_i, 1-p_i$ for values $(1/2+\eta, 1/2-\eta)$ to satisfy the relaxations and will obtain the same outcome.
Similarly, considering the other extreme case when $ \lambda_i - \lambda_i' <0, \forall i$ where $\lambda_i' = \lambda_i \cdot \frac{p_i^t \cdot (1-p_i)^{2T-t}}{\sum_i \lambda_i \cdot \cdot p_i^t \cdot (1-p_i)^{2T-t}}$, we will have: 
\begin{align*}
     \lVert \rho - \rho' \rVert_{\Tr} & \leq
     \frac{1}{2} \sum_i \lambda_i ( (1-4\eta^2)^{-T} \cdot (\frac{1/2+\eta}{1/2-\eta})^{2(t-T)}-1)    \\   
     & \leq \frac{1}{2} \sum_i \lambda_i (  (1-4T\eta^2)^{-1}  \cdot  (1 - 4\eta(2t-2T))^{-1} - 1) \\
     & \leq \frac{1}{2} \frac{O(\eta \cdot \frac{\ln(4/\delta)}{\epsilon})}{1 - O(\eta \cdot \frac{\ln(4/\delta)}{\epsilon})} \\
     & \leq O(\eta \cdot \frac{\ln(4/\delta)}{\epsilon})
\end{align*}
where the last inequality is because $\eta < \epsilon$ and $\ln(4/\delta)$ is inverse polynomial; therefore there exists some constant $0<c<1$ such that $ 1- c \leq 1- O(\eta \cdot \frac{\ln(4/\delta)}{\epsilon})$. 

Therefore overall we have $ \lVert \rho - \rho' \rVert_{\Tr} \leq \frac{1}{2}\sum_{i: \lambda_i - \lambda_i' \geq 0} \vert \lambda_i - \lambda_i'  \vert + \frac{1}{2}\sum_{i: \lambda_i - \lambda_i' < 0} \vert \lambda_i - \lambda_i'  \vert  \leq O(\eta \cdot \frac{\ln(4/\delta)}{\epsilon})$.

We can also observe from \Cref{eqn:chernoff} that with probability $1-\delta$, $\frac{1}{2}-\epsilon/2 \leq \tilde{p} \leq 1/2+\epsilon << 1/2+\gamma$ and $\ati$ will output 0.
\end{proof}

\begin{corollary}
\label{cor:invariant_ati_same_distr}
     Suppose the statistical distance of $\cD_0,\cD_1$ is 0.
$\ati^{\epsilon,\delta}_{\cP, 1/2+\gamma}$ is the ATI for mixture of projections $\cP = (\cP_{\cD_0,\cD_1},  \cQ_{\cD_0,\cD_1})$ defined in \Cref{def:distribution_distinguish_povm}, with parameters $\gamma,\epsilon,\delta$ where $\gamma$ is inverse polynomial and $\epsilon < \gamma$. For any input state $\rho$,
let $\rho'$ be the state after applying $\ati^{\epsilon,\delta}_{\cP, 1/2+\gamma}$ on $\rho$, with probability $(1-\delta)$, we have that $\lVert \rho-\rho'\rVert_{\Tr} = 0$ and $\ati^{\epsilon,\delta}_{\cP, 1/2+\gamma}$ outputs 0.  
\end{corollary}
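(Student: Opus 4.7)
The plan is to obtain this corollary as a direct specialization of the preceding \Cref{thm:invariant_ati_uniform} to the degenerate parameter $\eta = 0$, and then verify that the trace-distance bound collapses exactly to zero (rather than merely becoming inverse-exponential). The only thing to check is that the estimates in that proof are uniform enough in $\eta$ that setting $\eta = 0$ gives the equality $\rho = \rho'$, not just $\lVert \rho - \rho' \rVert_{\Tr} \leq 0$ as a limiting statement.

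Concretely, I would start from the spectral decomposition of the input $\rho = \sum_i \lambda_i \ket{\phi_i}\bra{\phi_i}$ in the common eigenbasis of the POVM elements of $\cP = \cP_{\cD_0,\cD_0}$. Because $\cD_0 = \cD_1$, no measurement outcome can distinguish the two distributions, so every eigenvalue must equal $p_i = 1/2$. Plugging this into the explicit post-measurement form from \Cref{eqn:state_after_ati}, the ratio $\frac{p_i^t(1-p_i)^{2T-t}}{\sum_i \lambda_i \, p_i^t(1-p_i)^{2T-t}}$ simplifies to $\frac{(1/2)^{2T}}{(1/2)^{2T}\sum_i \lambda_i} = 1$ for every $i$, so $\rho' = \rho$ identically and $\lVert \rho - \rho' \rVert_{\Tr} = 0$.

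For the second claim, that the ATI outputs $0$ with probability at least $1-\delta$, I would reuse the Chernoff–Hoeffding bound \Cref{eqn:chernoff} verbatim. Since every eigenvalue $p_i$ equals exactly $1/2$, the bound yields $\lvert \tilde{p} - 1/2 \rvert \leq \epsilon/2$ with probability at least $1 - \delta$. Combining this with the assumption $\epsilon < \gamma$, we get $\tilde{p} \leq 1/2 + \epsilon/2 < 1/2 + \gamma$, so the threshold test in the final step of \Cref{fig:ati_algortihm} outputs $0$.

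I expect no genuine obstacle here — the only subtle point is being explicit that the inequality chain in the proof of \Cref{thm:invariant_ati_uniform} is tight when $\eta = 0$ (every factor of the form $(1-4\eta^2)^{\pm T}$ or $\bigl((1/2+\eta)/(1/2-\eta)\bigr)^{\pm 2(t-T)}$ becomes $1$), so the $O(\eta \cdot \ln(4/\delta)/\epsilon)$ bound degenerates to $0$ rather than to some residual term. One could alternatively cite \Cref{thm:invariant_ati_uniform} as a black box and simply remark that the bound vanishes at $\eta = 0$, which is the cleanest way to present the corollary.
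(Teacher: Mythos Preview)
Your proposal is correct and takes essentially the same approach as the paper: the paper's proof is literally the one-liner ``We can obtain this corollary directly by plugging in $\eta = 0$,'' which is exactly your final suggested route. Your additional verification that every eigenvalue $p_i = 1/2$ forces the ratio in \Cref{eqn:state_after_ati} to equal $1$ identically (so $\rho' = \rho$ exactly, not merely up to $O(\eta)$) is a nice sanity check that the paper omits, but it is not a different argument.
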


\jiahui{is this probability $1-\delta$ or 1?}

We can obtain this corollary directly by plugging in $\eta = 0$.
In this case $\ati$ acts exactly as an identity operator.

\fi

\section{Secure Key Leasing with Classical Communication: Definition}
\label{sec:defs}
\newcommand{\Eval}{\mathsf{Eval}}

\jiahui{removed the quantum key gen defs}

\subsection{Secure Key Leasing PKE with Classical Lessor}

 \jiahui{stress that our actual construction is not really "interactive" because we only need one lessor message and ask the user to output its public key}
\begin{definition}[Public Key Encryption with Classcal Leaser]
\label{def:interactive_skl}
A PKE with secure key leasing with a classical vendor consists of the algorithms ($\Setup$, $\KeyGen$, $\Enc$, $\Dec$, $\delete$, $\VerDel$) defined as follows:
\jiahui{add $\Setup$ algo}
\begin{item}

    \item $\Setup(1^\lambda)$:  take input a security parameter $\lambda$, output a (classical) master public key $\mpk$ and a (classical) trapdoor $\td$.
    
     \item $\KeyGen(1^\lambda)$:   
       take as input a (classical) public key $\mpk$, output
      a quantum decryption key $\rho_\sk$ and a classical public key $\pk$.
      
        \item $\Enc(\pk,\mu)$: given a public key $\pk$ and a plaintext $\mu \in \{0,1\}$, output a ciphertext $\ct$.

        \item $\Dec(\rho_\sk,\ct)$: given a quantum state $\rho_\pk$ and a 
        ciphertext $\ct$, output a message $\mu$ and the state $\rho_\sk'$
        \item $\delete(\rho_\sk)$: given the quantum state $\rho_\sk$, output a classical deletion certificate $\cert$
        \item $\VerDel(\pk,\td,\cert)$: given a public key $\pk$, a classical certificate $\cert$ and the trapdoor $\td$, output $\mathsf{Valid}$ or $\mathsf{Invalid}$.
\end{item}


\end{definition}

We refer the readers to \Cref{sec:relation_security_defs} on a few side comments and remarks about the definition.

\paragraph{Interactive Key Generation}
In our actual classical lessor protocol, the lessor and lessee run an interactive protocol $\InKeyGen$:
 \begin{itemize}
     \item $\InKeyGen(1^\lambda)$: an interactive protocol that takes in a security parameter $1^\lambda$; $\A$ interacts with the challenger (classically) and they either output a public key $\pk$, a trapdoor $\td$ and a quantum secret key $\rho_\sk$ or output a symbol $\bot$.
 \end{itemize}

The necessity of an interactive protocol is to prevent "impostors" who sees the master public key and generate public key-decryption key pairs on their own without consent of the lessor. See \Cref{sec:relation_security_defs} for details. 
This security is orthogonal to our security defined in \Cref{def:regular_skl_security_classical} and \Cref{def:gamma_good_decryptor} because our adversary implicitly commits to one $\pk$. 



\paragraph{Correctness} A PKE Scheme with secure quantum key leasing  ($\Setup, \KeyGen$, $\Enc$, $\Dec$, $\delete$, $\VerDel$) satisfies correctness if the following hold.\\\\
\textbf{Decryption Correctness}: There exists a negligible function $\negl(\cdot)$, for all $\lambda \in \N$, for all $\mu \in \mathcal{M}$:

    $$\Pr \left[\Dec(\rho_\sk,\ct) = \mu : \begin{matrix} 
    (\mpk, \td) \gets \Setup(1^\lambda) \\
    (\pk,\rho_\sk)\leftarrow \KeyGen(\mpk) \\ \ct \leftarrow \Enc(\pk,\mu) \end{matrix}\right] \ge 1 - \negl(\lambda)$$ 

\paragraph{Reusability}
the above decryption correctness should hold for an arbitrary polynomial number of uses.

\paragraph{Verifying Deletion Correctness}: There exists a negligible function $\negl(\cdot)$, for all $\lambda \in \N$:

$$ \Pr \left[\mathsf{Valid} \leftarrow\VerDel(\pk,\td,\cert) : \begin{matrix} (\mpk, \td) \gets \Setup(1^\lambda) \\
    (\pk,\rho_\sk)\leftarrow \KeyGen(\mpk) \\ \cert \leftarrow \delete(\rho_\sk) \end{matrix}\right] \ge 1 - \negl(\lambda) $$

\jiahui{OLD regular security below. }
\paragraph{IND-SKL-PKE Security}
We give a "classical friendly" security definition same as the one used in \cite{ananth2023revocable,agrawal2023public}, except that we have a classical leaser.


 %

Then in the next subsection \Cref{sec:strong_security}, we will then present a "strong" security definition in the following section, \cref{sec:strong_security}. The latter is the one we will actually use in the proof and will imply the following IND-PKE-SKL security.

\begin{definition}[IND-PKE-SKL Security(Classical Client)]
\label{def:regular_skl_security_classical}

where the experiment $\text{IND-PKE-SKL}(\A,1^\lambda, b \in \{0,1\})$ between a challenger and the adversary $\A $ is defined as follows:
\begin{itemize}
   \item The challenger runs $\Setup(1^\lambda) \to (\mpk, \td)$.
    It sends $\mpk$ to the adversary $\A$. $\A$ computes $(\pk, \rho_\sk) \gets \KeyGen(\mpk)$ and publishes $\pk$. 
    
    
    \item The challenger requests that $\A$ runs the deletion algorithm $\delete$. $\A$ returns a deletion certificate $\cert$. 
    

    \item The challenger runs $\VerDel(\pk,\td,\cert)$ and continues if $\VerDel(\pk,\td,\cert)$ outputs $\valid$; else the challenger outputs $\bot$ and aborts.
    

    \item $\A$ submits a plaintext $\mu \in \{0,1\}^\ell$ to the challenger.
    \item The challenger flips a bit $b \samp \{0,1\}$.
    \item If $b=0$, the challenger sends back the ciphertext $\ct \leftarrow \Enc(\pk,\mu)$. If $b=1$ the challenger sends a random ciphertext from the possible space of all ciphertexts for of $\ell$-bit messages, $\ct \samp \mathcal{C}$
    \item Output $\A$'s guess for the bit b, b'. 
    
\end{itemize}
A PKE Scheme with Secure Key Leasing and fully classical communication ($\Setup, \KeyGen$, $\Enc$, $\Dec$, $\delete$, $\VerDel$) is secure if, for every $\QPT$ adversary $\A$ if there exists negligible functions $\negl(\cdot)$
 such that one of the following holds for all $\lambda \in \N$:  
\begin{align*}
    & \lvert \Pr\left[  \text{IND-PKE-SKL}(\A, 1^\lambda, b=0) = 1 \right] - \Pr\left[  \text{IND-PKE-SKL}(\A, 1^\lambda, b=1) = 1 \right] \rvert \le  \negl(\lambda) 
\end{align*}
\end{definition}

\begin{remark}
Regarding the security in \Cref{def:regular_skl_security_classical}:
    In other words, in order to win, the adversary $\A$ needs to do both of the following for some noticeable $\epsilon_1, \epsilon_2$:
    \begin{enumerate}
        \item $\lvert \Pr\left[  \text{IND-PKE-SKL}(\A, 1^\lambda, b=0) = 1 \right] - \Pr\left[  \text{IND-PKE-SKL}(\A, 1^\lambda, b=1) = 1 \right] \rvert \geq  \epsilon_1(\lambda)$ and

    \item $\Pr[\text{IND-PKE-SKL}(\A, 1^\lambda, b \in \{0,1\}) \neq \bot] \geq \epsilon_2(\lambda)$
    \end{enumerate}
We need the second inequality above to hold because in the case where \\ $\Pr[\text{IND-PKE-SKL}(\A, 1^\lambda, b \in \{0,1\}) = \bot] \geq 1- \negl_2(\lambda)$,for some negligible $\negl_2(\cdot)$, the probabilities $\Pr\left[  \text{IND-PKE-SKL}(\A, 1^\lambda, b=0) = 1 \right]$ and \ifllncs
\\ \else \fi $ \Pr\left[  \text{IND-PKE-SKL}(\A, 1^\lambda, b=1) = 1 \right] $ will also have negligible difference.
\end{remark}

\subsection{Strong SKL-PKE PKE Security: Threshold Implementation Version}
\label{sec:strong_security}

\jiahui{This is a more rigorous/ physically implementable security definition of the security definition. }

In this section, we define a security notion we call Strong SKL-PKE, which is described via the measurement $\ti$ in \Cref{sec:unclonable dec ati}. We show that it implies the regular security notion in the previous section. Note that Strong SKL-PKE is the security definition we prove our scheme's security with respect to.

To define the strong security/implementable security, we first define what it means to test the success probability of a quantum decryptor.


\begin{definition}[Testing a quantum decryptor] 
\label{def:gamma_good_decryptor}
   Let $\gamma \in [0,1]$. Let $\pk$ be a public key and $\mu$ be a message. We refer to the following procedure as a {test for a $\gamma$-good quantum decryptor} with respect to $\pk$ and $\mu$ used in the following sampling procedure:
   \begin{itemize}
       \item The procedure takes as input a quantum decryptor $\rho$.
       \item Let $\mathcal{P} = (P, Q)$ be the following mixture of projective measurements (in terms of Definition \ref{def:mixture_of_projective}) acting on some quantum state $\rho$:
       \begin{itemize}
       \item Compute $\ct_0 \leftarrow \Enc(\pk, \mu)$, the encryption of message $\mu \in \{0,1\}$.
       \item Compute $\ct_1 \gets 
       \mathcal{C}$, a random ciphertext from the possible space of all ciphertexts for 1-bit messages.
       \item Sample a uniform $b \leftarrow \{0,1\}$. 
       \item Run the quantum decryptor $\rho$ on input $\ct_b$. Check whether the outcome is $b$. If so, output $1$, otherwise output $0$.
      \end{itemize}
       \item Let $\ti_{1/2 + \gamma}(\cP)$ be the threshold implementation of $\cP$ with threshold value $\frac{1}{2} + \gamma$, as defined in \Cref{def:thres_implement}. Run $\ti_{1/2 + \gamma}(\cP)$ on $\rho$, and output the outcome. If the output is $1$, we say that the test passed, otherwise the test failed.
   \end{itemize}
\end{definition}

By Lemma \ref{lem:threshold_implementation}, we have the following corollary. 

\begin{corollary}[$\gamma$-good Decryptor]
\label{cor: gamma good dec}
    Let $\gamma \in [0,1]$. Let $\rho$ be a quantum decryptor. 
    Let $\ti_{1/2 + \gamma}(\cP)$ be the test for a $\gamma$-good decryptor defined above. Then, the post-measurement state conditioned on output $1$ is a mixture of states which are in the span of all eigenvectors of $P$ with eigenvalues at least $1/2+\gamma$. 
    
\end{corollary}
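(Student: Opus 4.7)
The plan is to unfold the definitions of threshold implementation and projective implementation, then read off the claim directly. This is essentially a one-line corollary once the right lemmas are invoked, so the proposal is mostly a matter of cleanly tracing the definitions rather than any new argument.

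First I would recall that by \Cref{def:thres_implement}, the measurement $\ti_{1/2+\gamma}(\cP)$ consists of two steps: (i) apply the projective measurement $\cE$ from the projective implementation $\projimp(\cP)$, obtaining an outcome $(p,1-p)$ drawn from the finite set $\cD$; (ii) output the bit $1$ if $p \geq 1/2+\gamma$, and $0$ otherwise. In particular, the only randomness that affects the post-measurement state on the decryptor register comes from step (i); step (ii) is a classical post-processing of the label $p$ that has already been recorded.

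Next I would invoke \Cref{lem:proj_implement}: for each outcome $(p,1-p)$ of $\cE$, the post-measurement state is a mixture of eigenvectors of $P$ with eigenvalue exactly $p$ (equivalently, of $Q=I-P$ with eigenvalue $1-p$). Since $\cE$ is the measurement in the common eigenbasis of $P$ and $I-P$, conditioning on a specific outcome $p$ projects $\rho$ (up to normalization) onto the eigenspace of $P$ with eigenvalue $p$.

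Finally, I would combine the two observations. Conditioning on the event that $\ti_{1/2+\gamma}(\cP)$ outputs $1$ is, by step (ii), the same as conditioning on the event that the outcome $p$ of $\cE$ satisfies $p \geq 1/2+\gamma$. Hence the resulting state is a convex combination, over all such $p$, of mixtures of eigenvectors of $P$ with eigenvalue $p$, which is precisely a mixture of states in the span of eigenvectors of $P$ with eigenvalues at least $1/2+\gamma$. No step here is a real obstacle; the only subtlety is making explicit that the classical thresholding in step (ii) does not further disturb the quantum register, so that the structural description of the post-measurement state carried over from \Cref{lem:proj_implement} is preserved under conditioning.
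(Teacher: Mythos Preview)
Your proposal is correct and matches the paper's approach: the paper does not give a detailed proof, it simply states that the corollary follows from \Cref{lem:threshold_implementation}, and your unfolding of \Cref{def:thres_implement} together with \Cref{lem:proj_implement} is exactly the content behind that attribution. The observation that the classical thresholding step does not disturb the quantum register is the only point worth spelling out, and you handle it cleanly.
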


Now we are ready to define the strong $\gamma$-anti-piracy game. 

\begin{definition}[$\gamma$-Strong Secure Key Leasing Security Game]
\label{def:gamma_skl_game}
 Let $\lambda \in \mathbb{N}^+$, and $\gamma \in [0,1]$.
The strong $\gamma$-PKE-SKL game is the following game between a challenger and an adversary $\mathcal{A}$.
\begin{enumerate}

 \item The challenger runs $\Setup(1^\lambda) \to (\mpk, \td)$.
    It sends $\mpk$ to the adversary $\A$. $\A$ computes $(\pk, \rho_\sk) \gets \KeyGen(\mpk)$ and publishes $\pk$. 
    

    \item  The challenger requests that $\A$ runs the deletion algorithm $\delete(\qsk)$. $\A$ returns a deletion certificate $\cert$ to the challenger. 

    \item The challenger runs $\VerDel(\pk,\td,\cert)$ and continues if $\VerDel(\pk,\td,\cert)$ returns $\valid$; else it outputs $\bot$ and aborts, if $\VerDel(\pk,\td,\cert)$ returns $\invalid$.

    \item 
   $\A$ outputs a message $\mu$ and a (possibly mixed) state $\rho_\delete$ as a quantum decryptor.
    
    \item 
    The challenger runs the test for a $\gamma$-good decryptor on $\rho_\delete$
    with respect to $\pk$ and $\mu$. The challenger outputs $1$ if the test passes, otherwise outputs $0$.
\end{enumerate}
We denote by $\sf{StrongSKL}(1^{\lambda}, \gamma, \mathcal{A})$ a random variable for the output of the game. 
\end{definition}

\begin{definition}[Strong PKE-SKL Security]
\label{def:strong_skl}
 Let $\gamma: \mathbb{N}^+ \rightarrow [0,1]$. A secure key leasing scheme satisfies strong $\gamma$-SKL security, if for any QPT adversary $\A$,  there exists a negligible function $\negl(\cdot)$ such that for all $\lambda \in \N$: 
  \begin{align}
    \Pr\left[b = 1, b \gets \sf{StrongSKL}(1^{\lambda}, \gamma(\lambda), \A) \right]\leq \negl(\lambda)
    \end{align}

\end{definition}

Next, we show that the strong PKE-SKL definition(\Cref{def:strong_skl}) implies the IND-PKE-SKL definition(\Cref{def:regular_skl_security_classical}).

\begin{claim}
\label{thm: strong skl implies regular}
Suppose a secure key leasing scheme satisfies strong $\gamma$-SKL security (Definition \ref{def:strong_skl}) for any inverse polynomial $\gamma$, then it also satisfies IND-PKE-SKL security (Definition \ref{def:regular_skl_security_classical}).
\end{claim}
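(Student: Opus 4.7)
I would prove this by contraposition via a direct black-box reduction. Suppose IND-PKE-SKL security fails: there is a QPT adversary $\A$ and noticeable functions $\epsilon_1(\lambda), \epsilon_2(\lambda)$ such that $\A$ distinguishes real from uniform ciphertexts with advantage $\epsilon_1$ and passes verification with probability at least $\epsilon_2$. The plan is to construct a QPT adversary $\A'$ that violates strong $\gamma$-SKL security for the inverse polynomial $\gamma := \epsilon_1/4$.

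The reduction $\A'$ is essentially a wrapper. It runs $\A$ internally, forwarding all classical messages verbatim during setup and key generation, publishing the same $\pk$, and submitting $\A$'s classical certificate $\cert$ to the external challenger. If the challenger does not abort, $\A'$ outputs the plaintext $\mu$ that $\A$ would have submitted in the IND game, and packages the leftover internal state of $\A$, together with a fixed decryption circuit, into a quantum decryptor $\rho_\delete$. The decryption circuit, on input a ciphertext $\ct$, runs $\A$ on $\ct$ and outputs $\A$'s guess bit $b'$ (possibly flipped --- see below). This is clearly a QPT algorithm, and the event ``$\cert$ verifies'' happens with the same probability $p_v \geq \epsilon_2$ as in the IND game.

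The analytic heart of the argument is the following. Conditioned on $\cert$ passing, the distinguishing advantage of $\A$ is $\epsilon_1/p_v$, so the POVM $\cP = (P, Q)$ of Definition~\ref{def:gamma_good_decryptor} satisfies $\Tr[P\rho_\delete] \geq 1/2 + \epsilon_1/(2 p_v)$, possibly after flipping the output bit (which we fold into $\A'$ using non-uniform advice indicating the sign of $\A$'s bias; alternatively $\A'$ can toss a fair coin at the start and hard-wire the corresponding wrapper, losing only a factor of two). By the spectral decomposition of $P$ guaranteed by Corollary~\ref{cor: gamma good dec}, decompose $\rho_\delete$ as a mixture of eigenvectors with eigenvalues $p_i \in [0,1]$ and weights $\alpha_i$. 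A standard Markov-type tail bound yields that if $\sum_i \alpha_i p_i \geq 1/2 + \Delta$ then $\sum_{i : p_i \geq 1/2 + \gamma} \alpha_i \geq 2(\Delta - \gamma)$, since $p_i - 1/2 - \gamma$ is bounded above by $1/2$ on the accepting side. Instantiating with $\Delta = \epsilon_1/(2 p_v) \geq \epsilon_1/2$ and $\gamma = \epsilon_1/4$ gives
\[
\Tr\!\bigl[\ti_{1/2+\gamma}(\cP)\,\rho_\delete\bigr] \;\geq\; 2\bigl(\epsilon_1/(2 p_v) - \epsilon_1/4\bigr) \;\geq\; \epsilon_1/2,
\]
using $p_v \leq 1$.

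Combining both events, $\Pr[\sf{StrongSKL}(1^\lambda,\gamma,\A') = 1] = p_v \cdot \Tr[\ti_{1/2+\gamma}(\cP)\,\rho_\delete] \geq \epsilon_2 \cdot \epsilon_1/2$, which is a product of two inverse polynomials and hence noticeable. This contradicts strong $\gamma$-SKL security at the inverse polynomial threshold $\gamma = \epsilon_1/4$, completing the proof. The only mildly subtle point I anticipate is the Markov-style bound on the spectral mass above threshold and its clean alignment with the definition of $\ti_{1/2+\gamma}$ as ``outcome $1$ iff measured eigenvalue is at least $1/2+\gamma$''; the reduction itself and the reuse of the same $\cert$ are entirely syntactic, and the ``WLOG flip'' is the sort of cosmetic issue that can be dispatched in one line by non-uniform advice.
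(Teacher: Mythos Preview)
Your proof is correct and follows the same contrapositive wrapper approach as the paper. In fact your argument is the more complete of the two: the paper's proof simply asserts that a post-deletion state with expected success $\geq 1/2+\epsilon_2/2$ is ``clearly a $\tfrac{1}{2}\epsilon_2$-good decryptor with noticeable probability,'' whereas you supply the Markov-type spectral-mass bound (and the careful choice $\gamma=\epsilon_1/4$, strictly below half the advantage) that actually justifies this step.
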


\begin{proof}
We refer the reader to \cref{sec:relation_security_defs} for the proof.
\end{proof} 


\section{Secure Key Leasing with Classical Lessor/Client: Construction}

\subsection{Parameters}
\label{sec:scheme_parameters}
We present our parameters requirements for the following construction.
All the parameters are the same as in \Cref{sec:ntcf_parameter}, with a few new paramters added.

Let $\lambda$ be the security parameter. All other parameters are
functions of $\lambda$. Let $q \geq 2$ be a prime integer. Let $\ell, k, n, m, w \geq 1$ be polynomially bounded functions of $\lambda$ and
$B_L, B_V, B_P, B_{P'}$ be Gaussian parameters and $B_X, B_S$ are norm parameters such that the following conditions hold:
\begin{enumerate}
    \item $n = \Omega(\ell \log q)$ and $m = \Omega(n \log q)$

 \item $k$ can be set to $\lambda$.

    
 \item $w = \lceil n  \log q \rceil$.

    \item $B_P = \frac{q}{2C_T
\sqrt{mn \log q}}$ where $C_T$ is the constant in \Cref{thm:trapdoor_samp}.

\item $2\sqrt{n} \leq B_L \leq B_V \leq B_P \leq B_X \leq B_{P'}$

\jiahui{the last ratio requirement "$\frac{B_{P'}}{ B_X^4 \cdot B_P}$ is superpolynomial in $\lambda$" is removed}
\item The ratios $\frac{B_V}{B_L}$, $\frac{B_P}{B_V}$, $\frac{B_{P'}}{B_P}$, $\frac{B_X}{B_S}$ are super-polynomial in $\lambda$.

\item $B_{P'} \cdot m^d \leq q$ where $d$ is the depth of the circuit in FHE evaluation, to be further discussed in \Cref{sec:fhe}.

\item We denote $[B_X]$ as all integer taking values $[-B_X, \cdots, B_X]$. 

\item In our scheme, we set $B_S := \{0,1\}$ and $B_X$ to be a quasipolynomial.

\end{enumerate}

\subsection{Scheme Construction}
\label{sec:construction}

 We first present our construction for a secure key leasing protocol. 
 
 We provide the protocol description where lessor is completely classical in \Cref{sec:protocol_description}. 

\begin{itemize}

    \item $\Setup(1^\lambda)$:
        On input the security parameter $1^\lambda$, the $\Setup$
    algorithm works as follows:
    \begin{itemize}
        \item Sample $k = \lambda$  matrices $\bA_i \in \mathbb{Z}_q^{n \times m}$ along with their trapdoors $\td_i$ using the procedure $\mathsf{GenTrap}(1^n,1^m,q)$(\Cref{thm:trapdoor_samp}):
    $(\bA_i, \td_i) \samp  \mathsf{GenTrap}(1^n,1^m,q),\\ \forall i \in [k]$. 
    \item Sample $\bs_{i} \samp [B_s]^{n}, \forall i \in [k]$ and $\be_{i} \samp D_{\Z_q^{ m},B_V}, , \forall i \in [k]$.
    
    \item Output $\mpk = \{f_{i,0}, f_{i,1}\}_{i=1}^k =  \{(\bA_i, \bs_i\bA_i+ \be_i)\}_{i=1}^k $ and the trapdoor $\td = \{\td_i\}_{i=1}^k$. 
    \end{itemize}

    \item $\KeyGen$($\mpk$): 
    \begin{itemize}
    \item Take in $\mpk = \{f_{i,0}, f_{i,1}\}_{i=1}^k = \{(\bA_i, \bs_i\bA_i+\be_i)\}_{i=1}^k$ 
 .

    \item Prepare the quantum key of the form $ \rho_{\sk} = \bigotimes_{i=1}^k \left( \frac{1}{\sqrt{2}}(\ket{0,\bx_{i,0}}+\ket{1,\bx_{i,1}})\right)$ along with $\{\by_i\}_{i \in [k]}$ where $\by_i=f_{i,0}(\bx_{0,i})=f_{i,1}(\bx_{1,i})$ for each $i \in [k]$, according to the procedure in \Cref{sec:efficient_range_prepare_ntcf}.

    Note that $f_{i,b}(\bx_{b,i}) = \bx_{b,i} \bA_i+\be_{i}'+b_i \cdot \bs_i \bA_i,\; \be_{i}' \samp D_{Z_q^{m},B_P};  \bx_{i,b_i} \in [B_X]^n, \forall i \in [k]$.

    \item Output public key $\pk = \{(\bA_i, \bs_i \bA_i + \be_i, \by_i)\}_{i=1}^k$ and quantum decryption key $\qsk$. 

    \end{itemize}
    
    \item $\Enc(\pk,\mu)$: On input a public key $\pk = \{(\bA_i, \bs_i\bA_i + \be_i,\by_i)\}_{i=1}^k$ and a plaintext $\mu \in \{0,1\}$ the algorithm samples $\bR\samp\{0,1\}^{m \times m}$ and computes a ciphertext as follows: 

       \begin{align*}
        &\ct= \begin{bmatrix}  \bs_1\bA_1 +\be_1 \\ \bA_1 \\ \cdots
        \\ \cdots \\
       \bs_k\bA_k +\be_k \\  \bA_k 
        \\\sum_{i \in [k]} \by_i \end{bmatrix} \cdot \bR + \bE + \mu \cdot \bG_{(n+1)k+1}
    \end{align*}
     where $\bG_{(n+1)k+1}$ is the gadget matrix of dimensions $(nk+k+1)\times m$.

      $\bE \in \Z_q^{\left(nk+k+1 \right) \times m}$ is a matrix with all rows having $\mathbf{0}^m$ except the last row being $ \be''$, where $\be'' \gets \cD_{\Z_q^m, B_{P'}}$.

      Output $\ct$.




     \item \textsf{Dec$(\qsk, \ct)$}: On quantum decryption key $\rho_\sk = \bigotimes_{i=1}^k \left( \frac{1}{\sqrt{2}}(\ket{0, \bx_{i,0}}+\ket{1, \bx_{i,1}})\right)$ and a ciphertext $\ct \in \Z_q^{(nk+k+1)\times m}$, the decryption is performed in a coherent way as follows. 
    \begin{itemize}
         \item View the key as a vector of dimension $1 \times (n+1)k$; pad the key with one-bit of classical information, the value -1 at the end, to obtain a vector with dimension $1 \times (nk+k+1)$: \begin{align*}
           &  \begin{bmatrix}
             \frac{1}{\sqrt{2}}((\ket{0, \bx_{1,0})}+\ket{(1, \bx_{1,1})})| \cdots | \frac{1}{\sqrt{2}}(\ket{(0, \bx_{k,0})}+\ket{(1, \bx_{k,1}})) | \ket{-1} 
         \end{bmatrix} \\
         &=  \frac{1}{\sqrt{2^k}}\sum_{\mat{b}_j \in \{0,1\}^k}\lvert\bb_{j,1},\bx_{1,\bb_{j,1}}, \cdots, \bb_{j,k},\bx_{k,\bb_{j,k}} \rangle \ket{-1}
         \end{align*} 
         Denote this above vector by $\mathbf{sk}$.

        \item Compute $(-\mathbf{sk}) \cdot \ct \cdot \bG^{-1}  (\mathbf{0}^{nk+k} \vert \lfloor\frac{q}{2}\rfloor)$ coherently and write the result on an additional empty register $\mathsf{work}$ (a working register other than the one holding $\mathbf{sk}$).

        \item Make the following computation in another additional register $\mathsf{out}$: write $\mu' = 0$ if the outcome in the previous register of the above computation is less than $\frac{q}{4}$; write $\mu' = 1$ otherwise. 
        Uncompute the $\mathsf{work}$ register in the previous step using $\mathbf{sk}$ and $\ct$.
         Measure the final $\mathsf{out}$ register and output the measurement outcome.  

     \end{itemize}

      \item \textsf{Delete}($\rho_\sk$): 
      \begin{itemize}
      \item For convenience, we name the register holding state $\frac{1}{\sqrt{2}}(\ket{0, \bx_{i,0}}+\ket{1, \bx_{i,1}})$ in $\rho_\sk$ as register $\mathsf{reg_i}$.
          \item For each register $\mathsf{reg_i}, i \in [k]$: apply invertible function $\cJ: \mathcal{X} \to \{0,1\}^w$  where $\cJ(x)$ returns the binary decomposition of $x$. Since it is invertible, we can uncompute the original $\bx_{i, b_i}$'s in $\mathsf{reg_i}$ and leave only $\ket{0, \cJ(\bx_{i,0})} + \ket{1, \cJ(\bx_{i,1})}$. 

        \item Apply a quantum Fourier transform over $\mathbb{F}_2$ to all $k(w+1)$ qubits in registers $\{\mathsf{reg}_i\}_{i \in [k]}$ 
          \item  measure in computational basis to obtain a string $(c_1,\bd_1, \cdots, c_k, \bd_k) \in \{0,1\}^{wk+k}$. 
      \end{itemize}

     \item \textsf{VerDel}($\td, \pk,\cert$): On input a deletion certificate $\cert \{c_i, \bd_i\}_{i=1}^n$, public key $\pk = \{f_{i,0}, f_{i,1}, \by_i\}_{i \in k}$ and the trapdoor $\td = \{\td_i\}_{i \in k}$.
     \begin{itemize}
         \item    Compute $\bx_{i,b_i} \leftarrow \mathsf{INV}(\td_i,  b, \by_i)$ for both $b = 0,1$.

        \item Check if $\bx_{i,b_i} \in [B_X]^n$ for all $i \in [k], b_i \in \{0,1\}$. If not, output $\invalid$.
        If yes, continue.

        \item Check if $\lVert \by_i - \bx_{i,b_i} -b_i\cdot \bs_i\bA_i \rVert \leq B_P\sqrt{m}$,
        for all $i \in [k], b_i \in \{0,1\}$. 
        If not, output $\invalid$.
        If yes, continue.
        
         \item  
     Output $\valid$ if $c_i=\bd_i \cdot(\cJ(\bx_{i,0})\oplus \cJ(\bx_{i,1})(\mod 2)), \forall i \in [k]$ and $\invalid$ otherwise.
     \end{itemize}
     
    
\end{itemize}

\ifllncs
We defer the proof on correctness to \Cref{sec:correctness} and the security proof to \Cref{sec:main_security_proof}.
\else 
\fi

\bibliographystyle{alpha}
\bibliography{bib,abbrev3,custom,crypto} 

\appendix

\section{Preliminaries}

\subsection{Quantum Information an Computation}
\label{appendix:quantum_info}

We refer the reader to \cite{nc02} for a reference of basic quantum information and computation
concepts.

A quantum system $Q$ is defined over a finite set $B$ of classical states. In this work we will consider $B = \{0,1\}^n$. A \textbf{pure state} over $Q$ is a unit vector in $\mathbb{C}^{|B|}$, which assigns a complex number to each element in $B$. In other words, let $|\phi\rangle$ be a pure state in  $Q$, we can write $|\phi\rangle$ as:
\begin{equation*}
    |\phi\rangle = \sum_{x \in B} \alpha_x |x\rangle
\end{equation*}
where $\sum_{x \in B} |\alpha_x|^2 = 1$ and $\{|x\rangle\}_{x \in B}$ is called 
the ``\textbf{computational basis}'' of $\mathbb{C}^{|B|}$. The computational basis forms an orthonormal basis of $\mathbb{C}^{|B|}$.

A mixed state is a probability distribution over pure states. The most-often used distance between two possibly mixed states $\rho,\sigma$ is trace distance: $\lVert \rho - \sigma \rVert_{\Tr} = \frac{1}{2} \Tr|\rho-\sigma|$.

\subsection{Quantum Measurements}
\label{sec:prelim_quantum_measurements}
In this work, 
we mainly use two formalisms for quantum measurements. The first, a \emph{positive operator valued measure} (POVM), is a general form of quantum measurement.

A POVM $\cM$ is
specified by a finite index set I and a set $\{M_i\}_{i \in \cI}$ of hermitian positive semidefinite matrices $M_i$ with
the normalization requirement $\sum_{i \in \cI} M_i = \bI$. 
applying a POVM $\cM$ to a quantum state $\ket{\psi}$, the result of the measurement is $i$ with probability
$p_i = \langle \psi \vert M_i
\vert \psi \rangle$. The normalization requirements for $\cM$ and $\ket{\psi}$ imply that $
\sum_i p_i = 1$, and therefore
this is indeed a probability distribution. We denote by $\cM(\ket{\psi})$ the distribution obtained by applying
$\cM$ to $\ket{\psi}$.

We note that any quantum measurement $\cE = \{E_i\}_{i \in \cI}$ is associated with a POVM $\cM = POVM(\cE)$ with
$M_i = E_i^\dagger E_i$
. We will call $\cE$ an implementation of $\cM$. Note that each POVM may be implemented by many possible quantum
measurements, but each quantum measurement implements exactly one POVM.

A projective measurement is a quantum measurement where the $E_i$ are projections: $E_i$ are
hermitian and satisfy $E_i^2
 = E_i$. Note that $\sum_i E_i = \sum_i E_i^\dagger E_i
 = I$ implies that $E_i E_j = 0$ for $i \neq j$.
A projective POVM is a POVM where $M_i$ are projections.

\jiahui{add trace distance def}

    \begin{lemma}[Almost As Good As New(Gentle Measurement) Lemma \cite{aaronson2004limitations}] 
\label{lem:gentle_measure}
Suppose a binary-outcome POVM measurement $(\cP, \cQ)$ on a mixed state $\rho$ yields a
particular outcome with probability 
$1-\epsilon$.  Then after
the measurement, one can recover a state $\tilde{\rho}$ such that $ \left\lVert \tilde{\rho} - \rho \right\rVert_{\mathrm{tr}} \leq \sqrt{\epsilon}$.
Here $\lVert\cdot\rVert_{\mathrm{tr}}$ is the trace distance. 
\end{lemma}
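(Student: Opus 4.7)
The plan is to reduce the statement to its cleanest instance (pure state, projective binary measurement) and then lift back to the fully general POVM-on-mixed-state setting via Naimark dilation and purification. The ``recovery'' state $\tilde\rho$ will be obtained by performing the POVM \emph{coherently} through an isometry $V$, observing the high-probability outcome on the ancilla, and then applying $V^\dagger$ to undo the coherent measurement.

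First I would handle the base case: a pure state $\ket\psi$ and a \emph{projective} binary measurement $(P, I-P)$ with $\langle\psi|P|\psi\rangle = 1-\epsilon$. Conditioned on the outcome $P$, the post-measurement state is $\ket{\psi'} = P\ket\psi/\sqrt{1-\epsilon}$, so the fidelity with $\ket\psi$ is $|\langle\psi|\psi'\rangle|^2 = \langle\psi|P|\psi\rangle^2/(1-\epsilon) = 1-\epsilon$. Using the identity $\|\ket\psi\!\bra\psi - \ket{\psi'}\!\bra{\psi'}\|_{\mathrm{tr}} = \sqrt{1 - |\langle\psi|\psi'\rangle|^2}$ for pure states, the trace distance is exactly $\sqrt{\epsilon}$, which already matches the target bound (so in this special case no nontrivial recovery is needed).

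Next I would reduce a general binary POVM $(P,Q)$ to the projective case via Naimark/Stinespring dilation: there exists an ancilla register $A'$ initialized to $\ket 0$ and an isometry $V$ acting on the joint space such that measuring $A'$ in the computational basis after $V$ realizes the POVM $(P,Q)$, with the $P$-outcome corresponding to the projector $\Pi := V^\dagger (I \otimes \ket 0\!\bra 0) V$ pulled back to the original system. The probability of the $P$-outcome is then $\mathrm{Tr}(\Pi\rho) = \mathrm{Tr}(P\rho) = 1-\epsilon$. Now I apply the pure-state/projective argument above to $\Pi$ and a purification $\ket\phi_{AB}$ of $\rho$ (which is legitimate because the POVM acts only on $A$, so the marginal accept probability on $\ket\phi$ is still $1-\epsilon$). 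This yields a post-measurement pure state $\ket{\tilde\phi}$ with $\|\ket{\tilde\phi}\!\bra{\tilde\phi}-\ket\phi\!\bra\phi\|_{\mathrm{tr}} \leq \sqrt\epsilon$. The \emph{recovered} state is then $\tilde\rho := \mathrm{Tr}_{B,A'}\bigl(V^\dagger \ket{\tilde\phi}\!\bra{\tilde\phi} V\bigr)$: coherently measure, see the likely outcome, uncompute the Naimark unitary, and discard the ancilla and purification registers.

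Finally, to conclude $\|\tilde\rho - \rho\|_{\mathrm{tr}} \leq \sqrt\epsilon$, I use two standard facts: (i) unitaries (and isometries, applied jointly with their adjoint) preserve trace distance, and (ii) partial trace is contractive under trace distance. Combining these with the pure-state bound propagates the $\sqrt\epsilon$ inequality from the purified, dilated space down to the reduced state on the original system. The main subtlety to watch is that the ``recovery'' must be a well-defined procedure that does not keep a classical record of the measurement outcome beyond the point of uncomputation, since such leaked classical information would decohere $\ket\phi$ into the mixture of accept/reject branches and break the pure-state fidelity calculation. Once the coherent-measure-then-uncompute framing is in place, the rest is bookkeeping with Naimark dilation, purification, and contractivity of the partial trace.
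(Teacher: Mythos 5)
The paper does not prove this lemma; it simply cites Aaronson (2004). So there is no ``paper's own proof'' to compare against, and your attempt has to be judged on its own. The high-level route you chose---purify $\rho$, Naimark-dilate the POVM to a projective measurement, run the pure-state fidelity calculation in the dilated space, uncompute the coherent measurement, and push the bound down via contractivity of partial trace---is exactly the standard proof, and the pure-projective base case ($|\langle\psi|\psi'\rangle|^2 = 1-\epsilon$, hence trace distance $\sqrt\epsilon$) is computed correctly.

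There is, however, a genuine error in the middle step. You declare $\Pi := V^\dagger (I\otimes\ket 0\bra 0) V$ to be ``the projector'' and then apply the projective fidelity argument to $\Pi$ acting on $\ket\phi_{AB}$. But $\Pi$ is by construction equal to the POVM element $P$ itself: $V^\dagger V = I$ does \emph{not} imply $VV^\dagger = I$, so $\Pi^2 \neq \Pi$ in general. Asserting $\Pi$ is a projector is precisely assuming what Naimark dilation was supposed to circumvent. The fidelity calculation must instead be run in the \emph{dilated} space: compare the dilated pure state $(V\otimes I_B)\ket\phi$ against the genuine projector $I_{AB}\otimes\ket 0\bra 0_{A'}$; the success probability is still $\langle\phi|(V^\dagger\otimes I)(I\otimes\ket 0\bra 0\otimes I)(V\otimes I)|\phi\rangle = \mathrm{Tr}(P\rho) = 1-\epsilon$, and the post-measurement state $\ket\chi$ satisfies $\|\ket\chi\bra\chi - (V\otimes I)\ket\phi\bra\phi(V^\dagger\otimes I)\|_{\mathrm{tr}} \leq \sqrt\epsilon$. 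Your displayed inequality $\|\ket{\tilde\phi}\bra{\tilde\phi} - \ket\phi\bra\phi\|_{\mathrm{tr}}\leq\sqrt\epsilon$ does not typecheck dimensionally, and $\tilde\rho := \mathrm{Tr}_{B,A'}(V^\dagger\ket{\tilde\phi}\bra{\tilde\phi}V)$ has $V^\dagger$ and $V$ on the wrong sides and is not trace-preserving since $V^\dagger$ is only a co-isometry; the correct recovery map is ``apply the unitary extension $U^\dagger$ of $V$, discard the ancilla $A'$, trace out $B$,'' which is CPTP. With those fixes the argument closes cleanly, so the gap is in execution, not in strategy.
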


\ifllncs

\subsection{Proof for \Cref{thm:invariant_ati_uniform}: Identity-like ATI}

In this section, we prove the property of ATI when using under certain "dummy" mixture of projections(POVM), will act  like an identity operator on the input quantum state, namely \Cref{thm:invariant_ati_uniform}.

\label{sec:invariant_ati_proof_append}

\else\fi

\section{Lattice Preliminaries}
\label{sec:latticeprelims}
\newcommand{\cL}{\mathcal{L}}
\newcommand{\R}{\mathbb{R}}
\renewcommand{\vec}[1]{\mathbf{#1}}
In this section, we recall some of the notations and concepts about lattice complexity problems and lattice-based cryptography that will be useful to our main result.
\subsection{General definitions}
A \emph{lattice} $\cL$ is a discrete subgroup of $\R^{m}$,
or equivalently the set \[\cL(\vec{b}_{1},\dots,\vec{b}_{n})=\left\{ \sum_{i=1}^{n}x_{i}\vec{b}_{i} ~:~ x_{i}\in\Z\right\} \]
of all integer combinations of $n$ linearly independent vectors
$\vec{b}_{1},\dots,\vec{b}_{n} \in \R^{m}$. Such $\vec{b}_i$'s form a \emph{basis} of $\cL$.

The lattice $\cL$ is said to be \emph{full-rank} if $n=m$.
We denote by $\lambda_{1}(\cL)$ the first minimum of $\cL$, defined as the length of a shortest non-zero vector of $\cL$. 

\newcommand{\boundeddgaussian}{\mathcal{D}^{bounded}}
\newcommand{\dgaussian}{\mathcal{D}}

\paragraph{Discrete Gaussian and Related Distributions}
For any $s>0$, define \[\rho_s(\vec{x})=\exp(-\pi\|\vec{x}\|^2/s^2)\]for all $\vec{x}\in\R^n$.
We write $\rho$ for $\rho_1$. For a discrete set $S$, we extend $\rho$ to sets by
$\rho_s(S)=\sum_{\vec{x}\in S}\rho_s(\vec{x})$. Given a lattice $\cL$,
the \emph{discrete Gaussian} $\dgaussian_{\cL,s}$ is the distribution over $\cL$
such that the probability of a vector $\vec{y}\in\cL$ is proportional to $\rho_s(\vec{y})$:
\[
  \Pr_{X\leftarrow \dgaussian_{\cL,s}}[X=\vec{y}]=\frac{\rho_s(\vec{y})}{\rho_s(\cL)}.
\]

Using standard subgaussian tail-bounds, one can show we can show the following claim.

\begin{claim}
\label{claim:bounded}
Let $m\in \N$, $\sigma >0$, then it holds that: 
$$\Pr_{\vec e \leftarrow \dgaussian_{\Z^{m},\sigma}} [\Vert \vec e \Vert > m\sigma] <\exp(-\tilde{\Omega}(m)).$$
\end{claim}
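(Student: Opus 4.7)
The plan is to appeal to standard concentration for discrete Gaussians. Writing $\vec e = (e_1,\dots,e_m)$ with each $e_i \in \Z$, the event $\Vert \vec e \Vert > m\sigma$ means $\sum_i e_i^2 > m^2 \sigma^2$, which by pigeonhole implies that at least one coordinate satisfies $e_i^2 > m\sigma^2$, i.e.\ $|e_i| > \sqrt{m}\,\sigma$. So by a union bound it suffices to control $\Pr[|e_i| > \sqrt{m}\,\sigma]$ for a single coordinate and then multiply by $m$.

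For the single-coordinate bound, I would use the standard subgaussian tail for the one-dimensional discrete Gaussian on $\Z$ with parameter $\sigma$: there is an absolute constant $c > 0$ (e.g.\ $c = \pi$ from the $\rho_s$ normalization used in the excerpt) such that
\[
\Pr_{x \leftarrow \dgaussian_{\Z,\sigma}}\!\left[|x| > t\sigma\right] \;\le\; 2\exp(-c\,t^2)
\]
for all $t \ge 1$. (This follows by comparing $\rho_\sigma(\{x : |x| > t\sigma\})$ to $\rho_\sigma(\Z) \ge 1$ using the standard Poisson summation / tail estimates that underlie Banaszczyk's lemma; alternatively one can cite it directly.) Note that this per-coordinate bound holds for the marginal of $\dgaussian_{\Z^m,\sigma}$, since the joint mass function factorizes as a product over coordinates.

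Setting $t = \sqrt{m}$ gives $\Pr[|e_i| > \sqrt{m}\,\sigma] \le 2\exp(-c\,m)$, and the union bound over $i \in [m]$ yields
\[
\Pr\!\left[\Vert \vec e \Vert > m\sigma\right] \;\le\; \sum_{i=1}^{m} \Pr[|e_i| > \sqrt{m}\,\sigma] \;\le\; 2m\exp(-c\,m) \;=\; \exp(-\tilde{\Omega}(m)),
\]
absorbing the polynomial factor $2m$ into the $\tilde{\Omega}$ notation. This gives the claimed bound.

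The only real subtlety is justifying the subgaussian tail for one coordinate of $\dgaussian_{\Z^m,\sigma}$ rather than for $\dgaussian_{\Z,\sigma}$ directly; the cleanest way is to invoke Banaszczyk's tail bound $\Pr_{\vec e \sim \dgaussian_{\Z^m,\sigma}}[\Vert \vec e\Vert > \sigma\sqrt{m}\cdot t] \le (t\sqrt{2\pi e}\cdot e^{-\pi t^2})^m$ for $t \ge 1$ and apply it with $t = \sqrt{m}$, immediately giving $\exp(-\tilde{\Omega}(m^2))$, which is more than enough. Either route is routine; I do not expect a real obstacle.
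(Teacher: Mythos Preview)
Your proposal is correct and aligns with the paper's approach: the paper simply states that the claim follows from ``standard subgaussian tail-bounds'' without further detail, and your argument (either the per-coordinate union bound or the direct appeal to Banaszczyk's lemma) is exactly such a standard derivation. You have in fact supplied more detail than the paper does.
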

The following is Theorem 4.1 from~\cite{GPV08} that shows an efficient algorithm to sample from the discrete Gaussian. 

Now we recall the Learning with Errors (LWE) distribution.

\begin{definition}
Let $\sigma = \sigma(n) \in (0,1)$. For $\vec{s} \in \Z_p^n$, the LWE distribution $A_{\vec{s}, p ,\sigma}$ over $\Z_p^n \times \Z_p$ is sampled by independently choosing $\vec{a}$ uniformly at random from $\Z_p^n$, and $e \leftarrow \dgaussian_{\Z,\sigma}$, and outputting $\left(\vec{a}, \langle \vec{a}, \vec{s}\rangle + e \mod p\right)$. 
The LWE assumption $\mathsf{LWE}_{n,m,\sigma,p}$ states that $m$ samples from $A_{\vec{s}, p ,\sigma}$ for a randomly chosen $\vec s\leftarrow \Z^{n}_{p}$ are indistinguishable from $m$ random vectors $\Z^{n+1}_{p}$.
\end{definition}

\subsection{Trapdoor Sampling for LWE}
We will need the following definition of a lattice trapdoor~\cite{GPV08,C:MicPei13,Vinodlecturetrapdoor}. For $\mathbf{A} \in \Z_q^{n \times m}$, we define the rank $m$ lattice
\[
\cL^{\bot}(\mathbf{A}) =\{ \vec{z} \in \Z^m \: :\: \vec{A} \vec{z} = \vec{0} \pmod q \} \;. 
\]
A lattice trapdoor for $\vec{A}$ is a set of short linearly independent vectors in $\cL^\bot(\vec{A})$.

\begin{definition}
A matrix $\vec{T} \in \Z^{m \times m}$ is a $\beta$-good lattice trapdoor for a matrix $\vec{A} \in \Z_q^{n\times m}$ if
\begin{enumerate}
    \item $\vec{A} \vec{T} = \vec{0} \pmod q$.
    \item For each column vector $\vec{t}_i$ of $\vec{T}$, $\|\vec{t}_i\|_\infty \le \beta$.
    \item $\vec{T}$ has rank $m$ over $\R$. 
\end{enumerate}
\end{definition}

\begin{theorem}\cite{GPV08,C:MicPei13}
\label{thm:trapdoor_samp}
There is an efficient algorithm $\mathsf{GenTrap}$ that, on input $1^n, q, m= \Omega(n \log q)$, outputs a matrix $\vec{A}$ distributed statistically close to uniformly on $\Z_q^{n \times m}$, and a $O(m)$-good lattice trapdoor $\vec{T}$ for $\vec{A}$.

Moreover, there is an efficient algorithm
$\mathsf{INVERT}$ that, on input $(\bA, \mathbf{T})$ and $\bs^\top\bA + \be^\top$ where $\lVert \be \rVert \leq q/(C_T\sqrt{n \log q})$
 and $C_T$ is a universal constant, returns $\bs$
and $\be$ with overwhelming probability over $(\bA, \mathbf{T}) \gets \mathsf{GenTrap}(1^n,1^m, q)$.
\end{theorem}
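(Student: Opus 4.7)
The plan is to follow the Micciancio--Peikert (2012) trapdoor construction, which gives both the generation and inversion algorithms in a single clean framework, and has essentially replaced the original Ajtai / GPV construction in modern lattice cryptography. The starting point is the \emph{gadget matrix} $\bG \in \Z_q^{n \times n\lceil \log q \rceil}$ whose columns list the base-$2$ expansions of a scaled identity; for $\bG$ one already has a trivial, short public trapdoor together with an efficient bounded-error inversion algorithm that recovers $\bs$ and $\be$ whenever $\lVert \be \rVert_\infty < q/4$.

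For the generation part, I would sample a uniform $\bar{\bA} \samp \Z_q^{n \times \bar m}$ with $\bar m = \Omega(n \log q)$ and a random short matrix $\bR$ with entries in $\{-1,0,1\}$ (or a discrete Gaussian of parameter $O(1)$), and set
\[
  \bA \;=\; \bigl[\,\bar{\bA}\ \big\vert\ \bG - \bar{\bA}\bR\,\bigr] \in \Z_q^{n \times m},
  \qquad
  \mathbf{T} \;=\; \begin{bmatrix} \bR \\ \bI \end{bmatrix}.
\]
One then checks directly that $\bA\mathbf{T} = \bG - \bar{\bA}\bR + \bar{\bA}\bR \cdot \mathbf{?} = \mathbf{0} \pmod q$ after the appropriate sign convention, that every column of $\mathbf{T}$ has $\ell_\infty$ norm $O(1)$ and thus $O(m)$ overall (once we absorb the $\bG$-side basis vectors of the $q$-ary gadget lattice), and that $\mathbf{T}$ has full rank over $\R$ because of its block structure. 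The near-uniformity of $\bA$ reduces to showing that $\bar{\bA}\bR$ is statistically close to uniform over $\Z_q^{n \times n\lceil \log q \rceil}$: this is exactly the leftover hash lemma for the hash family indexed by $\bar{\bA}$, and with $\bar m = \Omega(n \log q)$ the statistical distance is negligible.

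For the $\mathsf{INVERT}$ part, given a noisy sample $\bs^\top \bA + \be^\top$ with $\lVert \be \rVert \le q/(C_T \sqrt{n \log q})$, I would multiply on the right by $\mathbf{T}$ to obtain $\bs^\top \bA \mathbf{T} + \be^\top \mathbf{T} = \be^\top \mathbf{T} \pmod q$ on the $\bG$-side, which however destroys $\bs$; so instead the correct step is to multiply by a matrix that sends $\bA$ to $\bG$ (up to small error): namely $\begin{bmatrix} -\bR \\ \bI \end{bmatrix}$ sends $\bA$ to $\bG + $ (zero), while sending $\be$ to a slightly larger, still short, error $\be'$ with $\lVert \be' \rVert \le O(\sqrt{m}) \cdot \lVert \be \rVert \le q/4$. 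At that point one invokes $\bG$-inversion to recover $\bs$ exactly, subtracts $\bs^\top \bA$ from the original sample to get $\be$, and the constant $C_T$ in the statement is chosen precisely to absorb the norm blowup from the multiplication by $\mathbf{T}$.

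The main obstacle, and the place that demands the most care, is the precise accounting of the error growth in the inversion step: one must simultaneously (i) keep the transformed error $\be'$ strictly below the decoding radius of the gadget lattice (which is where the $1/(C_T \sqrt{n \log q})$ bound originates, via a subgaussian tail bound on $\be^\top \bR$), (ii) ensure that $\mathbf{T}$ is short enough and of high enough rank to serve as a trapdoor basis for $\cL^\bot(\bA)$ of the required quality, and (iii) ensure that the statistical distance of $\bA$ from uniform is $\negl(n)$, all at the same choice of $m = \Omega(n \log q)$ and ambient parameters. A secondary subtlety is that the high-probability bound on $\lVert \be^\top \bR \rVert$ must hold with overwhelming probability over $\bR$ only (the randomness used by $\mathsf{GenTrap}$), since $\be$ is adversarial in $\mathsf{INVERT}$; this is handled by a union bound over a subgaussian tail estimate on linear forms of $\bR$. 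Once these parameter choices are made consistently, correctness and near-uniformity follow from the two ingredients above, and the theorem is proved.
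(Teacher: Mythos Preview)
The paper does not prove this theorem; it is stated as a preliminary result cited from \cite{GPV08,C:MicPei13} and used as a black box. So there is no ``paper's own proof'' to compare against, and your sketch of the Micciancio--Peikert construction is exactly the content of one of the cited references.

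That said, one technical wrinkle in your write-up deserves tightening. With $\bA=[\bar{\bA}\mid \bG-\bar{\bA}\bR]$ and $\mathbf{T}=\begin{bmatrix}\bR\\ \bI\end{bmatrix}$ you get $\bA\mathbf{T}=\bG$, not $\mathbf{0}$; this is the MP12 ``$\bG$-trapdoor'' notion, not a short basis of $\cL^\bot(\bA)$ as the statement requires. You gesture at this (``once we absorb the $\bG$-side basis vectors of the $q$-ary gadget lattice''), but the actual step is to compose with a short public basis $\mathbf{S}$ of $\cL^\bot(\bG)$ to obtain $\bA(\mathbf{T}\mathbf{S})=\bG\mathbf{S}=\mathbf{0}\pmod q$, and then complete $\mathbf{T}\mathbf{S}$ to a full-rank basis of $\cL^\bot(\bA)$ with the right norm; this is where the $O(m)$ bound on the trapdoor quality ultimately comes from. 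The inversion argument you give is the correct one and the parameter accounting you flag is indeed the only delicate point.
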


\begin{lemma}
\label{lem:noise_flooding}
   [Noise Smudging, \cite{dodis2010public}
   Let $y, \sigma > 0$. Then, the statistical distance between the
distribution $\cD_{\Z, \sigma}$ and $\cD_{\Z,\sigma+y}$ is at most $y/\sigma$.
\end{lemma}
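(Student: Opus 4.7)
The plan is to bound the total variation distance by viewing $\{\cD_{\Z,t}\}_{t>0}$ as a smoothly varying one-parameter family of probability measures on $\Z$ and integrating the derivative. By the fundamental theorem of calculus in $L^1(\Z)$,
$$\Delta(\cD_{\Z,\sigma},\cD_{\Z,\sigma+y}) \;\leq\; \int_{\sigma}^{\sigma+y}\Bigl\|\tfrac{\partial}{\partial t}\cD_{\Z,t}\Bigr\|_{\mathrm{TV}}\,dt,$$
so the problem reduces to upper bounding $\bigl\|\partial_t\cD_{\Z,t}\bigr\|_{\mathrm{TV}}$ by $1/t$ uniformly in $t$.

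To get there, I would first compute the logarithmic derivative $\tfrac{\partial}{\partial t}\log \cD_{\Z,t}(x) = (2\pi x^2/t^3) - \E_{X\sim\cD_{\Z,t}}[2\pi X^2/t^3]$, which comes from differentiating $\rho_t(x)/\rho_t(\Z) = e^{-\pi x^2/t^2}/\rho_t(\Z)$ with respect to $t$ and using the identity $\partial_t \log \rho_t(\Z) = \E_{X\sim\cD_{\Z,t}}[2\pi X^2/t^3]$. This then gives
$$\bigl\|\partial_t \cD_{\Z,t}\bigr\|_{\mathrm{TV}} \;=\; \tfrac{1}{2}\,\E_{X\sim\cD_{\Z,t}}\Bigl|\tfrac{2\pi X^2}{t^3} - \E\bigl[\tfrac{2\pi X^2}{t^3}\bigr]\Bigr|.$$

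The second step is to estimate this mean absolute deviation. For $t$ not too small, Poisson summation yields $\E_{X\sim\cD_{\Z,t}}[X^2] = t^2/(2\pi) + \mathrm{negl}(t)$, and the subgaussian tail bound of \Cref{claim:bounded} lets me control the $L^1$-fluctuation of $X^2$ around its mean by $O(t^2)$. Putting these together gives $\bigl\|\partial_t \cD_{\Z,t}\bigr\|_{\mathrm{TV}} \leq 1/t$. Integrating then produces $\int_{\sigma}^{\sigma+y} dt/t = \ln(1+y/\sigma) \leq y/\sigma$, which is exactly the claimed bound.

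The main obstacle will be tracking constants carefully in the mean absolute deviation estimate for $X^2$ under $\cD_{\Z,t}$: a loose bound yields $C/t$ with some $C > 1$, producing only $Cy/\sigma$. To land exactly on $y/\sigma$, one must either invoke sharp Poisson-summation moment identities (which control the second and fourth moments of $\cD_{\Z,t}$ up to negligible corrections) or, alternatively, replace the derivative-integration route by an explicit optimal coupling between $\cD_{\Z,\sigma}$ and $\cD_{\Z,\sigma+y}$ constructed via common rounding of a shared continuous Gaussian sample, which gives the bound $\Pr[X_1 \neq X_2] \leq y/\sigma$ essentially for free. Either technique is standard; I would default to the derivative approach and fall back on the coupling if the constants are stubborn.
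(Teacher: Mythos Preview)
The paper does not prove this lemma (it is cited from \cite{dodis2010public}), so there is no in-paper argument to compare against. More importantly, you have misread the statement. Despite the suggestive notation $\cD_{\Z,\sigma+y}$, the lemma is \emph{not} comparing two centered discrete Gaussians of widths $\sigma$ and $\sigma+y$. It is the standard noise-smudging bound: the statistical distance between $\cD_{\Z,\sigma}$ and its \emph{shift} $y+\cD_{\Z,\sigma}$ is at most $|y|/\sigma$. Three things in the paper pin down this reading: the name ``noise smudging''; the cited source; and, decisively, the way the lemma is applied in the proof of \Cref{claim:gi_correct_distr}, where a large Gaussian $\be''\sim\cD_{\Z_q^m,B_{P'}}$ is used to absorb a small fixed offset $\be'_i\bR$ so that $\be'_i\bR+\be''$ is close to a fresh $\cD_{\Z_q^m,B_{P'}}$ sample --- a shift, not a rescaling. (The technical overview says this explicitly: ``with this smudging $\mat{e}'\mat{r}$ can now be drowned by $\mat{e}''$.'')

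Your derivative-in-the-width calculation therefore targets a different (and, for this paper, unused) inequality. For the intended shift statement the same path-integration idea works, but along the \emph{center} rather than the width: parametrize by $c\in[0,y]$, note that $\partial_c\log p_c(x)=\tfrac{2\pi}{\sigma^2}\bigl((x-c)-\E[X-c]\bigr)$, hence $\tfrac12\|\partial_c p_c\|_1=\tfrac{\pi}{\sigma^2}\,\E\bigl|X-\E X\bigr|$, which equals $1/\sigma$ for the (continuous) Gaussian of variance $\sigma^2/(2\pi)$ and integrates over $[0,y]$ to exactly $y/\sigma$. No fourth-moment or $X^2$ machinery is needed, and the constant lands on the nose without any fallback.
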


\begin{lemma}(Leftover Hash Lemma(\textsf{SIS} version))
    \label{lem:LHL}

    The leftover hash lemma says that if $m = \Omega(n \log q)$, then if you sample $\bA \gets \Z_q^{n \times m}, \bx \gets \{0,1\}^m$ and $\by \gets \Z^m_q$, we have:
    $$ (\bA, \bA\cdot \bx) \approx_{q^{-n}} (\bA, \by) )$$
\end{lemma}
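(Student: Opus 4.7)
The plan is to invoke the generic Leftover Hash Lemma for universal hash families, instantiated by the standard SIS hash $h_\bA(\bx) = \bA\bx \bmod q$. Note first a small typo in the statement: since $\bA \in \Z_q^{n \times m}$ and $\bx \in \{0,1\}^m$, the image $\bA\bx$ lies in $\Z_q^n$, so $\by$ should be sampled from $\Z_q^n$ (not $\Z_q^m$). With this understood, the target output distribution has support of size $|Y| = q^n$.

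The key technical step is to verify universality of the hash family $\mathcal{H} = \{h_\bA : \bx \mapsto \bA\bx\}_{\bA \in \Z_q^{n\times m}}$. For any two distinct $\bx, \bx' \in \{0,1\}^m$, the collision event $h_\bA(\bx) = h_\bA(\bx')$ is equivalent to $\bA\bz = \mathbf{0}$, where $\bz = \bx - \bx' \in \{-1,0,1\}^m \setminus \{\mathbf{0}\}$. For the choice of prime $q \geq 2$ used throughout the paper (see \Cref{sec:scheme_parameters}), at least one coordinate $z_j$ of $\bz$ is a unit in $\Z_q$, so the column $\bA_j$ acts as a fresh uniform element in the linear combination; hence $\bA\bz$ is uniform on $\Z_q^n$ and $\Pr_\bA[\bA\bz = \mathbf{0}] = q^{-n}$. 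This establishes that $\mathcal{H}$ is universal.

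Next I would apply the standard Leftover Hash Lemma: for any universal family $\mathcal{H}$ from $X$ to $Y$ and any source $D$ on $X$ of min-entropy $k$, the distribution $(h, h(x))$ with $h \gets \mathcal{H}$, $x \gets D$ is $\epsilon$-close in statistical distance to the uniform distribution on $\mathcal{H} \times Y$, where $\epsilon \leq \tfrac{1}{2}\sqrt{|Y| \cdot 2^{-k}}$. Plugging in $|Y| = q^n$ and $k = m$ (since $\bx$ is uniform on $\{0,1\}^m$) gives
\[
\epsilon \;\leq\; \tfrac{1}{2}\sqrt{q^n \cdot 2^{-m}} \;=\; \tfrac{1}{2}\cdot 2^{(n\log q - m)/2}.
\]
Choosing $m \geq 3 n \log q$ (which is subsumed by the hypothesis $m = \Omega(n \log q)$ with a large enough hidden constant) makes $\epsilon \leq q^{-n}$, yielding the claimed statistical distance bound.

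There is no real obstacle here; the only minor care point is the universality calculation when $q$ is not prime, which is why I would rely on the parameter setup (prime $q$) already fixed in \Cref{sec:scheme_parameters}. For composite $q$, a slightly more refined argument is needed (bounding collision probability via the largest prime divisor), but this is not required in our setting. Thus the proof reduces to one page: state the universal property, cite or reprove the generic LHL bound, and pick $m$ large enough.
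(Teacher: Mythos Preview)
Your proof is correct and follows the standard textbook route (universality of the SIS hash family plus the generic Leftover Hash Lemma bound). The paper itself does not supply a proof of this lemma; it is stated in the preliminaries as a known fact without proof, so there is nothing to compare against. Your observation about the typo ($\by$ should be in $\Z_q^n$, not $\Z_q^m$) is also accurate.
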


\jiahui{add noise flooding + statistical uniformity for Gentrap}

\subsection{Structural Properties of GSW Homomorphic Encryption}
Our scheme will build upon the GSW levelled FHE construction \cite{C:GenSahWat13}. We recall its structure here, for more details refer to \cite{C:GenSahWat13}. In GSW scheme, the public keys consists of matrix $\mat{B}\in \Z^{n\times m}_{q}$ where $m=\Omega(n \log q)$. This matrix is pseudorandom due to LWE. The secret key is a vector $\mat{s}\in \Z^{1\times n}_{q}$ such that $\mat{s}\cdot \mat{B}=\mat{e}$ for a small norm error vector $\mat{e}$. Such matrices can be constructed easily using the LWE assumption. 

For any such matrix $\mat{B}$, to encrypt a bit $\mu \in \{0,1\}$, the encryption algorithm produces a ciphertext $\ct \in \Z^{n\times \lceil n\log q \rceil}_{q}$ as follows. Sample a random matrix $\mat{R}$ to be a random small norm (where for instance each entry is chosen independently from $\{+1,-1\})$. Then we compute $\ct=\mat{B}\mat{R}+\mu \mat{G}$ where $\mat{G}$ is the Gadget matrix \cite{C:MicPei13}: $\bG= [\bI \otimes  [ 2^0, 2^1, \cdots, 2^{\lceil \log q \rceil-1}] \Vert \mathbf{0}^{m \times (m-n\lceil \log q \rceil)} ]  \in \Z^{n \times m}$ and $\bI$ is the $n \times n$ identity. $\bG$ converts
a binary representation of a vector back to its original vector representation over the field $\Z_q$; the associated (non-linear) inverse operation $\bG^{-1}$
converts vectors in $Z_q$
to their binary
representation.


Note that $\mat{B}\mat{R}$ is once again an LWE matrix in the same secret $\mat{s}$ and satisfies $\mat{s}\mat{B}\mat{R}$ is low norm therefore $\mat{s}\ct\approx \mat{s}\mu\mat{G}$ which could be used to learn $\mu$.

One can compute NAND operation as follows. Say we have $\ct_1=\mat{B}\mat{R}_1+\mu_1\mat{G}$ and $\ct_2=\mat{B}\mat{R}_2+\mu_2\mat{G}$. One can compute the bit-decomposition of $\ct_2$, $\mat{G}^{-1}(\ct_2)$ that simply expands out $\ct_2$ component wise by replacing every coordinate of $\ct_2$ by a its binary decomposition vector of size $\lceil \log_2 q\rceil$. Observe that $\mat{G}^{-1}(\ct_2)$ is a low norm matrix in $\Z^{\lceil n\log q \rceil\times \lceil n\log q \rceil}_{p}$ Then, one can compute $\ct_{\times}=\ct_1\mat{G}^{-1}(\ct_2)$. This yields a ciphertext $\ct_{\times} = \mat{B}(\mat{R}_1\mat{G}^{-1}(\ct_2)+\mu_1\mat{R}_2)+\mu_{1}\mu_2\mat{G}$. Observe that the randomness $(\mat{R}_1\mat{G}^{-1}(\ct_2)+\mu_1\mat{R}_2)$ is small norm and the invariant is still maintained. Finally to compute NAND operation one simply outputs $\ct_{\mathsf{NAND}}=\mat{G}-\ct_{\times}$ yielding a ciphertext encryption $1-\mu_1\mu_2$.

The security proof follows form the fact that by LHL, $\mat{B}\mat{R}$ is random provided $\mat{B}$ is chosen at random. By the security of LWE, $\mat{B}$ is pseudorandom therefore $\mat{B}\mat{R}$ is also pseudorandom.

\section{Preliminaries: Noisy Claw Free Trapdoor Families}

\subsection{Noisy Trapdoor Claw-Free Families}

\label{sec:NTCF_prelim}

The following definition of NTCF families is taken verbatim from \cite[Definition 6]{brakerski2021cryptographic}.
For a more detailed exposition of the definition, we refer the readers to the prior work.

\begin{definition}[NTCF family]\label{def:trapdoorclawfree}
Let $\lambda$ be a security parameter. Let $\sX$ and $\sY$ be finite sets.
 Let $\mathcal{K}_{\mathcal{F}}$ be a finite set of keys. A family of functions 
$$\mathcal{F} = \big\{f_{k,b} : \sX\rightarrow \mathcal{D}_{\sY} \big\}_{k\in \mathcal{K}_{\mathcal{F}},b\in\{0,1\}}$$
is called a \emph{noisy trapdoor claw free (NTCF) family} if the following conditions hold:

\begin{enumerate}
\item{\textbf{Efficient Function Generation.}} There exists an efficient probabilistic algorithm $\textrm{GEN}_{\mathcal{F}}$ which generates a key $k\in \mathcal{K}_{\mathcal{F}}$ together with a trapdoor $t_k$: 
$$(k,t_k) \leftarrow \textrm{GEN}_{\mathcal{F}}(1^\lambda).$$
\item{\textbf{Trapdoor Injective Pair.}} For all keys $k\in \mathcal{K}_{\mathcal{F}}$ the following conditions hold. 
\begin{enumerate}
\item \textit{Trapdoor}: There exists an efficient deterministic algorithm $\textrm{INV}_{\mathcal{F}}$ such that for all $b\in \{0,1\}$,  $x\in \sX$ and $y\in \supp(f_{k,b}(x))$, $\textrm{INV}_{\mathcal{F}}(t_k,b,y) = x$. Note that this implies that for all $b\in\{0,1\}$ and $x\neq x' \in \sX$, $\supp(f_{k,b}(x))\cap \supp(f_{k,b}(x')) = \emptyset$. 
\item \textit{Injective pair}: There exists a perfect matching $\sR_k \subseteq \sX \times \sX$ such that $f_{k,0}(x_0) = f_{k,1}(x_1)$ if and only if $(x_0,x_1)\in \sR_k$. \end{enumerate}

\item{\textbf{Efficient Range Superposition.}}
For all keys $k\in \mathcal{K}_{\mathcal{F}}$ and $b\in \{0,1\}$ there exists a function $f'_{k,b}:\sX\to \mathcal{D}_{\sY}$ such that the following hold.
\begin{enumerate} 
\item For all $(x_0,x_1)\in \mathcal{R}_k$ and $y\in \supp(f'_{k,b}(x_b))$, INV$_{\mathcal{F}}(t_k,b,y) = x_b$ and INV$_{\mathcal{F}}(t_k,b\oplus 1,y) = x_{b\oplus 1}$. 
\item There exists an efficient deterministic procedure CHK$_{\mathcal{F}}$ that, on input $k$, $b\in \{0,1\}$, $x\in \sX$ and $y\in \sY$, returns $1$ if  $y\in \supp(f'_{k,b}(x))$ and $0$ otherwise. Note that CHK$_{\mathcal{F}}$ is not provided the trapdoor $t_k$.
\item For every $k$ and $b\in\{0,1\}$,
$$ \E_{x\leftarrow_U \sX} \big[H^2(f_{k,b}(x),f'_{k,b}(x))\big] \leq \mu(\lambda),$$
 for some negligible function $\mu(\cdot)$. Here $H^2$ is the Hellinger distance. Moreover, there exists an efficient procedure  SAMP$_{\mathcal{F}}$ that on input $k$ and $b\in\{0,1\}$ prepares the state
\begin{equation*}
    \frac{1}{\sqrt{|\sX|}}\sum_{x\in \sX,y\in \sY}\sqrt{(f'_{k,b}(x))(y)}\ket{x}\ket{y}.
\end{equation*}

\end{enumerate}

\item{\textbf{Adaptive Hardcore Bit.}}
For all keys $k\in \mathcal{K}_{\mathcal{F}}$ the following conditions hold, for some integer $w$ that is a polynomially bounded function of $\lambda$. 
\begin{enumerate}
\item For all $b\in \{0,1\}$ and $x\in \sX$, there exists a set $G_{k,b,x}\subseteq \{0,1\}^{w}$ such that $\Pr_{d\leftarrow_U \{0,1\}^w}[d \notin G_{k,b,x}]$ is negligible, and moreover there exists an efficient algorithm that checks for membership in $G_{k,b,x}$ given $k,b,x$ and the trapdoor $t_k$. 
\item There is an efficiently computable injection $\mathcal{J}:\sX\to \{0,1\}^w$, such that $\mathcal{J}$ can be inverted efficiently on its range, and such that the following holds. If
\begin{align*}\label{eq:defsetsH}
H_k &= \big\{(b,x_b,d,d\cdot(\mathcal{J}(x_0)\oplus \mathcal{J}(x_1)))\,|\; b\in \{0,1\}, (x_0,x_1)\in \mathcal{R}_k, d\in G_{k,0,x_0}\cap G_{k,1,x_1}\big\},
\\
\overline{H}_k &= \{(b,x_b,d,c)\,|\; (b,x,d,c\oplus 1) \in H_k\big\},
\end{align*}
then for any quantum polynomial-time procedure $\mathcal{A}$ there exists a negligible function $\mu(\cdot)$ such that 
\begin{equation*}\label{eq:adaptive-hardcore}
\left|\Pr_{(k,t_k)\leftarrow \textrm{GEN}_{\mathcal{F}}(1^{\lambda})}[\mathcal{A}(k) \in H_k] - \Pr_{(k,t_k)\leftarrow \textrm{GEN}_{\mathcal{F}}(1^{\lambda})}[\mathcal{A}(k) \in\overline{H}_k]\right| \leq \mu(\lambda).
\end{equation*}
\end{enumerate}

\end{enumerate}
\end{definition}

\subsection{(Extended) NTCF from LWE}
\label{sec:ntcf_from_lwe}
\begin{theorem}[{\cite[Theorem 4.1]{brakerski2021cryptographic} \cite[Theorem 9.2]{mahadev2018classical}}]
  \label{thm:lwe-ntcf}
  Assuming the post-quantum hardness of $\lwe_{n,m,q,B_L}$, (extended) NTCF families exist.
\end{theorem}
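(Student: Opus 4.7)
The plan is to recall the explicit construction of NTCF from LWE from \cite{brakerski2021cryptographic,mahadev2018classical} and verify that each of the four properties of \Cref{def:trapdoorclawfree} holds under the parameter regime fixed in \Cref{sec:scheme_parameters}. The construction goes as follows: sample $(\bA,\mathbf{T}) \gets \mathsf{GenTrap}(1^n,1^m,q)$ using \Cref{thm:trapdoor_samp}, sample a binary secret $\bs \gets \{0,1\}^n$ and error $\be \gets D_{\Z_q^m, B_L}$, and set the public key $k = (\bA,\,\bs\bA + \be)$ with trapdoor $t_k = (\mathbf{T},\bs)$. The domain is $\sX = [B_X]^n$, the range is $\sY = \Z_q^m$, and the density is defined by $f_{k,b}(x)(y) = \Pr_{\be' \gets D_{\Z_q^m, B_P}}[y = x\bA + \be' + b \cdot (\bs\bA + \be)]$.

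First I would dispatch the three ``structural'' properties. Efficient function generation is immediate from \Cref{thm:trapdoor_samp}. For the trapdoor injective pair property, given $y \in \supp(f_{k,b}(x))$ one runs $\mathsf{INVERT}(\mathbf{T}, y - b \cdot (\bs\bA + \be))$ to recover $x$ and $\be'$; injectivity and correctness follow from our choice $B_P \leq q / (2 C_T \sqrt{mn \log q})$, which keeps the combined noise $\be' + b\be$ inside the decoding radius of \Cref{thm:trapdoor_samp}. The matching is $\sR_k = \{(x_0, x_1) : x_1 = x_0 - \bs\}$, restricted to the balanced subset of $\sX$ on which the shift stays inside $[B_X]^n$. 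For the efficient range superposition property, I would prepare a uniform superposition over $x \in \sX$, tensor it with the discrete Gaussian superposition $\sum_{\be'}\sqrt{D_{B_P}(\be')}\ket{\be'}$, coherently compute $y = x\bA + \be' + b(\bs\bA + \be)$, and then uncompute $\be'$; this yields a state within negligible Hellinger distance of the ideal one, and the approximating density $f'_{k,b}$ is obtained by truncating to the efficient-sampling support. The check algorithm $\mathsf{CHK}_{\mathcal{F}}$ is simply the norm test $\|y - x\bA - b(\bs\bA + \be)\| \leq B_P \sqrt{m}$, which is computable without the trapdoor.

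The main obstacle will be the adaptive hardcore bit property, which is the technically delicate step. Here I would instantiate $\mathcal{J}: \sX \to \{0,1\}^w$ as the coordinate-wise binary decomposition map with $w = \lceil n \log q \rceil$, and define $G_{k,b,x}$ as the set of $d$ for which the bit $d \cdot (\mathcal{J}(x_0) \oplus \mathcal{J}(x_1)) \pmod 2$ is well-defined and unpredictable (this membership is checkable given the trapdoor since one can recover $x_{1-b}$ from $x_b$ and $\bs$). The heart of the argument is a reduction: given $(b, x_b)$, the value $\mathcal{J}(x_0) \oplus \mathcal{J}(x_1)$ is a deterministic function of the LWE secret $\bs$, so an adversary predicting $d \cdot (\mathcal{J}(x_0) \oplus \mathcal{J}(x_1))$ effectively predicts a hardcore inner product $d' \cdot \bs$ where $d'$ is derived from $d$; using the leakage-resilient LWE argument of \cite{brakerski2021cryptographic} together with leftover hash lemma applied to the binary decomposition map, this reduces to distinguishing $\lwe_{n,m,q,B_L}$ from uniform.

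For the \emph{extended} NTCF variant (needed by \cite{mahadev2018classical} and for our classical-lessor protocol), I would additionally verify the injective-mode property: the same key format admits an ``injective branch'' obtained by replacing $\bs\bA + \be$ with a uniformly random $\bu \in \Z_q^m$, yielding a scheme in which each $f_{k,b}$ is injective (no claw) but whose keys are computationally indistinguishable from claw-free keys by LWE. Since our construction in \Cref{sec:construction} matches the parameter constraints required by the proof in \cite{brakerski2021cryptographic}, the entire verification of the adaptive hardcore bit can be imported verbatim. The only part requiring attention is a routine check that the slacks $B_L \ll B_V \ll B_P \ll B_X$ chosen in \Cref{sec:scheme_parameters} are compatible with the noise-flooding and smudging bounds implicitly used in their hybrid argument.
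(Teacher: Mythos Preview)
Your plan is sound and matches the construction from \cite{brakerski2021cryptographic,mahadev2018classical}, but note that the paper does not give its own proof of this theorem at all: it is stated as a cited result, followed only by a verbatim recap of the construction (parameters, 2-to-1 mode, injective mode, efficient range superposition) from \cite{mahadev2018classical}. So there is nothing to compare against beyond checking that your outline is faithful to the original references, which it is.
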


The following construction description is mostly taken verbatim from \cite{mahadev2018classical}.
\subsubsection{Parameter Choice}
\label{sec:ntcf_parameter}

Let $\lambda$ be the security parameter. All other parameters are
functions of $\lambda$. Let $q \geq 2$ be a prime integer. Let $\ell, n, m, w \geq 1$ be polynomially bounded functions of $\lambda$ and
$B_L, B_V , B_P$ be Gaussian parameters, $B_X, B_S$ be norm bounds, such that the following conditions hold:
\begin{enumerate}
    \item $n = \Omega(\ell \log q)$ and $m = \Omega(n \log q)$

    \item $w = n \lceil \log q \rceil$.

    \item $B_P = \frac{q}{2C_T
\sqrt{mn \log q}}$ where $C_T$ is the constant in \cref{thm:trapdoor_samp}.

\item $2\sqrt{n} \leq B_L \leq B_V \leq B_P \leq B_X$

\item The ratios $\frac{B_V}{B_L}$, $\frac{B_P}{B_V}$, $\frac{B_{P'}}{B_P}$, $\frac{B_X}{B_S}$
 are all super-polynomial in $\lambda$.

\item We denote $[B_X]$ as all integer taking values $[-B_X, -B_X+1 \cdots, B_X-1, B_X]$. Similarly for $B_S$. $B_S$ is in fact $\{0,1\}$ in our setting.
\end{enumerate}

\subsubsection{Noisy Trapdoor claw-Free Families (2-to-1 Mode)}
Let Let $\sX = \Z^n_q$
and $\sY = \Z^m_q$
. The key space is $\Z^{n\times m}_q \times \Z^m_q$.
 For $b \in \{0, 1\}$, $\bx \in \sX$ and key $k = (\bA, \bs\bA +\be )$ where $(\bA, \td_A) \gets \mathsf{GenTrap}(1^n, 1^m, q)$, $\bs\gets \Z_q^n, \be\gets \cD_{\Z_q^m, B_L}$, the density $f_{k,b}(x)$ is defined as:
 \begin{align*}
    \forall \by \in \sY: (f_{k,b}(\bx))(\by) = \cD_{\Z^m_q, B_P}(\by -\bx^\top\bA - b\cdot \bs^\top\bA)
 \end{align*}
It follows that:
\begin{align*}
   & \mathrm{SUPP}(f_{0,k}(\bx)) = \{\by = \bx\bA + \be' \vert \lVert \be' \rVert \leq B_P\sqrt{m} \} \\
   & \mathrm{SUPP}(f_{1,k}(\bx)) = \{\by = \bx\bA + \be' + \bs\bA  \vert \lVert \be' \rVert \leq B_P\sqrt{m} \} \\
\end{align*}
$(\bx_0, \bx_1)$ will be an injective pair such that $f_{0,k}(\bx_0) = f_{1,k}(\bx_1)$ if and only if $\bx_0 = \bx_1 +\bs$.

\subsubsection{Noisy Trapdoor Injective Families (Injective Mode)}
\label{sec:ntcf_injective}
We now describe the trapdoor injective functions, or informally, the "injective mode" of trapdoor claw-free functions.
Let Let $\sX = \Z^n_q$
and $\sY = \Z^m_q$. The key space  is $\Z^{n\times m}_q \times \Z^m_q$.
 
 For $b \in \{0, 1\}$, $\bx \in \sX$ and key $k = (\bA, \bu )$, where $(\bA, \td_A) \gets \mathsf{GenTrap}(1^n, 1^m, q)$, 
  $\bu$ is sampled uniformly random up to the condition that there \emph{does not} exist $\bs,\be$ such that $\bs^\top\bA + \be^\top = \bu$ and $\lVert \be \rVert \leq \frac{q}{C_T\sqrt{n\log q}}$, which happens with all but negligible probability.

 
 The density $g_{k,b}(x)$ is defined as:
 \begin{align*}
    \forall \by \in \sY: (g_{k,b}(\bx))(\by) = \cD_{\Z^m_q, B_P}(\by -\bx\bA - b\cdot \bu)
 \end{align*}

The injective trapdoor functions $g_{b,k}$ looks like follows:
\begin{align*}
   & \mathrm{SUPP}(g_{0,k}(\bx)) = \{\by = \bx^\top\bA + \be' \vert \lVert \be' \rVert \leq B_P\sqrt{m} \} \\
   & \mathrm{SUPP}(g_{1,k}(\bx)) = \{\by = \bx\bA + \be' + \bu \vert \lVert \be' \rVert \leq B_P\sqrt{m} \} \\
\end{align*}
$\mathrm{SUPP}(g_{0,k}(\bx))$ and $\mathrm{SUPP}(g_{1,k}(\bx))$ are disjoint as long as $B_P \leq \frac{q}{2C_T \sqrt{mn\log q}}$.

There is also an inversion function $\mathsf{INV}$ in the injective mode: $\mathsf{INV}(\td, \by) \to \bx$ will give some unique $\bx$ on input $(\td, \by)$.

\begin{lemma}\cite{mahadev2018classical}
\label{lem:twomodes}
The 2-to-1 mode and injective mode  are computationally indistinguishable by $\lwe_{n,m,q,B_L}$.
\end{lemma}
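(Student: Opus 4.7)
The plan is to prove Lemma \ref{lem:twomodes} via a short sequence of hybrid distributions, interpolating between the 2-to-1 mode and the injective mode by first replacing the trapdoor-generated $\bA$ with a truly uniform matrix, then applying LWE to switch the second component of the key, and finally reintroducing the trapdoor sampler. Throughout, the distinguisher sees only the key $k$ (not the trapdoor $\td_A$), since in both modes the trapdoor is kept private by the generator; this is crucial because it means all the hybrids operate on a purely public view.

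Concretely, I would define the following hybrids. Hybrid $H_0$ is the 2-to-1 mode distribution: sample $(\bA, \td_A)\gets\mathsf{GenTrap}(1^n,1^m,q)$, $\bs\gets\Z_q^n$, $\be\gets\cD_{\Z_q^m,B_L}$, and output $k=(\bA,\bs\bA+\be)$. Hybrid $H_1$ replaces $\bA$ with a fresh uniform $\bA\gets\Z_q^{n\times m}$ while keeping $(\bA,\bs\bA+\be)$; by \Cref{thm:trapdoor_samp}, the marginal of $\bA$ under $\mathsf{GenTrap}$ is statistically close to uniform, so $H_0\approx_s H_1$. Hybrid $H_2$ replaces the second coordinate with a uniformly random $\bu\gets\Z_q^m$, i.e., the key is $(\bA,\bu)$ with $\bA,\bu$ uniform; by the post-quantum hardness of $\lwe_{n,m,q,B_L}$ (with secret distribution $\bs\gets\Z_q^n$ and noise $\cD_{\Z_q^m,B_L}$), we have $H_1\approx_c H_2$. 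Hybrid $H_3$ resamples $\bu$ uniformly subject to the condition that there do not exist $\bs',\be'$ with $\bs'\bA+\be'=\bu$ and $\lVert\be'\rVert\le q/(C_T\sqrt{n\log q})$; as explicitly noted in \Cref{sec:ntcf_injective}, this event holds with all but negligible probability under a uniform draw, so $H_2\approx_s H_3$. Hybrid $H_4$ finally switches $\bA$ back to one generated by $\mathsf{GenTrap}$, giving the injective mode distribution; again by \Cref{thm:trapdoor_samp}, $H_3\approx_s H_4$. Chaining the hybrids via the triangle inequality yields computational indistinguishability of 2-to-1 mode and injective mode.

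The only non-syntactic thing to check is that the LWE step in $H_1\to H_2$ is applied with the correct distribution of secrets; the NTCF-construction uses $\bs\gets\Z_q^n$ (rather than, e.g., a short binary secret), which matches the standard $\lwe_{n,m,q,B_L}$ assumption directly, so no secret-distribution reduction is required here. The statistical-closeness steps $H_0\to H_1$ and $H_3\to H_4$ both rest on the same fact from \Cref{thm:trapdoor_samp}, namely that $\mathsf{GenTrap}$'s output $\bA$ is within negligible statistical distance of uniform, and that this holds jointly with any event defined purely on $\bA$ (in particular, with the ``far-from-LWE'' event used to define $\bu$ in $H_3,H_4$).

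I expect the main, and essentially only, subtle point to be $H_2\to H_3$: one must observe that the ``bad'' set of $\bu$'s admitting a decoding $(\bs',\be')$ with $\lVert\be'\rVert\le q/(C_T\sqrt{n\log q})$ has negligible density in $\Z_q^m$. A counting bound suffices: the number of such $\bu$'s is at most $|\Z_q^n|\cdot|\{\be':\lVert\be'\rVert\le q/(C_T\sqrt{n\log q})\}|$, and for the parameter regime of \Cref{sec:ntcf_parameter} (with $m=\Omega(n\log q)$) this product is $q^{m}\cdot\negl(\lambda)$, making the rejection event negligible. Given this, the whole argument is a routine four-step hybrid, and the lemma follows.
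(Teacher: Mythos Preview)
Your hybrid argument is correct and is precisely the standard proof of this fact. Note, however, that the paper does not actually give its own proof of this lemma: it is stated with a citation to \cite{mahadev2018classical} and used as a black box. Your proposal thus supplies the (standard) argument that the paper omits, and it matches the proof one finds in the cited reference.
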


\subsubsection{Efficient Range Preparation for NTCF with Small LWE Secrets}
\label{sec:efficient_range_prepare_ntcf}
We refer the reader to Section 4.3 of \cite{brakerski2021cryptographic} for a detailed description of $\mathrm{SAMP}_\lwe$ procedure to efficiently prepapre the claw state:  $ (\ket{0, \bx_{0}}+ \ket{1, \bx_{1}}),y$.
We describe it here briefly for the sake of coherence in presentation.

In order for our security parameters in the reduction to go through, we deviate slightly from the exact parameters on $\sX$ and $\bs$ in \cite{brakerski2021cryptographic}. We work with small secrets $\sX = [B_X]^n$. The range for $\bs$ is $[B_S]^n = \{0,1\}^n$ is the same as \cite{brakerski2021cryptographic}. 

 At the first step, the procedure
creates the following superposition: 
$$\sum_{\be'\in \Z_q^m} \sqrt{\cD_{\Z_q, B_P}(\be')}\ket{\be'} $$

In step 2 of $\mathrm{SAMP}_\lwe$, we prepare uniform superposition of $\bx \in [B_X]$. 
$$\frac{1}{\sqrt{2|B_X|^n}} \sum_{ b \in \{0,1\}, \bx, \be' }\sqrt{\cD_{\Z_q, B_P}(\be')}\ket{b,\bx}\ket{\be'} 
$$

The rest of the steps are the same.
In step 3, we apply the key $(\bA, \bu)$, controlled by the bit $b$ to indicate whether to use $\bu$, where $\bu =\bs\bA+\be$ in the 2-to-1 setting and $\bu \samp \Z_q^m$ in the injective mode.  
We get a state: 
$$ \frac{1}{\sqrt{2|B_X|^n}} \sum_{ b \in \{0,1\}, \bx, \be' }\sqrt{\cD_{\Z_q, B_P}(\be')}\ket{b,\bx}\ket{\be'}\ket{ \bx \bA+\be'+ b \cdot \bu} $$
Next, we measure the last register to obtain $\by = \bx_b \bA + \be' + \bb \cdot \bs\bA$. We can then uncompute the register containing $\be'$ using the information in register containing $b,\bx$, the key $(\bA, \bu)$ and the last register.

It is easy to observe that the efficient range preparation in \cite{brakerski2021cryptographic} Section 4.3 and acquirance of the claw state also works in our setting with our choice of parameters having $B_X/B_S$ superpolynomially large, more specifically $B_S = \{0,1\}$ and $B_X$ a quasipolynomial.

With probability $(1-\negl(\lambda))$, when one measures the image register to obtain a value $\by$, we will obtain the state $\frac{1}{\sqrt{2}} (\ket{0, \bx_{0}}+ \ket{1, \bx_{1}})$
where $f_{0,k}(\bx_0) = f_{1,k}(\bx_1)= \by$


\subsection{Parallel Repetition of An NTCF-based Protocol}
\label{sec:parallel_ntcf}


We first define the single-instance game from \cite{radian2019semi}. The game is abstracted as a "1-of-2" puzzle with "2-of-2 soundness", where the verifier randomly asks the prover to output a preimage $\bx \in \sX$ or an adaptive hardcore bit for the same image $\by \in \sY$. 

\begin{definition}[1-of-2 Puzzle from NTCF \cite{radian2019semi}]
The protocol proceeds as follows, using the notations from \Cref{sec:NTCF_prelim}. 
    \begin{itemize}
    \item The verifier samples a key $(k, \td)\gets \mathsf{GEN}_{\mathcal{F}}(1^\lambda)$ and send $k$ to the prover. The verifier keeps the trapdoors $\td$ 

    \item The prover sends back a committed image value $\by$.

    \item The verifier samples a random bit $\delta \in \{0,1\}$ and sends $\delta$ to the prover.

    \item If $\delta = 0$, the prover sends back some $\bx \in \sX$; else if $b = 1$, the prover sends back a string $(c,\bd)$.
    \item The verifier does the following checks on each  $(\by, \bx)$ or $(\by, c, \bd)$:
    \begin{itemize}
        \item When $\delta = 0$: Check $\bx \in \mathsf{INV}(\td, b \in \{0,1\}, \by)$ \footnote{This step can also be performed publicly using $CHK_{\mathcal{F}}$.}.

        \item When $\delta = 1$: Find both $\bx_{0}, \bx_{1}$ using $\mathsf{INV}(\td, b \in \{0,1\}, \by)$. Check if $c = \bd \cdot(\mathcal{J}(\bx_{0})\oplus \mathcal{J}(\bx_{1}))$.
    \end{itemize}
    \end{itemize}
\end{definition}

\cite{radian2019semi} showed the following property for the above protocol using the LWE-based NTCF from \cite{brakerski2021cryptographic}. 

\textbf{1-of-2 Completeness:} Any BQP prover will answer one of the challenges for $\delta = 0$ or $\delta = 1$ with probability 1.

\textbf{2-of-2 Soundness:}  The 2-of-2 soundness error in the above protocol is the probability that a prover can provide both the 1-challenge answer $\bx$ and the 0-challenge answer $(c,\bd)$ correctly.

The above protocol has 2-of-2 soundness $1/2$ for any BQP prover \cite{radian2019semi,brakerski2021cryptographic}.

\paragraph{Parallel Repitition}
We now describe a special type of parallel-repeated protocol based on the NTCF. In this protocol, we only consider the "2-of-2" setting: the verifier samples multiple keys independently; for every single key, the verifier simply asks the prover to provide both the answer to the 0-challenge and the answer to the 1-challenge. 

Its parallel repetition soundness was shown in \cite{radian2019semi}.

\begin{definition}[Parallel-Repeated 2-of-2 NTCF-protocol)]
The protocol proceeds as follows, using the notations from \Cref{sec:NTCF_prelim}.
\begin{itemize}
    \item The verifier samples $\ell$ number of keys $(k_i, \td_i)\gets \mathsf{GEN}_{\mathcal{F}}(1^\lambda), i \in [\ell]$ independently and send $\{k_i\}_{i \in [\ell]}$ to the prover. The verifier keeps the trapdoors $\{\td_i\}_{i \in [\ell]}$ 

    \item The prover sends back $\ell$ tuple of values $\{(\by_i, \bx_i, c_i, \bd_i)\}_{i \in [\ell]}$.

    \item The verifier does the following checks on each  $(\by_i, \bx_i, c_i, \bd_i)$ for $i \in [\ell]$:
    \begin{itemize}

        \item Find both $\bx_{i,0}, \bx_{i,1}$ using $\mathsf{INV}(\td_i, b \in \{0,1\}, \by_i)$. \item Check if $c_i = \bd_i \cdot(\mathcal{J}(\bx_{i,0})\oplus \mathcal{J}(\bx_{i,1}))$.
    \end{itemize}

    \item If all the checks pass, the verifier outputs 1; else outputs 0.
\end{itemize}
\end{definition}

\begin{theorem}[Parallel Repetition Soundness of NTCF-based Protocol, \cite{radian2019semi} Theorem 15 rephrased]
\label{thm:parallel_soundness_2of2}
    The above protocol has soundness $(1- \negl(\ell))$ for any BQP prover. 
\end{theorem}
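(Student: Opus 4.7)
The plan is to establish soundness via a hybrid argument over the $\ell$ NTCF instances, exploiting the LWE-based indistinguishability between the 2-to-1 and injective modes of the NTCF (\Cref{lem:twomodes}), together with the adaptive hardcore bit property (item 4 of \Cref{def:trapdoorclawfree}). Concretely, I would define a sequence of hybrids $H_0, H_1, \ldots, H_\ell$ where in $H_i$ the verifier samples the first $i$ keys $k_1, \ldots, k_i$ in injective mode (per \Cref{sec:ntcf_injective}) and the remaining $\ell - i$ keys in 2-to-1 mode. Thus $H_0$ corresponds to the real protocol, while $H_\ell$ is fully injective. Crucially, the acceptance predicate on a prover's output can be computed from the trapdoors alone (independent of what the prover saw), so moving between hybrids only changes the key distribution presented to the prover.

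The first step is to bound the gap between adjacent hybrids. By \Cref{lem:twomodes}, for every $i \in [\ell]$, the distributions of keys in $H_{i-1}$ and $H_i$ are computationally indistinguishable under the post-quantum hardness of $\lwe_{n,m,q,B_L}$. Since the verifier's verdict is an efficiently computable function of the trapdoors and the prover's classical output, any BQP prover's acceptance probability in $H_{i-1}$ and $H_i$ differs by at most $\negl(\lambda)$; otherwise we would obtain an LWE distinguisher. Telescoping across the $\ell$ steps gives a total deviation of at most $\ell \cdot \negl(\lambda)$ between $H_0$ and $H_\ell$.

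The second step is to bound the success probability in the fully injective hybrid $H_\ell$. Here each key $k_i$ has no genuine claw: for each $i$, the ``second preimage'' $\bx_{i,1}$ that the verifier computes via $\mathsf{INV}(\td_i, 1, \by_i)$ is determined by the injective-mode shift $\bu_i$ which is sampled uniformly and independently of the prover's view. Consequently, conditioned on the prover producing a valid preimage $\bx_{i,0}$ and a $\bd_i$ lying in the hardcore set $G_{k,0,\bx_0} \cap G_{k,1,\bx_1}$ (which happens with overwhelming probability by the adaptive hardcore bit property), the bit $\bd_i \cdot (\mathcal{J}(\bx_{i,0}) \oplus \mathcal{J}(\bx_{i,1}))$ is a uniformly random bit independent of $c_i$, so the check passes with probability $\tfrac{1}{2} + \negl(\lambda)$. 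Because the $\ell$ keys are sampled independently, the joint success probability in $H_\ell$ is at most $2^{-\ell} + \negl(\lambda)$.

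Combining the two steps yields a final bound of $2^{-\ell} + (\ell+1)\cdot\negl(\lambda) = \negl(\ell)$ (for $\ell$ polynomially related to $\lambda$), which is the claimed soundness. The main obstacle one has to address is the quantum nature of the prover, since quantum parallel repetition in general does not amplify soundness. What makes the argument work here, and sidesteps the usual rewinding barrier, is that the reduction only reprograms the \emph{distribution of keys} sent to the prover, never attempting to rewind or clone the prover's quantum state; the soundness amplification is then a consequence of the statistical independence exposed by injective mode, and the hybrid argument lifts cleanly from the classical to the quantum setting.
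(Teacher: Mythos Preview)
Your hybrid argument has a real gap in step 1. The reduction that is supposed to witness $|p_{j-1}-p_j|\le \negl(\lambda)$ plays the role of the distinguisher in \Cref{lem:twomodes}, and therefore receives only the key $k_j=(\bA_j,v_j)$ for the challenge coordinate, \emph{not} the trapdoor $t_j$. But the acceptance predicate for coordinate $j$ requires running $\mathrm{INV}(t_j,0,\by_j)$ and $\mathrm{INV}(t_j,1,\by_j)$ to recover both preimages and then evaluating $c_j \stackrel{?}{=} \bd_j\cdot(\mathcal J(\bx_{j,0})\oplus\mathcal J(\bx_{j,1}))$. Without $t_j$ the reduction cannot compute this check, so it cannot decide whether the prover ``won'' and hence cannot produce a distinguishing bit. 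Your sentence ``the acceptance predicate on a prover's output can be computed from the trapdoors alone'' is exactly the step that fails: you have all trapdoors \emph{except} the one you need. (Using the prover's own $\bx_j$ does not help: you would still need $\bx_{j,1-b}=\bx_{j,b}-\bs_j$, and recovering $\bs_j$ from $v_j$ without $t_j$ is LWE-hard.)

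Step 2 is also off, though less fatally. The shift $\bu_i$ is part of the public key and is therefore \emph{in} the prover's view; it is not ``sampled independently of the prover's view.'' Concretely, the prover chooses $\by_i$ after seeing $\bu_i$, so $\by_i-\bu_i$ (and hence whatever $\mathrm{INV}(t_i,1,\by_i)$ would return) is under the prover's control. What actually happens in injective mode is that one of the two $\mathrm{INV}$ branches fails the distance check with overwhelming probability, so the instance is simply rejected; the success probability in $H_\ell$ is $\negl(\lambda)$, not $2^{-\ell}$ via a coin-flip argument.

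The route that does go through (and is the one taken in \cite{radian2019semi}) avoids injective mode entirely and reduces each hybrid step to the \emph{single-instance} adaptive hardcore bit (item 4 of \Cref{def:trapdoorclawfree}). Define $G_j$ by replacing, for each $i\le j$, the check $c_i=\bd_i\cdot(\mathcal J(\bx_{i,0})\oplus\mathcal J(\bx_{i,1}))$ with $c_i=r_i$ for a fresh uniform bit $r_i$, leaving instances $i>j$ untouched. Then $G_\ell$ accepts with probability exactly $2^{-\ell}$. To bound $p_{j-1}-p_j$, the reduction receives $k_j$ from the hardcore-bit challenger, samples all other instances itself (with trapdoors), runs the prover, verifies coordinates $i\neq j$ (random bits for $i<j$, real checks for $i>j$; both are computable since it holds $t_i$ for $i>j$), and if those pass forwards the prover's tuple $(b_j,\bx_j,\bd_j,c_j)$ to the challenger. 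The challenger's advantage equals $2(p_{j-1}-p_j)$, so $p_{j-1}-p_j\le \negl(\lambda)$; telescoping yields $p_0\le 2^{-\ell}+\ell\cdot\negl(\lambda)=\negl(\ell)$. The point is that this reduction never needs $t_j$: it hands coordinate $j$ back to the hardcore-bit challenger rather than evaluating it itself.
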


\begin{remark}
    Note that in our construction \Cref{sec:construction} we let the verifier further check if $\by_i, \bx_i$ are well-formed. This will not affect the above soundness because it only puts more restrictions on the prover/adversary.  
\end{remark}


\begin{remark}
    In this work, we only need the soundness to use the above simple protocol where we require the adversary to produce both the "preimage answer" and the "adaptive hardcore bit answer" at the same time.
    Clearly, the "completeness" of the above protocol is not well-defined, but we omit this property in our setting.
    
    We do not need the more complicated version of repetition in \cite{mahadev2018classical} studied in \cite{chia2020classical,alagic2016quantum}.
\end{remark}

\section{Security Proof for SKL-PKE}
\label{sec:main_security_proof}

\begin{theorem}[Security]
\label{thm:security}
    Assuming the post-quantum sub-expoential hardness of $\lwe_{n,m,q,B_{L}}$ with parameter choice in \Cref{sec:ntcf_parameter}, the construction in \Cref{sec:construction} satisfies the $\gamma$-strong SKL-PKE security define in \Cref{def:strong_skl} for any noticeable $\gamma$. 
\end{theorem}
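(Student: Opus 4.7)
The plan is to reduce a successful adversary in the strong $\gamma$-SKL game to a BQP prover that breaks the parallel 2-of-2 soundness of the NTCF protocol (Theorem~\ref{thm:parallel_soundness_2of2}). Suppose for contradiction that there is a QPT $\A$ winning the game with noticeable probability $p$. Let $\mathsf{CertPass}$ be the event that $\VerDel$ accepts the certificate, and $\mathsf{GoodDecryptor}$ be the event that $\ti_{1/2+\gamma}$ outputs $1$ on $\rho_\delete$. Winning the game means $\Pr[\mathsf{CertPass} \wedge \mathsf{GoodDecryptor}] \geq p$. The reduction $\cR$ receives $k$ NTCF keys $\{(\bA_i, \bs_i\bA_i+\be_i)\}_{i\in[k]}$, forwards them to $\A$ as $\mpk$, obtains $\pk = \{(\bA_i, \bs_i\bA_i+\be_i, \by_i)\}_{i\in[k]}$ and a certificate $\cert = \{c_i,\bd_i\}_{i \in [k]}$ (the "Hadamard responses" it will output to its NTCF challenger), then runs the extraction procedure outlined below on the leftover state to produce preimages $\{\bx_{i,b_i}\}$ (the "preimage responses"). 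Winning 2-of-2 soundness on all $k$ coordinates with noticeable probability then contradicts Theorem~\ref{thm:parallel_soundness_2of2}.

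The extraction proceeds through a sequence of hybrids. In Hybrid~1, we first apply $\ti_{1/2+\gamma}$ to the state $\rho_\delete$ output by $\A$, and condition on the outcome being $1$; the remaining state $\rho'$ is a $(1/2+\gamma)$-eigenvector of the POVM $\cP$ from Definition~\ref{def:gamma_good_decryptor}. In Hybrid~2, we replace each $\bs_i\bA_i + \be_i$ by a uniformly random row; by sub-exponential LWE and $k=\lambda$ instances, and by Theorem~\ref{thm:ti_different_distribution} (applied to the mixture of projections defining $\cP$), the probability of obtaining a good decryptor in this hybrid is still within negligible distance of $p$. In Hybrid~3, we rewrite the ciphertext; expanding $\bB\bR$ with $\bR \gets \{0,1\}^{m\times m}$ and invoking the Leftover Hash Lemma (Lemma~\ref{lem:LHL}) coordinate-wise across the first $2k$ blocks, the $(nk+k)$ top rows of $\bB\bR$ become statistically close to uniform, while the last row, after adding the smudging noise $\be''$ of parameter $B_{P'} \gg B_P \cdot \sqrt{m}$ (Lemma~\ref{lem:noise_flooding}), becomes statistically close to a fresh LWE sample in the long secret $\bx^* = (\bx_{1,0},\ldots,\bx_{k,0})$ with respect to a fresh random coefficient matrix. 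Thus a good decryptor in Hybrid~3 is exactly a distinguisher for LWE in secret $\bx^*$ against uniform, with advantage $\gamma$.

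The main obstacle, and the technical heart of the argument, is to lift a distinguisher into a high-probability search extractor while keeping its success correlated with $\mathsf{CertPass}$. For this I use the amplified quantum search-to-decision reduction sketched in Section~\ref{sec:searchdecisiontecover}. Working in Hybrid~3 with the post-$\ti_{1/2+\gamma}$ state $\rho'$, I run the extractor that, coordinate by coordinate of $\bx^*$, tries each guess $g \in [B_X]$ by constructing statistically-close-to-LWE samples (for the correct guess) or statistically-close-to-uniform samples (for the incorrect guess) and applying $\ati_{g,1/2+\gamma-\epsilon}^{\epsilon,\delta}$. Combining Corollary~\ref{cor:ati_thresimp} with the "correct guess" analysis yields outcome~$1$ with probability $1-2\delta$ and preserves an almost-$\ti$-good state, while Theorem~\ref{thm:invariant_ati_uniform} shows that each incorrect guess acts as a near-identity and disturbs the state by only $O(\eta \cdot \ln(4/\delta)/\epsilon)$ in trace distance, which is sub-exponentially small by the ratio $B_{P'}/B_P$. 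Over $nk \cdot B_X$ total measurements the additive losses remain negligible, so in Hybrid~3 we extract $\bx^*$ with probability $1 - \negl(\lambda)$ conditioned on $\mathsf{GoodDecryptor}$.

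The final step is to transport this guarantee back to Hybrid~0. Since the extractor is efficient and its acceptance behavior is a POVM on $\A$'s state, Theorem~\ref{thm:ti_different_distribution} (or equivalently Corollary~\ref{cor:ati_computational_indistinguishable}) applied in reverse to the LWE switch shows that $\Pr[\mathsf{ExtractionOccurs} \wedge \mathsf{GoodDecryptor}]$ in the real game differs from its value in Hybrid~3 by at most a negligible amount. Hence $\Pr[\mathsf{ExtractionOccurs} \wedge \mathsf{CertPass}] \geq \Pr[\mathsf{GoodDecryptor} \wedge \mathsf{CertPass}] - \negl(\lambda) \geq p - \negl(\lambda)$, which is still noticeable. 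On this event, $\cR$ outputs $\{(\by_i, \bx_{i,b_i}, c_i, \bd_i)\}_{i\in[k]}$ satisfying all the NTCF 2-of-2 checks (validity of preimages is enforced by the $\VerDel$ norm and range checks, and the hardcore-bit relation is enforced by certificate acceptance), directly contradicting Theorem~\ref{thm:parallel_soundness_2of2}. This completes the reduction.
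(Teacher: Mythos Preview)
Your proposal is correct and follows the paper's approach closely: hybrid to injective-mode NTCF keys by replacing each $\bs_i\bA_i+\be_i$ with uniform, use LHL and smudging to reduce the last ciphertext row to a fresh LWE sample in the concatenated secret, run the amplified $\ati$-based search-to-decision extractor (Theorem~\ref{thm:invariant_ati_uniform} for wrong guesses, Lemma~\ref{cor:ati_thresimp} for correct ones) to recover all preimages with probability $1-\negl$ conditioned on $\mathsf{GoodDecryptor}$, transport back via Corollary~\ref{cor:ati_computational_indistinguishable}, and contradict parallel 2-of-2 soundness. Two small points to tighten: your reduction applies the \emph{inefficient} $\ti_{1/2+\gamma}$ as its first step, whereas the paper first replaces $\ti$ by the efficient $\ati^{\epsilon,\delta}$ (its Hybrid~1, via Lemma~\ref{cor:ati_thresimp}) so that $\cR$ is actually QPT; and the long secret is $(\bx_{1,b_1},\ldots,\bx_{k,b_k})$ for adversarially chosen bits $b_i$ coming from how $\A$ formed each $\by_i$, not $(\bx_{1,0},\ldots,\bx_{k,0})$---this is harmless for the final NTCF reduction since either preimage suffices, but the LWE-sample analysis must carry the $b_i\cdot\bu_i$ terms through (the paper handles them by noting they appear identically in both distributions being distinguished).
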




To prove security we consider two hybrids. The first hybrid $\mathbf{Hybrid}_0$ corresponds to the real security game whereas $\mathbf{Hybrid}_1$ a correspond to modified games. We will show that these hybrids are indistinguishable and so the winning probability in the three hybrids are negligibly close and then in the final hybrid $\mathbf{Hybrid}_1$ it must be negiligible.

\begin{figure}

    \caption{Hybrid 0}
    \label{fig:hyb_0}

\begin{mdframed}
\begin{center}
{\bf Hybrid 0}
\end{center}~\\
In this hybrid, the adversary and the challenger play the security game as in Definition \ref{def:gamma_skl_game}.
\begin{enumerate}
\item The challenger runs $\Setup(1^\lambda) \to (\mpk, \td)$. It sends $\mpk$ to the adversary $\A$. $\A$ computes $(\pk, \rho_\sk) \gets \KeyGen(\mpk)$ and publishes $\pk$. 
   
    \item  The challenger requests that $\A$ runs the deletion algorithm $\delete(\qsk)$. $\A$ returns a deletion certificate $\cert$ to the challenger. 
    
    \item The challenger runs $\VerDel(\pk,\td,\cert)$ and continues if $\VerDel(\pk,\td,\cert)$ returns $\valid$; else it outputs $\bot$ and aborts, if $\VerDel(\pk,\td,\cert)$ returns $\invalid$.
    
    \item 
   $\A$ outputs a message $\mu$ and a (possibly mixed) state $\rho_\delete$ as a quantum decryptor.
    
    \item 
    
    The challenger runs the test for a $\gamma$-good decryptor on $\rho_\delete$
    with respect to $\pk$ and $\mu$ (using $\ti_{1/2+\gamma}(\cP_\cD)$). The challenger outputs $1$ if the test passes, otherwise outputs $0$.
\end{enumerate}

\end{mdframed}
\end{figure}

\begin{figure}
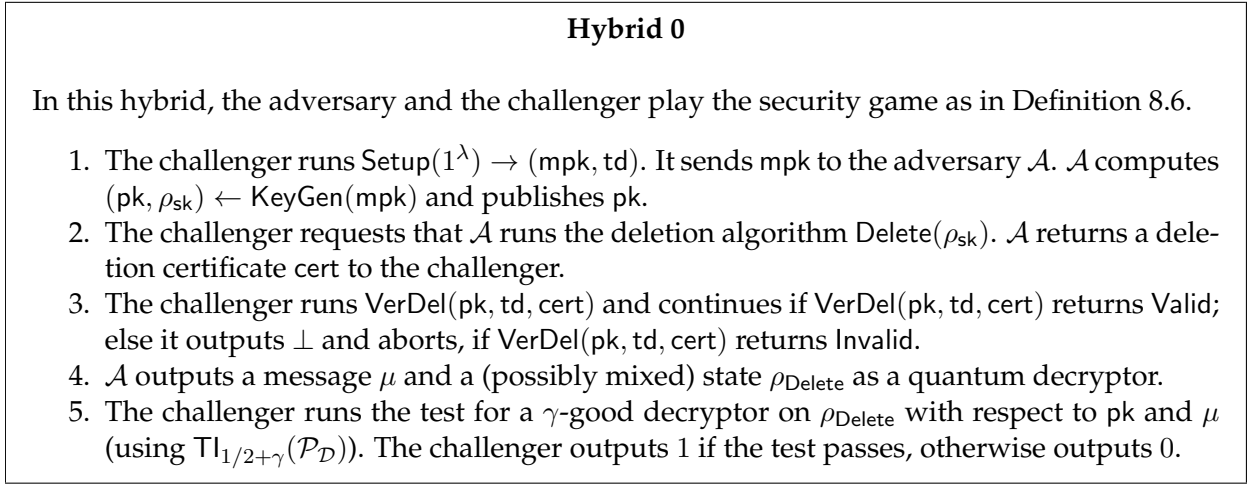

\caption{Hybrid 1}
\label{fig:hyb_2} 

\begin{mdframed}
\begin{center}
{\bf Hybrid 1}
\end{center}~\\
In this hybrid, we replace the check for $\gamma$ good decryptor with an efficient check $\ati^{\epsilon, \delta}_{\cP, D, \gamma}$ where we set $\delta$ and $\epsilon$ to be $\lambda^{-\omega(1)}$ for a tiny super-constant $\omega(1)$, e.g. we can set $\delta$ tp be exponentially small and $\epsilon = \frac{\gamma}{100B_xk n}$. 
\begin{enumerate}
\item The challenger runs $\Setup(1^\lambda) \to (\mpk, \td)$. It sends $\mpk$ to the adversary $\A$. $\A$ computes $(\pk, \rho_\sk) \gets \KeyGen(\mpk)$ and publishes $\pk$. 
    \item  The challenger requests that $\A$ runs the deletion algorithm $\delete(\qsk)$. $\A$ returns a deletion certificate $\cert$ to the challenger. 

      \item The challenger runs $\VerDel(\pk,\td,\cert)$ and if $\VerDel(\pk,\td,\cert)$ returns $\valid$ it outpus $z$; else it outputs $\bot$ if $\VerDel(\pk,\td,\cert)$ returns $\invalid$.
      
    \item 
   $\A$ outputs a message $\mu$ and a (possibly mixed) state $\rho_\delete$ as a quantum decryptor.
   
    \item 
    The challenger runs the test {\color{red}$\ati^{\epsilon, \delta}_{\cP, D, \gamma+1/2-\epsilon}(\rho_{\delete})$} with respect to $\mu$ and $\pk$. The challenger sets $z=1$ if the test passes, otherwise it sets $z=0$. 
    
\end{enumerate}
\end{mdframed}
\end{figure}


We will prove the following statements:
\begin{enumerate}

\item Probability of winning in $\mathbf{Hybrid}_0$ and $\mathbf{Hybrid}_1$ are close by a negligible amount if  $\delta$ are set to be $\lambda^{-\omega(1)}$ for a tiny super constant $\omega(1)$ (we can in fact set it to be exponentially small).
\item We will then prove that if LWE satisfies subexponential security, for the set parameters $\lwe_{n,m,q,B_L}$ probability of winning in $\mathbf{Hybrid}_1$ is negligible. 
\end{enumerate}


Together these claims imply that the probability of winning in $\mathbf{Hybrid}_0$ is negligible.


Claim 1 follows from \Cref{cor:ati_thresimp}: if the inefficient $\gamma$-good decryptor test outputs 1 with probability $p$ on a state $\rho$, then the efficient $\ati_{\cP,\cD, 1/2+\gamma-\epsilon}^{\epsilon,\delta}$ will output 1 on the state $\rho$
with probability $p - \delta$. Since $\delta$ is negligible, $\A$'s overall wining probability will have 
negligible difference.

\subsection{Winning Probability in Hybrid 1}
Next, we show that $\Pr[\mathbf{Hybrid}_1=1]\leq \negl(\lambda)$ for some negligible $\negl(\cdot)$. We will reduce to the security of the parallel repeated NTCF game in \Cref{thm:parallel_soundness_2of2}.

\begin{lemma}
    \label{lem:hyb2_negligible}
    Assuming post-quantum subexponential hardness of $\lwe_{m,m,q,B_V}$ with parameter choice in \Cref{sec:ntcf_parameter}, we have  $\Pr[\mathbf{Hybrid}_1=1]\leq \negl(\lambda)$ for some negligible $\negl(\cdot)$.
\end{lemma}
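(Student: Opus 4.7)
The plan is to reduce a non-negligible advantage in $\mathbf{Hybrid}_1$ to the parallel 2-of-2 NTCF soundness of \Cref{thm:parallel_soundness_2of2}. Assume for contradiction that $\Pr[\mathbf{Hybrid}_1 = 1] \geq \epsilon$ for some non-negligible $\epsilon(\lambda)$. We construct a QPT reduction $\mathcal{R}$ that, on input $k$ parallel NTCF keys $\{(\bA_i, \bk_i = \bs_i\bA_i + \be_i)\}_{i \in [k]}$, plants them as $\mpk$ for $\A$, receives the published $\pk = \{(\bA_i,\bk_i,\by_i)\}$, the deletion certificate $\{(c_i,\bd_i)\}$, and the decryptor-message pair $(\rho_\delete,\mu)$, and must output $\{\by_i,\bx_{i,b_i}, c_i, \bd_i\}$ satisfying the NTCF checks. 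The cert $(c_i,\bd_i)$ is handed over as received, so the only non-trivial task is to extract, with overwhelming probability conditioned on $\A$ winning, a collection of valid preimages $\{\bx_{i,b_i}\}$ from $\rho_\delete$.

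The main technical claim to establish is: conditioned on the event $\mathsf{ATIPass}$ that $\ati^{\epsilon,\delta}_{\cP,D,1/2+\gamma-\epsilon}$ outputs $1$ on $\rho_\delete$, one can extract $\{\bx_{i,b_i}\}_{i \in [k]}$ (for some bit string $(b_1,\dots,b_k)$) with probability $1-\negl(\lambda)$. I would prove this by first moving to a ``switched'' hybrid $\mathbf{Hybrid}_1'$ in which each $\bk_i$ is replaced by a uniform $\bu_i \in \Z_q^m$. By the sub-exponential hardness of $\lwe_{n,m,q,B_L}$ (\Cref{lem:twomodes}) and \Cref{cor:ati_computational_indistinguishable}, the probability of $\mathsf{ATIPass}$ shifts by at most a negligible amount, and so does the probability of any subsequent efficient event applied to the post-measurement state. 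In $\mathbf{Hybrid}_1'$, the Leftover Hash Lemma (\Cref{lem:LHL}) shows that $(\bA_i \bR,\, \bu_i \bR)_i$ is statistically $q^{-n}$-close to uniform, while the last row $\sum_i \by_i \bR + \be''$ — combined with the smudging noise $\be''$ of width $B_{P'}$ that dominates $\sum_i \be'_i \bR$ by the parameter gap $B_{P'}/B_P = \lambda^{\omega(1)}$ (\Cref{lem:noise_flooding}) — is statistically close to a fresh noisy LWE sample in the concatenated secret $(\bx_{1,0},\dots,\bx_{k,0})$ under coefficient matrix $(\bA_1\bR \,\|\, \cdots \,\|\, \bA_k \bR)$. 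Hence the $\gamma$-good decryptor is, up to negligible statistical error, a distinguisher for LWE with an unknown secret in $[B_X]^{nk}$ from uniform.

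At this point I would invoke the amplified quantum search-to-decision reduction (\Cref{thm:informal_lwe_searchtodecision}) to extract the entire concatenated secret with success probability $1-\negl(\lambda)$ conditioned on $\mathsf{ATIPass}$. The extractor's running time is $\poly(B_X, n, k, 1/\gamma)$, which is quasipolynomial given our parameter setting, and this is why we need sub-exponential hardness of LWE. Internally, the extractor scans each entry of the secret by trying each candidate $g \in [B_X]$ and applying a guess-dependent $\ati_g$: on a correct guess, the post-ATI-state's behavior is statistically close to that of $\ti_\lwe$ (which outputs $1$ with probability $1$ by the projective property given $\mathsf{ATIPass}$), while on an incorrect guess we are in the statistically-close-to-uniform regime, so \Cref{thm:invariant_ati_uniform} ensures $\ati_g$ outputs $0$ and leaves the state essentially unchanged in trace distance, up to an exponentially small error that accumulates at most sub-exponentially many times. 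Finally, moving back from $\mathbf{Hybrid}_1'$ to $\mathbf{Hybrid}_1$ (once again via LWE), the extraction still succeeds with probability $1-\negl(\lambda)$ conditioned on $\mathsf{ATIPass}$.

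Combining, $\mathcal{R}$ wins the parallel 2-of-2 NTCF game with probability at least $\Pr[\mathsf{CertPass} \wedge \mathsf{ATIPass}] \cdot (1-\negl(\lambda)) - \negl(\lambda) \geq \epsilon/2$, contradicting \Cref{thm:parallel_soundness_2of2}. The main obstacle I anticipate is controlling the accumulation of error in the extractor: we must simultaneously preserve the ``good decryptor'' property of $\rho_\delete$ across up to $O(B_X \cdot nk)$ applications of $\ati_g$, while ensuring that the bridging between real-LWE and uniform-ciphertext hybrids remains valid for all of them. This is handled by combining the ``almost projectivity'' of $\ati$ on correct guesses (from \Cref{cor:ati_thresimp}) with the identity-like behavior on incorrect guesses (\Cref{thm:invariant_ati_uniform}), choosing $\delta$ sub-exponentially small and $\epsilon$ inverse quasi-polynomial so that the total deviation remains negligible and $\gamma - O(B_X n k \cdot \epsilon)$ is still noticeably positive, as enforced by the parameter choices in \Cref{sec:scheme_parameters}.
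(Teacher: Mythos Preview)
Your proposal is correct and follows essentially the same approach as the paper: switch the $\bk_i$'s to uniform via LWE (the paper's Games~0 through~2.$k$), apply LHL plus noise flooding to exhibit the last row of the ciphertext as a fresh LWE sample in the concatenated secret $(\bx_{1,b_1},\ldots,\bx_{k,b_k})$, invoke the amplified quantum search-to-decision extractor (\Cref{thm:last_hybrid_extract}) with the $\ati$ almost-projectivity / identity-like analysis you describe, then switch back and combine with the certificate via the probability relation to contradict \Cref{thm:parallel_soundness_2of2}. The only cosmetic difference is that the paper inserts an intermediate Game~1 replacing $\mathsf{GenTrap}$-sampled $\bA_i$'s with truly uniform ones before the LWE switch, which you absorb implicitly into the statistical closeness of $\mathsf{GenTrap}$.
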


To show that the winning probability in \textbf{Hybrid 1} is negligible, 
we consider a world where we \textbf{do not check the deletion certificate and let the adversary pass all the time}.
In this world, through a sequence of hybrid games: we will call them \textbf{Games} instead of Hybrids to distinguish from the above Hybrids 0, 1.

Later, we will show how to put back the condition about the deletion certificate check for our analysis via the following argument:

\paragraph{Notations for Events}
For simplicity, we make a few notations for the events that take place in \textbf{Hybrid 1}:
\begin{itemize}
    \item We denote the event that the adversary hands in a valid deletion cerficate, i.e. $\VerDel(\pk,\td, \cert) =  \valid$, as $\mathsf{Cert Pass}$.

    \item We denote the event that
    test $\ati^{\epsilon, \delta}_{\cP, D, \gamma+1/2}(\rho_{\delete})$ outputs 1 with respect to $\mu$ and $\pk$, as $\mathsf{Good Decryptor}$. To simplify notations, we define the new $\gamma$ here to be the $\gamma-\epsilon$ in Hybrid 1.

    \item We denote $\Ext$ as the event where we can obtain the preimages $\{\bx_{i}\}_{i \in [k]} \in \{\mathrm{INV}(\td_i, b\in\{0,1\}, \by_i)\}_{i \in [k]}$(from the remaining state of measurement $\ati^{\epsilon, \delta}_{\cP, D, \gamma+1/2}(\rho_{\delete})$).
\end{itemize}

Suppose the probability that final output $1$ in Hybrid 1 ( \Cref{fig:hyb_2}) is some noticeable $\epsilon$, then we must have $\Pr[\mathsf{Cert Pass} \wedge \mathsf{Good Decryptor}] \geq \epsilon_1$ . To build a reduction that breaks the security of parallel repeated NTCF game \Cref{thm:parallel_soundness_2of2}, we need the following statement to hold: $\Pr[\mathsf{Cert Pass} \wedge \Ext] \geq \epsilon'$, for some noticeable $\epsilon'$, because in this case the reduction can obtain both the deletion certificates $\{c_i, \bd_i\}_{i \in [k]}$ and the preimages $\{\bx_{i, b}\}_{i \in [k]}$, which allow it to win the parallel repeated NTCF game.

Our proof outline is the follows: we would show that when  $\mathsf{Good Decryptor}$ happens, $\Ext$ \emph{always happens} except with negligible probability. Therefore, we have $\Pr[\mathsf{Cert Pass} \wedge \Ext] \geq \epsilon_1 - \negl(\lambda)$ by a simple proabability observaion (\Cref{claim:prob_relation}).

We analyze the probabilities by defining some games in the world where we don't check the deletion certificate and reasoning about them.

\paragraph{Game 0}
This is an experiment same as the one in \Cref{fig:hyb_2}, using the construction \Cref{sec:construction}, except that {\color{red}\emph{the challenger does not perform check on the deletion certificate, i.e. step 5 in \Cref{fig:hyb_2}}}.

\begin{enumerate}

    \item 
    The challenger runs $\Setup(1^\lambda)$: the challenger prepares $\mpk = \{\bA_i,\bs_i\bA_i+\be_i\}_{i \in [k]} $, where $(\bA_i, \td_i) \gets \mathsf{GenTrap}(1^n, 1^m, q), \forall i \in [k]$ and sends it to $\A$. The challenger keeps $\td = \{\td_i\}_{i \in [k]}$
    
    \item  $\A$ receives $\mpk$ and obtains the classical public key $\pk = \{\bA_i, \bs_i\bA_i+\be_i, \by_i\}_{i \in [k]} \gets \KeyGen(1^\lambda)$ and one copy of quantum decryption key $\qsk$. $\A$ publishes $\pk$.

    \item 
   $\A$ outputs a message $\mu$ and a (possibly mixed) state $\rho_\delete$ as a quantum decryptor.
    
    \item 
     The challenger runs the (efficient) test $\ati^{\epsilon, \delta}_{\cP, D, \gamma+1/2}(\rho_{\delete})$ with respect to $\mu$ and $\pk$. The challenger outputs $1$ if the test passes, otherwise it outputs $0$. 
     
\end{enumerate}

\paragraph{Game 1}

This is the same as Game 0 except that all $\bA_i$ are sampled uniformly at random, without a trapdoor.

\begin{enumerate}

    \item 
    The challenger runs $\Setup(1^\lambda)$: the challenger prepares $\mpk = \{\bA_i,\bs_i\bA_i+\be_i\}_{i \in [k]} $, where {\color{red}$\bA_i \gets \Z_q^{n \times m}, \forall i \in [k]$} and sends it to $\A$.
    
    \item  $\A$ receives $\mpk$ and obtains the classical public key $\pk = \{\bA_i, \bs_i\bA_i+\be_i, \by_i\}_{i \in [k]} \gets \KeyGen(1^\lambda)$ and one copy of quantum decryption key $\qsk$. $\A$ publishes $\pk$. 

   $\A$ outputs a message $\mu$ and a (possibly mixed) state $\rho_\delete$ as a quantum decryptor.
    
    \item 
     The challenger runs the (efficient) test $\ati^{\epsilon, \delta}_{\cP, D, \gamma+1/2}(\rho_{\delete})$ with respect to $\mu$ and $\pk$. The challenger outputs $1$ if the test passes, otherwise it outputs $0$. 
\end{enumerate}

\paragraph{Game 2.j: $j = 1, \cdots , k$} This is the same as Game 0 except the following:
\begin{enumerate}
    
    \item During $\Setup$,

    \begin{itemize}
    \item
    {\color{red} For $i \leq j$: the challenger prepares $\mpk_i = (\bA_i, \bu_i)$, where $\bA_i \gets \Z_q^{n \times m}$ and $\bu_i \gets \Z_q^{1 \times m}$ uniformly random.} 
        \item For $i > j$: the challenger prepares $\mpk_i = (\bA_i, \bu_i = \bs_i\bA_i+\be_i)$ the same as in hybrid 0.
    \end{itemize}

    \item $\A$ accordingly obtains public key {\color{red} $\pk = \{\bA_i, \bu_i, \by_i\}_{i \in [k]}$}.
and one copy of quantum decryption key $\qsk$. $\A$ publishes $\pk$. 

    \item 
   $\A$ outputs a message $\mu$ and a (possibly mixed) state $\rho_\delete$ as a quantum decryptor.
    
    \item 
     The challenger runs the (efficient) test $\ati^{\epsilon, \delta}_{\cP, D, \gamma+1/2}(\rho_{\delete})$ with respect to $\mu$ and $\pk$. The challenger outputs $1$ if the test passes, otherwise it outputs $0$.

\end{enumerate}

We then prove the following claims about the above games:

\begin{claim}
\label{claim:game0_1}
    Game 0 and Game 1 are statistically indistinguishable.
\end{claim}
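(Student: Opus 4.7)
The proof plan is straightforward and relies entirely on the statistical uniformity property of the trapdoor sampler stated in \Cref{thm:trapdoor_samp}. That theorem guarantees that when we run $(\bA_i, \td_i) \gets \mathsf{GenTrap}(1^n, 1^m, q)$ with $m = \Omega(n \log q)$, the marginal distribution of $\bA_i$ is statistically close (within negligible total variation distance, say $\eta(\lambda)$) to the uniform distribution on $\Z_q^{n \times m}$. In Game 0 the challenger samples each $\bA_i$ via $\mathsf{GenTrap}$ (and simply discards the trapdoor $\td_i$, since in these games we have already noted that no verification of the deletion certificate occurs), while in Game 1 the challenger samples each $\bA_i$ uniformly from $\Z_q^{n \times m}$. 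All other components of $\mpk$, namely the samples $\bs_i \bA_i + \be_i$ and subsequently the adversarially-produced $\by_i$, are derived identically from $\bA_i$ in both games.

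First, I would observe that the joint distribution of $\mpk$ in each game is a deterministic-plus-independent-randomness function of $\{\bA_i\}_{i \in [k]}$: namely, sample $\bs_i \gets [B_S]^n$, $\be_i \gets \cD_{\Z_q^m, B_V}$ independently, and return $(\bA_i, \bs_i \bA_i + \be_i)$. Applying a function to two statistically close distributions can only decrease the statistical distance, so the distribution of $\mpk$ in Game 0 is within statistical distance $k \cdot \eta(\lambda)$ from the distribution of $\mpk$ in Game 1, by a standard hybrid argument across the $k$ independently sampled matrices.

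Next, I would argue that the entire view of the adversary (including the state $\rho_\delete$ and the message $\mu$ it outputs) is obtained by applying a fixed (possibly quantum) efficient process to $\mpk$, and the challenger's final measurement $\ati^{\epsilon,\delta}_{\cP,D,\gamma+1/2}$ is likewise a fixed efficient process depending on $\pk$. Since statistical closeness of inputs is preserved by any (even computationally unbounded) subsequent processing, the output bits of Game 0 and Game 1 are within statistical distance $k \cdot \eta(\lambda) = \negl(\lambda)$.

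The main ``obstacle,'' if one can call it that, is purely bookkeeping: one must check that the trapdoors $\td_i$ are not used anywhere in Game 0 (so that discarding them does not introduce any disparity with Game 1). This is immediate from the game description, since we are in the world where the deletion certificate is not verified and $\td$ is only otherwise used in $\VerDel$. Thus $\lvert \Pr[\text{Game 0} = 1] - \Pr[\text{Game 1} = 1] \rvert \le k \cdot \eta(\lambda) = \negl(\lambda)$, which is exactly the claim of statistical indistinguishability.
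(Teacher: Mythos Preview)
Your proposal is correct and takes essentially the same approach as the paper, which simply states that the claim ``follows directly from the property of $\mathsf{GenTrap}$ in \Cref{thm:trapdoor_samp}.'' You have merely unpacked this one-line justification with the explicit hybrid over the $k$ independent matrices and the observation that the trapdoors $\td_i$ are unused in these games (since $\VerDel$ is skipped), which is exactly the content behind the paper's appeal to \Cref{thm:trapdoor_samp}.
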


This follows directly from the property of $\mathsf{GenTrap}$ in\Cref{thm:trapdoor_samp}. 

\begin{claim}
\label{claim:game1_2}
    Assuming the hardness of $\lwe_{n,m, q, B_L}$, Game 1 and Game 2.k are indistinguishable.
\end{claim}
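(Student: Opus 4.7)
The plan is a standard hybrid argument over the $k=\lambda$ coordinates. Define intermediate games $\mathbf{Game\ 2}.j$ for $j = 0, 1, \dots, k$, where in $\mathbf{Game\ 2}.j$ the first $j$ public key components $\bu_i$ are uniformly random in $\Z_q^{1\times m}$ and the remaining $k-j$ components are LWE samples $\bs_i\bA_i + \be_i$ with $\bs_i \samp [B_S]^n$, $\be_i \samp \cD_{\Z_q^m, B_V}$. The matrices $\bA_i$ are uniformly random in all games (this is already the case after \Cref{claim:game0_1}). By construction, $\mathbf{Game\ 2}.0$ is identical to Game 1 and $\mathbf{Game\ 2}.k$ is Game 2.k, so it suffices to show that consecutive games are computationally indistinguishable, after which a union bound over $k$ hybrids yields the claim.

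To argue indistinguishability of $\mathbf{Game\ 2}.(j-1)$ and $\mathbf{Game\ 2}.j$, I would build a reduction $\mathcal{B}$ to $\lwe_{n,m,q,B_L}$. On input an LWE challenge $(\bA^*, \bu^*)$, $\mathcal{B}$ sets $(\bA_j, \bu_j) := (\bA^*, \bu^*)$, samples the remaining $\bA_i$ uniformly for $i\neq j$, samples $\bu_i$ uniformly for $i < j$, and samples $\bu_i = \bs_i\bA_i + \be_i$ for $i > j$ with $\bs_i \samp [B_S]^n$, $\be_i \samp \cD_{\Z_q^m, B_V}$. It then forwards $\mpk = \{(\bA_i, \bu_i)\}_{i \in [k]}$ to $\A$, receives the committed public key $\pk$ (which includes $\{\by_i\}$ that $\A$ generates on its own from $\mpk$ via $\KeyGen$), receives the quantum decryptor $\rho_\delete$ and message $\mu$ from $\A$, and runs the efficient measurement $\ati^{\epsilon,\delta}_{\cP,\cD,\gamma+1/2}(\rho_\delete)$ with respect to $\mu$ and $\pk$. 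Finally, $\mathcal{B}$ outputs the bit returned by $\ati$. When $\bu^*$ is an LWE sample, $\mathcal{B}$ exactly simulates $\mathbf{Game\ 2}.(j-1)$, and when $\bu^*$ is uniform, it exactly simulates $\mathbf{Game\ 2}.j$.

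The reduction is efficient: $\A$ is QPT, and by item (4) of \Cref{cor:ati_thresimp} the $\ati^{\epsilon,\delta}_{\cP,\cD,\gamma+1/2}$ measurement runs in expected time $O(T_{\cP,\cD}\cdot 1/\epsilon^2 \cdot 1/\log \delta)$, which is polynomial in $\lambda$ given that $\epsilon$ is the inverse-polynomial $\gamma/(100 B_X k n)$ and $\delta$ is $\lambda^{-\omega(1)}$ (we can even afford exponentially small $\delta$ since $\log(1/\delta)$ appears only polylogarithmically). Crucially, $\mathcal{B}$ does not need any trapdoor: the $\ati$ test uses only $\pk$ to generate ciphertexts via the public encryption procedure of \Cref{sec:construction}, and the deletion certificate check has been removed in the Games under consideration. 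Therefore any non-negligible gap in the probability of outputting $1$ between $\mathbf{Game\ 2}.(j-1)$ and $\mathbf{Game\ 2}.j$ translates directly into a non-negligible advantage against $\lwe_{n,m,q,B_L}$.

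The main obstacle to watch is efficiency of the simulator around the $\ati$ procedure; this is handled cleanly by \Cref{cor:ati_thresimp}, since both $\epsilon$ (inverse polynomial) and $\log(1/\delta)$ (polynomial) keep the expected running time polynomial. Summing the (negligible) losses across the $k=\lambda$ hybrid steps preserves negligibility, establishing the claim. Note that one could alternatively phrase this as a single application of \Cref{cor:ati_computational_indistinguishable} to the joint distribution of $(\mpk, \pk)$, but the step-by-step hybrid is cleaner because the distribution change occurs in $\mpk$ rather than inside the $\ati$ sampling distribution $\cD$ itself.
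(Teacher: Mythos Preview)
Your hybrid argument is correct and is exactly the approach the paper takes (the paper's proof is simply a terser version of what you wrote). One small inaccuracy: $\epsilon = \gamma/(100 B_X k n)$ is not inverse-polynomial but inverse-quasipolynomial, since $B_X$ is set to be quasipolynomial (\Cref{sec:scheme_parameters}); hence the $\ati$ step and your reduction $\mathcal{B}$ run in quasipolynomial time, which is still fine because the paper works under the \emph{subexponential} hardness of $\lwe_{n,m,q,B_L}$ (cf.\ \Cref{thm:security} and \Cref{lem:hyb2_negligible}).
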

\begin{proof}
We claim that each pair in (Game 1, Game 1.1), (Game 1.1, Game 1.2) ... (Game 1.(k-1), Game 1.k) is indistinguishable.
If anyone of them are distinguishable, then there exists some $j$ such that there is an adversary that distinguishes $(\bA_j, \bs_j \bA_j + \be_j)$ and $(\bA_j, \bu_j \gets \Z_q^m)$. 
\end{proof}


\begin{remark}
The above property can also directly follow from the indistinguishability between the 2-to-1 mode and injective mode of NTCF \cite{mahadev2018classical} (see \Cref{lem:twomodes}). 
Note that after switching to $(\bA_i,\bu_i)$, some of the $\by_i$'s in the public key $(\{\bA_i, \bu_i, \by_i\}$ may have the format $\by_i = \bx_{i,1}\bA_i + \be_i' + \bu_i$ or $\by_i = \bx_{i,0}\bA_i + \be_i'$.

Thus,
an honestly encrypted ciphertext $\ct$ in \textbf{Game $2.k$} for message $\mu$ should have the following format:
   \begin{align}
      \label{eqn:ct_game2}
        &\ct= \begin{bmatrix} \bu_1  \\\bA_1 \\ \\ \cdots 
        \\  \bu_k\\
        \bA_k 
        \\ \sum_{i \in [k]}(\bx_{i}\bA_i + \be_{i}' +  b_{i} \cdot \bu_i) \end{bmatrix} \cdot \bR + \bE + \mu \cdot \bG_{(n+1)k+1}
    \end{align}
    where $\bA_i \gets \Z_q^{n \times m}, \bu_i \gets \Z_q^m,\bR \gets \{0,1\}^{m \times m}$ and the $(nk+k+1)$-th row in $\bE$ is $ \be'' \gets \cD_{\Z_q^m, B_{P'}}$. Note that $b_i = 0 \text{ or }, 1, i \in [k]$ are some adversarially chosen bits that come in the $\{\by_i\}_{i \in [k]}$ part of $\pk$.    

\end{remark}

\paragraph{Switching to Plaintext 0}
From now on, without loss of generality, we always consider encrypting message $\mu= 0$. The analysis for the case when $\mu = 1$ should follow symmetrically.

\subsection{Extraction of Preimages via LWE Search-Decision Reduction}

Now we are ready to argue that in \textbf{Game $2.k$}, if the game outputs 1, then there exists an extractor that extracts all preimages $\{\bx_{i, b_i}\}_{i \in [k]}, b_i = 0 \text{ or } 1$ for $\{\by_i\}_{i \in [k]}$.

\begin{theorem}
\label{thm:last_hybrid_extract}
    In the last Game $2.k$, if we have $\ati_{\cP,\cD,1/2+\gamma}^{\epsilon, \delta}(\rho_\delete)$ outputs 1, for some noticeable $\gamma$, then there exists an extractor $\Ext$ such that there is a negligible function $\negl(\cdot)$: $$\Pr[\Ext(\rho_\delete, \pk) \to (\bx_{1}, \cdots, \bx_{k}): \bx_i \text{ is the secret in } \bx_i \bA_i + \be_{i,0}] \geq 1 - \negl(\lambda)$$   
    $\Ext$ runs in time $T'= O( T\cdot kn B_\bx \cdot \mathsf{poly}( \log(1/\delta), 1/\epsilon))$, where $T$ is the running time of the decryptor $\rho_\delete$ and $B_\bx$ is the maximum value 
    for each entry $\bx_{i,j}$ in the secret $\bx$.

In other words, we have in Game 2.k:
    \begin{align*}
    \Pr\left[ \begin{array}{cc}
         &\Ext(\rho_\delete, \pk) \to (\bx_{1}, \cdots \bx_{k}):  \\
         &  \bx_i \text{ is the secret in } \bx_i \bA_i + \be_{i}'
    \end{array}         
    \vert \ati_{\cP,\cD_{2.k}, 1/2+\gamma}^{\epsilon,\delta}(\rho_\delete) = 1 \right] \geq  1-\negl(\lambda)
    \end{align*}
\end{theorem}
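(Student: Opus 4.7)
The plan is to reduce the task of extracting the preimages to a quantum LWE search-to-decision problem, then apply a coordinate-by-coordinate extractor based on approximate threshold implementations. First, I will analyze the structure of the ciphertext in Game~$2.k$ with $\mu=0$. Writing $\bV_i := \bA_i \bR$ and $\bW_i := \bu_i \bR$, the ciphertext takes the form of the stacked rows $\{\bW_i\}_{i}$, $\{\bV_i\}_i$, together with the final row
\begin{equation*}
\bz \;=\; \sum_{i\in[k]} \bx_i \bV_i \;+\; \sum_{i\in[k]} b_i \bW_i \;+\; \sum_{i\in[k]} \be_i' \bR \;+\; \be''.
\end{equation*}
Since $\bA_i,\bu_i$ are uniform and $m = \Omega(n\log q)$, the leftover hash lemma (\Cref{lem:LHL}) makes $\{(\bA_i,\bA_i\bR),(\bu_i,\bu_i\bR)\}$ statistically close to uniform. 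Moreover, because the ratio $B_{P'}/B_P$ is super-polynomial, noise flooding (\Cref{lem:noise_flooding}) makes $\sum_i \be_i' \bR + \be''$ statistically indistinguishable from a fresh sample from $\cD_{\Z_q^m,B_{P'}}$. The $\sum_i b_i \bW_i$ term is a known shift given $\pk$. Thus distinguishing an encryption of $0$ from a uniform ciphertext is statistically equivalent to distinguishing an LWE sample with secret $\bx := (\bx_1\Vert\cdots\Vert\bx_k)\in[B_X]^{kn}$ and random coefficient matrix $[\bV_1\Vert\cdots\Vert\bV_k]$, modulo a shift that the reduction knows.

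Given this, the extractor $\Ext$ runs the quantum search-to-decision reduction (as outlined in \Cref{sec:searchdecisiontecover}) on $\rho_\delete$, treating $\rho_\delete$ as a quantum distinguisher for LWE with secret $\bx$ and noise parameter $B_{P'}$. For each coordinate $x_{i,j}$ of $\bx$ in turn, and each guess $g\in[-B_X,B_X]$, the extractor constructs a distribution $\cD_g$ that samples fresh LWE-like challenges: it adds a random shift $g\cdot\bc$ to the $j$-th-coordinate contribution of the secret by choosing a matrix $\bC_{i,j}\in\Z_q^{kn\times m}$ that is random on the $(i,j)$-th row and zero elsewhere, producing samples of the form $(\bV'=\bV+\bC_{i,j},\;\bz' = \sum \bx_\ell \bV'_\ell + g\cdot \bc + \text{shift} + \be''')$ with $\be'''\gets\cD_{\Z_q^m,B_{P'}}$ independently sampled by the reduction. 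When the guess $g$ is correct, $\cD_g$ is within statistical distance $\eta=\mathsf{poly}(m)\cdot (B_P/B_{P'})+q^{-n}$ of the real ciphertext distribution from Game~$2.k$; when $g$ is incorrect, $\cD_g$ is within distance $q^{-n}$ of the uniform distribution. The extractor then applies $\ati_{\cP_g,\cD_g,1/2+\gamma-c\epsilon}^{\epsilon,\delta}$ for an appropriately slowly decreasing threshold.

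To argue correctness, I will invoke two complementary properties of threshold implementations, already established in \Cref{cor:ati_thresimp} and \Cref{thm:invariant_ati_uniform}. When the guess is correct, the distribution $\cD_g$ is statistically close to the original, so by \Cref{cor:ati_computational_indistinguishable} (which also applies to statistically close distributions with essentially the same error bound, plus the $\eta$ shift) the measurement returns $1$ with probability $1-O(\delta)$ starting from the post-measurement state of a successful $\ati$, and by the near-projective property of $\ati$ the post-measurement state still has weight $1-O(\delta)$ on eigenvectors of value $\geq 1/2+\gamma-O(\epsilon)$. When the guess is incorrect, $\cD_g$ is within negligible statistical distance of the uniform/random-independent distribution; \Cref{thm:invariant_ati_uniform} guarantees that $\ati$ acts as an identity (up to exponentially small trace-distance error) and outputs $0$ with probability $1-\delta$. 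Hence the extractor makes at most $2B_X\cdot kn$ measurements, iterating through every coordinate and every guess, and each coordinate is correctly identified with probability $1-O(\delta)$.

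The main obstacle is controlling the accumulated error across $O(B_X\cdot kn)$ measurements while ensuring that the extractor succeeds with probability $1-\negl(\lambda)$ (not merely inverse-polynomial). The incorrect-guess measurements each incur only inverse-subexponential trace-distance perturbation per \Cref{thm:invariant_ati_uniform}, so their total contribution summed over the at most $\poly(\lambda)\cdot B_X$ incorrect guesses is still negligible provided $B_X$ is quasi-polynomial and $\delta$ is set sub-exponentially small (which is permissible under sub-exponential LWE, since the $\ati$ running time is polynomial in $1/\epsilon$ and $\log(1/\delta)$). The correct-guess measurements perturb the state by at most $O(\epsilon+\sqrt{\delta})$ each by the almost-projective property, and there are only $kn$ of them, so choosing $\epsilon=\gamma/(100\,B_X k n)$ and $\delta$ sub-exponentially small yields a total loss of $\negl(\lambda)$. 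Putting these pieces together, conditioned on the initial $\ati_{\cP,\cD,1/2+\gamma}^{\epsilon,\delta}(\rho_\delete)=1$ event, the extractor outputs the correct $(\bx_1,\ldots,\bx_k)$ with probability $1-\negl(\lambda)$, in total running time $O(T\cdot knB_X\cdot\poly(\log(1/\delta),1/\epsilon))$ as claimed.
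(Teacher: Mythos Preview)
Your proposal is correct and follows essentially the same approach as the paper: reduce the ciphertext-versus-random distinguishing in Game~$2.k$ to an LWE-style distinguishing problem via LHL and noise flooding, then run a coordinate-by-coordinate extractor that applies $\ati$ for each guess, using \Cref{thm:invariant_ati_uniform} to show incorrect-guess measurements act as near-identity and the almost-projective property (\Cref{cor:ati_thresimp} together with \Cref{cor:ati_computational_indistinguishable}) to show correct-guess measurements preserve goodness at a slightly lowered threshold. One minor point: your description of the correct-guess error as a trace-distance perturbation ``$O(\epsilon+\sqrt{\delta})$'' is not quite how the paper argues it---the paper instead tracks the threshold parameter, lowering it by $3\epsilon$ at \emph{every} measurement (not just the $kn$ correct ones), and uses the chain ``$\ati$ outputs $1$ $\Rightarrow$ $\ti_{\gamma-2\epsilon}$ holds with probability $\geq 1-2\delta$ $\Rightarrow$ next $\ati_{\gamma-3\epsilon}$ outputs $1$ with probability $\geq 1-3\delta$'' from \Cref{cor:ati_thresimp}; with $\epsilon=\gamma/(100\,B_X kn)$ the threshold stays above $94\gamma/100$ throughout. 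This is a bookkeeping difference rather than a gap, and your overall error budget and parameter choices match the paper's.
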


We then obtain the following corollary from \Cref{thm:last_hybrid_extract}:
\begin{corollary}
\label{cor:hyb0_extract}
Assuming the subexponential post-quantum hardness security of $\lwe_{n,m,q, B_L}$,
     in Game 0, for any QPT $\A$ with auxiliary quantum input, there exists some negligible $\negl(\cdot)$, it holds that:
    \begin{align*}
    \Pr\left[ \begin{array}{cc}
         &\Ext(\rho_\delete, \pk) \to (\bx_{1}, \cdots \bx_{k}):  \\
         &  \bx_i \in \mathrm{INV}(\td_i, b\in\{0,1\}, \by_i)
    \end{array}         
    \vert \ati_{\cP,\cD_0, 1/2+\gamma}^{\epsilon,\delta}(\rho_\delete) = 1 \right] \geq  1-\negl(\lambda)
    \end{align*}
\end{corollary}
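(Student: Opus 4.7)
The plan is to deduce Corollary \ref{cor:hyb0_extract} from Theorem \ref{thm:last_hybrid_extract} by transporting the extraction guarantee from Game $2.k$ back to Game $0$ using the game-sequence already established via Claims \ref{claim:game0_1} and \ref{claim:game1_2}. The key move is to package the ATI measurement, the extractor $\Ext$, and a public verification of the extracted preimages into a single efficient binary-outcome procedure whose output distribution is preserved (up to negligible loss) across computationally indistinguishable distributions of $\mpk$.

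First, I would define a combined QPT procedure $\mathcal{M}$, acting on the adversary's post-deletion register $\rho_\delete$ together with the classical public information $\pk = \{\bA_i,\bu_i,\by_i\}_{i \in [k]}$, as follows: (i) apply $\ati^{\epsilon,\delta}_{\cP,\cD,1/2+\gamma}(\rho_\delete)$; (ii) if the outcome is $0$, output $0$; (iii) otherwise run $\Ext$ on the post-measurement state to obtain candidates $(\bx_1,\dots,\bx_k)$, and output $1$ iff for every $i \in [k]$ there exists a bit $b_i \in \{0,1\}$ such that $\bx_i \in [B_X]^n$ and $\lVert \by_i - \bx_i\bA_i - b_i\bu_i\rVert \leq B_P\sqrt{m}$. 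This last check uses only the public $(\bA_i,\bu_i,\by_i)$ and is exactly the NTCF's $\mathrm{CHK}_{\mathcal{F}}$ criterion (see \Cref{sec:ntcf_from_lwe}); in particular it does not use the trapdoor $\td_i$. By \Cref{thm:last_hybrid_extract} and the NTCF support property, in Game $2.k$ we have
\begin{equation*}
\Pr[\mathcal{M}=1 \text{ in Game }2.k] \;\geq\; \Pr[\ati = 1 \text{ in Game }2.k] - \negl(\lambda).
\end{equation*}

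Next, observe that $\mathcal{M}$ runs in total time $O(T \cdot kn B_X \cdot \poly(\log(1/\delta),1/\epsilon))$, which, under the parameter choice of \Cref{sec:scheme_parameters} (with $B_X$ quasi-polynomial and $\delta,\epsilon$ inverse super-polynomial), is sub-exponential in $\lambda$. Since Claims \ref{claim:game0_1} and \ref{claim:game1_2} show that $\mpk$ in Game $0$ is statistically indistinguishable from that in Game $1$ and is then computationally indistinguishable from that in Game $2.k$ under the sub-exponential hardness of $\lwe_{n,m,q,B_L}$, the binary output of $\mathcal{M}$ (which is an efficient quantum function of $\mpk$ and the adversary's state) can differ by at most a negligible amount across these games. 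Formally, applying \Cref{cor:ati_computational_indistinguishable}-style reasoning to the combined POVM $\mathcal{M}$ yields
\begin{equation*}
\bigl|\Pr[\mathcal{M} = 1 \text{ in Game }0] - \Pr[\mathcal{M} = 1 \text{ in Game }2.k]\bigr| \;\leq\; \negl(\lambda),
\end{equation*}
and similarly for the marginal event $\{\ati = 1\}$.

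Combining these, $\Pr[\mathcal{M}=1 \text{ in Game }0] \geq \Pr[\ati=1 \text{ in Game }0] - \negl(\lambda)$. Since $\{\mathcal{M}=1\}$ is contained in the joint event $\{\ati=1\} \wedge \{\Ext \text{ produces valid preimages}\}$, it follows that
\begin{equation*}
\Pr\bigl[\ati=1 \wedge \Ext \text{ fails to output valid preimages in Game }0\bigr] \leq \negl(\lambda),
\end{equation*}
which, after dividing by the (assumed noticeable, otherwise the statement is vacuous) probability $\Pr[\ati=1]$, yields exactly the conditional probability bound claimed in the corollary.

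The main obstacle, and the reason the framing must route through the combined measurement $\mathcal{M}$ rather than through $\Ext$ alone, is that extraction success is not a priori an efficiently checkable event --- one might naively worry that one must know the ``true'' $\bx_{i,b}$ values to verify correctness. This is sidestepped precisely by the public NTCF range check: in every game (including Game $0$, where $\by_i$ has two preimages), a candidate $\bx_i$ is a ``valid'' preimage iff it passes $\mathrm{CHK}_{\mathcal{F}}$ against one of $b_i \in \{0,1\}$, and this check is polynomial-time from $\pk$ alone. Once one checks that $\mathcal{M}$ is genuinely efficient in the sub-exponential sense required by our LWE assumption --- which hinges on the quasi-polynomial bound on $B_X$ from \Cref{sec:scheme_parameters} --- the rest is a direct indistinguishability chain.
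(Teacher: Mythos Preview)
Your proof is correct and, in fact, cleaner than the paper's argument. Both proofs transport the extraction guarantee from Game $2.k$ back to Game $0$ via the indistinguishability chain of Claims~\ref{claim:game0_1}--\ref{claim:game1_2}, but they package the argument differently.

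The paper proceeds by a conditional-probability chain: it first invokes \Cref{cor:ati_computational_indistinguishable} (together with the almost-projective property of $\ati$ from \Cref{cor:ati_thresimp}) to argue that, on the post-measurement state after $\ati_{\cD_0}$ outputs~$1$, the measurement $\ati_{\cD_{2.k}}$ would also output~$1$ with overwhelming probability; it then composes this with Theorem~\ref{thm:last_hybrid_extract} via an abstract ``$\Pr[B\mid A]\ge 1-\negl$ and $\Pr[C\mid B]\ge 1-\negl$ imply $\Pr[C\mid A]\ge 1-\negl$'' lemma. This route leans on the sequential behavior of $\ati$.

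Your route is more elementary: you bundle $\ati$, the extractor, and the public NTCF range check into a single sub-exponential-time binary test $\mathcal{M}$, and then use only the basic fact that computationally indistinguishable inputs (here, the $\mpk$'s) yield negligibly-close outputs for any efficient procedure. This avoids reasoning about two $\ati$'s applied in sequence. The observation that extraction success can be \emph{publicly} verified via $\mathrm{CHK}_{\mathcal{F}}$ is exactly the right device to make $\mathcal{M}$ efficient, and it lines up with the corollary's $\mathrm{INV}$ condition in Game~$0$.

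One small wording fix: your phrase ``applying \Cref{cor:ati_computational_indistinguishable}-style reasoning to the combined POVM $\mathcal{M}$'' is misleading, since that corollary is specific to the $\ati$ construction. What you actually need (and what you use) is just the vanilla statement that $\mathcal{M}\circ\A$ is a sub-exponential-time distinguisher for the $\mpk$ distributions, so its acceptance probability is preserved up to $\negl(\lambda)$ under the sub-exponential LWE assumption---no $\ati$-specific machinery required at that step.
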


We refer to \Cref{sec:missing_proofs} for proof details.

\subsubsection{Proof Outline Using Inefficient Measurement $\ti$}
\label{sec:extract_proof_outline}
We first give a high level description of the proof for \Cref{thm:last_hybrid_extract}.

As discussed above, we need to show the following: given that we have a good decryptor $\rho_\delete$, we will have to extract (from the remaining state after the "good-decryptor" measurement") all preimages $\{\bx_{i, b_i}\}_{i \in [k]}, b_i = 0 \text{ or } 1$ for $\{\by_i\}_{i \in [k]}$ with probability negligibly close to 1. 

The procedure will be a quantum analogue of a LWE search-to-decision reduction with extraction probability close to 1, where the input distributions to the search-game adversary are not exactly LWE versus random, but statistically close to such.

\paragraph{Three $\ti$ Distributions} 
For clarity, we consider a few simplifications:
\begin{itemize}
    \item We use the ideal, projective threshold implementation $\ti$ in our algorithm. 
    \item We consider the number of instances/repetitions in the protocol to be $k = 1$.
\end{itemize}
In the full proof, we will use the efficient $\ati$ and 
polynomial $k$, via similar arguments. 

 To prove our theorem, it suffices to show that  $\Pr[\text{Extraction of} \bx] \geq 1-\negl(\lambda)$ in a world
 where we are given the following condition at the beginning of the algorithm:
$$ \ti_{\gamma+1/2}(\cP_{\cD_\ct}) \cdot \rho_\delete = 1$$
where $\cP_{\cD_\ct}$ is the following mixture of projections, acting on the state $\rho_\delete$:
\begin{itemize}
      \item Compute $\ct_0 \leftarrow \Enc(\pk, 0)$ \emph{in Game $2.k$}. $\ct_0$ will have the format of \Cref{eqn:ct_game2} with $\mu = 0$ and $k = 1$:
         \begin{align*}
        \ct_0 &= \begin{bmatrix} \bu \\\bA 
        \\ \bx\bA + \be' +  b \cdot \bu \end{bmatrix} \cdot \bR + \bE \\
        & = \begin{bmatrix} \bu \bR \\\bA \bR 
        \\ \bx\bA \bR + \be'\bR +  b \cdot \bu \bR + \be'' \end{bmatrix} 
    \end{align*}
      where $(\bA, \bu)$ are already given in the public key $\pk$; the rest is sampling $\bR \gets \{0,1\}^{m \times m}$ and the $(n+2)$-th row in $\bE$ is $\be_i'' \gets \cD_{\Z_q^m, B_{P'}}$. Note that $b = 0 \text{ or } 1,$ is an adversarially chosen bit that come in the $\by$ part of $\pk$.

       \item Compute $\ct_1 \gets 
       \mathcal{C}$, a random ciphertext from the possible space of all ciphertexts for 1-bit messages.
       
       \item Sample a uniform bit $\ell \leftarrow \{0,1\}$ \footnote{To distinguish from the bit $b$ in $\ct$, we use the notation $\ell$ for this random coin}. 
       \item Run the quantum decryptor $\rho$ on input $\ct_\ell$. Check whether the outcome is $\ell$. If so, output $1$, otherwise output $0$.
\end{itemize}

We then consider a second threshold implementation $\ti_{1/2+\gamma}(\cP_{\cD(g_i)})$. $\cP_{\cD(g_i)}$  is the  following mixture of measurements (we denote the following distribution we sample from as $\cD(g_i)$.):
\begin{itemize}

\item Let $g_i$ be a guess for the $i$-th entry in vector $\bx$.

\item Sample a random $\bc \gets \Z_q^{1 \times m}$, and let matrix $\bC \in \Z_q^{n \times m}$ be a matrix where the $i$-th row is $\bc$ and the rest of rows are $\mathbf{0}$'s.


\item Prepare $\ct_0$ as follows:
     \begin{align*}
        &\ct_0 = \begin{bmatrix} \bu \bR \\ {\color{red}\bA \bR + \bC}
        \\ \bx\bA \bR + \be'\bR +  b \cdot \bu \bR + \be''  + {\color{red} g_i \cdot \bc} \end{bmatrix} 
    \end{align*}
        where $(\bA, \bu)$ are already given in the public key $\pk$; $\bR \gets \{0,1\}^{m \times m}$ and $\be_i'' \gets \cD_{\Z_q^m, B_{P'}}$. Note that $b = 0 \text{ or } 1,$ is an adversarially chosen bit that comes in the $\by$ part of $\pk$.

\item Compute $\ct_1 \gets 
       \mathcal{C}$, a random ciphertext from the possible space of all ciphertexts for 1-bit messages.
       In our case, that is: $\ct_1 \gets \Z_q^{(n+2)\times m}$. 

\item Flip a bit $\ell \gets \{0,1\}$.

\item Run the quantum distinguisher $\rho$ on input $\ct_\ell$. Check whether the outcome is $\ell$. If so, output $1$, otherwise output $0$.
\end{itemize}

We finally consider a third threshold implementation, we call $\ti_{1/2+\gamma}(\cP_{\cD_\unif})$:

\begin{itemize}
    \item Compute both $\ct_0, \ct_1 \gets 
       \mathcal{C}$, as random ciphertexts from the possible space of all ciphertexts for 1-bit messages. In our case, that is: $\ct_0, \ct_1 \gets \Z_q^{(n+2)\times m}$. 

\item Flip a bit $\ell \gets \{0,1\}$.

\item Run the quantum distinguisher $\rho$ on input $ \ct_\ell$. Check whether the outcome is $\ell$. If so, output $1$, otherwise output $0$.
\end{itemize}

\paragraph{The Extraction Algorithm}
We describe the extraction algorithm as follows. It takes input $\pk = (\bA, \bu)$ and a quantum state $\rho_\delete$.

\begin{itemize}
    \item Set our guess for $\bx$ as $\bx' = \mathbf{0} \in \Z^{1\times n}_q$  and $\bx_i'$ is the $i$-th entry of $\bx'$. 
    \item For $i = 1, 2,\cdots, n$:
\begin{itemize}
    \item For $g_i \in [B_\bx]$, where $ [B_\bx]$ is the possible value range for $\bx_i \in \Z_q$:
\begin{enumerate}
\item Let $\rho_\delete$ be the current state from the quantum distinguisher. 
\item Run $\ti_{1/2+\gamma}(\cP_{\cD(g_i)})$ on $\rho_\delete$ wih respect to $\pk$.
\item If $\ti_{1/2+\gamma}(\cP_{\cD(g_i)})$ outputs 1, then set $\bx_i' := g_i$ and let $i := i+1$.

\item If $\ti_{1/2+\gamma}(\cP_{\cD(g_i)})$ outputs 0, the let $g_i := g_i+1$ and go to step 1. 
\end{enumerate}
\end{itemize}
    \item Output $\bx'$. 
\end{itemize}

\paragraph{Measurement-preserving Properties in the Extraction Algorithm}
We will show that the output in the above algorithm $\bx'$ is equal to $\bx$ in the $\bx\bA + \be'$ with probability $(1-\negl(\lambda))$.

First recall that at the beginning of our algorithm,  we are given the following condition: 
$$ \Pr[\ti_{\gamma+1/2}(\cP_{\cD_\ct}) \cdot \rho_\delete \to 1] = 1$$

We then prove the following claims:
\begin{enumerate}
    \item In the above algorithm, when our guess $g_i = \bx_i$, we have:
    \begin{itemize}
        \item $\cD(g_i)$ is statistically indistinguishable from $\cD_\ct$. 

        \item $\ti_{1/2+\gamma}(\cP_{\cD(g_i)})$ outputs 1 on the current state $\rho_\delete$ with $1$ probability.
        
    \end{itemize}

     \item In the above algorithm, when our guess $g_i \neq \bx_i$, we have:
    \begin{itemize}
        \item $\cD(g_i)$ is statistically indistinguishable from $\cD_\unif$.

        \item $\ti_{1/2+\gamma}(\cP_{\cD(g_i)})$ outputs 0 on the current state $\rho_\delete$ with $1$ probability.
        
    \end{itemize}
\end{enumerate}

Combining the above arguments, we can conclude that the algorithm outputs $\bx$ with probability $1$:
whenever $\ti_{1/2+\gamma}(\cP_{\cD(g_i)})$ outputs 1, we know our guess is correct with probability $1$ and our state is undisturbed, so we can move on to guessing the next entry; whenever  $\ti_{1/2+\gamma}(\cP_{\cD(g_i)})$ outputs 0, we know our guess is incorrect with probability 1; therefore our state is undisturbed, and we can move on to the next value for our guess. These invariants preserve throughout the above algorithm, because of the projective property of $\ti$ and gentle measurement in the setting of no accumulating errors. 

To make the above algorithm efficient, we use an approximate version of $\ti$, called $\ati$ (see\Cref{sec:ati} for detailed prelims.). When we move on to using an approximate version of $\ti$, we will have to be more carefull: negligible errors will accumulate additively on the measurement outcomes $\rho_\delete$ after every $\ti$-measurement, but will stay negligible throughout the algorithm with our choice of parameters (see \Cref{sec:ful_extraction_analysis} for details) by the properties of ATI in \Cref{sec:ati}. By these guarantees, we obtain $\bx$ with $1$ probability in the end.


\ifllncs
\subsubsection{Full Extraction Algorithm and Analysis}
We refer the readers to \Cref{sec:ful_extraction_analysis} for the full algorithm and analysis.
\else
We refer the readers to the following section, \Cref{sec:ful_extraction_analysis} for the full algorithm and analysis.
\fi



\subsection{Reduction to Parallel NTCF Soundness}
\label{sec:reduction_to_ntcf}
Now we prove \Cref{lem:hyb2_negligible} and \Cref{thm:security} follows accordingly.

Now we can build a reduction to break the security of the parallel-repeated NTCF-based protocol in \Cref{thm:parallel_soundness_2of2} as follows:

We plug in the condition that \textbf{we have to check the validity of the deletion certificate}.   Recall our notations for the events happening in Hybrid 1:  
\begin{itemize}
    \item We denote the event that the adversary hands in a valid deletion cerficate, i.e. $\VerDel(\pk,\td, \cert) =  \valid$, as $\mathsf{Cert Pass}$.

    \item We denote the event that
    test $\ati^{\epsilon, \delta}_{\cP, D, \gamma+1/2}(\rho_{\delete})$ outputs 1 with respect to $\mu$ and $\pk$, as $\mathsf{Good Decryptor}$.

    \item We denote the event that we can obtain the preimages $\{\bx_{i}\}_{i \in [k]} \in \{\mathrm{INV}(\td_i, b\in\{0,1\}, \by_i)\}_{i \in [k]}$ from the state $\rho_\delete'$ as $\Ext$.
\end{itemize}

Suppose the adversary wins the $\gamma$-strong SKL-PKE game in \Cref{def:strong_skl}, then we must have $\Pr[\mathsf{Cert Pass} \wedge \mathsf{Good Decryptor}] \geq 1/p$ for some noticeable $1/p$. By \Cref{cor:hyb0_extract}, we have \ifllncs $\Pr[\Ext | \mathsf{Good Decryptor}] \\ \geq 1 - \negl(\lambda)$ \else $\Pr[\Ext | \mathsf{Good Decryptor}] \geq 1 - \negl(\lambda)$\fi. 
Therefore we have $\Pr[\mathsf{Cert Pass} \wedge \Ext] \geq 1/p- \negl'(\lambda)$ for some negligible $\negl'(\lambda)$. The relation can be easily observed from a Venn diagram; we deduce it formally in \Cref{sec:prob_relations}.


Now we can build a reduction to break the security of the parallel-repeated NTCF-based protocol in \Cref{thm:parallel_soundness_2of2} as follows:
the reduction plays as the challenger in the strong-SKL game. It passes the $\{f_{i,b}\}_{i\in[k], b \in \{0,1\}}$ from the NTCF challenger to the adversary $\A$ as $\mpk$.

It takes $\{\by_i\}_{i \in [k]}$ and the deletion certificate $\{(c_i, \bd_i)\}_{i \in [k]}$ from the adversary and runs the $\gamma$-good test on the its post-deletion state $\rho_\delete$. If the test fails, then abort; if the test passes, run the extractor $\Ext$ (from \Cref{thm:last_hybrid_extract}) on the post-test state $\rho_\delete'$ to obtain $\{\bx_{i}\}_{i \in [k]}$.
Then it gives $\{\by_i\, c_i, \bd_i, \bx_{i}\}_{i \in [k]}$ all to the NTCF-protocol challenger. By the above argument, if $\A$ wins with probability $1/p$, then the reduction wins with probability $1/p- \negl(\lambda)$.

\section{Proof for \Cref{thm:last_hybrid_extract}: Quantum LWE Search-to-Decision Algorithm with Almost-Perfect Extraction (Noisy Version)}
\label{sec:ful_extraction_analysis}
In this section, we provide a full extraction algorithm in Game $2.k$ for extracting the $\{\bx_{i,b_i}\}_{i \in [k]}$  and its analysis, to prove \Cref{thm:last_hybrid_extract}. We call it "noisy" search-to-decision algorithm because the distribution in \Cref{thm:last_hybrid_extract} is not exactly LWE versus random, but statistically close to.

In \Cref{sec:lwe_search_to_decision} we provide a cleaner version of the algorithm and analysis where the input distributions are real LWE instances versus real uniform random instances, so that we have a first quantum LWE search-to-decision with almost perfection extraction, even with auxiliary quantum inputs.
We believe such a statement for (plain) LWE search-to-decision reduction
will be of independent use.

\noindent 
\textbf{Notations}
For clarity, we take away the index $b_i$ in the vector $\bx_{i, b_i}$ since obtaining $\bx_{i, 0}$ or $\bx_{i, 1}$ does not affect our analysis.
From now on, the subscripts $i,j$ in $\bx_{i,j}$ represents the $j$-th entry in the $i$-th vector $\bx_i$.  

\subsection{Three Distributions for $\ati$}
Similar to the proof outline in \Cref{sec:extract_proof_outline}, we first describe three $\ati$'s with respect to different distributions $\cD$ (and accordingly, the mixture of 
projections $\cP$).

\paragraph{$\ati$ for $\cD_{2.k}$:}

 $\ati_{\cP, \cD_{2.k},1/2+\gamma}^{\epsilon,\delta}$ is the approximate threshold implementation algorithm for the following mixture of projections $\cP_{\cD_\ct}$, acting on the state $\rho_\delete$:
\begin{itemize}
      \item Compute $\ct_0 \leftarrow \Enc(\pk, 0)$ \emph{in Game $2.k$}. $\ct_0$ will have the format of \Cref{eqn:ct_game2} with $\mu = 0$ :
         \begin{align*}
        \ct_0 &   = \begin{bmatrix} \bu_1  \\\bA_1 \\ \\ \cdots 
        \\  \bu_k\\
        \bA_k 
        \\ \sum_{i \in [k]}(\bx_{i}\bA_i + \be_{i}' +  b_{i} \cdot \bu_i) \end{bmatrix} \cdot \bR + \bE  \\
        & = \begin{bmatrix} \bu_1 \cdot  \bR \\\bA_1 \cdot \bR \\ \\ \cdots 
        \\  \bu_k \cdot\bR \\
        \bA_k \cdot\bR 
        \\ \sum_i( \bx_{i}\bA_i \bR + \be_i'\bR +  b_i \cdot \bu_i \bR )+ \be'' \end{bmatrix} 
    \end{align*}
      where $\{\bA_i,\bu_i \}_{i\in [k]}$  are given in the public key $\pk$; $\bR \gets \{0,1\}^{m \times m}$ and the $(n+2)$-th row in $\bE$ is $\be'' \gets \cD_{\Z_q^m, B_{P'}}$. Note that $b_i = 0 \text{ or } 1,$ is an adversarially chosen bit that come in the $\by$ part of $\pk$.

       \item Compute $\ct_1 \gets 
       \mathcal{C}$, a random ciphertext from the possible space of all ciphertexts for 1-bit messages.  In our case, that is: $\ct_1 \gets \Z_q^{(nk+k+1)\times m}$. 
       
       \item Sample a uniform bit $b \leftarrow \{0,1\}$
       \item Run the quantum decryptor $\rho$ on input $\ct_b$. Check whether the outcome is $b$. If so, output $1$, otherwise output $0$.
\end{itemize}

\paragraph{$\ati$ for $\cD(g_{\ell,j})$:}
We then consider a second approximate threshold implementation $\ati_{\cP,\cD(g_{\ell,j}), 1/2+\gamma}^{\epsilon,\delta}$. $\cP_{\cD(g_{\ell,j})}$  is the  following mixture of measurements (we denote the following distribution we sample from as $\cD(g_{\ell,j})$.):
\begin{itemize}

\item Let $g_{\ell,j}$ be a guess for the $j$-th entry in vector $\bx_\ell \in (\bx_1, \cdots, \bx_k)$.

\item Sample a random $\bc \gets \Z_q^{1 \times m}$, and let matrix $\bC \in \Z_q^{n \times m}$ be a matrix where the $j$-th row is $\bc$ and the rest of rows are $\mathbf{0}$'s.


\item Prepare $\ct_0$ as follows:
     \begin{align}
     \label{eqn:ct_g_uniform}
        &\ct_0 = \begin{bmatrix} \bu_1 \bR_1 \\ \bA_1\bR
        \\\cdots \\
        \bu_\ell\bR \\
        {\color{red}\bA_\ell \bR + \bC_\ell} \\
        \cdots\\
        \bu_k \bR \\
        \bA_k \bR \\
        \sum_{i \in [k]}(\bx_{i}\bA_i \bR + \be_i'\bR +  b_i \cdot \bu_i \bR )  +  \be''+ {\color{red} g_{\ell,j} \cdot \bc} \end{bmatrix} 
    \end{align}   

 where $\{\bA_i,\bu_i \}_{i\in [k]}$ are given in the public key $\pk$; $\bR \gets \{0,1\}^{m \times m}$ and the $(n+2)$-th row in $\bE$ is $\be'' \gets \cD_{\Z_q^m, B_{P'}}$. Note that $b_i = 0 \text{ or } 1,$ is an adversarially chosen bit that come in the $\by$ part of $\pk$.

\item Compute $\ct_1 \gets 
       \mathcal{C}$, a random ciphertext from the possible space of all ciphertexts for 1-bit messages.
       In our case, that is: $\ct_1 \gets \Z_q^{(nk+k+1)\times m}$. 

\item Flip a bit $b \gets \{0,1\}$.

\item Run the quantum distinguisher $\rho$ on input $\ct_b$. Check whether the outcome is $b$. If so, output $1$, otherwise output $0$.
\end{itemize}

\paragraph{$\ati$ for $\cD_\unif$:}
We finally consider a third threshold implementation, we call $\ati_{1/2+\gamma,\cP,\cD_\unif}^{\epsilon,\delta}$:

\begin{itemize}
    \item Compute both $\ct_0, \ct_1 \gets 
       \mathcal{C}$, as random ciphertexts from the possible space of all ciphertexts for 1-bit messages. In our case, that is: $\ct_0, \ct_1 \gets \Z_q^{(nk+k+1)\times m}$. 

\item Flip a bit $b \gets \{0,1\}$.

\item Run the quantum distinguisher $\rho$ on input $ \ct_b$. Check whether the outcome is $b$. If so, output $1$, otherwise output $0$.
\end{itemize}

\subsection{Extraction Algorithm}
\label{sec:extractor_algo}
We describe the extraction algorithm as follows. It takes input $\pk = (\{\bA_i, \bu_i\}_{i \in [k]}$ and a quantum state $\rho_\delete$, as well as parameters: threshold(inverse polynomial) $\gamma$, range for each entry in the secret $B_X$, timing parameter (inverse exponential) $\delta$. Note that the secret dimension $k \cdot n$ comes with $\pk$.

\begin{itemize}
    \item Let $\bx_{\ell,j}'$ be the register that stores the final guess for the $j$-th entry of $\bx_\ell$, initialized with all zeros. 

    \item For $\ell = 1, \cdots, k$:
\begin{itemize}
    \item For $j = 1, 2,\cdots, n$:
\begin{itemize}
    \item For $g_{\ell,j} \in [B_X]$, where $[B_X]$ is the possible value range for $\bx_{\ell,j} \in \Z_q$:
\begin{enumerate}
\item Let $\rho_\delete$ be the current state from the quantum distinguisher. 
\item Let $\gamma := \gamma - 3\epsilon$, where $\epsilon = \frac{\gamma}{100B_X kn}$.
\item Run $\ati_{1/2+\gamma,\cP,\cD(g_{\ell, j})}^{\epsilon,\delta}$ on $\rho_\delete$ wih respect to $\pk$. 
\item If $\ati_{1/2+\gamma,\cP,\cD(g_{\ell, j})}^{\epsilon,\delta}$ outputs 1, then set $\bx_{\ell,j}' := g_{\ell,j}$ and move on to the next coordinate, let $j := j+1$ if $j < n$, else let $\ell:=\ell+1, j = 1$.

\item If $\ati_{1/2+\gamma,\cP,\cD(g_{\ell, j})}^{\epsilon,\delta}$ outputs 0, the let $g_i := g_i+1$ and go to step 1. 
\end{enumerate}
\end{itemize}
\end{itemize}
    \item Output $\bx'$. 
\end{itemize}

\subsection{Analysis of the Extractor}
\label{sec:proof_extractor_noisy}
We make the following claims:

\begin{claim}
    \label{claim:gi_correct_distr}
    When the guess $g_{\ell,j} = \bx_{\ell,j}$,
the distributions $\cD_{2.k}, \cD(g_{\ell,j})$ are statistically close by distance $\eta_0 = \poly(k,m) \cdot (\frac{1}{q^n} + \frac{B_P}{B_{P'}})$.
\end{claim}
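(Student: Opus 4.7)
The plan is to show that when $g_{\ell,j} = \bx_{\ell,j}$, the two distributions differ only by a harmless shift in the $\bA_\ell \bR$ block, which is absorbed once we pass to an idealized uniform distribution. I begin with an algebraic rewriting: using the identity $\bx_\ell\bC_\ell = \bx_{\ell,j}\,\bc$ (which follows from $\bC_\ell$ being zero outside row $j$), the last row of the ciphertext in $\cD(g_{\ell,j})$ equals
\[
\sum_{i\neq \ell}\bigl(\bx_i \bA_i\bR + \be_i'\bR + b_i\,\bu_i\bR\bigr) + \bx_\ell\,(\bA_\ell\bR + \bC_\ell) + \be_\ell'\bR + b_\ell\,\bu_\ell\bR + \be''.
\]
Thus the entire output of $\cD(g_{\ell,j})$ has exactly the same functional form as that of $\cD_{2.k}$, except that the block $\bA_\ell\bR$ is replaced consistently, both in its own row and inside the last-row sum, by $\bA_\ell\bR + \bC_\ell$.

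Next I pass to an idealized distribution by invoking the Leftover Hash Lemma (Lemma~\ref{lem:LHL}) on the stacked matrix $\mathbf{M}\in\Z_q^{(nk+k)\times m}$ formed by interleaving all $\bu_i$ and $\bA_i$. Since $m=\Omega(n\log q)$ and $\bR\gets\{0,1\}^{m\times m}$, the product $\mathbf{M}\bR$ is within statistical distance $\poly(k,m)\cdot q^{-n}$ of a uniformly random $\mathbf{W}\in\Z_q^{(nk+k)\times m}$, obtained by a union bound over the $m$ columns. Because the shift by $\bC_\ell$ affects only one block, the idealized analogs of $\cD_{2.k}$ and $\cD(g_{\ell,j})$, obtained by replacing $\mathbf{M}\bR$ with $\mathbf{W}$, coincide as distributions: shifting the $\ell$-th uniform block by $\bC_\ell$ is distribution-preserving, and the accompanying modification to the last row is exactly the deterministic rewriting exhibited above.

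The remaining subtlety is that the last row carries the noise $\be'' + \sum_i \be_i'\bR$, which is correlated with $\bR$ and therefore not independent of the $\mathbf{M}\bR$ block in the idealized picture. To decouple them, I apply noise smudging (Lemma~\ref{lem:noise_flooding}) on the $\be''$ coordinate: by Claim~\ref{claim:bounded} each entry of $\sum_i \be_i'\bR$ is bounded in magnitude by $\poly(k,m)\cdot B_P$ with overwhelming probability, and since $\be''\gets\cD_{\Z_q^m,B_{P'}}$, shifting by such a bounded value changes the distribution by at most $\poly(k,m)\cdot B_P/B_{P'}$ in statistical distance. The smudging is applied identically in both $\cD_{2.k}$ and $\cD(g_{\ell,j})$, so the last-row noise can be replaced by a fresh, $\bR$-independent discrete Gaussian in both cases at equal cost.

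Combining the three steps via the triangle inequality yields the total statistical distance
\[
\eta_0 \;=\; \poly(k,m)\cdot\Bigl(\frac{1}{q^n} + \frac{B_P}{B_{P'}}\Bigr),
\]
as claimed. The step I expect to require the most care is the ordering of these operations: one must first smudge the noise so that, conditioned on $\mathbf{M}\bR$, the last row becomes a deterministic function of the uniform block plus a fresh Gaussian, and only then invoke LHL to replace $\mathbf{M}\bR$ with uniform. Executed in the opposite order, the residual coupling between $\be_i'\bR$ and $\mathbf{M}\bR$ obstructs the shift-invariance argument; with the correct ordering, the algebra collapses the two distributions cleanly.
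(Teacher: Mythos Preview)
Your proposal is correct and takes essentially the same approach as the paper: both use the algebraic identity $\bx_\ell\bC_\ell = \bx_{\ell,j}\cdot\bc$ when $g_{\ell,j}=\bx_{\ell,j}$, then apply noise smudging (Lemma~\ref{lem:noise_flooding}) followed by the Leftover Hash Lemma to pass to an idealized uniform distribution where the shift by $\bC_\ell$ is absorbed. Your explicit remark about the required ordering (smudge before LHL to decouple $\be_i'\bR$ from $\bR$) is a point the paper handles in the same way but does not highlight as clearly.
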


\begin{proof}
   Note that the two distributions are the same except on how they sample $\ct_0$.
   
     The ciphertext distribution for $\ct_0$ in $\cD_{2.k}$  is statistically close to the following distribution:
      \begin{align}
      \label{eqn:ct_after_lhl}
        &\ct_0 = \begin{bmatrix} 
        {\color{red}\bu_1'}  \\{\color{red} \bA_1'} \\ \\ \cdots 
        \\   {\color{red}  \bu_\ell' }\\
       {\color{red} \bA_\ell'} \\ \cdots
         \\ {\color{red} \bu_k'}\\
       {\color{red} \bA_k' } 
        \\{\color{red} \sum_{i} (\bx_{i}\bA_i' + \be_{i,0}' + b_i \cdot \bu_i' )}  
        \end{bmatrix} 
    \end{align}
where $\bA_i' \gets \Z_q^{n\times m}, \bu_i' \gets \Z_q^{1 \times m}$ are uniform random and given in the public key $\pk$;; $\be_{i,0}' \gets \cD_{\Z_q^m, B_P'}$, where $B_P'/B_P$ is superpolynomial, for all $i \in [k]$;  $b_i = 0 \text{ or } 1, i \in [k]$ are some arbitrary, adversarially chosen bits.

We can view the distribution change as two differences:
We can view the first change as: 
 $\sum \by_i\cdot \bR + \bE_{nk+k+1} = \sum (\bx_{i,b_i}\bA_i\bR + \be_{i}'\bR + \be_i'' + b_i \cdot \bu_i\bR)$ in \Cref{eqn:ct_game2} to $\sum (\bx_i\bA_i\bR + \be_{i,0}')$, where both noise $\be_{i,0}'$ and $\be_i''$ are sampled from  $\cD_{\Z_q^m, B_P'}$.
Since $B_{P'}/B_P$($B_P$ for the Gaussian parameter of the error $\be_i'$ in $\by_j$) is superpolynomial, we can apply noise-flooding/smudging (\Cref{lem:noise_flooding})  and the two are statistically close by $k \cdot B_P/B_{P'}$.

The second difference is
from using $(\bA_i\bR, \bu_i \bR, \sum (\bx_{i,b_i}\bA_i\bR + \be_{i,0}' + b_i \cdot \bu_i\bR) )$
to using $(\bA_i' \gets \Z_q^{n \times m}, \bu_i' \gets \Z_q^m, \sum (\bx_{i,b_i}\bA_i' + \be_{i,0}' + b_i \cdot \bu_i'))$. These two are $k \cdot q^{-n}$-close by the leftover hash lemma. 

We then show that $\ct_0$ in $\cD(g_{\ell,j})$ is also close to this distribution.
First we also apply noise flooding to replace every $\be_i'\bR + \be_i''$ with $\be_{i,0}'$.
We next replace $\bA_i\bR, \bu_i\bR$'s by the LHL with random $\bA_i', \bu_i'$.
We can ignore $\sum_i b_i \cdot \bu_i'$ in the last row from our distribution, because $b_i$
is known to the adversary and they are the same in both distributions.

Then we observe that when $g_{\ell,j} = \bx_{\ell,j}$, we let $\bA_\ell'' = \bA_\ell' + \bC$ where $\bC$ is everywhere 0 except the $j$-th row being uniformly random $\bc$. We also have $\bx_{\ell,j} \bA_\ell''  + \be_{i,0}' + g_{\ell,j} \cdot \bc = [(\bx_{\ell,j}(\bA_{\ell,1,j}''+\bc_{1}) + \sum_{ i\neq j}\bx_{\ell,j}(\bA_{\ell,1,i}''+0), \cdots, (\bx_{\ell,j}(\bA_{\ell,m,j}''+\bc_{m}) + \sum_{i\neq i=j}\bx_{\ell,j}(\bA_{\ell,m,i}''+0)] + \be_{i,0}' = \bx_{\ell,j} \bA_\ell'' + \be_{i,0}'$, where $\bA_{\ell,x,y}''$ denotes the entry in $x$-th column and $y$-th row of $\bA_{\ell}''$. 

\end{proof}

\begin{claim}
    \label{claim:gi_incorrect_distr}
    When the guess $g_{\ell,j} \neq \bx_{\ell,j}$,
the distributions $\cD_{2.k}, \cD_\unif$ are statistically close by distance $\eta_1 = \poly(k,m)\cdot \frac{1}{q^n}$.
\end{claim}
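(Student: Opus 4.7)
The plan is to show that under the hypothesis $g_{\ell,j}\neq \bx_{\ell,j}$, the ciphertext $\ct_0$ produced in $\cD(g_{\ell,j})$ is statistically close to uniform on $\Z_q^{(nk+k+1)\times m}$, which matches $\ct_0$ in $\cD_\unif$; since the remaining components ($\ct_1$, the challenge bit, and the decryptor test) are identical in the two distributions, this yields the claimed bound. (We read the claim as comparing $\cD(g_{\ell,j})$ with $\cD_\unif$, since neither $\cD_{2.k}$ nor $\cD_\unif$ actually depends on the guess $g_{\ell,j}$, so the displayed ``$\cD_{2.k}$'' appears to be a typographical slip.) The key structural observation is that $\delta g := g_{\ell,j}-\bx_{\ell,j}$ is a \emph{nonzero} element of $\Z_q$, hence invertible because $q$ is prime.

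First I would perform noise-flooding (\Cref{lem:noise_flooding}) on the Gaussian term in the last row of $\ct_0$: since $B_{P'}/B_P$ is super-polynomial, $\sum_i\be_i'\bR+\be''$ is $\poly(k,m)\cdot B_P/B_{P'}$-close to a fresh $\cD_{\Z_q^m,B_{P'}}$ sample independent of $\bR$. Then I would apply the Leftover Hash Lemma (\Cref{lem:LHL}) to the matrix $\mathbf{M} := [\bu_1;\bA_1;\ldots;\bu_k;\bA_k]\in\Z_q^{(nk+k)\times m}$, whose entries are fully uniform in Game $2.k$, to replace $\mathbf{M}\bR$ by a fresh uniform $\mathbf{U}\in\Z_q^{(nk+k)\times m}$ independent of $\bc$ and of the (now decoupled) noise, incurring statistical distance at most $\poly(k,m)/q^n$. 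At this point the top $nk+k$ rows of $\ct_0$ are $\mathbf{U}$ with the additive shift $\bC_\ell$ in the $\ell$-th $\bA$-block, and the last row is a known linear combination of rows of $\mathbf{U}$ (with coefficients $\bx_i, b_i$) plus noise plus $g_{\ell,j}\cdot\bc$.

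Next I would apply a bijective row-operation on the last row of $\ct_0$ (a bijection on the last row for each fixing of the other rows, hence preserving statistical distance): subtract $\sum_i b_i$ times the $i$-th $\bu$-row, $\sum_{i\neq\ell}\bx_i$ times the $i$-th $\bA$-block row, and $\bx_\ell$ times the $\ell$-th (shifted) $\bA$-block row. A short computation shows the transformed last row becomes $\text{noise}+\delta g\cdot\bc$, because the $\bx_\ell\bC_\ell=\bx_{\ell,j}\bc$ contribution from the bijection cancels the $\bx_{\ell,j}\bc$ buried in $g_{\ell,j}\bc=(\bx_{\ell,j}+\delta g)\bc$.

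The hard part will be dealing with the fact that $\bc$ appears twice---in the top rows via $\bC_\ell$ and in the transformed last row via $\delta g\cdot\bc$---so the naive ``marginalize and declare uniform'' fails. Here the invertibility of $\delta g$ saves the day: a convolution argument shows that marginalizing first over $\mathbf{U}$ makes the top $nk+k$ rows uniform and independent of $\bc$, and then marginalizing over $\bc$ convolves the noise with $\delta g\cdot\bc$ (uniform on $\Z_q^{1\times m}$ since $\delta g$ is a unit), yielding a uniform last row independent of the top rows. Inverting the bijection, $\ct_0$ itself is then within $\poly(k,m)\cdot(1/q^n+B_P/B_{P'})=\poly(k,m)/q^n$ of uniform (absorbing the super-polynomially small flooding term), which gives the claim. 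Unlike \Cref{claim:gi_correct_distr}, noise flooding is not structurally essential here---the uniform $\delta g\cdot\bc$ would already dominate---but it is the cleanest way to formally decouple the noise's dependence on $\bR$ before invoking LHL.
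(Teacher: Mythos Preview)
Your structural argument is sound and your handling of the double appearance of $\bc$ via the bijection and then the two-stage marginalization is actually more careful than the paper's own proof. You also correctly diagnose the typo (the claim is really about $\cD(g_{\ell,j})$ versus $\cD_{\unif}$).

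There is, however, a quantitative gap in your last step. You write $\poly(k,m)\cdot(1/q^n+B_P/B_{P'})=\poly(k,m)/q^n$, ``absorbing'' the flooding term. That equality is false: $1/q^n$ is exponentially small, whereas the parameters only guarantee that $B_{P'}/B_P$ is super-polynomial, so $B_P/B_{P'}$ can be much larger than $1/q^n$. Your argument thus proves the $\eta_0$ bound of \Cref{claim:gi_correct_distr}, not the tighter $\eta_1=\poly(k,m)/q^n$ stated here. This distinction is not cosmetic: the downstream extractor analysis (\Cref{claim:uniform_close_ati} and \Cref{lem:extractor_correctness}) multiplies $\eta_1$ by $B_X\cdot kn/\epsilon$, a quasi-polynomial factor, and explicitly relies on $\eta_1$ being exponentially small so that the accumulated trace-distance loss over $O(B_X kn)$ incorrect guesses stays negligible.

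The paper's approach differs precisely on this point: it \emph{skips} noise flooding. The observation is that once you have rewritten the last row as (a linear combination of the top rows)${}+\sum_i\be_i'\bR+\be''+\delta g\cdot\bc$, the term $\delta g\cdot\bc$ is uniform and independent of $\bR$, so it masks $\sum_i\be_i'\bR+\be''$ directly---there is no need to first decouple the noise from $\bR$. Concretely, do the change of variable $\bc\mapsto \bc'':=\delta g\cdot\bc + \sum_i\be_i'\bR+\be''$ (a bijection in $\bc$ for each fixed $\bR$); then $\bc''$ is uniform and independent of $\bR$, the last row (after your bijection) equals $\bc''$, and the top rows become $\mathbf{M}'\bR$ plus a shift determined by $\bc''$, where $\mathbf{M}'$ differs from $\mathbf{M}$ only by a fixed translate of one row. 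Now LHL alone gives the $\poly(k,m)/q^n$ bound. You already anticipated this (``noise flooding is not structurally essential here''); carrying that remark through is what actually delivers the stated $\eta_1$.
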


\begin{proof}
the two distributions are the same except on how they sample $\ct_0$.
    $\ct_0$ in $\cD_\unif$ is uniformly sampled from $\Z_q^{(nk+k+1)\times m}$. It remains to show that $\ct_0$ in $\cD(g_{\ell,j})$ is close to this distribution.
    
    Different from \Cref{claim:gi_correct_distr}, we do not need to apply noise flooding but can apply LHL to $\ct_0$ shown in \Cref{eqn:ct_g_uniform} directly because the last row is masked by uniformly random vector and the error $\be_i'\bR$ is masked by uniform as well. Let $\bA_\ell'' = \bA_\ell' +\bC, \bA_\ell' \gets \Z^{n \times m}_q$. $\bA_\ell''$ is uniformly random because the only change is adding the uniform random vector $\bc$ in its $j$-th row.
    Now we observe that  when $g_{\ell,j} \neq \bx_{\ell,j}$. 
    We can ignore the term $\sum_i b_i \cdot \bu_i'$ in the last row from our distribution, because $b_i$
is known to the adversary and they are the same in both distributions. 
    We consider the vector $\bw = \bx_{\ell,j}^\top \bA_\ell''  + \be_{i,0}' + g_{\ell,j} \cdot \bc = [(g_{\ell,j} \cdot \bc_{1}+ \sum_{i}\bx_{\ell,i}\bA_{\ell,1,i}'', \cdots, g_{\ell,j}\cdot \bc_{m}+ \sum_{i}\bx_{\ell,i}\bA_{\ell,m,i}''] + \be_{i}'\bR$. Since $\bc$ is uniformly random, the entire $\bw$ now becomes uniformly random.
    Since the last row of $\ct_0$ in $\cD(g_{\ell,j})$ is  $\sum_{i \neq \ell}  (\bx_{i,b_i}\bA_i' + \be_{i}'\bR + b_i \cdot \bu_i') + \bw$, it is masked by $\bw$ and becomes uniformly random. 
    
\end{proof}

\paragraph{Invariant Through Measurements}
We now demonstrate two important properties of the approximate threshold implementation measurements, when applying with correct guesses $g_{\ell,j}$ and incorrect guesses $g_{\ell,j}$:
\begin{itemize}
    \item When applying ATI with incorrect guesses $g_{\ell,j}$,  the leftover state almost does not change in terms of trace distance, compared to the state before the measurement. 
    
    \item When applying ATI with correct guesses $g_{\ell,j}$, our measurement outcomes will almost always stay the same from the last measurement correct $g_{\ell,j}$, even if the leftover state may change at the end of the measurement.  
     
\end{itemize}

Let us denote the probability of the measurement \textbf{outputting} $\mathbf{1}$ on $\rho$ by $\Tr[\ati^{\epsilon, \delta}_{\cP, \cD, 1/2+\gamma}\, \rho]$. Accordingly  $1- \Tr[\ati^{\epsilon, \delta}_{\cP, \cD, 1/2+\gamma}\, \rho] := \Pr[\ati^{\epsilon, \delta}_{\cP, \cD, 1/2+\gamma}\, \rho \to 0] $.

\begin{corollary}
\label{cor:ati_close_2distri}
    We make the following claim. For any inverse polynomial $\gamma, \epsilon < \gamma$ and exponentially small $\delta$, it holds that: 
    \begin{itemize}
        \item  If $\ati^{\epsilon, \delta}_{\cP, \cD_{2.k}, 1/2+\gamma}\, \rho$ outputs 1 and a leftover state $\rho'$, then $\Tr[\ati^{\epsilon, \delta}_{\cP, \cD(g_{\ell,j})), 1/2+\gamma-3\epsilon}\, \rho'] \geq 1 - 2\delta - \negl(\lambda)  $.

    \end{itemize}

    
\end{corollary}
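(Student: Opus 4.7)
The plan is to chain together three ingredients: the near-projectivity of $\ati$ (Corollary~\ref{cor:ati_thresimp}, part 3), the statistical closeness of $\cD_{2.k}$ and $\cD(g_{\ell,j})$ when $g_{\ell,j}$ is a correct guess (Claim~\ref{claim:gi_correct_distr}), and the TI-to-ATI transfer (Corollary~\ref{cor:ati_thresimp}, part 2). Intuitively, since the hypothesis guarantees that $\rho'$ is (up to $2\delta$ error in trace) already supported on eigenvectors of $\cP_{\cD_{2.k}}$ with eigenvalue at least $1/2+\gamma-2\epsilon$, and since $\cP_{\cD_{2.k}}$ and $\cP_{\cD(g_{\ell,j})}$ are $\eta_0$-close as operators, the same state must be close to supported on high-eigenvalue vectors of $\cP_{\cD(g_{\ell,j})}$; passing back through $\ati$ then yields the claimed bound.

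Concretely, I would proceed as follows. First, since $\ati^{\epsilon,\delta}_{\cP,\cD_{2.k},1/2+\gamma}$ output $1$ on $\rho$ with post-measurement state $\rho'$, part~3 of Corollary~\ref{cor:ati_thresimp} gives
\[
\Tr\!\bigl[\ti_{1/2+\gamma-2\epsilon}(\cP_{\cD_{2.k}})\,\rho'\bigr]\;\geq\;1-2\delta.
\]
Second, by Claim~\ref{claim:gi_correct_distr}, when $g_{\ell,j}=\bx_{\ell,j}$ the distributions $\cD_{2.k}$ and $\cD(g_{\ell,j})$ are $\eta_0$-close with $\eta_0=\poly(k,m)(q^{-n}+B_P/B_{P'})=\negl(\lambda)$ by the parameter choices in Section~\ref{sec:scheme_parameters}. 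I would use a statistical analogue of Theorem~\ref{thm:ti_different_distribution}: since $\lVert P_{\cD_{2.k}}-P_{\cD(g_{\ell,j})}\rVert_{\mathrm{op}}\leq \eta_0$, the threshold-projection probabilities for the \emph{same} eigenvalue cutoff differ by at most a negligible amount, yielding
\[
\Tr\!\bigl[\ti_{1/2+\gamma-2\epsilon}(\cP_{\cD(g_{\ell,j})})\,\rho'\bigr]\;\geq\;1-2\delta-\negl(\lambda).
\]
Third, applying part~2 of Corollary~\ref{cor:ati_thresimp} with the $-\epsilon$ shift in threshold converts this back into an ATI bound:
\[
\Tr\!\bigl[\ati^{\epsilon,\delta}_{\cP,\cD(g_{\ell,j}),1/2+\gamma-3\epsilon}\,\rho'\bigr]\;\geq\;1-3\delta-\negl(\lambda),
\]
which is the desired conclusion after absorbing the extra $\delta$ into the existing slack (or equivalently rescaling $\delta$).

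The main obstacle will be justifying the statistical variant of Theorem~\ref{thm:ti_different_distribution} cleanly. The stated version in the paper is phrased for computationally indistinguishable distributions and requires a small inverse-polynomial shift $\epsilon$ in the eigenvalue threshold; in the statistical setting this shift is morally unnecessary, but care is needed because TI projects onto spectral subspaces and operator closeness in norm does not automatically preserve a sharp eigenvalue cutoff. The cleanest route is to exploit that any POVM-distance bound $\lVert P_{\cD_{2.k}}-P_{\cD(g_{\ell,j})}\rVert_{\mathrm{op}}\leq \eta_0$ allows us to reuse the argument of Theorem~\ref{thm:ti_different_distribution} verbatim, substituting the hybrid adversary's distinguishing advantage by the statistical distance $\eta_0$; alternatively, one can tolerate an additional $\epsilon$-shift in the threshold and then simply redefine $\epsilon$ in the extractor so that the cumulative loss across all $O(knB_X)$ guesses remains negligible, which is already how parameters are set in Section~\ref{sec:extractor_algo}. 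Either way, the overall error remains $O(\delta)+\negl(\lambda)$, matching the statement.
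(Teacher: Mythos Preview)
Your proposal is correct and follows essentially the same three-step chain as the paper: near-projectivity of $\ati$, the statistical closeness of $\cD_{2.k}$ and $\cD(g_{\ell,j})$ for correct guesses (Claim~\ref{claim:gi_correct_distr}), and a distribution-switching step.

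The only real difference is in how you handle the distribution switch. You route through $\ti$ explicitly (part~3 of Corollary~\ref{cor:ati_thresimp}, then a ``statistical variant'' of Theorem~\ref{thm:ti_different_distribution}, then part~2 of Corollary~\ref{cor:ati_thresimp}), and then spend a paragraph worrying about whether a statistical analogue of the TI-transfer theorem holds without the $\epsilon$-shift. The paper short-circuits this: statistically $\eta_0$-close distributions are \emph{a fortiori} computationally indistinguishable, so it simply invokes Corollary~\ref{cor:ati_computational_indistinguishable} (the pre-packaged ATI-to-ATI transfer for computationally indistinguishable distributions) directly, together with the ATI near-projectivity from Corollary~\ref{cor:ati_thresimp}. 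This avoids the detour through $\ti$ and makes your ``main obstacle'' disappear---no statistical variant is needed, the existing theorem already covers the case, at the cost of the same $3\epsilon$ threshold shift that your route also incurs. Your approach is not wrong, just slightly more laborious; the final bound and parameter accounting are identical.
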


\begin{proof}
    This follows directly from \Cref{claim:gi_correct_distr}, \Cref{cor:ati_thresimp} and \Cref{cor:ati_computational_indistinguishable}.

By \Cref{cor:ati_thresimp}, we have that if $\ati^{\epsilon, \delta}_{\cP, \cD_{2.k}, 1/2+\gamma}\, \rho$ outputs 1 and a leftover state $\rho'$, then $\\ \Tr[\ati^{\epsilon, \delta}_{\cP, \cD_{2.k}, 1/2+\gamma-3\epsilon}\, \rho']  \geq 1 -  \delta$.
Next, by \Cref{cor:ati_computational_indistinguishable}, we have $\Tr[\ati^{\epsilon, \delta}_{\cP, \cD(g_{\ell,j})), 1/2+\gamma}\, \rho'] \geq \Tr[\ati^{\epsilon, \delta}_{\cP, \cD_{2.k}, 1/2+\gamma}\, \rho']  -  \eta_0'$.  Thus, overall we have $\Tr[\ati^{\epsilon, \delta}_{\cP, \cD(g_{\ell,j})), 1/2+\gamma-3\epsilon}\, \rho'] \geq 1 - \delta - \negl(\lambda) = 1- \negl(\lambda)$ since $\delta$ is exponentially small.
\end{proof}

Next we show that when the guess $g_i$ is incorrect, the corresponding operation acts almost like an identity operator and will almost always output 0.

\begin{claim}
\label{claim:uniform_close_ati}
 For all $\ell \in [k], j\in [n]$,  When the guess $g_{\ell,j} \neq \bx_{\ell,j}$ in the above mixture of projections $\cP = (P_{\cD_{g_i}}, Q_{\cD_{g_i}})$, for any inverse polynomial $\gamma$, any exponentially small $\delta$ and any quantum distinguisher $\rho$, with probability $(1-\delta)$, the following two claims hold by our choice of parameter $\epsilon = \frac{\gamma}{100B_X \cdot n}$:

 \begin{itemize}
     \item For any input state $\rho$,
let $\rho'$ be the state after applying $\ati^{\epsilon,\delta}_{\cP, 1/2+\gamma}$ on $\rho$, we have $\lVert \rho-\rho'\rVert_{\Tr} \leq O( \eta_1 \cdot \frac{\ln(4/\delta)}{\epsilon})$, where $\eta_1$ is the statistical distance between $(\cD(g_{\ell,j}), \cD_\unif)$. 

\item $\ati^{\epsilon,\delta}_{\cP, 1/2+\gamma}$ will output outcome 0.
 \end{itemize}

\end{claim}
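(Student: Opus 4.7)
The plan is to recognize the POVM $\cP$ here as an instance of the ``distribution distinguishing POVM'' template of \Cref{def:distribution_distinguish_povm}, and then invoke \Cref{thm:invariant_ati_uniform} as a black box. Concretely, in the construction of $\ati$ for $\cD(g_{\ell,j})$ the challenger flips a uniform bit $b$, produces $\ct_0$ from the ``guess-dependent'' distribution (described by \Cref{eqn:ct_g_uniform}) and $\ct_1$ uniformly from the ciphertext space, and then feeds $\ct_b$ to $\rho$ to check whether the guess is $b$. This is exactly the template of \Cref{def:distribution_distinguish_povm} with the two underlying distributions being $\widetilde{\cD}_0 := \cD(g_{\ell,j})\big|_{\ct_0}$ (the marginal over $\ct_0$) and $\widetilde{\cD}_1 :=$ uniform over $\mathcal{C}$.

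The key input is \Cref{claim:gi_incorrect_distr}: when $g_{\ell,j}\neq\bx_{\ell,j}$, the marginal distribution of $\ct_0$ in $\cD(g_{\ell,j})$ is statistically $\eta_1$-close to uniform, where $\eta_1 = \poly(k,m)\cdot q^{-n}$. Hence the pair $(\widetilde{\cD}_0,\widetilde{\cD}_1)$ has statistical distance at most $\eta_1$. At this point the hypotheses of \Cref{thm:invariant_ati_uniform} just need to be verified: $\gamma$ is inverse polynomial by assumption; $\epsilon = \gamma/(100\, B_X k n) < \gamma$; and $\eta_1 < \epsilon/2$ since $\eta_1$ is inverse exponential in $\lambda$ (as $q$ is exponential, see \Cref{sec:scheme_parameters}) while $\epsilon$ is at worst $1/\mathrm{quasipoly}(\lambda)$. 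With these checks in place, \Cref{thm:invariant_ati_uniform} directly yields, with probability $1-\delta$, both the trace-distance bound $\lVert\rho-\rho'\rVert_{\Tr}\leq O(\eta_1\cdot\ln(4/\delta)/\epsilon)$ and that the measurement outputs $0$.

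The only mild subtlety, which is not a real obstacle but should be noted explicitly, is that \Cref{thm:invariant_ati_uniform} is stated for an arbitrary input quantum state, and in particular does not require the state to satisfy any ``$\gamma$-good decryptor'' condition; the intuition is that when the two distributions being compared are statistically indistinguishable, no state can be a $(1/2+\gamma)$-distinguisher, so the threshold $(1/2+\gamma)$ branch of $\ati$ is simply never selected and the procedure degenerates into something close to identity. Finally, the resulting error bound $O(\eta_1\cdot\ln(4/\delta)/\epsilon)$ is itself inverse exponential, since $\eta_1$ is inverse exponential while $\ln(4/\delta)/\epsilon$ is at worst quasipolynomial, which is precisely the quality of bound needed in \Cref{sec:proof_extractor_noisy} so that the accumulated trace-distance error over the polynomially many incorrect-guess measurements in the extractor of \Cref{sec:extractor_algo} remains negligible.
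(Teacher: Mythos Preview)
Your proposal is correct and follows essentially the same approach as the paper: invoke \Cref{claim:gi_incorrect_distr} to bound the statistical distance between $\cD(g_{\ell,j})$ and $\cD_\unif$ by the exponentially small $\eta_1$, and then apply \Cref{thm:invariant_ati_uniform} as a black box. Your write-up is in fact more careful than the paper's, since you explicitly verify the preconditions $\epsilon<\gamma$ and $\eta_1<\epsilon/2$ of \Cref{thm:invariant_ati_uniform}, which the paper leaves implicit.
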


\begin{proof}
By \Cref{claim:gi_incorrect_distr}, we know that the statistical distance between 
$\cD(g_{\ell,j})$ and $\cD_\unif$ is exponentially small. Therefore by applying \Cref{thm:invariant_ati_uniform}, we obtain the above two claims.

\end{proof}

\paragraph{Correctness} Now we are ready to prove the correctness of our algorithm.
\begin{lemma}
\label{lem:extractor_correctness}
If at the beginning of the algorithm in \Cref{sec:extractor_algo}, we have that $\ati^{\epsilon, \delta}_{\cP, \cD_{2,k}), 1/2+\gamma}\, \rho$ \textbf{outputs 1} for some inverse polynomial $\gamma$, exponentially small $\delta$ and $\epsilon = \frac{\gamma}{100B_{X} k n}$, the extractor algorithm will output correct secret with probability $(1-\negl(\lambda))$, by our choice of parameters $B_X,B_P, B_{P'}$ in \Cref{sec:scheme_parameters}.
\end{lemma}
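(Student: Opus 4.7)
The plan is to maintain an inductive invariant on the state produced after each ATI measurement, tracking both the remaining ``quality'' of the state as a $\cD_{2.k}$-distinguisher and the cumulative probability that something has gone wrong. Specifically, after $s$ measurements have been performed, letting $\rho^{(s)}$ denote the current state, I will maintain
\[
\Tr[\ati^{\epsilon,\delta}_{\cP,\cD_{2.k},1/2+\gamma_s}\,\rho^{(s)}] \;\geq\; 1-\mu_s,
\]
where $\gamma_s \geq \gamma - C_1 s\epsilon$ and $\mu_s \leq C_2 s(\delta + \negl(\lambda))$ for small constants $C_1,C_2$. The base case is the hypothesis of the lemma, and the total number of measurements is at most $kn B_X$ (at most $B_X$ guess attempts per coordinate, $kn$ coordinates).

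For the inductive step, I would case-split on whether the current guess $g_{\ell,j}$ equals $\bx_{\ell,j}$. In the \emph{correct guess} case, \Cref{claim:gi_correct_distr} tells us $\cD_{2.k}$ and $\cD(g_{\ell,j})$ are statistically $\eta_0$-close with $\eta_0$ inverse exponential (since $B_{P'}/B_P$ is superpolynomial and $q^{-n}$ is exponentially small); combining with \Cref{cor:ati_computational_indistinguishable} lets me translate the invariant into an $\ati_{\cD(g_{\ell,j})}$ statement, losing $3\epsilon$ in threshold and $2\delta+\negl$ in probability. The algorithm's ATI then outputs $1$ with probability $\geq 1-\mu_s - O(\delta+\negl)$; applying item~3 of \Cref{cor:ati_thresimp} (producing a good $\ti$-state) followed by item~2 (going back to $\ati$) and another translation back to $\cD_{2.k}$ restores the invariant at $\gamma_{s+1} = \gamma_s - O(\epsilon)$ and $\mu_{s+1} = \mu_s + O(\delta+\negl)$. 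In the \emph{incorrect guess} case, \Cref{claim:gi_incorrect_distr} shows $\cD(g_{\ell,j})$ is exponentially-close to $\cD_\unif$, so \Cref{claim:uniform_close_ati} (an application of \Cref{thm:invariant_ati_uniform}) guarantees the ATI outputs $0$ with probability $1-\delta$ while disturbing the state by trace distance $O(\eta_1 \cdot \ln(4/\delta)/\epsilon)$, which is still exponentially small. The invariant is preserved with only the algorithmic $3\epsilon$ decrement in $\gamma$ plus an $O(\delta+\negl)$ increment in $\mu$.

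Next I would check the parameter bookkeeping. With $\epsilon = \gamma/(100 B_X k n)$, the total threshold loss over all $\leq kn B_X$ measurements is $O(\gamma/100)$, so $\gamma_s$ stays inverse polynomial throughout. With $\delta$ exponentially small and $kn B_X$ at most quasipolynomial (since $B_X$ is quasipolynomial by \Cref{sec:scheme_parameters}), the cumulative error $\mu_s \leq O(kn B_X)(\delta+\negl)$ remains negligible. The only two ways the algorithm can fail are (i) a correct-guess ATI returning $0$, bounded per step by $\mu_s + O(\delta+\negl)$; or (ii) some incorrect-guess ATI returning $1$, bounded per step by $\delta$. Summing over all $\leq kn B_X$ measurements and using a union bound gives overall failure probability $\negl(\lambda)$.

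The main obstacle is the cascading parameter accounting: each correct-guess step requires chaining \emph{two} distribution translations (via \Cref{cor:ati_computational_indistinguishable}) with one ATI-to-$\ti$-to-ATI round trip (via \Cref{cor:ati_thresimp}), each of which eats some threshold margin and introduces some probability error. Verifying that $\epsilon = \gamma/(100 B_X k n)$ is genuinely small enough that the per-step $O(\epsilon)$ loss times $knB_X$ steps remains strictly below $\gamma$, and simultaneously that $\delta = 2^{-\Omega(\lambda)}$ dominates $kn B_X$ (which is only quasipolynomial), is what makes the quantitative argument go through. The qualitative intuition -- that correct guesses get accepted and incorrect guesses leave the state essentially unchanged -- mirrors the clean $\ti$ analysis of \Cref{sec:extract_proof_outline}; the novelty is showing that the approximate measurements do not break this picture when the errors are summed over a (quasi)polynomial number of attempts.
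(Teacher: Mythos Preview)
Your proposal is correct and follows essentially the same approach as the paper's proof. Both arguments hinge on the same three ingredients: for correct guesses, \Cref{claim:gi_correct_distr} plus \Cref{cor:ati_computational_indistinguishable} and \Cref{cor:ati_thresimp} (packaged in the paper as \Cref{cor:ati_close_2distri}) give outcome~$1$ with probability $1-\negl$; for incorrect guesses, \Cref{claim:uniform_close_ati} (via \Cref{thm:invariant_ati_uniform}) gives outcome~$0$ with probability $1-\delta$ and only exponentially small trace-distance disturbance; and the threshold bookkeeping $\gamma \mapsto \gamma - 3\epsilon$ over $\leq 2B_X k n$ steps with $\epsilon = \gamma/(100B_X kn)$ leaves $\gamma$ inverse polynomial throughout. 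The only stylistic difference is that you maintain a single explicit invariant on $\Tr[\ati_{\cD_{2.k},1/2+\gamma_s}\rho^{(s)}]$, whereas the paper tracks trace distance from the state after the most recent correct guess and separates the final success probability into the product $(1-\negl)^{kn}\cdot(1-\delta)^{O(B_X kn)}\cdot(1-O(\eta_1\ln(4/\delta)/\epsilon))^{O(B_X kn)}$; your uniform bound $\mu_s \leq C_2 s(\delta+\negl)$ conflates these but is harmless since the incorrect-guess contribution to $\negl$ is exponentially small and the correct-guess contribution is summed only $kn$ times.
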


We are given that the distinguisher $\rho_\delete$ (let us call it $\rho$ now for brevity) satisfies $\ati^{\epsilon, \delta}_{\cP, \cD_{2.k}, 1/2+\gamma}\rho \to 1$, for some exponentially small $\delta$ and some $\epsilon$ of our own choice (here, $\epsilon = \frac{\gamma}{100B_X kn}$), according to our assumption in \Cref{thm:last_hybrid_extract}. 

 Note that since we have obtained this outcome, we must have already applied once $\ati^{\epsilon, \delta}_{\cP, \cD_{2.k}, 1/2+\gamma}$ on $\rho$ and obtained some remaining state $\rho'$. Now the remaining state $\rho'$ will satisfy the condition that  $\Tr[\ati^{\epsilon, \delta}_{\cP, \cD_{2.k}, 1/2+\gamma-3\epsilon}\rho] \geq 1 - 3\delta$ by item 2 and 3 in \Cref{cor:ati_thresimp}. 
 
 Consider the first time we apply $\ati^{\epsilon, \delta}_{\cP, D(g_{\ell,j}), 1/2+\gamma-4\epsilon}$ (when $\ell,j =  1$ and $g_{\ell,j}$ takes the smallest value in $[B_X]$) in the above algorithm, suppose we have guessed $g_{\ell,j}$ correctly at this point:

 By \Cref{cor:ati_close_2distri}, 
 have that $\Tr[\ati^{\epsilon, \delta}_{\cP, D(g_i), 1/2+\gamma-4\epsilon}\, \rho] \geq 1-3\delta - \negl(\lambda) \geq  1 - \negl(\lambda)$, where $\delta$ is exponentially small. 
 \jiahui{move the following to the end of parameter analysis}


 Suppose the first time we apply $\ati^{\epsilon, \delta}_{\cP, D(g_{\ell,j}), 1/2+\gamma-3\epsilon}$, the guess $g_{\ell,j}$ is incorrect, as shown in \Cref{claim:uniform_close_ati}, we have $1 - \Tr[\ati^{\epsilon, \delta}_{\cP, D(g_{\ell,j}), 1/2+\gamma-3\epsilon}\, \rho'] \geq 1- \delta$.
 Moreover, with probability $(1-\delta)$, the leftover state $\rho''$ from the measurement $\ati^{\epsilon, \delta}_{\cP, D(g_{\ell,j}), 1/2+\gamma-3\epsilon}$ has $O(\eta_1 \cdot \ln(4/\delta)/\epsilon)$ difference 
 in trace distance from the state $\rho'$. 

 We can then continue our algorithm: every time we make a measurement $\ati^{\epsilon, \delta}_{\cP, D(g_{\ell,j}), 1/2+\gamma}$ (note that the threshold parameter $\gamma$ will be lowered by $3\epsilon$ from the threshold in the last measurement).
 We can then perform induction: assume that we get the wanted measurement outcomes (outcome 1 if $g_{\ell,j}$ is correct and 0 if $g_{\ell,j}$ is incorrect)
for all the first $L$-th measurements. Let us call the state after the $L$-th steps in the loop, $\rho_L$. When the $L+1$-th measurement uses an incorrect $g_{\ell,j}$: suppose $L'$ is the number of times we get outcome 0(including $L$-th measurement) since the last time we get outcome 1, then with probability $(1-L'\cdot \delta)$, the state is $1-(1-O(\eta_1 \cdot \ln(4/\delta)/\epsilon))^{L'} \leq L' \cdot O(\eta_1 \cdot \ln(4/\delta)/\epsilon)$ close in trace distance from the last time we measure with a correct $g_{\ell,j}$.

When the $L+1$-th measurement $\ati^{\epsilon, \delta}_{\cP, D(g_{\ell,j}), 1/2+\gamma}$ uses  a correct $g_{\ell,j}$, then with probability $(1- L' \cdot O(\eta_1 \cdot \ln(4/\delta)/\epsilon))(1- \negl(\lambda))$, we obtain outcome $1$, where $L'$ is the number of times we get outcome 0 since the last time we get outcome 1, by the fact that $|\Tr(\cP\rho_{L-L'}) - \Tr(\cP \rho_L)| \leq \lVert \rho_{L-L'} - \rho_L \rVert_{\Tr}$ for all POVM measurements $\cP$.

The total probability  we extract all $k\cdot n$-entries of the LWE secret is: $(1-\negl(\lambda))^{kn} \cdot (1-\delta)^{O(B_X \cdot kn)} \cdot (1-O(\eta_1 \cdot \ln(4/\delta)/\epsilon))^{O(B_X \cdot kn)}$. Here, $(1- \negl(\lambda))^{kn}$ is the total loss incurred by measurements $\ati^{\epsilon, \delta}_{\cP, D(g_{\ell,j}), 1/2+\gamma}$ with correct $g_{\ell,j}$'s, because we will make $kn$ (the dimension of the secret)  number of such measurements; the error $(1-\delta)^{O(B_X \cdot kn)} \cdot (1-O(\eta_1 \cdot \ln(4/\delta)/\epsilon))^{O(B_X \cdot kn)}$ is incurred by measurements $\ati^{\epsilon, \delta}_{\cP, D(g_{\ell,j}), 1/2+\gamma}$ with  incorrect $g_{\ell,j}$'s, because we will make $O(B_X kn)$ (the dimension of the secret multiplied by the range of the secret)  number of such measurements.

Also note that we lower our threshold $\gamma$ by $3\epsilon$ upon every use due to \Cref{cor:ati_thresimp}, but $\gamma$ remains inverse polynomial throughout the algorithm because the lowest it can get is $\gamma -3 \epsilon \cdot 2B_X kn =  \frac{94\gamma}{100}$.

 Let us recall our parameter settings (the followings are all in terms of security parameter $\lambda$): $q^{-n}$ is exponentially small and thus $\eta_1$ is exponentially small;
 $\delta$ is exponentially small. $k,n$ are polynomial. $B_X$ is quasipolynomial.

Therefore, the final success probability is:
\begin{align*}
    & (1- \negl(\lambda))^{kn} \cdot (1-\delta)^{O(B_X \cdot kn)} \cdot (1-O(\eta_1 \cdot \ln(4/\delta)/\epsilon))^{O(B_X \cdot kn)} \\
     \geq & (1- kn\cdot \negl(\lambda)) \cdot (1-O(B_X kn \cdot \delta)) \cdot (1-O(B_X kn \cdot \eta_1 \cdot \ln(4/\delta)/\epsilon)) \\
     \geq & 1 - kn \cdot 
     \negl(\lambda)
\end{align*}
By our parameter settings above, this is $1- \negl(\lambda)$.

\paragraph{Running Time}
By \Cref{cor:ati_thresimp}, the running time of each $\ati^{\epsilon, \delta}_{\cP, D(g_{\ell,j})), 1/2+\gamma}$ is $T \cdot O(1/\epsilon^2 \cdot 1/\log\delta) = T \cdot \poly(1/\epsilon, \cdot 1/\log\delta)$ where $T$ is the running of quantum algorithm $\rho$ and thus polynomial, so the entire algorithm is $O(T nk 
\cdot B_X \cdot \poly(1/\epsilon, 1/\log\delta))$, where $B_X$ is quasipolynomial, $\epsilon = O(\gamma/B_X kn)$ is an inverse quasipolynomial and $\delta$ can be an inverse exponential function. 
Since we rely on a subexponential hardness assumption, such quasipolynomial running time is acceptable.

\paragraph{Conclusion} 
By the above analysis, we can therefore conclude \Cref{thm:last_hybrid_extract}.

\jiahui{OLD proof. Moving to the end}

\section{Remarks on Definitions and Strong SKL Security Implies IND-SKL Security}
\label{sec:relation_security_defs}

\subsection{Prevent Impostors: Authorizing the Public Key}
  As we have discussed, for applications where everyone can encrypt and send it to the lessee, we need to have interaction in the key exchange.
  The key generation algorithm and encryption algorithm are defined as:

  \begin{itemize}
      \item   $\InKeyGen(1^\lambda)$: an interactive protocol that takes in a security parameter $1^\lambda$; $\A$ interacts with the challenger (classically) and they either output a public key $\pk$, a trapdoor $\td$ and a quantum secret key $\rho_\sk$ or output a symbol $\bot$.

      \item $\Enc'(\pk,m)$: an algorithm that takes in an alleged public key $\pk$ and a message $m$, outputs either $\ct \gets \Enc(\pk, m)$ or $\bot$.
 \end{itemize}

  We formalize the security requirement as follows:

  \begin{definition}[Anti-Impostor security]
   A secure key leasing scheme satisfies $\gamma$-anti-impostor security, if for any (non-uniform) QPT adversary $\A = (\A_1, \A_2)$,  there exists a negligible function $\negl(\cdot)$ such that for all $\lambda \in \N$, all messages $m \in \mathcal{M}$:

      $$\Pr \left[\begin{matrix} \A_1(1^\lambda,\pk) \rightarrow  (\pk', m_0, m_1) \\ 
      \A_2 \text{ is $\gamma$-good decryptor w.r.t } \pk',(m_0,m_1) \end{matrix} : \begin{matrix} 
    (\td, \pk,\rho_\sk)\leftarrow \InKeyGen(1^\lambda) \end{matrix}\right] \leq \negl(\lambda)$$ 
      
  \end{definition}

The simple solution to the above security is to
  add a lessor's signature on the final public key $\pk$ as presented in \Cref{sec:protocol_description}.
 The algorithm $\Enc'$ will first verify the signature on the public key and then proceed to encryption; else it would output $\bot$.

  \paragraph{Security Proof for Protocol in \Cref{sec:protocol_description}}
 The security relies on the existential unforgeability of the underlying signature scheme and the anti-piracy security. 

 If the part except for the signature on the fake $\pk'$ is different from the honestly generated $\pk$, then the adversary $\A$ has to forge a signature  on the new public key it generated, then we can use it to break existential unforgeability of the underlying signature scheme.

 If $\pk' = \pk$ on the parts except for the signatures, then we can use it to break the strong $\gamma$-anti-piracy security: the pirate $\A'$ honest generates a public key $\pk$ and give it to $\A$; $\A'$ can delete its own key honestly and pass the deletion check; it runs $\A_1$ to get messages $(m_0,m_1)$ and give to the challenger; then it uses $\A_2$, which is a good decryptor with respect to $\pk,m_0,m_1$, to break the anti-piracy security.

\subsection{Additional Remarks for Our Definitions}



\begin{remark}
    We can additionally add the following verification algorithm on the public key generated, but not necessary for our construction:
\begin{itemize}
    \item $\mathsf{VerPK(\mpk,\td, \pk)}:$ on input a public key $\pk$, a master public key $\mpk$ and trapdoor $\td$, output $\valid$ or $\invalid$.
\end{itemize}
This algorithm allows the lessor to check if a public key generated by the lessee is malformed. This will render the key generation protocol as two-message.

However, in our construction, the verification of deletion inherently prevents attacks by generating malicious public keys. For correctness it only has to hold for honest key generations. We thus omit it.

But having such an algorithm in the definition can open the possibility for other SKL constructions with classical communication for future work.   
\end{remark}

\begin{remark}
    We can also have an additional decryption algorithm in which one can use the classical trapdoor $\td$ to decrypt. We omit it here.
\end{remark}

\begin{remark}
    The extension to encrypting multi-bit message follows naturally from a parallel repetition of scheme above. The security is also naturally extended, see \cite{ananth2023revocable} for discussions.
\end{remark}


\subsection{Proof for \Cref{thm: strong skl implies regular}}

In this section, we prove \Cref{thm: strong skl implies regular}. 

\begin{claim}
    If a construction satisfies \Cref{def:strong_skl} holds for any noticeable $\gamma$,  then it satisfies \Cref{def:regular_skl_security_classical}.
\end{claim}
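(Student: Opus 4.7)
The plan is to argue by contrapositive. Suppose $\A$ wins IND-PKE-SKL with noticeable advantage, so there exist noticeable $\epsilon_1,\epsilon_2$ such that $\bigl|\Pr[\text{IND-PKE-SKL}(\A,1^\lambda,0){=}1]-\Pr[\text{IND-PKE-SKL}(\A,1^\lambda,1){=}1]\bigr|\geq\epsilon_1(\lambda)$ and $\Pr[\text{IND-PKE-SKL}(\A,1^\lambda,b)\neq\bot]\geq\epsilon_2(\lambda)$ (both must hold for a genuine break, as noted in the remark under \Cref{def:regular_skl_security_classical}). For each $c\in\{0,1\}$ I will build a QPT attacker $\A'_c$ against the strong $\gamma$-SKL game of \Cref{def:gamma_skl_game} at $\gamma=\epsilon_1/4$. $\A'_c$ internally runs $\A$ honestly through setup, key generation, and the deletion phase, forwarding $\mpk,\pk$, and the certificate $\cert$ between $\A$ and the strong-SKL challenger. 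Once the challenger validates $\cert$, $\A'_c$ lets $\A$ produce the message $\mu$ it would have submitted in the IND game, then outputs $\mu$ together with the quantum decryptor $\rho_\delete^{(c)}$ defined as ``on input ciphertext $\ct$, run $\A$'s remaining post-verification circuit to obtain a guess $b'$ and output $b'\oplus c$.''

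For the success analysis, let $q=\Pr[\text{cert passes}]$ and, conditioned on that event, let $p_b=\Pr[\A\text{ outputs }1\mid b]$. Because $\cert$ is produced before the challenge bit $b$ is sampled, $q$ is independent of $b$, and the IND-PKE-SKL advantage equals $q\cdot|p_0-p_1|\geq\epsilon_1$, so $|p_0-p_1|\geq\epsilon_1$ and $q\geq\epsilon_2$. Exactly one of $c\in\{0,1\}$ makes the expected success of $\rho_\delete^{(c)}$ in the good-decryptor POVM $\cP=(P,Q)$ of \Cref{def:gamma_good_decryptor} equal to $\Tr[P\rho_\delete^{(c)}]=\tfrac{1}{2}+\tfrac{1}{2}|p_0-p_1|\geq\tfrac{1}{2}+\tfrac{\epsilon_1}{2}$; strong $\gamma$-SKL security must in particular hold against the right choice of $\A'_c$, so we may fix $c$ to be that value.

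The key step is to convert this expected-success bound into a noticeable lower bound on the threshold-implementation probability. Decomposing $P$ spectrally and letting $X\in[0,1]$ denote the eigenvalue of $P$ sampled against $\rho_\delete^{(c)}$, \Cref{lem:proj_implement} and \Cref{def:thres_implement} yield $\Tr[\ti_{1/2+\gamma}(\cP)\rho_\delete^{(c)}]=\Pr[X\geq 1/2+\gamma]$. Combining $\E[X]=\Tr[P\rho_\delete^{(c)}]\geq\tfrac{1}{2}+\tfrac{\epsilon_1}{2}$ with the elementary inequality $\E[X]\leq(1/2+\gamma)\Pr[X<1/2+\gamma]+\Pr[X\geq 1/2+\gamma]$ gives $\Pr[X\geq 1/2+\gamma]\geq\tfrac{\E[X]-1/2-\gamma}{1/2-\gamma}$. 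Plugging in $\gamma=\epsilon_1/4$ yields at least $\epsilon_1/2$, so conditioned on $\cert$ passing, the $\gamma$-good test succeeds with probability $\geq\epsilon_1/2$. Multiplying by $q\geq\epsilon_2$, $\A'_c$ wins the strong $\gamma$-SKL game with probability $\geq\epsilon_1\epsilon_2/2$, contradicting \Cref{def:strong_skl} at the noticeable value $\gamma=\epsilon_1/4$.

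I expect the only subtle point to be the choice of the XOR constant $c$: because $\A$'s residual quantum state cannot be cloned, $\A'$ cannot pre-estimate which sign $p_0-p_1$ takes by calling $\A$ twice and then collapse back to one copy of the state. I sidestep this by defining both $\A'_0$ and $\A'_1$ as candidate QPT attackers and observing that strong $\gamma$-SKL security must hold against each, so whichever of them happens to have the correct sign already suffices to derive the contradiction.
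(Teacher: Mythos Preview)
Your proof is correct and follows the same contrapositive route as the paper: extract from an IND-PKE-SKL winner a quantum decryptor that is a $\gamma$-good decryptor (in the sense of the threshold-implementation test) with noticeable probability, after conditioning on a valid certificate. The paper's own argument is terse; it simply asserts that an adversary guessing correctly with probability $\tfrac12+\tfrac12\epsilon_2$ is ``clearly'' a $\tfrac12\epsilon_2$-good decryptor with noticeable probability, and handles the sign of $p_0-p_1$ by a WLOG (flipping the notion of ``good'' to $\leq \tfrac12-\gamma$ in the other case). Your version makes both steps explicit: you handle the sign via two candidate attackers $\A'_0,\A'_1$ XORing the output bit, and you convert the \emph{expected} success $\Tr[P\rho]$ into a lower bound on $\Tr[\ti_{1/2+\gamma}(\cP)\rho]$ by the averaging inequality $\Pr[X\geq 1/2+\gamma]\geq\tfrac{\E[X]-1/2-\gamma}{1/2-\gamma}$. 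Setting $\gamma=\epsilon_1/4$ (strictly below the mean gap $\epsilon_1/2$) is exactly what makes this bound nontrivial; the paper's implicit threshold $\tfrac12\epsilon_2$ would give zero from the same inequality, so your extra care here is warranted.

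One small point worth tightening: when you write ``decomposing $P$ spectrally and letting $X$ denote the eigenvalue of $P$ sampled against $\rho_\delete^{(c)}$,'' the operator $P$ in \Cref{def:gamma_good_decryptor} depends on $(\pk,\mu)$, which are themselves random outputs of $\A$. So the Markov argument is really being applied to the block-diagonal operator $\tilde P=\sum_{\pk,\mu}\ket{\pk,\mu}\bra{\pk,\mu}\otimes P_{\pk,\mu}$ against the joint classical-quantum state; this is routine, but you should say so explicitly. With that clarification the argument is complete.
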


\begin{proof}
Suppose there's an adversary $\A$
 for \Cref{def:regular_skl_security_classical}:  note that in \Cref{def:regular_skl_security_classical}, in order for $\A$ to win the game, it must give a valid certificate with some inverse polynomial probability $\epsilon_1$ \emph{and} 
 the following holds:
 \begin{align*}
      \lvert \Pr\left[  \text{IND-PKE-SKL}(\A, 1^\lambda, b=0) = 0 \right] - \Pr\left[  \text{IND-PKE-SKL}(\A, 1^\lambda, b=1) = 0 \right] \rvert \geq  \epsilon_2(\lambda)
 \end{align*}
 for some inverse polynomial $\epsilon_1(\lambda)$. Note that the above inequality is implicitly conditioned on the fact that $\VerDel(\pk,\td, \cert) = \valid$. Without loss of generality, we consider the case that $\A$ guesses the correct bit with higher probability. 
 The other case holds symmetrically (by defining the $\gamma$-good decryptor to win with probability less than $1/2 -\gamma$).
 
 We can then deduce the following:
 \begin{align*}
 & \Pr[\text{IND-PKE-SKL}(\A, 1^\lambda, b) \to b' : b' = b] \\
 & =  \frac{1}{2} \cdot  \Pr\left[  \text{IND-PKE-SKL}(\A, 1^\lambda, b=0) \to 0 \right] + \frac{1}{2} \cdot  \Pr\left[  \text{IND-PKE-SKL}(\A, 1^\lambda, b=1) \to 1 \right] \\
 & =  \frac{1}{2} \cdot  \Pr\left[  \text{IND-PKE-SKL}(\A, 1^\lambda, b=0) \to 0 \right] + \frac{1}{2} \cdot (1- \Pr\left[  \text{IND-PKE-SKL}(\A, 1^\lambda, b=1) \to 0 \right]) \\
 &  \geq \frac{1}{2} +  \frac{1}{2}\epsilon_2(\lambda)
 \end{align*}

Therefore, $\A$ (after passting the $\VerDel$-test, should be able to distinguish encryption of $\mu$ from a random value in the ciphertext space with probability $1/2 + 1/2 \cdot \epsilon_2$ for some noticeable $\epsilon_2$.
Clearly, the adversary $\A$ after handing in the deletion certificate is a  $1/2 \cdot\epsilon_2$-good decryptor with noticeable probability. Combining these two conditions above, $\A$ would win in the $1/2 \cdot \epsilon_2$-$\mathsf{StrongSKL}$ game.
 \end{proof}

\ifllncs
\section{Secure Key Leasing with Classical Lessor: Protocol}
\else
\subsection{Secure Key Leasing with Classical Lessor: Protocol}
\fi
\label{sec:protocol_description}

Our one-message key generation protocol and key revocation/deletion protocol follows from our construction in \Cref{sec:construction} and uses a post-quantum digital signature scheme.

\begin{itemize}
    \item \textbf{Interactive Key Generation}:

\begin{itemize}
    \item     the lessor runs $\Setup$ and sends the classical $\mpk = \{\bA_i, \bs_i\bA_i+\be_i\}_{i \in [k]}$ to the
lessee. It keeps the trapdoor $\td$ private.

\item The lessee runs $\KeyGen(\mpk)$ to obtain $\qsk = \frac{1}{\sqrt{2^k}}\bigotimes_i^k \ket{0,\bx_{i,0}} 
+\ket{1,\bx_{i,1}}$ and $\{\by_i\}_{i \in [k]}$. It sends back the public key as $\pk = (\mpk, \{\by_i\}_{i \in [k]})$.

\item The lessor signs a signature $\sigma$ on $\pk$ and publishes $\pk' = (\mpk, \{\by_i\}_{i \in [k]}, \sigma)$ as the final public key.

\end{itemize}

\item \textbf{Encryption:}
The encryption algorithm $\Enc'(\pk',m)$
will accordingly have a first step of verifying the signature $\sigma$ in $\pk'$, if not valid, abort encryption and output $\bot$; else run the encryption in  \Cref{sec:construction}.

\item\textbf{Revocation/Deletion:}
\begin{itemize}
    \item The lessor sends a message to the lessee asking it to delete.

    \item The lessee runs $\delete(\qsk)$ to obtain $\cert = (c_1,\bd_1, \cdots, c_k, \bd_k)$. It sends to the lessor.

    \item Lessor runs $\VerDel(\td,\pk, \cert)$ and outputs $\valid$ or $\invalid$. 
\end{itemize}

\end{itemize}

As discussed, the above interaction in the Interactive Key Genration protocol deals with the impostor security to be discussed in \Cref{sec:relation_security_defs}. We also prove its security in \Cref{sec:relation_security_defs}.


\section{Secure Key Leasing for FHE Security}
\label{sec:fhe}

\subsection{FHE with Secure Key Leasing}
\label{sec:fhe_skl_defs}

To achieve FHE with key leasing, we further define a classical $\Eval$ algorithm:
\begin{description}
    \item $\Eval(\ct_0,\ct_1) \to \ct'$ on input ciphertexts $(\ct_0,\ct_1)$, output a ciphertext $\ct'$ 
\end{description}

\begin{remark} (Levelled FHE with Secure-Key Leasing)
    We could similar define (levelled) homomorphic encryption supporting secret key leasing analogusly, denoted as FHE-SKL. In such a scheme, the $\Setup$ algorithm would also take as input a depth parameter $d$ (that's arbitrary polynomial of the security parameter) and the scheme would support homomorphically evaluating ciphertexts of polynomial sized circuits of depth $d$. The scheme will additionally have an algorithm $\Eval$ that takes as input a circuit $C:\{0,1\}^{\ell}\rightarrow \{0,1\}$ of depth $d$ and polynomial size along with $\ell$ ciphertexts $\ct_1,\ldots,\ct_{\ell}$ encrypting bits $\mu_1,\ldots,\mu_{\ell}$. It should produce an evaluated ciphertext $\hat{\ct}$ that encrypts $C(\mu_1,\ldots,\mu_{\ell})$. We additionally require the scheme to be compact, namely the size of the ciphertexts and evaluated ciphertexts should be polynomial in $\lambda$ and $d$ indepdendent of the circuit size. The security requirements for such a scheme remain the same like an PKE-SKL scheme.
\end{remark}


\begin{remark} (Unlevelled FHE with Secure-Key Leasing)
   We note that the bootstrapping techinque due to Gentry \cite{C:Gentry10} to lift a levelled FHE to unlevelled FHE assuming circular security does not work for FHE with SKL. Observe that if one released encryption of the secret key, then, in the security experiment the adversary could decrypt the secret-key from such an encryption using  the state $\rho_{\sk}$ without disturbing it and then use it later win in the security game.
     Therefore, it seems intuitively impossible to achieve.
     
   We leave the question of giving some meaningful relaxation in defining unleveled FHE-SKL security for future work.
\end{remark}


\subsection{FHE $\Eval$ Algorithm}
To achieve (leveled) fully homomorphic encryption, we add the following algorithm to construction \Cref{sec:construction}, which ollows from \cite{C:GenSahWat13}:

\begin{itemize}
    \item $\mathsf{Eval(\ct_0, \ct_1)}$: the algrithm evaluates an $\mathsf{NAND}$ gate on two ciphertexts $(\ct_0, \ct_1)$.
    It outputs the following:
    $$\bG - \ct_0 \cdot \bG^{-1}(\ct_1)$$
\end{itemize}

Observe that if $\ct_i=\mat{B}\mat{R}_i+\mat{E}_i+\mu_i\mat{G}$ for $i\in \{0,1\}$, the evaluated ciphertext is of the form $\mat{B}\mat{R}+\mat{E}+(1-\mu_1\mu_2)\mat{G}$.

\begin{align*}
\mat{R}=- \mat{R}_0\mat{G}^{-1}(\ct_1) -\mu_0\mat{R}_1\\
\mat{E}=-\mat{E}_0\mat{G}^{-1}(\ct)_1 - \mu_0\mat{E}_1
\end{align*}

Observe that at each NAND operation the norm of the randomness $\mat{R}$ and $\mat{E}$ terms could increase by a factor of $((n+1)k+1) \log q$ (the bigger dimension of $\mat{B}$) as $\mat{G}^{-1}(\ct_1)$ is a binary operation. Assuming subexponential noise to modulus ratio, this means that we can support apriori bounded polynomial depth. In other words, given a depth $d$, assuming subexponential modulus to ratio LWE holds we can set dimensions and the modulus size to be polynomial in $d$ to ensure a supported depth of $d$.

\section{Additional Missing Proofs}
\label{sec:missing_proofs}

\ifllncs
\subsection{Correctness}
\label{sec:correctness}
\paragraph{Decryption Correctness}

The above scheme satisfies decryption correctness.

\begin{proof}
Note that we can write out the entire quantum key as $\qsk =\\ 
\frac{1}{\sqrt{2^k}}\sum_{\mat{b}_j \in \{0,1\}^k}\lvert\bb_{j,1},\bx_{k,\bb_{j,1}}, \cdots, \bb_{j,k},\bx_{1,\bb_{j,k}} \rangle$ 

By applying the decryption procedure, we will have:
    \begin{align*}
              & = \frac{1}{\sqrt{2^k}}\sum_{\mat{b}_j \in \{0,1\}^k}\lvert\bb_{j,1},\bx_{1,\bb_{j,1}}, \cdots, \bb_{j,k},\bx_{k,\bb_{j,k}} \rangle \lvert (-\sum_{i \in [k]}(\bx_{i,\mat{b}_{j,i}}\bA_i + \mat{b}_{j,i}(\bs_i\bA_i+\be_i) )  + \sum_{i \in [k]} \by_{i} ) \\
         & \cdot \bR\bG^{-1}(\mathbf{0} \vert \lfloor\frac{q}{2}\rfloor)  +  \be''\cdot \bG^{-1}(\mathbf{0} \vert \lfloor\frac{q}{2}\rfloor) + \bsk \cdot \mu \bG \bG^{-1} (\mathbf{0} \vert \lfloor\frac{q}{2}\rfloor) \rangle_{\sf work} \ket{0}_{\sf out} \\
         & = \frac{1}{\sqrt{2^k}} \sum_{\mathbf{b_j} \in (\{0,1\}^k} \lvert\bb_{j,1},\bx_{1,\bb_{j,1}}, \cdots, \bb_{j,k},\bx_{k,\bb_{j,k}} \rangle \lvert ( \sum_{i \in [k]} (-\mathbf{b}_{j,i} \cdot \be_{i} +\be_{i}' )\cdot \bR + \be'' )\bG^{-1}(\mathbf{0} \vert \lfloor\frac{q}{2}\rfloor)   + \lfloor\frac{q}{2}\rfloor \cdot \mu \rangle_{\sf work} \ket{0}_{\sf out} \\
         & = \frac{1}{\sqrt{2^k}} \sum_{\mathbf{b_j} \in (\{0,1\}^k} \lvert\bb_{j,1},\bx_{1,\bb_{j,1}}, \cdots, \bb_{j,k},\bx_{k,\bb_{j,k}} \rangle \ket{\mu}_{\sf out} \text{ (round, write on ${\sf out}$ and uncompute ${\sf work}$)}
     \end{align*} 
     
\end{proof}
Since we have $\lVert \sum_i (\be_{i}' + \be_{i}) \rVert \leq 2k\cdot \sqrt{m} \cdot B_{P}, \lVert \be''  \rVert \leq \sqrt{m} \cdot B_{P'}$ and $\lVert (\sum_{i \in [k]} (-\mathbf{b}_{j,i} \cdot \be_{i} +\be_{i}' )\cdot \bR + \be'' )\bG^{-1}(\mathbf{0} \vert \lfloor\frac{q}{2}\rfloor) \rVert  \leq \frac{q}{4}$ for all support $\bb_j \in \{0,1\}^k $ in the final outcome, we will obtain $\mu$ with probability $(1 - O(e^{-q/B_{P'}}))$ for all support in the above state. We can write the final output for $\mu$ in the third (rightmost) register and use $\ct$ to uncompute the second register and recover $\qsk$.

\jiahui{check how many times can we apply gentle measurement!}

\paragraph{Reusability}
Reusability of the decryption key follows from correctness with probability $(1 - O(e^{-q/B_{P'}}))$ where the exponent itself $-\frac{q}{B_{P'}}$ is subexponentially (and of course superpolynomially) small. Therefore, we can apply the gentle measurement lemma \cite{aaronson2018shadow} after each honest decryption, for arbitrarily polynomial times and the error still remains negligible. 

\paragraph{Deletion Verification Correctness} 
The first two steps in $\VerDel$ will pass for any honestly prepared $\{\by_i\}_{i \in [k]}$ with $(1-\negl(\lambda))$ probability.

The $\delete$ procedure will operate on the state $\rho_\sk$ as follows. 
For each $\frac{1}{\sqrt{2}}(\ket{0,\bx_{i,0}}+\ket{1,\bx_{i,1}})$, $i \in [k]$, the $\delete$ procedure will turn the state into:
\begin{align*}
& \frac{1}{\sqrt{2}} \sum \ket{0, \cJ(\bx_{i,0})}+\ket{1,\cJ(\bx_{i,1})} \text{ after applying $\cJ(\cdot)$ and uncomputing $\bx_{i,b_i}$ register} \\
& \to \frac{1}{\sqrt{2^{w+2}}} \sum_{\bd_i,b,u} (-1)^{\bd \cdot \cJ(x_{i,b}) \oplus ub} \ket{u} \ket{\bd_i} \text{ after QFT} \\
& = \frac{1}{\sqrt{2^{w}}} \sum_{\bd_i \in \{0,1\}^{w}} (-1)^{\bd_i \cdot \cJ(\bx_{i,0})} \ket{\bd_i\cdot (\cJ(\bx_{i,0}) \oplus \cJ(\bx_{i,1}))} \ket{\bd_i}
\end{align*}
A measurement in the computational basis will give us result ($c_i = \bd_i \cdot (\cJ(\bx_{i,0}) \oplus \cJ(\bx_{i,1})), \bd_i$), for any $i \in [k]$. Correctness of deletion verification thus follows.

\begin{remark}
Note that in our construction, using $\td$ one can also decrypt. It is easy to define a decryption procedure using $\td$ by the $\mathsf{INV(\td,b, \by)}$-procedure  in \Cref{sec:ntcf_from_lwe}. We omit the details here.     
\end{remark}

\else \fi

\subsection{Proof for \Cref{cor:hyb0_extract}}

\begin{proof}
    
Let us denote $\cD_0$ as the distribution used in $\ati$ of Game 0 and $\cD_{2.k}$ as the distribution used in $\ati$ of Game $2.k$. These two distributions are computationally indistinguishable by the security of LWE (see \Cref{claim:game0_1}, \Cref{claim:game1_2}).

By the property of computationally indistinguishable $\ati$ (\Cref{cor:ati_computational_indistinguishable}), we have:
    \begin{align*}
    \Pr\left[ \ati_{\cP,\cD_{2.k}, 1/2+\gamma-3\epsilon}^{\epsilon,\delta}(\rho_\delete) = 1  \vert    
 \ati_{\cP,\cD_{0}, 1/2+\gamma}^{\epsilon,\delta}(\rho_\delete) = 1 \right]     \geq  1-\negl(\lambda)
\end{align*}

By \Cref{thm:last_hybrid_extract} we have:
  \begin{align*}
     \Pr\left[ \begin{array}{cc}
         &\Ext(\rho_\delete, \pk) \to (\bx_{1}, \cdots \bx_{k}):  \\
         &  \bx_i \text{ is the secret in } \bx_i \bA_i + \be_{i}'
    \end{array}     \vert    
    \ati_{\cP,\cD_{2.k}, 1/2+\gamma}^{\epsilon,\delta}(\rho_\delete) = 1 \right]  \geq 1 - \negl(\lambda)
    \end{align*}

Since \Cref{thm:last_hybrid_extract} holds for any inverse polynomial $\gamma$ and $(\gamma-3\epsilon)$ is also inverse-polynomial ($\epsilon = \gamma/100$),
we have in Game 0, for some negligible $\negl(\cdot)$:
  \begin{align*}
    \Pr\left[ \begin{array}{cc}
         &\Ext(\rho_\delete, \pk) \to (\bx_{1}, \cdots \bx_{k}):  \\
         &  \bx_i \text{ is the secret in } \bx_i \bA_i + \be_{i}'
    \end{array}  \vert \ati_{\cP,\cD_0, 1/2+\gamma}^{\epsilon,\delta}(\rho_\delete) = 1 \right]    \geq 1  -\negl(\lambda) 
    \end{align*}
We can deduce the above from the following probability relations: 

Suppose $\Pr[B\vert A] \geq 1-\negl$ and $\Pr[C\vert B] \geq 1-\negl$ and all events $A,B,C$ happen with noticeable probability, then we have $\Pr[B] \geq \Pr[B\cap A] \geq \Pr[A]-\negl$ and 
$\Pr[B\cap C] \geq \Pr[B]-\negl \geq \Pr[A] -\negl$.
Thus 
$$\Pr[A \cap C] \geq \Pr[A \cap B\cap C] = \Pr[A \cap B] - \Pr[A \cap B \cap \overline{C}] \geq \Pr[A] -\Pr[\overline{C} \cap B] -\negl \geq \Pr[A] - \negl$$. Finally, by conditional probability and the fact that $\Pr[A]$ is noticeable, we have $\Pr[C \vert A] \geq 1-\negl$.  

\end{proof}

\subsection{Proof for the Probability Relations in \Cref{sec:reduction_to_ntcf}}

\label{sec:prob_relations}

\begin{claim}
\label{claim:prob_relation}
    Suppose there are three events $A,B,C$ which all happen with inverse polynomial probability, and suppose $\Pr[A \vert B] \geq 1- \negl(\lambda)$ and $\Pr[B \cap C] \geq 1/p$ for some polynomial $p$, we will have $\Pr[B \cap C] \geq 1/p - \negl'(\lambda)$ for some negligible $\negl'(\cdot)$. 
\end{claim}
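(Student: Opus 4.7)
The statement in the claim as written contains a small typo: the goal is really to show $\Pr[A\cap C]\geq 1/p-\negl'(\lambda)$, since $\Pr[B\cap C]\geq 1/p$ is already given as a hypothesis. This is exactly the bound needed in \Cref{sec:reduction_to_ntcf} with $A=\Ext$, $B=\mathsf{GoodDecryptor}$, $C=\mathsf{CertPass}$, in order to conclude that the reduction recovers both a valid deletion certificate and a valid preimage with inverse polynomial probability.

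The plan is to derive $\Pr[A\cap C]\geq 1/p-\negl'(\lambda)$ by a short inclusion-exclusion style argument. First I would observe the trivial lower bound $\Pr[A\cap C]\geq \Pr[A\cap B\cap C]$, since $A\cap B\cap C\subseteq A\cap C$. Next I would rewrite $\Pr[A\cap B\cap C]=\Pr[B\cap C]-\Pr[\overline{A}\cap B\cap C]$ using the partition on $A$ versus $\overline{A}$ inside the event $B\cap C$. The term $\Pr[B\cap C]$ is bounded below by $1/p$ by hypothesis.

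The main (and essentially only) step is to bound the error term $\Pr[\overline{A}\cap B\cap C]$. Here I would use the monotonicity $\Pr[\overline{A}\cap B\cap C]\leq \Pr[\overline{A}\cap B]$, which in turn equals $\Pr[B]\cdot\Pr[\overline{A}\mid B]$. The hypothesis $\Pr[A\mid B]\geq 1-\negl(\lambda)$ gives $\Pr[\overline{A}\mid B]\leq \negl(\lambda)$, and since $\Pr[B]\leq 1$, the whole error term is at most $\negl(\lambda)$. Chaining these inequalities:
\begin{align*}
\Pr[A\cap C]\;\geq\;\Pr[A\cap B\cap C]\;=\;\Pr[B\cap C]-\Pr[\overline{A}\cap B\cap C]\;\geq\;\frac{1}{p}-\negl(\lambda),
\end{align*}
which is the desired bound, taking $\negl'=\negl$.

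There is no real obstacle here; the argument is a routine probability manipulation. The only subtle point is that one must condition on $B$ (and not on $A$ or $C$) when invoking the hypothesis $\Pr[A\mid B]\geq 1-\negl(\lambda)$, so that the $\negl$ bound on $\Pr[\overline{A}\mid B]$ applies without requiring any extra hypothesis on the joint behaviour of $C$ with the other events. The assumption that the events have inverse polynomial probability is not actually needed for the inequality itself; it is only relevant contextually, to guarantee that the resulting lower bound $1/p-\negl$ remains noticeable and therefore useful in the surrounding reduction to the parallel NTCF soundness statement in \Cref{thm:parallel_soundness_2of2}.
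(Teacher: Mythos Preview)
Your proposal is correct and follows essentially the same argument as the paper: both recognize the typo (the intended conclusion is $\Pr[A\cap C]\geq 1/p-\negl'(\lambda)$), bound $\Pr[\overline{A}\cap B]$ by $\negl(\lambda)$ using the hypothesis $\Pr[A\mid B]\geq 1-\negl(\lambda)$, and then use $\Pr[A\cap B\cap C]=\Pr[B\cap C]-\Pr[\overline{A}\cap B\cap C]\geq 1/p-\Pr[\overline{A}\cap B]$. Your remark that the inverse-polynomial assumption on the events is not actually needed for the inequality itself is also accurate.
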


\begin{proof}
    Since $\Pr[A \vert B] \geq 1- \negl(\lambda)$, we have $\Pr[A \cap B] = \Pr[B]\Pr[A \vert B]\geq \Pr[B] -\negl'(\lambda)$ for some negligible $\negl'(\cdot)$. Therefore, we have $\Pr[B \cap \overline{A}] \leq \negl_1(\lambda)$.
    Then we can deduce:
    \begin{align*}
        \Pr[A\cap B \cap C] & = \Pr[B \cap C] - \Pr[B \cap C \cap \overline{A}] \\
        & \geq 1/p - \Pr[B \cap \overline{A}] = 1/p - \negl'(\lambda)
    \end{align*}
\end{proof}

\section{Quantum Search-to-Decision LWE with Almost-Perfect Extraction(Clean Version)}
\label{sec:lwe_search_to_decision}
In this section, we provide a means to achieve perfect extraction of the LWE secret in a post-quantum LWE search-to-decision reduction, where the quantum LWE distinguisher may come with an auxiliary quantum state. 

The algorithm will be analogous to what we have in \Cref{sec:ful_extraction_analysis} but the analysis will be simler and parameters will be cleaner, because we will feed the algorithm with actual LWE instances versus random instances.

We nevertheless provide it here in an independent section because we believe that the techniques involved will be of independent interest and the statement here may also be used in a black-box way for future works. 

\paragraph{$\gamma$-good LWE distinguisher} Firstly, we define the notion of an $\gamma$-good quantum distinguisher for decisional $\lwe_{n,m,q,\sigma}$ with some auxiliary quantum input. 

The definition is analogous to the $\gamma$-good decryptor in \Cref{def:gamma_good_decryptor}, but with LWE sample/uniform sample as challenges.

\begin{definition}[$\gamma$-good LWE Distinguisher]
\label{def:good_lwe_distinguisher}

A quantum distinguisher can be described by a quantum circuit $U$ and an auxiliary state $\rho$. W.l.o.g, we denote such a distinguisher simply as $\rho$ to represent a quantum state that also contains a classical description of the circuit.
\begin{itemize}
    \item For some arbitrary secret $\bs\in \Z_q^n$ and $\be \gets \cD_{\Z^m, \sigma}$\footnote{For our reduction, we only need such a weak distinguisher that works with respect to a presampled secret $\bs$ and error $\be$.}.
    
       \item Let $\mathcal{P_\lwe} = (P_\lwe, Q_\lwe)$ be the following mixture of projective measurements (in the sense of Definition \ref{def:mixture_of_projective}) acting on a quantum state $\rho$. We denote the following distribution we sample from as $\cD(\lwe)$.

       \begin{itemize}
       \item Sample $\bu_0 = \bs \bA + \be$, where $\bA \gets \Z^{n\times m}_q$.
       \item Compute $\bu_1 \gets 
       \Z_q^m$, a uniform random vector.
       \item Sample a uniform $b \leftarrow \{0,1\}$. 
       \item Run the quantum distinguisher $\rho$ on input $(\bA,\bu_b)$. Check whether the outcome is $b$. If so, output $1$, otherwise output $0$.
      \end{itemize}
       \item Let $\ti_{1/2 + \gamma}(\cP_\lwe)$ be the threshold implementation of $\cP_\lwe$ with threshold value $\frac{1}{2} + \gamma$, as defined in \Cref{def:thres_implement}. Run $\ti_{1/2 + \gamma}(\cP_\lwe)$ on $\rho$, and output the outcome. If the output is $1$, we say that the quantum distinguisher is $\gamma$-good with respect to the secret $\bs$ and error $\be$, otherwise not.
   \end{itemize}

\end{definition}

\begin{theorem}
\label{thm:lwe_serachtodecision_perfect}
   Let $\lambda \in \N$ be the security parameter. Let $n,m \in \N$ be
integers and let $q$ be a prime. If there is a $\gamma$-good quantum distinguisher for decisional $\lwe_{n,m,q,\sigma}$ with some auxiliary quantum input running in time $T$ for some noticeable $\gamma$, there there exists a quantum algorithm that solves search $\lwe_{n,m,q,\sigma}$ with probability $(1-\negl(\lambda))$, running in time $T' = Tnp\cdot \mathsf{poly}(1/\epsilon,1/\log\delta)$ where $p$ is the size of the domain for each entry $\bs_i$ in the secret $\bs$.  and $\epsilon,\delta \in (0,1)$\footnote{In our algorithm, we will set $\epsilon$ to $\gamma/100pn$ and set $\delta$ to be exponentially small.}. 
\end{theorem}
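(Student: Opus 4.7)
The plan is to adapt Regev's classical search-to-decision reduction to the quantum setting by extracting the secret $\bs$ coordinate by coordinate, using the Approximate Threshold Implementation ($\ati$) machinery of \Cref{sec:ati} as a substitute for classically repeating the distinguisher on fresh randomness. Concretely, given the input instance $(\bA, \bu = \bs\bA + \be)$ and the distinguisher state $\rho$, I would iterate over coordinates $j = 1,\dots,n$; for each coordinate I enumerate guesses $g \in [p]$ and, for each guess, build a new distribution $\cD(g_j)$ of challenges by sampling a random row vector $\bc \gets \Z_q^m$, forming the matrix $\bC$ which is $\bc$ in row $j$ and zero elsewhere, and defining the $b=0$ challenge as $(\bA' = \bA+\bC,\ \bu'_0 = \bu + g_j \cdot \bc)$ while $\bu'_1$ remains uniform. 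When $g_j = \bs_j$, the $b=0$ distribution is exactly an LWE sample with the original secret $\bs$ relative to a freshly random $\bA'$; when $g_j \neq \bs_j$, the resulting $\bu'_0$ is uniform and independent of $\bu'_1$. Thus I obtain an $\ati_{\cP,\cD(g_j),1/2+\gamma}^{\epsilon,\delta}$ which is statistically close to $\ati_{\cP,\cD(\lwe),1/2+\gamma}^{\epsilon,\delta}$ in the correct-guess case and statistically close to $\ati_{\cP,\cD(\unif),1/2+\gamma}^{\epsilon,\delta}$ in the incorrect-guess case; in fact here the statistical distance is $O(1/q^n)$ (so exponentially small), since no noise-flooding is needed.

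Three kinds of approximate threshold implementations drive the analysis: $\ati_\lwe$ (the hypothesis that $\rho$ is $\gamma$-good witnesses $\ati_\lwe(\rho)=1$), $\ati_{g_j}$ for each guess, and $\ati_\unif$. Starting from the promise that $\ati_\lwe(\rho)\to 1$, I would apply the extractor as follows. Call the current state $\rho$. Apply $\ati_{g_j}^{\epsilon,\delta}$ with threshold $1/2 + \gamma - 3\epsilon t$ (where $t$ counts how many $\ati$'s have been performed); if it returns $1$, record $\bs_j := g_j$ and move to coordinate $j+1$; if it returns $0$, increment the guess and repeat. I set $\epsilon = \gamma/(100 n p)$ and $\delta$ to be subexponentially small, so that the threshold never drops below $\gamma/2$ across all $O(np)$ measurements.

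Correctness hinges on two invariants. First, by \Cref{cor:ati_thresimp} combined with the negligible statistical distance between $\cD(g_j)$ and $\cD(\lwe)$ in the correct-guess case, if the previous measurement left a state that is good under $\ati_\lwe$ (with slightly reduced threshold), then $\ati_{g_j}$ with the correct guess returns $1$ with probability $1-\negl(\lambda)$ and the post-measurement state remains good for the next round. Second, and this is where I would invoke \Cref{thm:invariant_ati_uniform}, when $g_j$ is incorrect the distribution $\cD(g_j)$ is within $O(1/q^n)$ of $\cD(\unif)$, so $\ati_{g_j}$ outputs $0$ with overwhelming probability and moreover perturbs the state only by trace distance $O((1/q^n)\cdot\log(1/\delta)/\epsilon)$, exponentially small. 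Thus each incorrect-guess measurement is essentially an identity operation, and each correct-guess measurement recovers a new coordinate while preserving the invariant.

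The main obstacle will be controlling the accumulated error across the $O(np)$ iterations: the state degrades slightly after every measurement (both through the $\delta$ slack in the $\ati$ guarantee and through the identity-like perturbation from incorrect guesses), and the threshold parameter decays by $3\epsilon$ with each use of the shift/projectivity properties from \Cref{cor:ati_thresimp}. I would bound the overall failure probability by a union bound over all measurements: the $n$ correct-guess steps contribute $n\cdot\negl(\lambda)$, and the $O(np)$ incorrect-guess steps contribute $O(np)\cdot (\delta + (1/q^n)\log(1/\delta)/\epsilon)$, both negligible by the chosen parameters. Since the threshold only ever drops by $3\epsilon\cdot O(np) = O(\gamma/100)$, the invariant $\ti_{1/2+\gamma/2}$ is maintained throughout. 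Finally, the running time is the stated $Tnp\cdot\poly(1/\epsilon,1/\log\delta)$ because each $\ati$ costs $T\cdot O(1/\epsilon^2 \cdot 1/\log\delta)$ by \Cref{cor:ati_thresimp} and there are at most $np$ invocations.
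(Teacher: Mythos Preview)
Your proposal is correct and takes essentially the same approach as the paper. The one sharpening the paper makes is that in this clean setting (unlike the noisy version of \Cref{sec:ful_extraction_analysis}) the relevant distributions are not merely $O(1/q^n)$-close but exactly equal: for an incorrect guess $\bu_0'$ is perfectly uniform (since $q$ is prime), so \Cref{cor:invariant_ati_same_distr} applies with zero trace-distance perturbation rather than the looser \Cref{thm:invariant_ati_uniform}, and for a correct guess $\cD(g_j)$ coincides with $\cD(\lwe)$ on the nose, so the per-step loss is just $O(\delta)$ with no additional statistical-closeness term.
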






\paragraph{The Search-to-Decision Algorithm} On a search LWE challenge $(\bA, \bu = \bs \bA + \be)$ and a decision LWE distinguisher with state $\rho$ which is $\gamma$-good. We perform the following algorithm:

\begin{itemize}
    \item For coordinate $i \in 1, 2, \cdots , n$: 
\begin{itemize}
    \item For  $g_i \in k,k+1,\cdots, k+p-1$: 
\begin{enumerate}

\item Let $\rho$ be the current state from the quantum distinguisher. 

\item Let $\gamma : = \gamma - 3\epsilon$, where $\epsilon = \frac{\gamma}{100 pn}$

\item Perform $\ati^{\epsilon,\delta}_{, \cP,\cD(g_i), 1/2+\gamma}$ on the state $\rho$ in \Cref{sec:ati} according to the following mixture of measurements $\cP_{\cD(g_i)}$ (we denote the following distribution we sample from as $\cD(g_i)$.):
\begin{itemize}
\item Sample a random $\bc \gets \Z_q^{1 \times m}$, and let matrix $\bC \in \Z_q^{n \times m}$ be a matrix where the $i$-th row is $\bc$ and the rest of rows are $\mathbf{0}$'s.

\item Let $\bA' = \bA + \bC$ and let $\bu_0' = \bu + g_i \cdot \bc$. 

\item Sample another uniformly random vector $\bu_1' \gets \Z_q^{1\times m}$ 

\item Flip a bit $b \gets \{0,1\}$.

\item Run the quantum distinguisher $\rho$ on input $(\bA', \bu_b)$. Check whether the outcome is $b$. If so, output $1$, otherwise output $0$.

\end{itemize}
\item If $\ati^{\epsilon,\delta}_{\cP,\cD(g_i),1/2+\gamma}$ outputs 1 (that $\rho$ is an $\gamma$-good distinguisher), then set $\bs_i = g_i$ and move on to next coodinate: $i: =i+1$. 

\item If $\ati^{\epsilon,\delta}_{ \cP,\cD(g_i),1/2+\gamma}$ outputs 0 (that $\rho$ is not an $\gamma$-good distinguisher), then let $g_i : = g_i+1$ and go to step 1.
\end{enumerate}
\end{itemize}

\end{itemize}


We make the following claims: 
\begin{claim}
\label{claim:equivalent_ti}
For all $i\in [n]$, when the guess $g_i$ is the correct $\bs_i$ in a search LWE challenge $(\bA, \bu = \bs \bA + \be)$, then $\Pr[\ati^{\epsilon, \delta}_{\cP, D(g_i), 1/2+\gamma}\, \rho \to 1] = \Pr[\ati^{\epsilon, \delta}_{\cP, D(\lwe), 1/2+\gamma}\, \rho \to 1]$. 
\end{claim}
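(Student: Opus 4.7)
}

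My plan is to show that when the guess $g_i$ equals the true secret coordinate $\bs_i$, the mixed POVM $\cP_{\cD(g_i)}$ fed into the ATI is (as an operator on the distinguisher's Hilbert space) identical to the mixed POVM $\cP_{\cD(\lwe)}$, so that the two approximate threshold implementations produce identical distributions of outcomes on the same input state $\rho$. Since $\ati^{\epsilon,\delta}_{\cP,\cD,1/2+\gamma}$ is defined purely in terms of the POVM being implemented and the parameters $(\epsilon,\delta,\gamma)$ (see Lemma~\ref{cor:ati_thresimp} and Figure~\ref{fig:ati_algortihm}), equality of the POVMs immediately yields equality of the outputs.

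The first step is a short algebraic identity. For the fabricated sample $(\bA',\bu_0')$ in $\cD(g_i)$ we have $\bA' = \bA + \bC$, where $\bC$ is the matrix whose $i$-th row is the fresh uniform $\bc$ and whose other rows are zero. Since $\bC$ is supported only on the $i$-th row, $\bs\bC = \bs_i \bc$. Therefore, when $g_i = \bs_i$,
\begin{align*}
\bu_0' \;=\; \bu + g_i\cdot \bc \;=\; \bs\bA + \be + \bs_i\bc \;=\; \bs(\bA+\bC) + \be \;=\; \bs\bA' + \be.
\end{align*}
So $(\bA',\bu_0')$ satisfies the exact LWE relation with secret $\bs$ and noise $\be$, and $\bu_1'$ is a fresh uniform vector independent of $\bA'$; these are exactly the two inputs built by $\cD(\lwe)$.

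The second step is to argue that the matrix $\bA'$ is distributed as a uniformly random element of $\Z_q^{n\times m}$. Here I use the fact that in the search-LWE game the challenge matrix $\bA$ is itself uniform, so $\bA'=\bA+\bC$ is the sum of a uniform matrix and an independent matrix, which is again uniform. Combined with the first step, the joint distribution of the pair $(\bA',\bu_0')$ matches that of a fresh LWE sample $(\tilde\bA,\bs\tilde\bA+\be)$ produced by $\cD(\lwe)$ (with the same $\be$), and similarly $(\bA',\bu_1')$ matches $(\tilde\bA,\tilde\bu_1)$ in $\cD(\lwe)$. This means that the distribution of the pair (input fed to the distinguisher, ``correct bit'' $b$) is identical under $\cD(g_i)$ and under $\cD(\lwe)$.

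The third step collects these observations into an equality of POVMs. For any distribution $\cD$ over randomness $r$, the mixed projection from Definition~\ref{def:mixture_of_projective} is $P_{\cD} = \E_{r\leftarrow\cD}[P_r]$, where $P_r$ is the projection onto ``distinguisher outputs the challenge bit'' on the deterministic input determined by $r$. By the previous step, the push-forward of $\cD(g_i)$ and $\cD(\lwe)$ under the map $r\mapsto(\text{input to distinguisher, correct bit})$ coincide, so $P_{\cD(g_i)} = P_{\cD(\lwe)}$, and likewise for the complementary projections. Hence the ATI machinery of Section~\ref{sec:ati} produces literally the same distribution of outputs on $\rho$ in the two cases, which gives the claimed equality of the acceptance probabilities.

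The only subtle point (the ``obstacle'') is being careful about which randomness we are averaging over. If we condition on a fixed challenge $(\bA,\be)$, the matrix $\bA'$ produced by $\cD(g_i)$ has only its $i$-th row re-randomised, so conditionally it is not uniform and the POVM equality is strictly speaking an equality in expectation over the challenge. I will therefore be careful to phrase the equality as one that holds when the probability in the statement is taken over the joint randomness of the search-LWE challenge and the internal coins of the ATI, which is precisely the quantity controlled by our overall reduction; this interpretation is consistent with how the claim is used later when we chain together the extraction of the successive coordinates of $\bs$.
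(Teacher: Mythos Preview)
Your proposal is correct and takes essentially the same approach as the paper: verify the algebraic identity $\bu_0' = \bs\bA' + \be$ when $g_i=\bs_i$, conclude that the mixture $\cP_{\cD(g_i)}$ coincides with $\cP_{\cD(\lwe)}$, and hence that the two ATI procedures have identical output distributions on $\rho$. You are in fact more careful than the paper's own proof, which simply asserts the two mixtures are ``statistically the same measurements'' without discussing the role of the randomness of the search-challenge matrix $\bA$; your final paragraph correctly isolates this as the only delicate point.
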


\begin{proof}
    When $g_i = \bs_i$, we have $\bA' = \bA + \bC$ where $\bC$ is everywhere 0 except the $i$-th row being uniformly random $\bc$. We also have $\bu_0' = \bs \bA  + \be + g_i \cdot \bc = [(\bs_i(\bA_{1,i}+\bc_{1}) + \sum_{j\neq i}\bs_{j}(\bA_{1,j}+0), \cdots, (\bs_i(\bA_{m,i}+\bc_{m}) + \sum_{j\neq i}\bs_{j}(\bA_{m,j}+0)] + \be^\top = \bs^\top \bA' + \be$, where $\bA_{x,y}$ denotes the entry in $x$-th column and $y$-th row. 

    Therefore, for all $i\in [n]$ and each $g_i \in [k, k+p-1]$, when the guess $g_i$ is the correct $\bs_i$ in the search LWE challenge $(\bA, \bu = \bs \bA + \be)$, then the mixture of measurements $(\cP_{\cD(g_i)}, \cQ_{\cD(g_i)}) $ is the same as the measurements $(\cP_\lwe, \cQ_\lwe)$ defined in \Cref{def:good_lwe_distinguisher}, which test the distinguisher on LWE instances with respect to the same secret $\bs$ and error $\be$. Since they are statistically the same measurements, their $\ati$ will yield the same result.
\end{proof}

\begin{claim}
\label{claim:uniform_ti}
For all $i\in [n]$, when the guess $g_i$ is incorrect, in the above mixture of projections $\cP = (P_{\cD_{g_i}}, Q_{\cD_{g_i}})$, for any noticeable $\gamma$, and any quantum distinguisher $\rho$, the following claims hold:
\begin{enumerate}
    \item   $\Pr[\ati^{\epsilon, \delta}_{\cP, D(g_i), 1/2+\gamma}\, \rho \to 0] = 1- 
    \delta$

    \item Let $\rho'$ be the state after applying $\ati^{\epsilon,\delta}_{\cP, \cD_{g_i},1/2+\gamma}$ on $\rho$, we have that $\lVert \rho-\rho'\rVert_{\Tr} = 0$.
\end{enumerate}

\end{claim}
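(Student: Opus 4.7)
The plan is to reduce the claim to Corollary \ref{cor:invariant_ati_same_distr} (the identity-like behavior of $\ati$ on dummy distribution-distinguishing POVMs) by showing that when the guess $g_i$ is wrong, the two branches $\bu_0'$ and $\bu_1'$ fed to the distinguisher are in fact drawn from \emph{identical} distributions. First I would rewrite $\bu_0'$ algebraically: using the definition of $\bA'=\bA+\bC$ (with $\bC$ zero except for row $i$ being $\bc$) and $\bu = \bs^\top\bA+\be$, one gets
\[
\bu_0' \;=\; \bs^\top \bA + \be + g_i\cdot \bc \;=\; \bs^\top(\bA+\bC) + \be + (g_i - \bs_i)\cdot \bc \;=\; \bs^\top \bA' + \be + (g_i-\bs_i)\cdot \bc.
\]
When $g_i\neq \bs_i$, the scalar $(g_i-\bs_i)\in \Z_q^\times$ is invertible because $q$ is prime, so $(g_i-\bs_i)\cdot \bc$ is a uniformly random vector in $\Z_q^m$ (independent of $\bA'$ and $\be$). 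Its additive masking therefore makes $\bu_0'$ uniform over $\Z_q^m$, i.e.\ distributed exactly as $\bu_1'$. Crucially, this is true regardless of the value of the secret $\bs$ or the error $\be$, so no statistical-distance slack enters.

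Next I would observe that $\cP_{\cD(g_i)}$ has exactly the shape of the \emph{distribution-distinguishing POVM} $\cP_{\cD_0,\cD_1}$ of Definition \ref{def:distribution_distinguish_povm}: the procedure samples $b\gets\{0,1\}$, draws $v$ from $\cD_b$ (here either from the distribution of $\bu_0'$ or from the distribution of $\bu_1'$), feeds $v$ (together with $\bA'$) to the distinguisher, and outputs $1$ iff its guess matches $b$. By the previous paragraph, in our case $\cD_0 = \cD_1$ (both are the uniform distribution on $\Z_q^m$, paired with an identically-sampled $\bA'$), so the statistical distance $\eta = 0$.

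Applying Corollary \ref{cor:invariant_ati_same_distr} with $\eta = 0$ then yields both conclusions of the claim simultaneously: with probability at least $1-\delta$ the post-measurement state $\rho'$ satisfies $\lVert \rho - \rho'\rVert_{\Tr}=0$, and the measurement outputs $0$. Thus $\Pr[\ati^{\epsilon,\delta}_{\cP, D(g_i),\,1/2+\gamma}(\rho)\to 0] \ge 1-\delta$, which is precisely item (1), and item (2) follows from the trace-distance bound.

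The main step requiring care is verifying that the equality of distributions is \emph{exact}, not merely statistically close; any nonzero $\eta$ would only give an identity-like bound of order $\eta\cdot\ln(4/\delta)/\epsilon$ from Theorem \ref{thm:invariant_ati_uniform}, which is fine for the noisy version in Section \ref{sec:ful_extraction_analysis} but would prevent the clean statement here. Primality of $q$ (making $\Z_q$ a field) and the fact that $\bc$ is sampled uniformly and freshly inside $\cD(g_i)$ are the two ingredients that deliver the exact equality and thereby the clean $\eta=0$ invocation of Corollary \ref{cor:invariant_ati_same_distr}.
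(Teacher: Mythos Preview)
Your proposal is correct and follows essentially the same route as the paper: show that when $g_i\neq\bs_i$ the two branches of $\cD(g_i)$ are identical distributions, then invoke Corollary~\ref{cor:invariant_ati_same_distr}. The only minor differences are that the paper first argues item~(1) via the ideal $\ti$ together with \Cref{cor:ati_thresimp} and then separately invokes Corollary~\ref{cor:invariant_ati_same_distr} for item~(2), whereas you obtain both items from a single application of Corollary~\ref{cor:invariant_ati_same_distr}; and your algebraic rewrite $\bu_0'=\bs^\top\bA'+\be+(g_i-\bs_i)\bc$ with invertibility of $(g_i-\bs_i)$ in $\Z_q$ is slightly cleaner than the paper's direct expansion (and in particular does not implicitly rely on $g_i\neq 0$).
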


\begin{proof}
    We first consider the perfect projective threshold implementation $\ti_{1/2+\gamma}(\cD(g_i))$ when $g_i \neq \bs_i$. The vector $\bu_0' = \bs \bA  + \be + g_i \cdot \bc = [(g_i\cdot \bc_{1}+ \sum_{j}\bs_{j}\bA_{1,j}, \cdots, g_i\cdot \bc_{m}+ \sum_{j}\bs_{j}\bA_{m,j}] + \be^\top$. Since $\bc$ is uniformly random, the entire $\bu_0'$ now becomes uniformly random and statistically the same as $\bu_1'$. Therefore no algorithm can have a non-negligible advantage in distinguishing them.
    That is, all possible states will be projected onto the result 0, when one applies the projective implementation $\ti_{1/2+\gamma}(\cP_{cD(g_i)})$ for any noticeable $\gamma$:  $\Pr[\ti_{1/2+\gamma}(\cP_{cD(g_i)})\rho \to 0] = 1$.

    Accordingly, by \Cref{cor:ati_thresimp}, there exists $\epsilon, \delta$ such that $\Pr[\ati^{\epsilon, \delta}_{\cP, D(g_i), 1/2+\gamma}\, \rho \to 0] = 1- \delta(\lambda)$. 

    Property 2 follows directly from \Cref{cor:invariant_ati_same_distr} because the two distributions in $\cD_{g_i}$ are exactly the same and thus we can apply the corollary.

\end{proof}

\paragraph{Correctness}
The correctness follows from the same analysis as in \Cref{lem:extractor_correctness} and we omit the details.

Some simplifications in this "clean LWE" extractor compared to the one in \Cref{sec:ful_extraction_analysis}: our $\ati^{\epsilon,\delta}_{\cP, \cD_{g_i},1/2+\gamma}$ with incorrect $g_i$'s will not incur difference on the trace distance due to the distributions $\cD_{g_i}, \cD_\unif$ being the same; our $\ati^{\epsilon,\delta}_{\cP, \cD_{g_i},1/2+\gamma}$ with correct $g_i$'s will incur error of only $3\delta$ every time after being applied, because distributions $\cD_{g_i}, \cD_\lwe$ are the same
instead of being only statistically close.

Overall the algorithm outputs correct secret with probability $(1-3\delta)^{n} \cdot (1-\delta)^{pn} \geq 1 - pn\cdot \delta$, where $(1-3\delta)^{n}$ is the probability that 
all $\ati^{\epsilon,\delta}_{\cP, \cD_{g_i},1/2+\gamma}$ with correct $g_i$'s output 1 and $(1-\delta)^{pn}$ is the probability that 
all $\ati^{\epsilon,\delta}_{\cP, \cD_{g_i},1/2+\gamma}$ with incorrect $g_i$'s output 0. $ 1 - pn\cdot \delta \geq 1 - \negl(\lambda)$ because our $p$ is at most subsponential and $\delta$ can be an inverse exponential function.

\paragraph{Running Time}
By \Cref{cor:ati_thresimp}, the running time of each $\ati^{\epsilon, \delta}_{\cP, D(g_i), 1/2+\gamma}$ is $T \cdot \mathsf{poly}(1/\epsilon,1/\log\delta)$, so the entire algorithm is $Tnp \cdot \mathsf{poly}(1/\epsilon,1/\log\delta)$.

Note that when we work with small LWE secret, where $p$ is polynomial then our $\epsilon = \frac{\gamma}{100pn}$ is polynomial, so the entire algorithm is polynomial. When $p$ is subexponential, then $\epsilon$ is 
 an inverse subexponential function, so the entire algorithm runs in subexponential time (which we allow when working with subexponential-hardness of LWE).

\paragraph{Conclusion} We thus conclude \Cref{thm:lwe_serachtodecision_perfect}.

\ifllncs
\section{More on Related Works}
\else
\subsection{More on Related Works}
\fi
\label{sec:other_related_works}

We addtionally discuss some other related works.

\paragraph{Quantum Copy Protection} 
\cite{aaronson2009quantum} first built copy-protection for all unlearnable functions based on a quantum oracle.  
 \cite{aaronsonnew} showed a construction for all unlearnable functions based on a classical oracle. \cite{coladangelo2021hidden, liu2022collusion} showed copy protection for signatures, decryption and PRF evaluation in the plain model.
 \cite{coladangelo2020quantum,ananth2022feasibility} constructed copy-protection for point functions and compute-and-compare functions in QROM, the latter improving the security of the former.

Regarding the negative results:
\cite{ananth2020secure} demonstrated that it is impossible to have a copy-protection scheme for all unlearnable circuits in the plain model, assuming LWE and quantum FHE. 

\paragraph{More on Secure Software Leasing} 
\cite{aaronsonnew} observed that under a definition essentially equivalent to infinite-term SSL, namely copy-detection, one could obtain a black-box construction for infinite-term SSL from watermarking and public-key quantum money.
\cite{kitagawa2021secure}
 constructed finite-term SSL for PRFs and compute-and-compare functions from (subexponential) LWE, with similar observations.

\cite{broadbent2021secure,coladangelo2020quantum} constructed secure software leasing for point functions and compute-and-compare functions; \cite{broadbent2021secure} is information-theoretically secure and \cite{coladangelo2020quantum} is secure under QROM. They both used a stronger version of finite-term SSL security: 
while the vendor will honestly check the returned state from the adversary, the adversary can execute the leftover half of its bipartite state maliciously, i.e., not following the instructions in the evaluation algorithm.
SSL security of this stronger finite-term variant is only known for point/compute-and-compare functions up till now.

\paragraph{Unclonable Encryption and Certified Deletion}
Unclonable encryption is studied in \cite{gottesman2001quantum,Broadbent2019UncloneableQE,ananth2021unclonable,ananth2022feasibility}. It is an encryption scheme where the ciphertext is encoded in quantum information, so that the adversary cannot split it into two copies that can both decrypt when given a (classical) decryption key.

Certified deletion of ciphertext is studied in various works \cite{broadbent2020certified,poremba2022certified,hiroka2021certified,bartusek2023certified,bartusek2023certfied-obfuscation}, where the ciphertext is also encoded in a quantum state and given to a server. When asked by the client to delete, the server must provide a proof of deletion of the ciphertext. This bears some similarity to our setting. But note the major difference is that in the certified deletion of ciphertext setting, we need to show that the server no longer has information about the encrypted message(a bit string), while in our setting we need to show the server is deprived of a functionality. Therefore, the results and techniques in secure key leasing and certified deletion of ciphertext are most of the time incomparable.

\else

\section{Acknowledgement}
Orestis Chardouvelis and Aayush Jain were supported by gifts from CyLab of CMU and Google.
Jiahui Liu was supported by DARPA under Agreement No. HR00112020023, NSF CNS-2154149 and a Simons Investigator award. 

We thank an anonymous reviewer 
who points out an issue in the analysis of the LWE search-to-decision extractor analysis in a previous version of this paper, which has now been fixed.

We thank Qipeng Liu, Mark Zhandry and the authors of \cite{ananth2023revocable} for helpful discussions.

\bibliographystyle{alpha}
\bibliography{bib,abbrev3,custom,crypto} 
\appendix

\fi

\end{document}